\newtheorem{thm}{Theorem}
\newtheorem{lem}[thm]{Lemma}
\newtheorem*{lem*}{Lemma}
\newtheorem{cor}[thm]{Corollary}
\newtheorem{defn}[thm]{Definition}
\theoremstyle{remark}
\newtheorem{claim}[thm]{Claim}
\newcommand{\Gtri}{G_{\Delta}}
\newcommand{\ci}{\mathcal{C}_i}
\newcommand{\cli}{\mathcal{C}_{<i}}
\newcommand{\clei}{\mathcal{C}_{\leq i}}
\newcommand{\cgi}{\mathcal{C}_{>i}}
\newcommand{\cgei}{\mathcal{C}_{\geq i}}
\newcommand{\cc}{\mathcal{C}}
\definecolor{applegreen}{rgb}{0.55, 0.71, 0.0}
\definecolor{teal}{rgb}{0.0, 0.5, 0.5}
\definecolor{alizarin}{rgb}{0.82, 0.1, 0.26}
\definecolor{slategray}{rgb}{0.44, 0.5, 0.56}
\definecolor{amber}{rgb}{1.0, 0.75, 0.0}
\definecolor{mikadoyellow}{rgb}{1.0, 0.77, 0.05}
\definecolor{cadmiumgreen}{rgb}{0.0, 0.42, 0.24}
\definecolor{forestgreen}{rgb}{0.13, 0.55, 0.13}
\definecolor{lust}{rgb}{0.9, 0.13, 0.13}
\definecolor{denim}{rgb}{0.08, 0.38, 0.74}
\definecolor{purpleheart}{rgb}{0.41, 0.21, 0.61}
\definecolor{cherryblossompink}{rgb}{1.0, 0.72, 0.77}
\definecolor{darktangerine}{rgb}{1.0, 0.66, 0.07}
\definecolor{bananayellow}{rgb}{1.0, 0.88, 0.21}
\definecolor{lightblue}{rgb}{0.55,0.82,0.77}
\definecolor{lightgray}{rgb}{0.83, 0.83, 0.83}
\definecolor{languidlavender}{rgb}{0.84, 0.79, 0.87}
\definecolor{jasper}{rgb}{0.84, 0.23, 0.24}
\definecolor{tangelo}{rgb}{0.98, 0.3, 0.0}
\definecolor{tearose}{rgb}{0.97, 0.51, 0.47}
\definecolor{royalfuchsia}{rgb}{0.79, 0.17, 0.57}
\definecolor{cinnabar}{rgb}{0.89, 0.26, 0.2}
\definecolor{cinnamon}{rgb}{0.82, 0.41, 0.12}
\newcommand{\todoo}[1]{}
\begin{document}

\title{Irreducibility of Recombination Markov Chains in the Triangular Lattice}
\author{Sarah Cannon\footnote{Corresponding author. scannon@cmc.edu. 850 Columbia Ave, Claremont, CA 91711.}\\ Claremont McKenna College}
\date{}
\maketitle
\pagestyle{empty}

{\bf Abstract.} 
In the United States, regions (such as states or counties) are frequently divided into districts for the purpose of electing representatives.  How the districts are drawn can have a profound effect on who's elected, and drawing the districts to give an advantage to a certain group is known as gerrymandering.  It can be surprisingly difficult to detect when gerrymandering is occurring, but one algorithmic method is to compare a current districting plan to a large number of randomly sampled plans to see whether it is an outlier. Recombination Markov chains are often used to do this random sampling: randomly choose two districts, consider their union, and split this union up in a new way. This approach works well in practice and has been widely used, including in litigation, but the theory behind it remains underdeveloped. For example, it's not known if recombination Markov chains are irreducible, that is, if recombination moves suffice to move from any districting plan to any other. 

Irreducibility of recombination Markov chains can be formulated as a graph problem: for a planar graph $G$, is the space of all partitions of $G$ into $k$ connected subgraphs ($k$ districts) connected by recombination moves? While the answer is yes when districts can be as small as one vertex, this is not realistic in real-world settings where districts must have approximately balanced populations. Here we fix district sizes to be $k_1 \pm 1$ vertices, $k_2 \pm 1$ vertices, $\ldots$ for fixed $k_1$, $k_2$, $\ldots$, a more realistic setting. We prove for arbitrarily large triangular regions in the triangular lattice, when there are three simply connected districts, recombination Markov chains are irreducible. This is the first proof of irreducibility under tight district size constraints for recombination Markov chains beyond small or trivial examples. The triangular lattice is the most natural setting in which to first consider such a question, as graphs representing states/regions are frequently triangulated. The proof uses a sweep-line argument, and there is hope it will generalize to more districts, triangulations satisfying mild additional conditions, and other redistricting Markov chains.

\vspace{5mm}

{\bf Keywords.} Markov chain, irreducible, triangular lattice, recombination, math of redistricting, graph partition
\vspace{2mm}

{\bf Declarations of interest.} None
\vspace{2mm}

{\bf Funding.} S. Cannon is supported in part by NSF grants CCF-2104795 and DMS-1803325. 
\vspace{2mm}

{\bf Previous Versions.} A 10 page conference version of this paper, not including any proofs, appeared in the Proceedings of the SIAM Conference on Applied and Computational Discrete Algorithms (ACDA23), May 31-June 2, 2023. This version is available at \url{https://epubs.siam.org/doi/10.1137/1.9781611977714.9}~\cite{irred-conf}.

\newpage

\pagestyle{plain}
\setcounter{page}{1}


\section{Introduction}

	
In the United States, regions (such as cities, counties, or states)  are frequently divided into {\it districts} for the purpose of electing officials to positions ranging from a local school board to the U.S. House of Representatives. The way these districts are drawn can have a large effect on who is elected. Drawing the lines of these districts so as to give an advantage to a certain individual, group, or political party is known as {\it gerrymandering}.  It can be surprisingly difficult to detect if gerrymandering is occurring or whether outcomes considered `unfair' are a result of other factors, such as the spatial distribution of voters (for example, see~\cite{pg-intro}).

One method to detect gerrymandering is to see where the current or proposed districting plan lies within the context of all possible districting plans for a region. Districting plans are generally expected to have contiguous, compact districts, even when not explicitly required by law, though there are a variety of competing notion of compactness (see, for example, \cite{pg-compactness}).
Other legal requirements, such as respecting communities of interest,  avoiding county splits, or incorporating incumbency, are considered in certain jurisdictions as well.  However, no matter the restrictions placed on a districting plan, the number of possible districting plans for any state or region is far too large to be studied in its entirety. For example, the number of ways to divide a $9 \times 9$ square grid into 9 contiguous, equally-sized districts is more than 700 trillion \cite{mggg-table}. 
	 
Because of this, random sampling of political districting plans has  become an important tool to help understand the space of possible  plans, beginning with the work of Chen and Rodden~\cite{chen-rodden-unintentional, chen-rodden-thicket}. A collection of randomly sampled districting plans has come to be called an {\it ensemble}. If a current or proposed districting plan is an outlier with respect to the ensemble, this may be evidence it is~gerrymandered.

A variety of methods for creating ensembles of randomly sampled districting plans exist (see Related Work, below). In this paper, the focus is on recombination Markov chains.  
These chains create random districting plans by repeatedly choosing two random districts; merging these two districts  together; and splitting this union up in a new way so that population balance and other constraints are still satisfied.  
Recombination Markov chains have been used in a variety of academic papers \cite{DukeReCom1, DukeReCom2,  VRA-ELJ, revrecom, shortbursts,  charikar2022complexity,chenstephanopoulos, clelland2021compactness, competitiveness, recom,  MRL,  georgia}
, technical reports \cite{ Alaska, VA-criteria,  VA-report, santaclara, lowell, chicago}, and court cases \cite{hirsch2022brief, chen2022brief, duchin2019brief}, including in 2021--2022 litigation in Pennsylvania, South Carolina, and Texas.  


One common problem with all recombination Markov chains is that it's not known whether they're irreducible. That is, it's not known whether recombination moves suffice to reach all districting plans, or if there are some plans that cannot be created by the repeated merging and splitting process. This could potentially create problems if ensembles are created by sampling from only the reachable subset of plans rather than the entire space of plans. While there is no evidence so far that any real-world examples of recombination Markov chains are not irreducible, proofs have largely remained elusive. 

\subsection{Dual Graphs, balanced partitions, and recombination}
	
Recombination Markov chains work with a discretization of the real-world political districting problem by considering {\it dual graphs}: graphs whose nodes are small geographic units (such as census blocks, census block groups, or voting precincts) and whose edges show geographic adjacency~\cite{DuchinTenner}. While political districts may occasionally split census blocks or voting precincts, this is rare, and it is generally agreed that considering districting plans built out of only whole geographic units is both a good approximation to considering all districting plans as well as necessary to make the problem tractable. This means districting plans are partitions of the nodes of the dual graphs into $k$ connected sugraphs, where $k$ is the number of districts.  In real-world examples, nodes also have attached populations and districts must be population-balanced. This is usually operationalized by ensuring the population of each district doesn't differ by more than an $\varepsilon$ multiplicative factor from its ideal population, which is the total population divided by the number of districts.  In practice, $\varepsilon$ is typically about 1-2\%; if tighter population balance is required, this can be achieved by making targeted local changes after the fact.  
	

Recombination Markov chains are a family of algorithms for randomly sampling connected, approximately population balanced partitions of a dual graph. Variants include {\sf ReCom}~\cite{recom}, {\sf ForestReCom}~\cite{DukeReCom1}, and {\sf ReversibleReCom}~\cite{revrecom}. 
At a high level, all these Markov chains choose two random adjacent districts; merge the districts together; find a way of splitting the vertices in these two district up in a new way; and accept the new partition with some probability. One can ask both about the {\it probability} of these moves, which vary between the different versions of recombination, or about the {\it valid steps}, the moves that occur with non-zero probability, which are the same between all these variants. These chains all put positive probability on every possible recombination step, that is, every way of recombining two adjacent districts to produce two new districts that are connected and satisfy the population constraints.
As we will later use, one consequence is if a move from partition $\sigma$ to partition $\tau$ is valid, so is the reverse move from $\tau$ to $\sigma$.  
 We focus only on these valid moves, and our results will apply to any recombination Markov chain that puts non-zero probability on every possible recombination step (as {\sf ReCom}, {\sf ForestReCom}, and {\sf ReversibleReCom} do).

\subsection{Irreducibility, Aperiodicity, and Ergodicity}

A Markov chain is {\it irreducible} if the transitions of the chain suffice to go from any state of the Markov chain to any other state (here, each state is a partition of the dual graph into $k$ connected subgraphs with some population balance constraint). The set of states of a Markov chain is known as the {\it state space}, and if the chain is irreducible the state space is said to be {\it connected}. 
Determining whether a Markov chain is irreducible only requires looking at the valid moves (the moves with non-zero probability) rather than at the probabilities of those moves, which is why our results apply to the entire family of recombination Markov chains at once.  
 
A Markov chain is {\it ergodic} if it is both irreducible and {\it aperiodic}, meaning there is no periodicity in the way the chain moves around the state space. A chain is aperiodic if there is at least one {\it self-loop}, that is, at least one state that has a non-zero probability of remaining in the state after one step. All of {\sf ReCom}, {\sf ForestReCom}, and {\sf ReversibleReCom} have self-loops, so they are aperiodic. If a chain is not aperiodic, it is easy to make it aperiodic as follows: At each step, with probability $0.01$ remain in the current state and with probability $0.99$ do a transition of the Markov chain (the values of $0.01$ and $0.99$ were chosen arbitrarily, and any values summing to one suffice). As it is easy to achieve aperiodicity, to know whether a Markov chain is ergodic the key question is whether it is irreducible, the focus of this paper. 

We care a great deal about ergodicity because every ergodic Markov chain eventually converges to a unique {\it stationary distribution} over the states of the chain. 
This is a necessary requirement for using a Markov chain to draw samples from this stationary distribution. 
We show recombination Markov chains are irreducible on the triangular lattice, and we know they can easily be made aperiodic (if they are not already), meaning they are ergodic. 
Once we know Recombination Markov chains converge to a unique stationary distribution, we can begin to develop the theory behind them and consider questions such as how long this convergence takes: a long-term goal of this research community is to be able to say something rigorous about the mixing and/or relaxation times of recombination Markov chains, and knowing the chains are irreducible is a necessary first step.
Gaining a rigorous understand of these Markov chains and their behavior is essential so that we can have confidence in the conclusions about gerrymandering they are used to produce. 



\subsection{Irreducibility of Recombination}

In beginning to study the irreducibility of recombination Markov chains, researchers have introduced a simplification: assume each node has the same population (see, for example,~\cite{akitaya2022recom}). This means one can determine if a districting plan is population-balanced by looking only at the number of nodes in each district, rather than the populations at those nodes. In this setting, the {\it ideal size} of a district is $n/k$, the total number of vertices divided by the number of districts.  In~\cite{akitaya2022recom}, authors show (1) when districts sizes can get arbitrarily large or small, recombination Markov chains are irreducible; (2) when the underlying graph is Hamiltonian, recombination Markov chains are irreducible when districts are allowed to get as small as one vertex or as large as twice their ideal size; and (3) there exist Hamiltonian planar graphs on which recombination is not irreducible when district sizes are constrained to be at least $(2/3) n/k$ and at most $(4/3) n/k$. This last result shows how imposing tight size constraints makes it much harder to reach all possible districting plans. The only known positive irreducibility results under tight size constraints are for double-cycle graphs and grid-with-a-hole graphs, where the large amount of structure present makes the proofs nearly trivial~\cite{charikar2022complexity}. 
No positive irreducibility results are known for recombination Markov chains beyond these trivial examples when district sizes are constrained any tighter than in result (2) above. 


	
The result of~\cite{akitaya2022recom} is related to one approach to irreducibility sometimes taken in practice: Initially allow districts to get arbitrarily small, and then gradually tighten population constraints (`cool' the system) until the districts are as balanced in size or population as desired. However, if the state space is disconnected when restricted to partitions whose districts are close to their ideal sizes, this process may not end in each connected component of the state space with the correct relative probabilities. This means certain parts of the state space may be oversampled or undersampled as a result, though approaches such as parallel tempering can address this.  However, this tempering process is computationally expensive and is often skipped in practice. On the other hand, new work suggests under certain stationary distributions population-balanced partitions are polynomially-likely~\cite{cannon2023balancedforest} and thus rejection sampling can be used to obtain balanced partitions as proposed in~\cite{charikar2022complexity}, but this certainly does not hold for all distributions one might want to use a recombination Markov chain to sample from.

Because of this, irreducibility results under tight size constraints are very desireable. The negative result of~\cite{akitaya2022recom} mentioned above implies general results for planar graphs are impossible. 
However, their examples are far from the types of planar graphs that might be encountered in real-world applications, which don't usually have faces with long boundary cycles. However, there are also some negative irreducibility results even for grids when district sizes are constrained to take on an exact value. For example, see Figure~\ref{fig:non-irred-ex}(a) for an example that is rigid under recombination moves when districts are restricted to be size 3: any attempt to merge two districts and split the resulting union into two equal-sized pieces will always produce this exact partition~\cite{tuckerfoltz2023locked}. It's worth noting such examples also exist in the triangular lattice, which will be our focus. For example, in a triangle with three vertices along each side (six vertices total), recombination is not irreducible when districts are constrained to be size exactly two; see Figure~\ref{fig:non-irred-ex}(b). 
This indicates that even in simple graphs such as grids, requiring exact sizes for districts is likely too much to ask for.

\begin{figure}
\centering
\hfill 
\begin{tikzpicture}[scale=0.7]
	\begin{scope}[xshift=.5cm,yshift=.5cm]
		\draw[step=1.0,black,thin] (0,0) grid (6,6);
	\end{scope}
	\draw [line width=0.7cm,alizarin,opacity=.8] (1,2.5)--(1,1)--(2.5,1);
	\draw [line width=0.7cm,lightblue,opacity=.8] (3,2.5)--(3,1)--(4.5,1);
	\draw [line width=0.7cm,forestgreen,opacity=.8] (6,2.5)--(6,1)--(4.5,1);
	\draw [line width=0.7cm,mikadoyellow,opacity=.8] (1,2.5)--(1,4)--(2.5,4);
	\draw [line width=0.7cm,cinnabar,opacity=.8] (2,1.5)--(2,3)--(3.5,3);
	\draw [line width=0.7cm,slategray,opacity=.8] (4,3.5)--(4,2)--(5.5,2);
	\draw [line width=0.7cm,tearose,opacity=.8] (6,4.5)--(6,3)--(4.5,3);
	\draw [line width=0.7cm,purpleheart,opacity=.8] (4,4.5)--(4,6)--(2.5,6);
	\draw [line width=0.7cm,teal,opacity=.8] (5,5.5)--(5,4)--(3.5,4);
	\draw [line width=0.7cm,royalfuchsia,opacity=.8] (1,4.5)--(1,6)--(2.5,6);
	\draw [line width=0.7cm,denim,opacity=.8] (3,3.5)--(3,5)--(1.5,5);
	\draw [line width=0.7cm,applegreen,opacity=.8] (6,4.5)--(6,6)--(4.5,6);
	
\end{tikzpicture}
\hfill
\includegraphics[scale = 1.5]{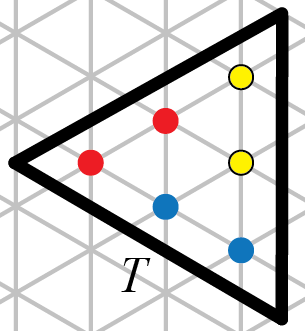} 
\hfill\hfill

(a) \hspace{6cm} (b)
\caption{(a) From~\cite{tuckerfoltz2023locked}, a partition of a $6 \times 6$ grid into 12 districts of size exactly 3 that is rigid under recombination moves. (b) A partition of a triangular subgraph $T$ of the triangular lattice into 3 districts of size exactly 2 that is rigid under recombination moves. 
	}\label{fig:non-irred-ex}
\end{figure}
	
\subsection{Results}
	
In both examples in Figure~\ref{fig:non-irred-ex}, irreducibility can easily be achieved by allowing districts sizes to get one larger or one smaller than their prescribed ideal sizes. This motivates the following definitions used in this work. We assume districts have prescribed sizes $k_1$, $k_2$, $\ldots$.  A partition is {\it balanced} if each district $i$ has exactly $k_i$ vertices, and a partition is {\it nearly balanced} if the partition is not balanced but the number of vertices in district $i$ is at least $k_i - 1$ and at most $k_i + 1$. While the case $k_1 = k_2 = \ldots$ is most interesting, equality of district sizes in not required for our results. 
	
This paper provides the first irreducibility results for an infinite class of graphs for the recombination Markov chain whose state space is all balanced and nearly balanced partitions. The graphs we consider, chosen to strike a balance between having enough structure for proofs to be possible while still being motivated by real world examples, are subsets of the triangular lattice. The triangular lattice was chosen instead of the more frequently used square lattice because dual graphs derived from geography frequently have mostly triangular faces: it's rare to have four or more geographic domains (such as census blocks or voting precincts) meet at a single point, making faces with a boundary cycle of length four or more rare in the corresponding dual graphs. Triangulations also have a nice feature that is incredibly important for our proofs: The neighbors of any vertex form a cycle. 
	
Our main result is that for arbitrarily large triangular subgraphs of the triangular lattice (e.g., Figure~\ref{fig:non-irred-ex}(b)), recombination Markov chains for three districts are irreducible on the state space of balanced and nearly balanced partitions. There is one small caveat: we require districts to be simply connected rather than just connected, and this is crucially used at several points in our proofs. However, this is not an unreasonable restriction: it's rare in practice to see one district completely encircling another. We also require the minor technical condition that the districts are not too small: if the triangular subgraph has side length $n$, then $k_i$ must be at least $n$ for all $i = 1,2,3$. We also assume $n \geq 5$.  We summarize this result in the following theorem.




\begin{thm}\label{thm:main}
	Let $T$ be a triangular subset of the triangular lattice with side length $n\geq 5$. Let $k_1$, $k_2$, and $k_3$ be integers satisfying $k_1 + k_2 + k_3 = n(n+1)/2 = |V(T)|$ and each $k_i \geq n$.  Let $\Omega$ be all partitions of $T$ into three simply  connected pieces $P_1$, $P_2$, and $P_3$ where $|P_i| \in [k_i-1, k_i+1]$ for $i = 1,2,3$. Recombination Markov chains on $\Omega$ are irreducible. 
\end{thm}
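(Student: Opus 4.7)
The plan is to define a canonical partition $\sigma^\ast \in \Omega$ and show that from every $\sigma \in \Omega$ there is a sequence of valid recombination steps reaching $\sigma^\ast$. Since every valid recombination step is reversible (as noted in Section 1.1), exhibiting such paths to a common $\sigma^\ast$ proves that the state space is connected. A natural choice of $\sigma^\ast$, motivated by the hypothesis $k_i \geq n$, is a \emph{horizontal stripe} partition of $T$: order the rows of $T$ from the apex (row $1$, of length $1$) down to the base (row $n$, of length $n$) and let $P_1$ consist of the topmost $k_1$ vertices, $P_2$ of the next $k_2$ vertices, and $P_3$ of the bottom $k_3$ vertices, where within any partially-filled row the vertices are taken in a fixed left-to-right order. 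The hypothesis $k_i \geq n$ is exactly what ensures each stripe contains at least a full row and is therefore simply connected, so $\sigma^\ast \in \Omega$.

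The heart of the argument is a sweep-line induction on the number of rows in which $\sigma$ already agrees with $\sigma^\ast$. Define $\mathcal{C}_i$ to be the subset of $\Omega$ consisting of partitions that agree with $\sigma^\ast$ on rows $1,\dots,i$, so that $\mathcal{C}_0 = \Omega$ and $\sigma^\ast \in \mathcal{C}_n$. I would prove that for every $i < n$ and every $\sigma \in \mathcal{C}_i$, there is a sequence of recombination moves producing some $\sigma' \in \mathcal{C}_{i+1}$. Iterating drives the sweep-line down to the base of the triangle, yielding the desired path from an arbitrary $\sigma$ to $\sigma^\ast$.

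To realize the inductive step, fix $\sigma \in \mathcal{C}_i$ and let $v$ be the leftmost vertex of row $i+1$ whose label in $\sigma$ differs from its label in $\sigma^\ast$; write $v \in P_a$ in $\sigma$ but $v \in P_b$ in $\sigma^\ast$. Since $v$ is on the boundary of both districts at this point, $P_a$ and $P_b$ are adjacent in $\sigma$ and may be merged by a recombination move. I would then split $P_a \cup P_b$ along a carefully chosen curve so that $v$ ends up on the $P_b$-side while the labels on rows $1,\dots,i$ and on the earlier vertices of row $i+1$ are preserved. Validity of the resulting recombination step requires three things: (i) both new pieces are simply connected, (ii) the new pieces have sizes within $\pm 1$ of $k_a$ and $k_b$ respectively, and (iii) the untouched district $P_c$ remains simply connected. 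Item (iii) is automatic because no vertex of $P_c$ changes label; (ii) is handled by the nearly-balanced $\pm 1$ slack, which absorbs the unit shift caused by moving $v$; and (i) uses the defining feature of triangulations emphasized in Section 1.5 — the neighbors of any vertex form a cycle — which is what allows one to locally reroute the $P_a$–$P_b$ interface around $v$ without creating holes.

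The main obstacle will be the geometric case analysis inside the inductive step: verifying that a suitable resplitting of $P_a \cup P_b$ always exists, no matter how complicated the $P_a$–$P_b$ interface is within their union, no matter whether $v$ sits in the interior of row $i+1$ or on the triangle's boundary, and no matter whether $P_c$ is adjacent to $v$ or not. The conditions $n \geq 5$ and $k_i \geq n$ should give enough room to perform the local surgery near the apex and along the sides of $T$, and the sweep-line structure guarantees that the already-matched rows above are never disturbed. Once this inductive step is carried out in each configuration, the sweep-line induction terminates at $\sigma^\ast$, and reversibility of recombination completes the proof of irreducibility.
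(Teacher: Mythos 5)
Your high-level strategy --- a canonical state, a sweep-line induction, and reversibility of recombination steps --- is exactly the paper's skeleton, but the proposal has two concrete gaps, each of which is where essentially all of the paper's work lives. First, your inductive step assumes that relabeling $v$ can be achieved by a single recombination of $P_a \cup P_b$ that leaves $P_c$ untouched, so that condition (iii) is ``automatic.'' This is false in general: the obstruction to moving $v$ into $P_b$ is often that doing so would disconnect a district, and the repair can require reassigning a chain of vertices extending away from $v$ through \emph{all three} districts. The paper formalizes this as a \emph{tower}: a sequence $v_1, v_2, \ldots, v_t$ of consecutive vertices, alternating among districts (including the third one), which must be flipped from the bottom up before $v_1$ can finally be relabeled. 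No single two-district resplit accomplishes this, and the existence and termination of towers (Lemmas~\ref{lem:tower-forbid}--\ref{lem:tower}) is a nontrivial structural argument you would still need to supply.

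Second, and more fatally, the claim that the $\pm 1$ slack ``absorbs the unit shift caused by moving $v$'' cannot carry consecutive steps of your induction. During the sweep, many consecutive wrong vertices must all be moved \emph{into} the same district (the one being grown), so after one relabeling you sit at $|P_b| = k_b + 1$ and the very next relabeling in the same direction would leave $\Omega$. You must therefore rebalance between steps --- return some vertex from the enlarged district to the deficient one --- \emph{without disturbing the already-swept prefix}, and it is precisely this rebalancing that occupies most of the paper: the four-case analysis (A)--(D) depending on how $P_2$ and $P_3$ meet the boundary, the Shrinkability and Alternation Lemmas for finding a reassignable vertex, the Unwinding Lemma for contracting interleaved arms of two districts, and the Cycle Recombination Lemma for rearranging the interior of a cycle in one recombination step. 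Your proposal also omits the separate argument (Lemma~\ref{lem:nearlybalanced}) that an arbitrary \emph{nearly} balanced starting partition can first be brought to a balanced one. In short, the route is right, but the proposal defers exactly the steps whose difficulty is the content of the theorem.
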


This is a significant step beyond previous irreducibility results. We go beyond small, computationally-verified examples to an infinite class of graphs, and do so under extremely tight conditions on the district size: districts never get more than one vertex larger or smaller than their prescribed size. 
It is surprising that relaxing district sizes by just one vertex is sufficient for proofs to be successful. 
%
It is hoped this first step showing irreducibility for arbitrarily large graphs, and the ideas and approaches contained in this paper, can be used as a springboard for further irreducibility results. Future work includes generalizing this result to more than three districts, other subsets of the triangular lattice, and perhaps eventually all planar triangulations. These all now seem plausibly within reach.

Our proof is constructive, that is, we give a sequence of moves that can transform any partition into any other partition.  As a consequence, analyzing the number of recombination steps in these paths gives us a bound on the diameter of the state space. 

\begin{thm}\label{thm:dia_intro}
		Let $T$ be a triangular subset of the triangular lattice with side length $n\geq 5$. Let $k_1$, $k_2$, and $k_3$ be integers satisfying $k_1 + k_2 + k_3 = n(n+1)/2 = |V(T)|$ and each $k_i \geq n$.  Let $\Omega$ be all partitions of $T$ into three simply  connected pieces $P_1$, $P_2$, and $P_3$ where $|P_i| \in [k_i-1, k_i+1]$ for $i = 1,2,3$.  At most $O(n^3)$ recombination steps are required to transform any partition of $\Omega$ into any other partition of $\Omega$. 
\end{thm}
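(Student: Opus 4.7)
\textbf{Proof plan for Theorem~\ref{thm:dia_intro}.} The strategy is to read off the step count from the constructive proof of Theorem~\ref{thm:main}. Because every valid recombination move is reversible (as noted in the preceding exposition), it suffices to fix a single canonical partition $\sigma^\ast \in \Omega$ and to show that every $\sigma \in \Omega$ can be transformed into $\sigma^\ast$ in $O(n^3)$ recombination steps; the triangle inequality, routed through $\sigma^\ast$, then upgrades this to the same asymptotic bound on the diameter of $\Omega$. A convenient choice of $\sigma^\ast$ is the partition into three ``rows'' of the triangle whose sizes are exactly $k_1,k_2,k_3$, aligned with whichever direction the sweep line of Theorem~\ref{thm:main} proceeds; this is the natural terminal configuration of the sweep and belongs to $\Omega$ by the hypothesis $k_i \geq n$.

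The next step is to decompose the sweep-line procedure into its natural phases and count steps per phase. The sweep line passes through $O(n)$ distinguished positions (essentially one per row of the triangular lattice, of which there are $n$), and at each position the portion of the partition ``behind'' the line already agrees with $\sigma^\ast$ on those vertices while the three districts restricted to the remaining region are each simply connected and within $\pm 1$ of a target size. Advancing the sweep line one position requires integrating one more row of $O(n)$ vertices into the finalized region. I expect this to be implementable with $O(n^2)$ recombination moves per advancement: a constant number of merges of adjacent district pairs (each a single recombination step), followed by a sequence of boundary ``shifts'' that redistribute vertices one or two at a time along the frontier, each such shift realized by a single merge--resplit and involving at most $O(n)$ vertices on a row. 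Multiplying the per-phase cost $O(n^2)$ by the $O(n)$ phases gives the stated $O(n^3)$ bound.

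The main obstacle is verifying the per-phase cost. The difficulty is that every intermediate partition in the path must lie in $\Omega$: all three pieces must remain simply connected and within $\pm 1$ of their prescribed sizes after each merge--resplit. This rules out reassigning a block of vertices in one shot and forces the sweep-advancement to proceed through a carefully engineered sequence of ``small'' moves, each of which is individually valid. The accounting must confirm that the sub-procedures of Theorem~\ref{thm:main} compose without an additional $\Theta(n)$ overhead for re-establishing simple-connectivity or balance after each shift; if the sub-procedures are organized so that near-balance is maintained throughout and simple-connectivity is preserved by local invariants on the boundary of the advancing row, the quadratic per-phase budget is sufficient.

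Finally, I would double-check the bound by considering the two extremes. A naive lower bound of $\Omega(n^2)$ comes from partitions that disagree with $\sigma^\ast$ on $\Theta(n^2)$ vertices, since each recombination step only touches the two districts being merged; a naive upper bound of $O(n^4)$ would result if one needed $O(n)$ moves per vertex to place it. The claimed $O(n^3)$ lies strictly between these, and the sweep-line decomposition is precisely what allows us to amortize the placement of an entire row of $O(n)$ vertices into $O(n^2)$ rather than $O(n^3)$ moves per phase.
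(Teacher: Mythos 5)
Your overall architecture matches the paper's: route every partition through a canonical ground state, invoke reversibility of recombination moves, and read the step count off the constructive proof of Theorem~\ref{thm:main}. The top-level arithmetic ($O(n)$ sweep positions, or equivalently $O(n^2)$ vertex placements, each costing $O(n)$ recombination steps for a tower move) is also the paper's. However, you have flagged but not resolved the one point on which the whole bound actually turns: the cost of the rebalancing step that follows each tower move. After each vertex is added to $P_1$ the partition becomes nearly balanced, and the rebalancing procedures of Theorem~\ref{thm:main} (in particular the Unwinding Lemmas, Lemmas~\ref{lem:s1s2} and~\ref{lem:s1s2x}) are written as sequences of $O(k_1+k_2)=O(n^2)$ individual flip moves. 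If each of those flips is counted as a separate recombination step, each of the $O(n^2)$ rebalancing events costs $O(n^2)$, and the total degrades to $O(n^4)$ --- exactly the naive upper bound you mention at the end. Your proposal hopes this overhead can be avoided ``if the sub-procedures are organized so that near-balance is maintained throughout,'' but near-balance is maintained in the paper's construction and the flip count is still $\Theta(n^2)$; maintaining balance is not the issue.

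The missing idea is the compression argument the paper uses in Lemma~\ref{lem:diameter}: in an unwinding sequence, every flip except possibly the final one (which adds a vertex to $P_3$) reassigns vertices only between $P_1$ and $P_2$, leaving $P_3$ untouched. A recombination step is allowed to rearrange two districts arbitrarily in a single move, so the entire flip sequence collapses to one recombination of $P_1$ and $P_2$ that jumps directly to the final configuration (or to the configuration just before the single flip into $P_3$), i.e., $O(1)$ recombination steps per rebalancing event. The cycle-recombination steps are likewise single moves by construction. With that observation each rebalancing costs $O(1)$, each vertex placement costs $O(n)+O(1)=O(n)$, and the $O(n^3)$ bound follows; without it your per-phase budget of $O(n^2)$ is not met. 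I would also note that intermediate states of a collapsed unwinding sequence need not individually lie in $\Omega$ --- only the endpoints of each single recombination step must --- which is precisely why the compression is legitimate and why your worry about every intermediate partition lying in $\Omega$ is less restrictive than you suggest.
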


If we let $N = n(n+1)/2$ be the number of vertices in the region $T$ we consider, this bound is $O(N^{3/2})$, which is much smaller than expected. 
Because we do not claim to have found the most efficient way of moving between any two partitions, it is possible the true diameter of the state space is even smaller.




\subsection{Related Work}

{\bf Extensions of Recombination Markov Chains:} While recombination Markov chains have been widely used in practice to create ensembles and evaluate potential gerrymandering, they have also formed the basis for further explorations.  For example,~\cite{DukeReCom2} gives a multi-level version of the recombination Markov chain that, in addition to computational speed-ups, can help preserve communities of interest, such as counties.  Additionally,~\cite{shortbursts} uses the recombination Markov chain to find districting plans that have many majority-minority districts by repeatedly running short `bursts' of the recombination chain from carefully chosen starting points. Since new algorithms are being created with recombination Markov chains as a key underlying process, it's essential we continue to develop a rigorous understanding of recombination Markov chains, as we do here.

{\bf Flip Markov Chains}: Another type of Markov chain that has been used for sampling districting plans is flip Markov chains~\cite{DukeNC, ImaiFlip, frieze2022subexponential}.  In these chains, only a single vertex is reassigned to a new district in each step.
 Flip moves are a subset of recombination moves: any flip move can be achieved by a recombination step that merges two districts and splits them up such that only one vertex has changed its district assignment.  
It is known that flip moves sometimes connect the state space and sometimes do not. For example, in~\cite{akitaya2023flip}, authors show in 2-connected graphs when districts can get arbitrarily large or small, flip moves connect all districting plans. In~\cite{frieze2022subexponential}, the authors note their flip Markov chain may not be irreducible (and give an example of this) and instead restrict their attention to the connected pieces of the state space. However, in many instances, whether or not flip moves connects the state space remains an open question. 

In our work, because flip moves are a subset of recombination moves and are simpler, we frequently focus on flip moves. Most results apply to both recombination moves and flip moves. The main exception is the {\it Cycle Recombination Lemma} (Lemma~\ref{lem:cycle-recom}) which requires a recombination step rather than a flip step. Recombination steps are also used at the end of our sweep line process to ultimately reach a ground state (Lemma~\ref{lem:groundstate}), and to move between ground states. 
If alternate proofs of these lemmas using flip steps rather than recombination moves is found, our results would also imply flip Markov chains are irreducible in the same settings. 
We believe such a result is likely possible, but we have not pursued it because our focus is on recombination moves. 


While Markov chains have been most often used to create random samples, requiring running the chains for many steps, flip Markov chains have also been used to detect `careful crafting' of districting plans by detecting whether a plan is an outlier with respect to the stationary distribution~\cite{CFP, CFMP}. These methods do not require the Markov chains to be irreducible, but provide a single significance value rather than a more robust understanding of the space of possible districting~plans.

	
	

	
	

{\bf Other Methods for Generating Ensembles}: It should be noted Markov chains are not the only methods employed to generate ensembles of districting plans. For example, the first papers introducing the idea of ensembles created them by randomly merging precincts to form the correct number of districts, and then exchanging precincts between districts to achieve population balance~\cite{chen-rodden-unintentional, chen-rodden-thicket}; it is challenging to know which distribution of districting plans this samples from. A technique known as Sequential Monte Carlo~\cite{ImaiReCom} generates random districting plans by iteratively sampling one district at a time using spanning tree methods and reweighting at each step to ensure convergence to a desired target distribution. The authors of~\cite{GS21,GGRS22} use a two-stage method to generate districting plans that allows the incorporation of a notion of fairness into the district selection process. 
	
		
{\bf Proving Irreducibility}: The proof of irreducibility we give here has some features in common with the irreducibility proofs in~\cite{cdrr16,oh_foraging}. The first shows a Markov chain on simply connected subgraphs of the triangular lattice is irreducible, and the second does the same in the presence of a fixed vertex that is constrained to always be in the subgraph. As in~\cite{cdrr16}, the main idea we use is a sweep-line procedure, adjusting the districting in a left-to-right fashion; sweep line approaches are common throughout the field of computational geometry.  As in~\cite{cdrr16, oh_foraging}, the idea of {\it towers} we use is inspired by the towers of~\cite{lrs}. 


\subsection{Discussion and Next Steps}

Before providing both a proof outline and a complete proof, which occupies the rest of this paper, we discuss the significance of our results and some potential next steps. 
First, we believe there are ample opportunities to simplify and shorten this proof. The focus of this paper was getting a complete proof rather than getting the most concise, elegant proof, so improvements can likely be made. This should be a first step before attempting to extend these results to new settings. 

The constraints placed on the problem (such as $n \geq 5$, $k_i \geq n$) were done so to simplify certain parts of the proofs; it's likely these conditions are not required for this result to be true, and additional work could weaken or entirely remove these constraints. Similarly, one could also hope to remove the simple connectivity constraint on each district. If one could show that from any districting plan where each district is connected but one or more districts are not simply connected, there exists a sequence of valid moves producing a simply connected districting plan, this would imply that Recombination is irreducible for all (not necessarily simply) connected districting plans. Such a result is likely possible using the approaches and lemmas of this paper, but this has not yet been pursued.

  
More significant next steps include generalizing the proof to more than three districts or other subregions of the triangular lattice. 
A main challenge in extending to more that three districts is case explosion: while a sweep-line argument like we use is likely possible, each step in this sweep line process can result in a nearly balanced partition, and we must rebalance the partition before proceeding with the next step of the sweep line process. We accomplish this rebalancing step by considering four cases, based on which districts are adjacent and whether these adjacencies occur along the boundary or not (see Figure~\ref{fig:4cases}). For even just four districts, the number of cases here would be much, much larger, and the corresponding rebalancing step much more difficult. Once the first district has been handled by the sweep-line algorithm, however, handling the remaining three districts should be straightforward using our results. Alternately, using an inductive approach by fixing one district and considering only the other three is also challenging, as the resulting region is not a triangle and can be extremely irregularly shaped. Extending our results beyond triangles to other convex shapes such as hexagons and parallelograms is likely possible, but non-convex regions - especially those with narrow bottlenecks - seem much more difficult. While we believe extending to more than three districts and the related problem of non-convex regions is possible, significant additional work will likely be required.  

A major next step would be to prove a similar result beyond the triangular lattice. The class of bounded-degree Hamiltonian planar triangulations seems the most likely candidate for success. Restricting our attention to triangulations is helpful because the neighbors of any given vertex form a cycle, meaning we can easily understand when a vertex can be removed from one district and added to another. Triangulations are also relevant to real-world redistricting applications, as the dual graphs representing states or regions are frequently triangulated or nearly triangulated. Hamiltonicity makes the definition of a ground state straightforward, and was also used in some of the results of ~\cite{akitaya2023flip, akitaya2022recom}, suggesting its usefulness. Several of the arguments included here would break down in the presence of large degree vertices, which is why we propose degree restrictions. However, there will be significant challenges in moving beyond the triangular lattice, as the assumption that the underlying graph is a regular lattice pervades nearly all of the proof.

The fundamental challenges in these kinds of results are finite-scale and non-local: in order to adjust a partition near a particular vertex, one may be constrained by the partition in the neighborhood of that vertex, necessitating first considering and adjusting the partition far away. While some recent results have considered infinite limits of graph partitions as partitions of the plane~\cite{akitaya2023polygonal,cannon2023balancedforest}, these resolve such problems by assuming one can always look at a finer scale.  For the redistricting application, it is interesting to understand how problems can be resolved without resorting to refinement, which is not always practicable for graphs arising from real-world geography. 

Finally, one might wonder about the more general case of population-weighted nodes. If one implements a multiplicative population tolerance $\varepsilon$, each district has an ideal population $P$, and each vertex has population less than $(\varepsilon/2)   P$, this means from a partition balanced within a multiplicative tolerance $\varepsilon/2$, any single vertex can be reassigned while staying within the overall tolerance $\varepsilon$.
In this case, there is hope an approach similar to ours might work, where we reassign single vertices in one step and rebalance (to within $\varepsilon/2$) when necessary. However, if node populations are larger, this presents additional difficulties that seem hard to resolve using our approaches.

Overall, this result is a major advancement that holds promise for inspiring future results.  
Knowing that recombination Markov chains are irreducible is a necessary first step to developing the theory behind them. A rigorous understanding of these Markov chains and their behavior is needed so that we can have confidence in the conclusions about gerrymandering they are used to produce.

\section{Proof Overview}

\label{sec:pfoverview}

The proof proceeds largely from first principles. The most complicated mathematics used are breadth-first search trees and some facts about boundaries of planar sets. Despite this, an incredible attention to detail and extensive careful constructions are still required to account for the intricacies that are possible in the partitions we consider. 

Let $\Gtri$ be the infinite graph whose edges and vertices are those of the infinite triangular lattice.  Let $T$ be a triangular subgraph of $\Gtri$.  We let $n$ denote the number of vertices along one side of (equilateral) triangle $T$, meaning $T$ contains $n(n+1)/2$ vertices total; Figure~\ref{fig:non-irred-ex}(b) gives an example when $n = 3$ and Figure~\ref{fig:ordering} gives an example when $n = 8$. Our proof will apply when $n \geq 5$. For simplicity, we assume $T$ is always oriented so that it has a vertical edge on its right side, as in both figures.  

Let $k_1$, $k_2$, and $k_3$ be such that each $k_i \geq n$ and $k_1 + k_2 + k_3 = n(n+1)/2$. 
We are interested in partitions of the vertices of $T$ into three simply connected subgraphs $P_1$, $P_2$, and $P_3$, of sizes $k_1\pm 1$, $k_2\pm1$, and $k_3\pm1$, respectively. The three sets $P_1$, $P_2$, and $P_3$ must be disjoint and their union must be $T$. In analogy to the redistricting motivation, we will call each of $P_1$, $P_2$, and $P_3$ a {\it district} of this partition. In an abuse of notation, we will let $P_i$ represent both the vertices in district $i$ as well as the induced subgraph of $T$ on this vertex set. Because it is a partition, the three sets $P_1$, $P_2$, and $P_3$ must be disjoint and their union must be $T$.  To avoid cumbersome language, throughout this paper we will say `partition' to mean a partition of $T$ into three simply connected districts. If $|P_1| = k_1$, $|P_2| = k_2$, and $|P_3| = k_3$, the partition is {\it balanced}. If the partition is such that $k_i - 1 \leq |P_i| \leq k_i + 1$ for $i = 1,2,3$ but it is not balanced, we say the partition is {\it nearly balanced}.

We will consider the state space $\Omega$ consisting of the balanced and nearly-balanced partitions of $T$ into three districts. We will examine the graph $G_\Omega$ whose vertices are the partitions in $\Omega$ where an (undirected) edge exists between $\sigma$ and $\tau$ if one district of $\sigma$ is identical to one district of $\tau$. These are exactly the transitions allowed by recombination Markov chains, which recombine two districts but leave a third untouched. If $G_\Omega$ is connected, this implies recombination Markov chains are irreducible, and our main theorem is equivalent to showing $G_\Omega$ is connected.  We now outline our approach for proving this theorem, in order to give the reader an idea what to expect,  before proceeding with the details in later sections.

\subsection{Ground States and Sweep Line}

We show, for every balanced or nearly balanced partition, there exists a sequence of moves transforming it into one of six ground states. Throughout this paper we consider a left-to-right, top-to bottom ordering of the vertices of $T$, where the leftmost vertex of $T$ is first; followed by the vertices in the next column, ordered from top to bottom; followed by the vertices in the third column, ordered from top to bottom; etc. This ordering when $n = 8$ is shown in Figure~\ref{fig:ordering}(a). Using this ordering, ground state $\sigma_{123}$ has the first $k_1$ vertices in $P_1$, the next $k_2$ vertices in $P_2$, and the final $k_3$ vertices in $P_3$; see Figure~\ref{fig:ordering}(b) for an example when $n = 8$ and $k_1 = k_2 = k_3 = 12$. Other ground states $\sigma_{132}$, $\sigma_{213}$, $\sigma_{231}$, $\sigma_{312}$, and $\sigma_{321}$ are defined similarly. Because each $k_i \geq n$, this is always a valid partition. It is straightforward to see the six ground states are connected to each other by recombination moves: any transposition of two adjacent indices in the ground state can be accomplished with one recombination step. 
Because for every recombination step, the reverse step is also valid, for irreducibility it suffices to show every balanced or nearly balanced partition can be transformed into one of these ground states.  

\begin{figure}
	\centering
	
	\includegraphics[scale = 0.85]{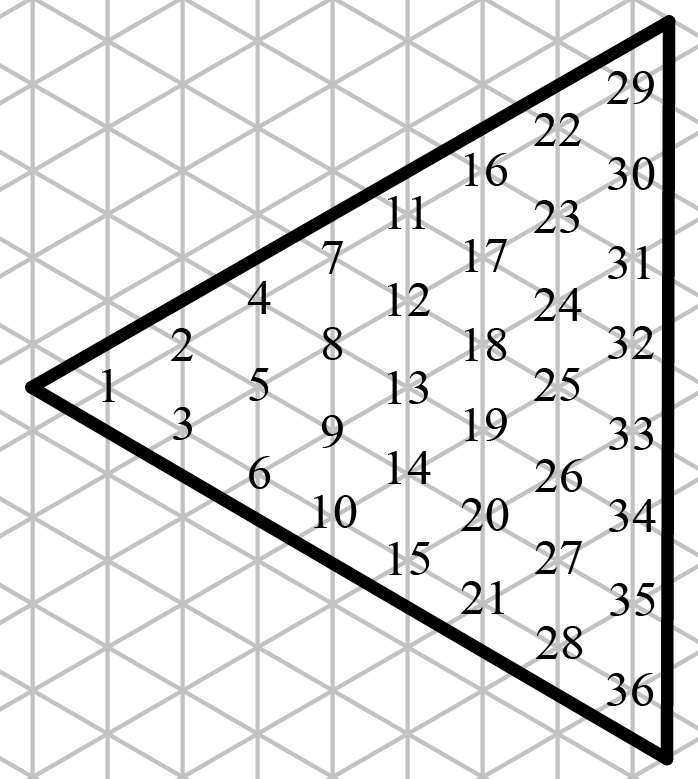} \hspace{10mm} 
	\includegraphics[scale = 0.85]{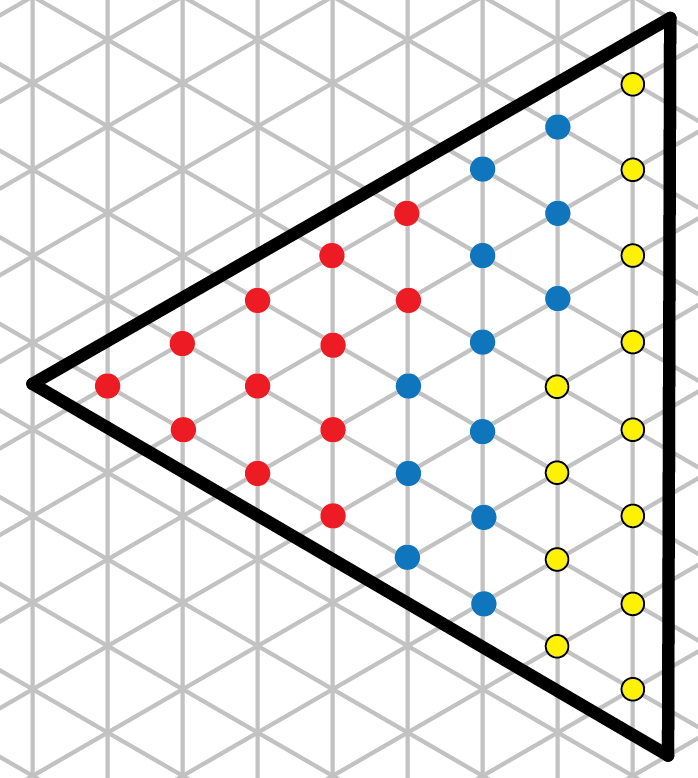} \ \ 
	
	(a) \hspace{6cm}(b) 
	\caption{(a) The left-to-right, top-to-bottom ordering we consider throughout this paper, shown for triangle $T$ with side length $n = 8$. (b) The ground state $\sigma_{123}$ when $k_1 = k_2 = k_3 = 12$, with district 1 shown in red, district 2 shown in blue, and district 3 shown in yellow. The condition $k_i \geq n$ ensures this is a valid partition. }
	\label{fig:ordering}
\end{figure}

Without loss of generality, we suppose $T$'s single leftmost vertex, which we call $\cc_1$, is in $P_1$, and we are trying to reach the ground state $\sigma_{123}$. We let $\cc_i$ be the first column which contains vertices not in $P_1$; see Figure~\ref{fig:sweepline_ex}(a) for an example. We will show how to (1) increase the number of vertices in $\cc_i \cap P_1$ and (2) if necessary, transform the result from a nearly balanced to a balanced partition without decreasing $\cc_i \cap P_1$. In both (1) and (2), any vertices left of $\cc_i$ remain unaffected.  Figure~\ref{fig:sweepline_ex} (b,c) gives an example of what this process looks like. After repeating this process for gradually increasing $i$, we eventually reach a state where there are no vertices in $P_1$ right of $\cc_i$.  In this case, a small number of recombination steps suffice to reach $\sigma_{123}$. 

We begin by outlining some key definitions and lemmas, and then give a high-level overview of how (1) and (2) are achieved.  The lemmas presented here lack some formality for ease of presentation; formal statements and complete proofs of these lemmas can be found in later sections.

\begin{figure}
	\centering
	\includegraphics[scale = 0.85]{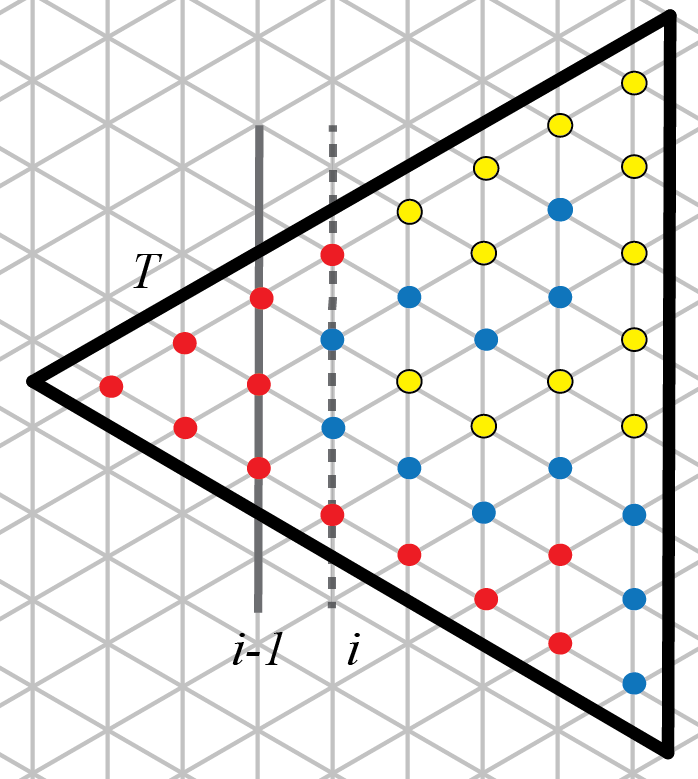} \ \ 
	\includegraphics[scale = 0.85]{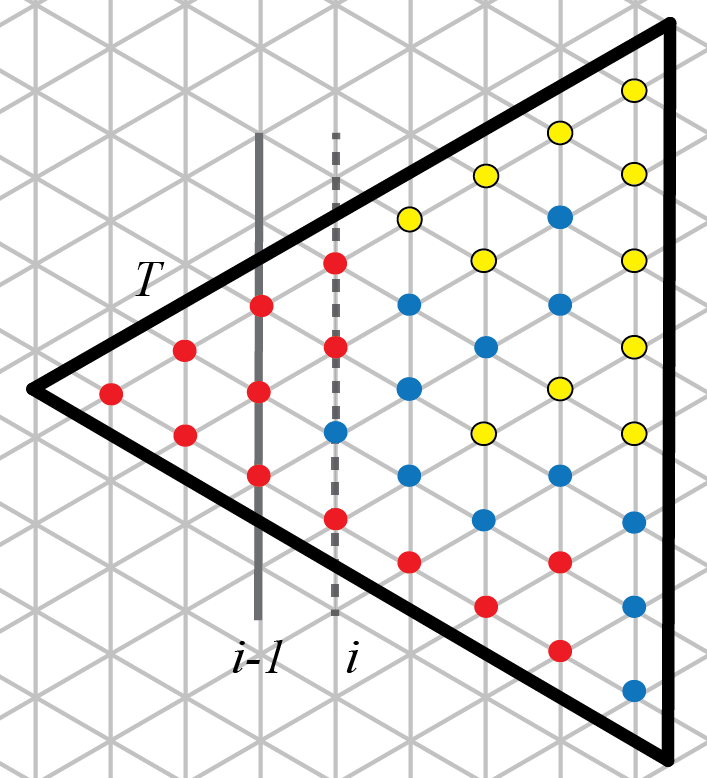} \ \ 
	\includegraphics[scale = 0.85]{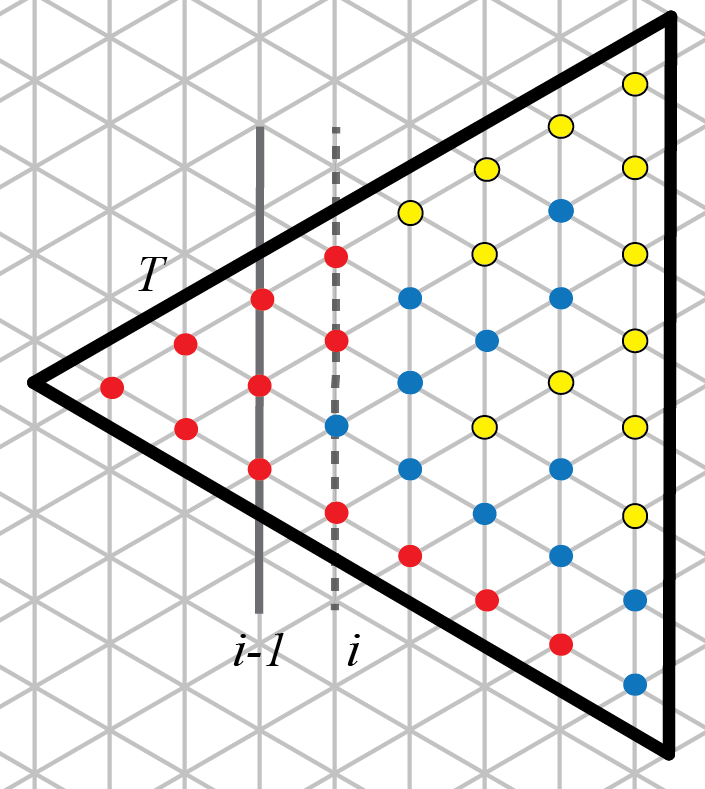}
	
	(a) \hspace{5cm}(b) \hspace{5cm}(c)
	\caption{An example of our sweep line process on a partition where $k_1 = k_2 = k_3 = 12$. $P_1$ is red, $P_2$ is blue, and $P_3$ is yellow.  (a) A balanced partition where the first $i-1$ columns are in $P_1$ but the $i^{th}$ column is not entirely within $P_1$ yet. (b) After applying a tower move, the number of vertices of $P_1$ in the $i^{th}$ column has increased, but the partition is now nearly balanced instead of balanced, with $|P_1| = k_1 + 1$ and $|P_3| = k_3 - 1$. (c) After applying Case A of our rebalancing procedure by making changes near where $P_2$ and $P_3$ are adjacent in $bd(T)$, we reach a balanced partition.}
	\label{fig:sweepline_ex}
\end{figure}

\subsection{Key facts and lemmas}

Let $bd(T)$ be the vertices in $T$ that are adjacent to a vertex outside of $T$. 
We let $N(v)$ be all neighbors of vertex $v$ in $\Gtri$, and note $N(v)$ is always a cycle of length 6.  
For $i \in \{1,2,3\}$, we say $v$'s $i$-neighborhood is $N(v) \cap P_i$, that is, all vertices in $P_i$ that are adjacent to $v$. This $i$-neighborhood is connected if $N(v) \cap P_i$ has only one connected component. While our overall proof is about recombination Markov chains, flip moves (where one vertex is assigned to a new district) are a subset of recombination moves, and we will often focus on flip moves because it makes our arguments easier. The following flip lemma describes when flip moves are possible. The simplicity of this flip lemma is a large reason why it is convenient to be working in the triangular lattice.  

\begin{lem*}[Flip Lemma, informal; Formally stated as Lemma~\ref{lem:remove-add}]
	If $P$ is a partition of $T$ into three simply connected districts, and $v \in P_i$ has a connected $i$-neighborhood and a connected, nonempty $j$-neighborhood for $j \in \{1,2,3\}$, $j \neq i$, then removing $v$ from $P_i$ and adding it to $P_j$ produces another partition of $T$ into three simply connected districts. 
\end{lem*}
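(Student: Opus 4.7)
The plan is to verify that after flipping $v$ from $P_i$ to $P_j$, each resulting district remains simply connected, where simple connectivity of a set $P$ means both $P$ and its complement $T \setminus P$ are connected. Write $P_i' := P_i \setminus \{v\}$ and $P_j' := P_j \cup \{v\}$; since $P_k$ is untouched and $T \setminus P_k = P_i' \cup P_j'$ has the same vertex set as before, $P_k$'s simple connectivity is automatic. It therefore suffices to verify that $P_i'$ and $P_j'$ are each connected and that their complements $T \setminus P_i'$ and $T \setminus P_j'$ are each connected.

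Two of these four statements are immediate. $P_j' = P_j \cup \{v\}$ is connected because $v$ has a nonempty $j$-neighborhood, so $v$ is adjacent to some vertex of $P_j$. And $T \setminus P_i' = (P_j \cup P_k) \cup \{v\}$ is connected because $P_j \cup P_k$ was already connected (this is the complement-condition in the simple connectivity of $P_i$) and $v$ is adjacent to $P_j$.

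The other two statements exploit the lattice fact that in $\Gtri$, consecutive vertices of the cyclic neighborhood $N(v)$ are adjacent (they form triangles with $v$). For $P_i'$ connected: the connected $i$-neighborhood of $v$ is a single arc of $N(v)$, so consecutive $i$-neighbors are joined by $\Gtri$-edges and all $i$-neighbors lie in one component of $P_i \setminus \{v\}$; for any other $w \in P_i'$, a shortest path from $w$ to $v$ in $P_i$ has an $i$-neighbor $u$ as its penultimate vertex, and the prefix of the path witnesses $w$ and $u$ in the same component. An analogous argument handles $T \setminus P_j' = (P_i \cup P_k) \setminus \{v\}$ whenever $v$ is in the interior of $T$: the non-$j$-neighbors of $v$ form the complement of the $j$-arc in the $6$-cycle $N(v)$, which is itself a connected arc of $N(v)$.

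The main obstacle is the boundary subcase: if $v$ lies on $\partial T$ and the $j$-arc is interior to the path $N(v) \cap T$, the non-$j$-neighbors split into two sub-arcs on either side of the $j$-arc, and the chain-connectivity argument no longer connects them in $(P_i \cup P_k) \setminus \{v\}$. Resolving this likely uses the global topology of $P_j$'s region---a simply connected subregion of $T$, so $T \setminus P_j$ is either a disc or an annulus---together with the fact that the two endpoints of $N(v) \cap T$ are boundary-neighbors of $v$ on $\partial T$, to route a path around $P_j$ that avoids $v$. The formal lemma may include an additional hypothesis covering this configuration (for instance, connectedness of the non-$j$-arc in $N(v) \cap T$), after which the argument reduces to the uniform planar-dual picture of districts as simply connected unions of hexagonal cells.
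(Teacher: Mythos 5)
Your proof rests on an incorrect reformulation of simple connectivity. You take ``$P$ is simply connected'' to mean ``$P$ and $T \setminus P$ are both connected,'' but in this paper (and in the intended topological sense) a district is simply connected when it is connected and does not \emph{enclose} any vertex outside itself; connectivity of the complement \emph{within $T$} is neither necessary nor sufficient. For instance, in the ground state $\sigma_{123}$ the middle district $P_2$ is simply connected, yet $T \setminus P_2 = P_1 \cup P_3$ is disconnected. This breaks two of your four checks: the claim that $T\setminus P_i' = P_j \cup P_k \cup \{v\}$ is connected ``because $P_j \cup P_k$ was already connected'' starts from a false premise, and your target statement ``$T \setminus P_j'$ is connected'' is actually false in easy examples (flip a vertex of $P_1$ into the middle district $P_2$ of a ground state: $(P_1\setminus v)\cup P_3$ remains disconnected). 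The correct obligations are: $P_i\setminus\{v\}$ and $P_j\cup\{v\}$ are each connected, and neither encircles a vertex it does not contain.

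The pieces you do prove correctly --- connectivity of $P_i\setminus\{v\}$ by rerouting through the arc $N(v)\cap P_i$, and connectivity of $P_j\cup\{v\}$ from the nonempty $j$-neighborhood --- match the paper's Lemmas~\ref{lem:remove} and~\ref{lem:add}. But the genuine crux, that adding $v$ to $P_j$ creates no hole, is exactly the part you leave unresolved: your ``boundary subcase'' is open-ended, and the extra hypothesis you speculate the formal lemma might carry is not there. The paper closes this with a short argument that needs no boundary case analysis: if $P_j\cup\{v\}$ contained a cycle $C$ enclosing some $x\notin P_j\cup\{v\}$, then $C$ must pass through $v$; replacing $v$ in $C$ by the path between its two cycle-neighbors inside the connected set $N(v)\cap P_j$ yields a cycle entirely in $P_j$ still enclosing $x$, contradicting the simple connectivity of the original $P_j$. (Symmetrically, removal of $v$ from $P_i$ cannot create a hole because $v$ has at most five $i$-neighbors, having at least one neighbor in $P_j$.) You would need to replace your complement-connectivity framework with this enclosure-based argument for the proof to go through.
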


\noindent The following lemma suggests in most cases, when a vertex can be removed from $P_i$, it can always be added to one of the two other districts, $P_j$ or $P_l$ with $j,l \in \{1,2,3\}$ and $i,j,l$ all distinct.  

\begin{lem*}[Alternation Lemma, informal; Formally stated as Lemma~\ref{lem:alternation}]
	Let $P$ be a partition of $T$ into three simply connected districts, and suppose $v \in P_i$ has a connected $i$-neighborhood, is not in $bd(T)$, and is adjacent to at least one vertex in a different district. Then $v$'s $j$-neighborhood or $l$-neighborhood is connected and nonempty, for $j \neq l$. 
\end{lem*}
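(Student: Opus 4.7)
My plan is to reduce the lemma to a Jordan-curve argument in the planar embedding of $T$. Since $v \notin bd(T)$, the six neighbors $N(v)$ form a $6$-cycle; by the connected $i$-neighborhood hypothesis the $P_i$-labels on $N(v)$ form a single (possibly empty) contiguous arc, and at least one non-$i$ label appears by assumption. I will argue by contradiction, supposing that both $N(v) \cap P_j$ and $N(v) \cap P_l$ fail to be simultaneously connected and nonempty. The easy sub-case occurs when one of them, say $N(v) \cap P_l$, is empty: then $N(v) \cap P_j$ equals the complement of the $i$-arc in the $6$-cycle, which is itself a nonempty contiguous arc, so $N(v) \cap P_j$ is connected and nonempty, a contradiction.

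In the remaining sub-case both $N(v) \cap P_j$ and $N(v) \cap P_l$ are nonempty and disconnected. I would delete the $i$-arc from $N(v)$, yielding a path of length at most $6$ labeled by $j$ and $l$ in which the number of maximal $j$-blocks matches the number of $j$-arcs on $N(v)$ (and similarly for $l$). Both counts are at least $2$, and since blocks of a two-letter alphabet alternate, the path has at least four alternating blocks. Picking one vertex from each of the first four blocks produces $p_1, p_2, p_3, p_4$ in cyclic order on $N(v)$ with labels, up to swapping the roles of $j$ and $l$, equal to $j, l, j, l$.

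The finish uses planarity. Using connectivity of $P_j$, I would pick a simple path $\pi_j \subset P_j$ from $p_1$ to $p_3$; this path avoids $v \in P_i$. Appending the edges $p_3 v$ and $v p_1$ produces a simple cycle $C_j$ in $T$, which in the planar embedding inherited from the triangular lattice is a Jordan curve. At $v$, the two edges $p_1 v$ and $v p_3$ separate $N(v) \setminus \{p_1, p_3\}$ into two arcs in the local rotation, one containing $p_2$ and the other $p_4$, so in a sufficiently small disk about $v$ these two vertices lie in different half-disks. By the Jordan curve theorem, $p_2$ and $p_4$ lie in different components of the plane minus $C_j$. Now I would choose a simple path $\pi_l \subset P_l$ from $p_2$ to $p_4$ using the connectivity of $P_l$: in the plane, $\pi_l$ must cross $C_j$, yet $C_j \subset P_j \cup \{v\}$ is vertex-disjoint from $\pi_l \subset P_l$, and vertex-disjoint paths in a planar embedding cannot cross, which is the desired contradiction.

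The main obstacle I anticipate will be making the Jordan-curve step fully rigorous: verifying that $p_2$ and $p_4$ lie in different global components of the complement of $C_j$ based only on the local rotation at $v$. The cleanest phrasing is that a sufficiently small disk around $v$ is bisected by the two edges $p_1 v$ and $v p_3$, placing $p_2$ and $p_4$ in different half-disks and hence, by continuity along any hypothetical $\pi_l$, in different global regions of $\mathbb{R}^2 \setminus C_j$.
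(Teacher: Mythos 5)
Your proposal is correct and follows essentially the same route as the paper: the easy case where one of the two other districts is absent from $N(v)$ is handled by noting the complement of the $i$-arc is itself an arc, and the main case extracts four cyclically alternating neighbors $j,l,j,l$ and derives a contradiction from a Jordan-curve separation argument using a path in one district plus $v$ against the connectivity of the other district. The paper phrases the separation step slightly less formally (a cycle through $P_3$ and $x$ "encircles" one $P_2$-vertex but not the other), but the underlying argument is identical.
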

\noindent This is called the Alternation Lemma because its proof involves showing districts $j$ and $l$ cannot alternate too much in $N(v)$: having an ordered sequence of four vertices $a$, $b$, $c$, $d$ in $N(v)$ with $a,c \in P_j$ and $b,d \in P_l$ is impossible because then $P_j$ and $P_l$ cannot both be connected. 

Not every vertex can be removed from one district and added to another while maintaining simple connectivity of all districts.  We say a simply connected subgraph $S \subset P_i$ is {\it shrinkable} if it contains a vertex that can be removed from $P_i$ and added to a different district. 
Most sets will be shrinkable, but there are two notable exceptions: If $S$ does not contain any vertices adjacent to other districts, or if $S$ is a path ending with a single vertex in $bd(T)$. Figure~\ref{fig:non-shrinkable} gives an example of each.

\begin{figure}
	\centering
	\includegraphics[scale = 0.85]{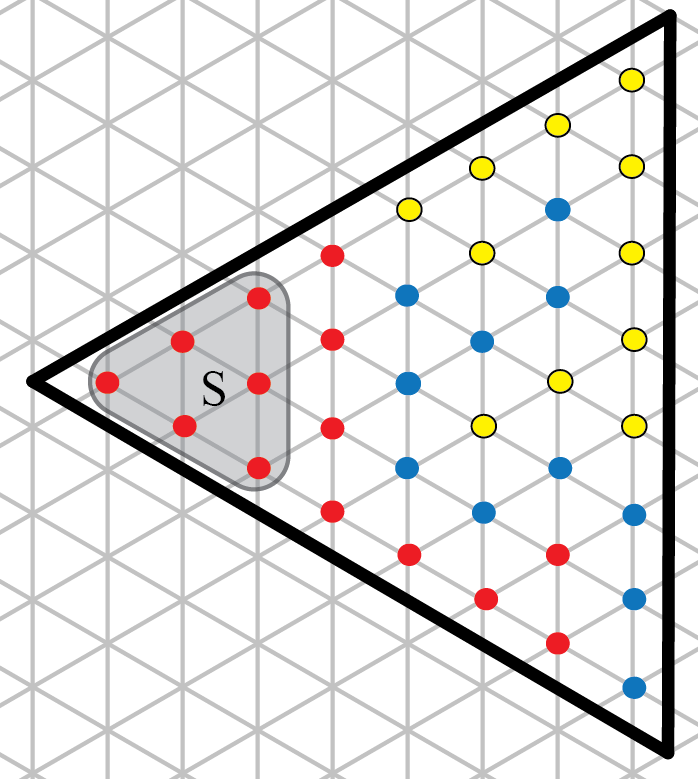}
	\hspace{1cm}
	\includegraphics[scale = 0.85]{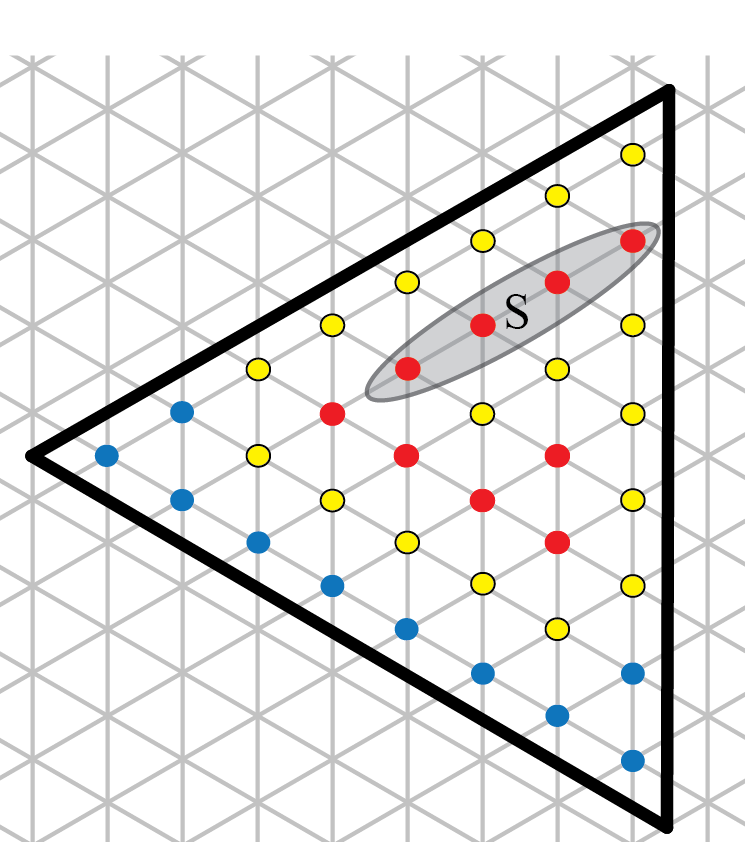}
	
	(a) \hspace{5.5cm} (b)
	\caption{Two examples of simply connected subsets $S$ (grey) of $P_1$ (red) that are not shrinkable. In (a), $S$ is not shrinkable because no vertex of $S$ is adjacent to any district besides $P_i$. In (b), $S$ is not shrinkable because the removal of any vertex except the rightmost will disconnect $P_1$, and while the rightmost vertex of $S$ can be removed from $P_1$, adding it to $P_2$ (blue) produces something not connected and adding it to $P_3$ (yellow) produces something that is not simply connected. }
	\label{fig:non-shrinkable}
\end{figure}

The following lemma gives sufficient conditions for $S \subseteq P_i$ to be shrinkable, and was crafted exactly to avoid the two non-shrinkable examples of Figure~\ref{fig:non-shrinkable}.  Note we only consider the $S$ that can be produced by removing a simply connected subset of $P_i$; this ensures, for example, that $S$ is not entirely contained in the interior of $P_i$. 

\begin{lem*}[Shrinkability Lemma, informal; Formally stated as Lemma~\ref{lem:shrinkable}] If $S \subseteq P_i$ is simply connected and $P_i \setminus S$ is simply connected, the following two conditions are each sufficient for $S$ to be shrinkable: 
\begin{enumerate}[label=(\Roman*)]
	\item $S \cap bd(T) = \emptyset$.
	\item $S$ is adjacent to a different district, and $P_i$ contains at least two vertices in $bd(T)$.  
\end{enumerate}
\end{lem*}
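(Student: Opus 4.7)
The plan is to handle the two conditions separately, in each case producing a vertex $v \in S$ that satisfies the hypotheses of the Flip Lemma: a connected $i$-neighborhood and a connected, nonempty $j$-neighborhood for some $j \neq i$.

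For Case (I), I first observe that $S$ must contain some vertex adjacent to a district other than $P_i$. If no vertex of $S$ had such an external neighbor, then every neighbor of $S$ outside $S$ would lie in $P_i \setminus S$; combined with $S \cap bd(T) = \emptyset$, this would force $P_i \setminus S$ to fully surround $S$ and produce a hole in $P_i \setminus S$, contradicting its simple connectivity. Because both $S$ and $P_i \setminus S$ are simply connected subsets of the simply connected region $P_i$, the internal boundary between them is a topological arc whose endpoints lie on the outer boundary of $P_i$, so the remainder of $\partial S$, the external boundary facing other districts, is itself an arc---non-empty by the previous observation and touching no vertex of $bd(T)$. I would then select a vertex $v$ on this external arc whose cyclic neighbor list partitions into a single contiguous run of $P_i$-vertices and a single contiguous run of non-$P_i$ vertices; such a $v$ has connected $i$-neighborhood. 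Because $v \notin bd(T)$, the Alternation Lemma supplies a connected nonempty $j$-neighborhood for some $j \neq i$, and the Flip Lemma then produces the desired flip.

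For Case (II), the hypothesis directly supplies a vertex of $S$ adjacent to another district, and the Case (I) argument applies verbatim provided some such vertex lies outside $bd(T)$. The only additional difficulty is when every vertex of $S$ adjacent to another district lies in $bd(T)$. The assumption that $P_i$ has at least two vertices of $bd(T)$ is used to rule out the obstruction of Figure~\ref{fig:non-shrinkable}(b), in which $S$ is essentially a thin path terminating at a single $bd(T)$-vertex of $P_i$ and that endpoint cannot be legally added to either other district. With a second $bd(T)$-vertex of $P_i$ available, I would locate a flippable boundary vertex $v$ by analyzing the cyclic structure of $N(v)$ along $bd(T)$: the presence of the second $bd(T)$-vertex of $P_i$ guarantees that removing $v$ preserves simple connectivity of $P_i$, and by an Alternation-style case analysis adapted to $bd(T)$, at least one of the two choices of target district $j$ preserves both connectedness and simple connectivity of the district receiving $v$.

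The main obstacle in both cases is producing a vertex with connected $i$-neighborhood among those adjacent to a different district. The triangular lattice is essential here, since $N(v)$ is a $6$-cycle and connectivity of $N(v) \cap P_i$ reduces to its being a single arc, making the analysis purely combinatorial. The delicate step is ruling out that every candidate boundary vertex of $S$ is a pinch point; this should follow from the joint simple connectivity of $S$ and $P_i \setminus S$, which forces the external boundary of $S$ to be an arc and hence to contain smooth (non-pinch) vertices somewhere in its interior.
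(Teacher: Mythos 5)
Your Case (I) follows the same overall route as the paper (produce an exposed vertex of $S$ with connected $i$-neighborhood, then invoke the Alternation and Flip Lemmas), but you replace the paper's existence argument with a topological assertion: that the external boundary of $S$ is an arc and therefore contains a non-pinch vertex. That last inference is exactly the content that needs proof --- an arc of exposed vertices could a priori consist entirely of cut vertices of $P_i$, and ``an arc hence contains smooth vertices somewhere in its interior'' does not follow from anything you have established. The paper supplies this via Lemma~\ref{lem:remove-W}: a strong induction that, given an exposed cut vertex $w$, passes to a component of $P_i\setminus\{w\}$ not containing $W$, shows that component is again exposed, and recurses until a non-cut exposed vertex is found. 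Your sketch needs that argument (or an equivalent one); as written the crux is asserted rather than proved, though it is not wrong.

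The genuine gap is in Case (II), in two places. First, your dichotomy is keyed to whether \emph{some} exposed vertex of $S$ avoids $bd(T)$, but that is the wrong split: in the obstruction of Figure~\ref{fig:non-shrinkable}(b) the interior vertices of the path are exposed and off the boundary, yet they are all cut vertices, so the Case (I) machinery produces nothing from them; what matters is whether the exposed \emph{non-cut} vertex $x$ you can actually construct lies on $bd(T)$. Second, for the boundary subcase you claim the second $bd(T)$-vertex of $P_i$ ``guarantees that removing $v$ preserves simple connectivity'' and that an ``Alternation-style case analysis adapted to $bd(T)$'' finishes; neither is right. Removability is governed by $v$'s $i$-neighborhood being connected, which the second boundary vertex has nothing to do with, and the Alternation Lemma genuinely fails on $bd(T)$ because one of the two paths between the components of $N(x)\cap P_j$ can exit $T$ (see Figure~\ref{fig:bdryx}). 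The paper's actual use of the hypothesis $|P_i\cap bd(T)|\geq 2$ is a global enclosure argument: if $x$'s $j$-neighborhood is disconnected and $x$ has no neighbors in the third district, then a path in $P_j$ joining the two components, closed up through $x$, forms a cycle forcing all of $P_i\setminus\{x\}$ into its interior, so $x$ would be the \emph{unique} vertex of $P_i\cap bd(T)$ --- contradiction. That cycle argument is the step your proposal is missing, and without it the boundary case does not close.
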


While the Shrinkability Lemma allows us to find a single vertex to remove, we cannot repeatedly remove vertices from the same district because this will produce partitions that are not balanced or nearly balanced.  Instead, if we wish to remove multiple vertices from a particular district, we must alternate with adding new vertices to that district somewhere else.  It is important the vertices we are adding are not adjacent to the vertices we are removing, otherwise we can't know any real progress is being made. The following Unwinding Lemma gets at this idea, where we have $S_1\subseteq P_1$ that we want to add to $P_2$ and $S_2 \subseteq P_2$ that we want to add to $P_1$. This lemma is only applied in the case where $|P_3| = k_3 - 1$, so adding a vertex to $P_3$ to bring it up to its ideal size is also considered a successful outcome. It is called the Unwinding Lemma because $S_1$ and $S_2$ are frequently long, winding arms of $P_1$ and $P_2$, respectively, that we wish to contract so our partition is less~intertwined.

\begin{lem*}
	[Unwinding Lemma, informal; Formally stated as Lemma~\ref{lem:s1s2}] Let $S_1 \subseteq P_1$ and $S_2 \subseteq P_2$ be shrinkable and not adjacent. There exists a sequence of moves after which (1) a vertex has been added to $P_3$, (2) all vertices in $S_1$ have been added to $P_2$, or (3) all vertices in $S_2$ have been added to $P_1$. 
\end{lem*}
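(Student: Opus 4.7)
The plan is to make alternating single-vertex flip moves on $S_1$ and $S_2$, using the Shrinkability and Alternation Lemmas at each step and exploiting the fact that $P_3$ is deficient by one so that a flip into $P_3$ immediately gives outcome (1). Without loss of generality assume the starting sizes are $(|P_1|,|P_2|,|P_3|) = (k_1+1,k_2,k_3-1)$ (the other near-balanced case is symmetric). In each round, I first apply the Shrinkability Lemma to $S_1$ to obtain a vertex $v \in S_1$ that can be removed from $P_1$. By the Alternation Lemma and the Flip Lemma, $v$ has a connected nonempty neighborhood in at least one of $P_2$ or $P_3$. If $v$'s $P_3$-neighborhood is nonempty, flip $v$ into $P_3$: this brings $|P_3|$ up to $k_3$ and achieves outcome (1). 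Otherwise flip $v$ into $P_2$, producing sizes $(k_1,k_2+1,k_3-1)$, which is still nearly balanced. Then run the symmetric argument on $S_2$: either flip a vertex into $P_3$ (outcome (1)) or into $P_1$, restoring sizes $(k_1+1,k_2,k_3-1)$.

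Each complete round thus either triggers outcome (1) or decreases $|S_1|+|S_2|$ by exactly two while preserving near-balance. Since $|S_1|$ and $|S_2|$ are finite, after at most $|S_1|+|S_2|$ flips either outcome (1) is triggered, or one of the residual sets is exhausted, giving outcome (2) or outcome (3). The non-adjacency hypothesis is crucial here: because no vertex of $S_1$ is adjacent to any vertex of $S_2$, the local neighborhood of every vertex of $S_2$ (and its membership in $bd(T)$) is unaffected by flips performed on the $S_1$ side, so the shrinkability hypotheses for $S_2$ in the next round still refer to the same graph-local data; the same holds with the roles swapped.

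The main obstacle is maintaining the hypotheses of the Shrinkability Lemma for the residuals $S_1'$ and $S_2'$ across rounds: both $S_1'$ and $P_1 \setminus S_1'$ must stay simply connected, and $S_1'$ must continue to satisfy either condition (I) or condition (II). My plan is to choose the vertex $v$ to flip as a \emph{peelable} vertex of $S_1$, informally a leaf of a spanning tree of $S_1$ rooted at a vertex of $S_1$ adjacent to $P_2$ or $P_3$; such a leaf can be peeled off without disconnecting $S_1'$, and the Flip Lemma guarantees $P_1 \setminus S_1'$ remains simply connected. If $S_1$ originally satisfied (I), the peelable $v$ can be taken to lie outside $bd(T)$, so $S_1' \cap bd(T) = \emptyset$ persists. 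If $S_1$ only satisfies (II), the delicate sub-case is ensuring $P_1$ retains at least two boundary vertices and that $S_1'$ remains adjacent to another district; I would handle this by preferring interior peelable vertices and arguing, when only boundary choices remain, that $P_1$'s boundary content is large enough (using the $k_i \geq n$ hypothesis together with the structure of the sweep line invariant under which the Unwinding Lemma is invoked) to survive the flip. Once this invariant is pinned down, the alternation argument above closes the proof.
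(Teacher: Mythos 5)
Your skeleton matches the paper's proof of Lemma~\ref{lem:s1s2}: alternate one flip out of $S_1$ with one flip out of $S_2$, stop immediately if either flip can land in $P_3$ (reaching a balanced partition), and otherwise note that $|S_1|+|S_2|$ drops by two per round so one set eventually empties; non-adjacency is used for exactly the purpose you identify, namely that moves on one side do not disturb the local data on the other.

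However, the step you explicitly defer --- maintaining shrinkability of the residual sets across rounds --- is the entire crux of the lemma, and the mechanism you sketch does not close it. A leaf of a spanning tree of $S_1$ keeps the residual connected, but nothing guarantees such a leaf is exposed, that its removal leaves $P_1$ simply connected, or that it can actually be added to $P_2$ or $P_3$ (its $2$- and $3$-neighborhoods could both be disconnected, and the Alternation Lemma does not apply to vertices of $bd(T)$, so your direct appeal to it is also incomplete for boundary vertices). The paper resolves this not by choosing the flipped vertex carefully but by strengthening the hypotheses of the formal statement: $S_1$ and $S_2$ are required to be components of $P_1 \setminus w_1$ and $P_2 \setminus w_2$ for cut vertices $w_1, w_2$, together with boundary conditions ($S_1 \cap bd(T) = \emptyset$ or $(P_1 \setminus S_1) \cap bd(T) \neq \emptyset$, and similarly for $S_2$) that are \emph{monotone} under the moves performed --- $S_1$ only shrinks and $P_1 \setminus S_1$ only grows --- so whichever condition held initially still holds for every residual, $w_1$ remains a cut vertex, and $\overline{S_1}$ remains a component of $P_1 \setminus w_1$. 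Shrinkability of each residual then follows immediately from Condition~\ref{item:nobd} or Condition~\ref{item:cut_2bd} of Lemma~\ref{lem:shrinkable}, with any vertex supplied by that lemma sufficing. Your fallback appeal to $k_i \geq n$ and the sweep-line invariant imports context the lemma neither has nor needs; until you replace it with an invariant that is actually preserved (such as the cut-vertex-plus-monotone-boundary formulation above), the induction does not go through.
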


Finally, at times we will need to work with $S_1 \subseteq P_1$ and $S_2 \subseteq P_2$ that are adjacent.  This arises when $S_1$ and $S_2$ are both inside some cycle $C$, where all vertices of $C$ are in $P_1$ except for one, $x$, which is in $P_2$. If $y$ is one of the vertices in $C$ adjacent to $x$ and $y$ is a cut vertex of $P_1$, the case where other approaches fail is when one component of $P_1 \setminus y$ is inside $C$. We would like $y$'s 1-neighborhood to be connected but the component $S_1$ of $P_1 \setminus y$ that is inside $C$ prevents that from happening; see Figure~\ref{fig:cycle-recom}(a) for an example. Instead of removing vertices from $S_1$ one at a time, we rearrange the entire interior of $C$ with one recombination step. 

\begin{lem*}[Cycle Recombination Lemma, informal; Formally stated as Lemma~\ref{lem:cycle-recom}]
	Let $C$ be a cycle of vertices in $P_1$ with one vertex, $x$, in $P_2$. Suppose no vertices of $P_3$ are inside $C$.  There exists one recombination step, changing only district assignments of vertices enclosed by $C$, after which $x$'s neighbor $y$ in $C$ has a connected 1-neighborhood (at least when looking in or inside $C$). 
\end{lem*}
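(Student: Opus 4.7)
My plan is to construct the desired recombination step explicitly as a merge of $P_1$ and $P_2$ followed by a resplit that modifies only vertices strictly inside $C$. Let $I$ denote the strict interior of $C$ and $R = C \cup I$ the closed disk it bounds. Since $P_3$ has no vertex in $R$, the region $R$ is entirely partitioned between $P_1$ and $P_2$, with $C \setminus \{x\} \subseteq P_1$ and $x \in P_2$. I propose to define new subsets $P_1'', P_2'' \subseteq R$ with $P_1'' \sqcup P_2'' = R$, $x \in P_2''$, and $C \setminus \{x\} \subseteq P_1''$, so that (i) $|P_2''| = |P_2 \cap R|$ (preserving all district sizes exactly, so the new partition remains in $\Omega$); (ii) both $P_1''$ and $P_2''$ are simply connected; and (iii) every vertex of $N(y) \cap I$ — the inside arc of $N(y)$ — belongs to $P_1''$. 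Property (iii) implies $N(y) \cap P_1'' \cap R = \{y'\} \cup (N(y) \cap I)$, where $y'$ is $y$'s other neighbor on $C$; this is a contiguous arc in the 6-cycle $N(y)$ and therefore connected, giving the lemma's conclusion.

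The construction grows $P_2''$ outward from $x$: start with $P_2'' = \{x\}$, and repeatedly add a vertex from $I$ adjacent to the current $P_2''$ whose inclusion preserves simple connectivity of both $P_2''$ and $R \setminus P_2''$ (a shellability step in the triangulated disk $R$). The additions are ordered so that vertices in the inside arc $N(y) \cap I$ are deferred as long as possible — concretely, by always preferring to extend on the side of $x$ farther from $y$, i.e., toward $z$, the other neighbor of $x$ on $C$. Growth continues until $|P_2''| = |P_2 \cap R|$; the remaining vertices of $R$ become $P_1''$. Since this produces a valid split of the connected subgraph $P_1 \cup P_2$ into two simply connected pieces of the correct sizes, it is realizable as a single recombination step.

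The main obstacle is confirming property (iii) in the regime where $|P_2 \cap R|$ is so large that the growth cannot avoid the inside arc of $N(y)$ before reaching the target size. I expect a case analysis will resolve this: if $|P_1 \cap I|$ is too small for the avoidance strategy to succeed, the local geometry of $P_1$ and $P_2$ near $y$ is tightly constrained, and I anticipate either showing that the lemma's implicit hypothesis (that $y$'s 1-neighborhood inside $C$ needs correcting) cannot hold in this regime, or re-rooting the growth at a different boundary vertex of $C$ to produce the needed split. A secondary technical point is proving that the greedy shellability step always has an admissible vertex to add; this should follow from standard properties of simply connected subregions of planar triangulations, though formalizing it requires careful bookkeeping about which candidate additions maintain simple connectivity of both $P_2''$ and its complement as the growth proceeds.
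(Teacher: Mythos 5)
Your overall framework---merge $P_1$ and $P_2$, resplit only the interior of $C$ while preserving district sizes and simple connectivity, and argue that $y$'s $1$-neighborhood in or inside $C$ becomes a contiguous arc---matches the paper's. The gap is your target property (iii): requiring that \emph{every} vertex of the inside arc $N(y)\cap I$ end up in $P_1''$ is strictly stronger than what the lemma needs, and it is unachievable in general. The district sizes are fixed, so if $|P_1\cap I|$ is small (it can be a single vertex: take $y$ a cut vertex of $P_1$ whose trapped component $S_1$ inside $C$ has one vertex, with $P_2$ filling the rest of the interior), then after the resplit only that many interior vertices can lie in $P_1''$, and $N(y)\cap I$ may be larger. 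You identify this as the ``main obstacle,'' but neither proposed escape works: the regime is not vacuous, so you cannot show the hypothesis fails there, and re-rooting the growth at another boundary vertex does not change the counting obstruction. The missing idea is that you only need $N(y)\cap I\cap P_1''$ to be a \emph{terminal segment} of the arc, adjacent to $y$'s other neighbor $z$ on $C$; then $N_C(y)\cap P_1''$ consists of $z$ together with a contiguous run of interior neighbors and is connected no matter how many interior neighbors of $y$ remain in $P_2''$.

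The paper obtains exactly this terminal-segment structure by ordering the interior component $S$ adjacent to $x$ by BFS distance from $x$ and assigning the \emph{last} $m$ vertices of that order to $P_1$ (where $m=|P_1\cap S|$): since distance from $x$ is monotone nondecreasing along $N(y)\cap I$ from $x$ toward $z$, the $P_1$-vertices of that arc form a suffix. Simple connectivity at every intermediate reassignment is then supplied by Lemma~\ref{lem:BFS-remove} (the last BFS vertex always has a connected neighborhood of size less than $6$), which also sidesteps your secondary worry: growing $P_2''$ outward greedily while insisting the complement stay simply connected at each step can get stuck, and you would still owe an existence proof for an admissible next vertex. If you replace (iii) with the suffix condition and adopt the distance-from-$x$ ordering, the rest of your outline goes through; you would also need the paper's preliminary reduction to the case where only one interior component of $T\setminus C$ is adjacent to $x$, which your write-up does not address.
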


\begin{figure}\centering
	\includegraphics[scale = 0.87]{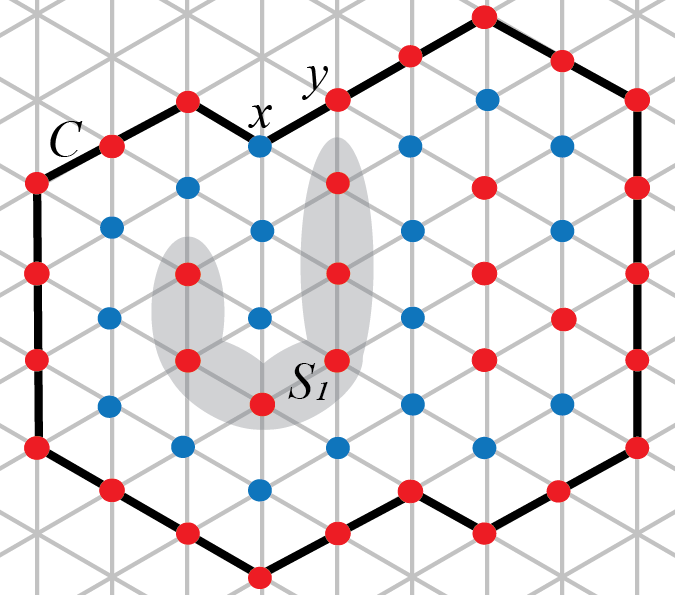}\hfill
	\includegraphics[scale = 0.87]{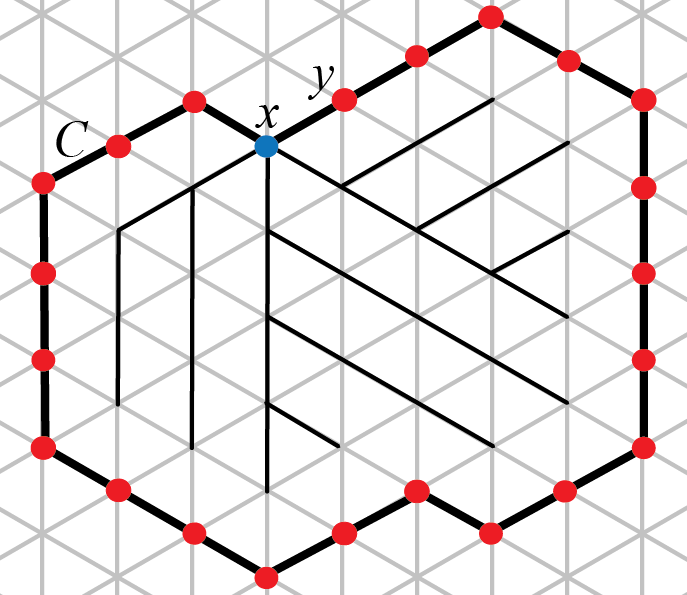}\hfill
	\includegraphics[scale = 0.87]{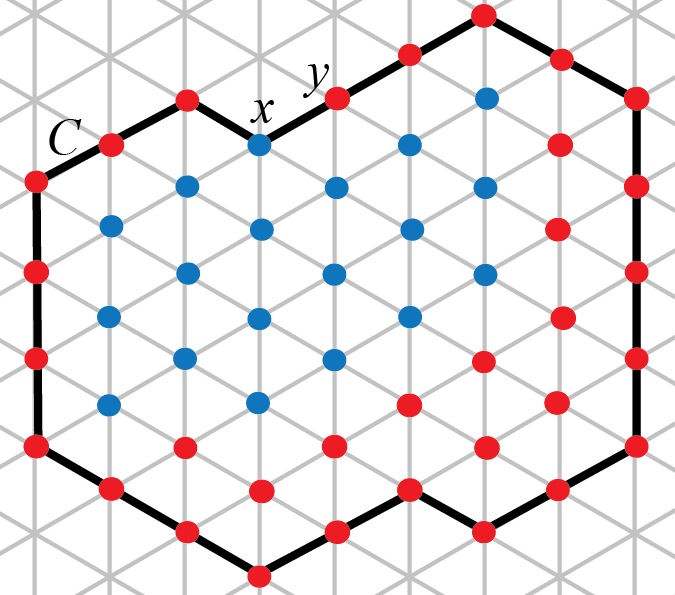}
	
	(a) \hspace{4.5cm}(b) \hspace{4.5cm}(c)
	\caption{An example application of the Cycle Recombination Lemma. (a) An example satisfying the hypotheses of the lemma: A cycle $C$ of vertices in $P_1$ (red) plus one vertex $x$ in $P_2$ (blue), such that $x$'s neighbor $y$ in $C$ is a cut vertex of $P_1$ and $P_1 \setminus y$ has a component $S_1$ (grey) inside $C$. The district assignments of vertices outside $C$ are not shown. (b) All district assignments inside $C$ are erased, and we build a breadth first search tree of the vertices inside $C$. (c) If initially there were $m$ vertices of $P_1$ inside $C$, the last $m$ vertices added to the BFS tree are added to $P_1$ while the remaining vertices are added to $P_2$. After this process $y$ will have a connected 1-neighborhood.}
	\label{fig:cycle-recom}
\end{figure}

\noindent The main idea of this recombination step is to erase all district assignments of vertices enclosed by $C$ and build a breadth first search tree of the interior of $C$. If initially there were $m$ vertices of $P_1$ enclosed by $C$, the last $m$ vertices added to the BFS tree are added to $P_1$ while the remaining vertices are added to $P_2$. 
An example of this process is shown in Figure~\ref{fig:cycle-recom}. Because the vertices in $N(y)$ that are inside $C$ monotonically increase in their distance from $x$, after this recombination step $N(y)$ will consist of $x$, followed by some vertices in $P_2$, followed by some vertices in $P_1$, followed by $y$'s other neighbor in $C$. While the lemma does not say anything about the parts of $N(y)$ that are outside $C$, we will apply it in cases where knowing $y$'s 1-neighborhood in or inside $C$ is connected implies $y$'s entire 1-neighborhood is connected.

\subsection{Advancing toward Ground State: Towers}

In our sweep line procedure, the two main steps are (1) increase the number of vertices in $\cc_i \cap P_1$ and (2) if necessary, transform the result from a nearly balanced to a balanced partition without decreasing $\cc_i \cap P_1$. The way we achieve (1) is using {\it towers}.  For a particular vertex in $\cc_i$, we may want to add it to $P_1$ but be unable to because doing so produces a partition with districts that are not simply connected. Let $v_1$ be a vertex in $\cc_i$ that is not in $P_1$ but adjacent to a vertex of $P_1 \cap \cc_i$, and suppose without loss of generality that $v_1 \in P_2$; see Figure~\ref{fig:tower-ex}(a) for an example. This means $v_1$ has three neighbors in $P_1$, and it's the middle of its other three neighbors that is crucial for determining whether $v_1$ can be added to $P_1$ or not.  If it can't, we then examine this middle neighbor, which must have a similar neighborhood structure to $v_1$.  This process is repeated and must eventually end at a vertex that can be added to the district of the vertex before it in the tower. Flip moves are then made all the way back up the tower, ultimately producing a configuration in which $v_1$ can be added to $P_1$. We do not state our tower lemma, or even formally define a tower, because of the technical details involved, but depict a sample application of the tower procedure in Figure~\ref{fig:tower-ex}. 

\begin{figure}
	\centering
	\hfill
	\includegraphics[scale = 1]{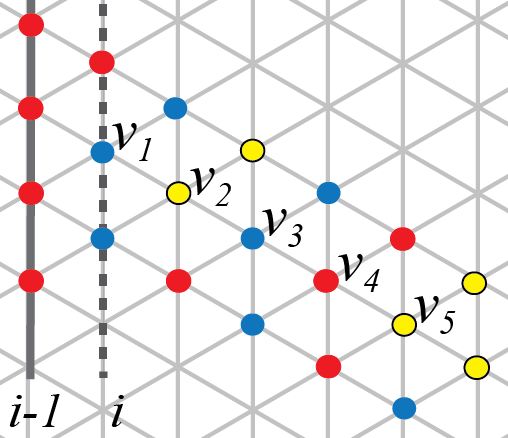}\hfill
	\includegraphics[scale = 1]{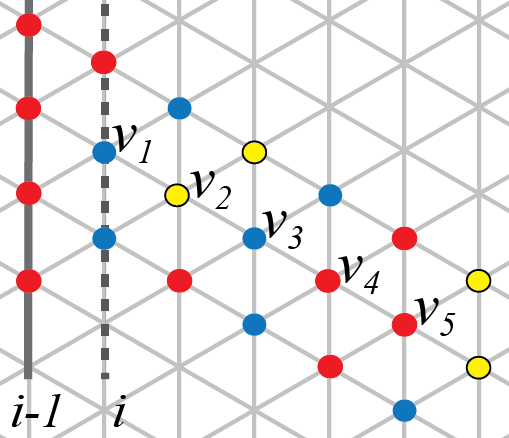}\hfill
	\includegraphics[scale = 1]{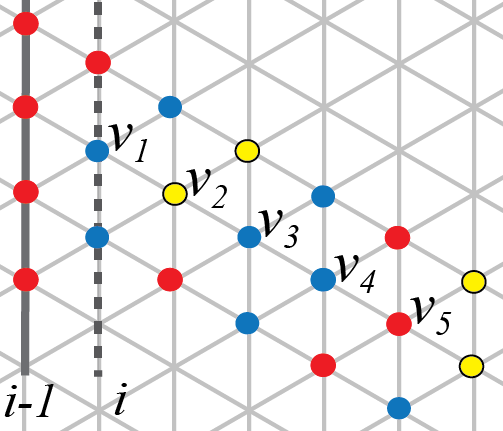} \hfill\textcolor{white}{.} \\
	
	(a) \hspace{4.5cm} (b) \hspace{4.5cm} (c)
	
	\hfill
	\includegraphics[scale = 1]{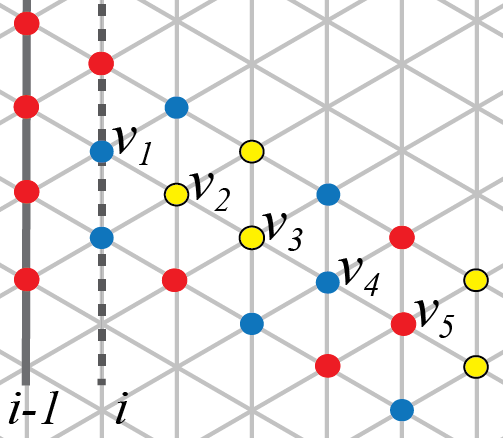}\hfill
	\includegraphics[scale = 1]{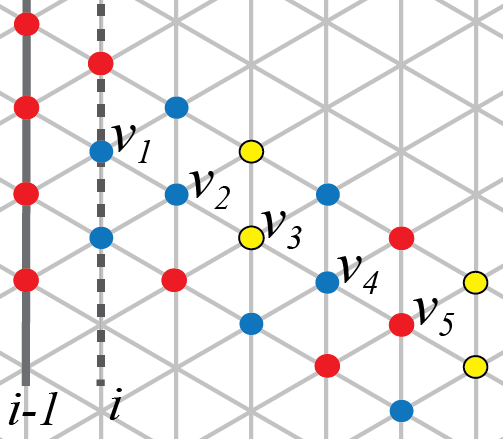}\hfill
	\includegraphics[scale = 1]{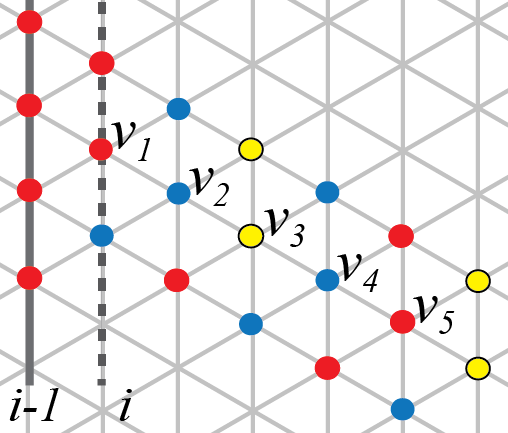} \hfill\textcolor{white}{.} \\
	
	(d) \hspace{4.5cm} (e) \hspace{4.5cm} (f)
	
	\caption{An example where we wish to add $v_1 \in P_2$ (blue) to $P_1$ (red), but cannot because this would disconnect $P_2$. Instead we look at the middle of $v_1$'s three neighbors not in $P_1$, which we call $v_2$, and see if it can be added to $P_2$; in this example, it can't because doing so would create a cycle in $P_2$. We continue along the line spanned by $v_1$ and $v_2$ until we find a vertex that can be added to the district of the vertex before it, which we prove must eventually happen. (a) In this example, vertex $v_5$ can be added to the district of $v_4$, and the result of this move is shown in (b). Now $v_4$ can be added to the district of $v_3$, and the result is (c).  The same procedure for $v_3$, $v_2$, and $v_1$ gives the results shown in (d), (e), and (f), respectively. The end result is that there is one additional vertex of $P_1$ ($v_1$) in column $i$. 
	}
	\label{fig:tower-ex}
\end{figure}

\subsection{Rebalancing: Cases}

Performing a tower move as described in the previous section increases the number of vertices of $P_1$ in $\cc_i$, but can also move us from a balanced partition to a nearly balanced partition, because the number of vertices in $P_1$ has increased by one.  Before proceeding further, we must return to a balanced partition, and do so without decreasing the number of vertices in $P_1 \cap \cc_i$. This is the most challenging part of the proof.  

Without loss of generality, we suppose we have $|P_2| = k_2$ and $|P_3| = k_3 - 1$. We can use the Shrinkability Lemma to always find a vertex of $P_1$ that can be removed from $P_1$ and added to a different district; however, it may be that all such vertices can only be added to $P_2$, not $P_3$, and additionally all vertices that can be removed from $P_2$ can only be added to $P_1$.  In these cases we need to do some rearranging with $P_1$ (in columns $i+1$ and greater) and $P_2$ before finding a vertex that can be added to $P_3$ to reach a balanced partition.  It is in this rearranging that the Unwinding Lemma and the Cycle Recombination Lemma play crucial roles. 


Our proof considers four main cases for the rebalancing process, depending on the type of adjacency between $P_2$ and $P_3$: (A) There exists $a \in P_2 \cap bd(T)$ and $b \in P_3 \cap bd(T)$ that are adjacent; (B) $P_2 \cap bd(T) = \emptyset$; (C) $P_3 \cap bd(T) = \emptyset$; and (D) No vertex of $P_2$ is adjacent to any vertex of $P_3$.  
See Figure~\ref{fig:4cases} for cartoonish examples of the four cases. We prove these four cases are disjoint and cover all possibilities, and consider each separately, though there are certainly common elements between their proofs.  Cases (A) and (D) are the most straightforward because we only need to consider reassigning vertices near $bd(T)$, while (B) and especially (C) are more challenging because we must work in the interior of $T$, far from $bd(T)$. 

\subsection{Reaching a Ground State}

After performing our sweep line procedure, for some $i$ we have $P_1$ occupying all of the first $i-1$ columns, some of $\cc_i$, and none of columns $i+1$ or greater. At this point we describe a sequence of steps alternating recombining $P_2$ and $P_3$ with moving a vertex of $P_1$ higher in column $i$. The end result is a partition where the vertices of $P_1$ in $\cc_i$ occupy all of the topmost positions in $\cc_i$, as they must in the ground state $\sigma_{123}$. One final recombination step for $P_2$ and $P_3$ reaches the ground state $\sigma_{123}$.  Because the ground states are all easily connected by recombination moves, this proves there exists a sequence of moves from any balanced partition to any other balanced partition, moving through balanced and nearly balanced partitions. 

The proof outlined so far assumes we begin at a balanced partition. Some additional work is required to show any nearly balanced partition can be transformed into a balanced partition (Lemma~\ref{lem:nearlybalanced}).  Similar lemmas and approaches to the rebalancing step described above are used to do so, completing the~proof.









\section{Preliminary Lemmas and Towers}

 %
%




We now begin to formalize some of the notions describe above in the proof overview.  We will first show that for every balanced partition, there is a sequence of steps in $G_\Omega$ (possibly passing through some nearly balanced partitions) leading to a ground state, which is the bulk of our proof. We then show that for every nearly balanced partition, there exists a sequence of steps producing a balanced partition.

First, we present some definitions and lemmas we will use throughout this paper. Though recombination moves can change the district assignments of many nodes at once, our sweep-line argument will focus on changing the district assignment of one vertex at a time.  Because of this, moves reassigning one vertex to a new district (which are just one type of simple recombination step) will play an important role. 



\subsection{$i$-neighborhoods and the Flip Lemma} 

Central to our arguments will be the notion of a vertex's $i$-neighborhood.

\begin{defn}
	Let $P$ be a partition of $T$. For a vertex $v\in T$, its {\em $i$-neighborhood} is all neighbors of $v$ in $T$ that are in $P_i$.
\end{defn}

\noindent At times we will concretely refer to a vertex's $1$-neighborhood, $2$-neighborhood, or $3$-neighborhood; this will mean all neighbors of the vertex that are in $P_1$, $P_2$, or $P_3$, respectively, not the vertices at distance 1, 2, or 3 away. 

 \begin{defn}
 	For a set of vertices $Q$ in $T$, the {\em neighborhood} of $Q$, $N(Q)$, is all vertices in $\Gtri$ that are not in $Q$ but are adjacent to a vertex of $Q$. 
 \end{defn}

Note that we include in $N(Q)$ any vertices that may be adjacent to a vertex of $Q$ but not in $T$; if we want the neighbors of $Q$ that are also in $T$, we will explicitly clarify $N(Q) \cap T$. For a single vertex $v$, $N(v)$ is always a cycle of length 6.  

These next two lemmas give conditions under which one can remove a vertex from a district or add a vertex to a district while maintaining simple connectivity. 

\begin{lem}\label{lem:remove}
	Let $P$ be a partition of $T$. For a vertex $v \in P_i$, if $v$'s $i$-neighborhood is connected and of size at most 5, then $P_i \setminus \{v\}$ is simply connected. 
\end{lem}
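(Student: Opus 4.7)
I would prove this in two steps, corresponding to the two requirements of simple connectivity for a subgraph of the triangular lattice: (i) $P_i \setminus \{v\}$ is connected, and (ii) no cycle in $P_i \setminus \{v\}$ encloses (in the planar embedding of $\Gtri$) a vertex of $T \setminus (P_i \setminus \{v\})$. The hypotheses give me what I need for each: connectedness of the $i$-neighborhood handles (i), while the bound $|N(v)\cap P_i|\le 5$ handles (ii).

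\textbf{Step 1: Connectedness.} If $v$'s $i$-neighborhood is empty, then because $P_i$ is already simply connected (hence connected), we must have $P_i=\{v\}$, so $P_i\setminus\{v\}=\emptyset$ is vacuously simply connected. Otherwise, let $u$ be any vertex of $P_i\setminus\{v\}$. Since $P_i$ is connected, there is a path from $u$ to $v$ in $P_i$; the vertex $w$ immediately preceding $v$ on this path lies in $v$'s $i$-neighborhood, and the path truncated before $v$ shows that $u$ and $w$ are connected in $P_i\setminus\{v\}$. Because the $i$-neighborhood is connected by hypothesis, all such $w$'s are connected to each other in $P_i\setminus\{v\}$, so the whole set $P_i\setminus\{v\}$ is connected.

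\textbf{Step 2: No enclosed non-member vertex.} Suppose for contradiction some cycle $C\subseteq P_i\setminus\{v\}$ encloses a vertex $u\in T\setminus(P_i\setminus\{v\})=\{v\}\cup(T\setminus P_i)$. Since $C$ is a cycle in $P_i$ and $P_i$ is simply connected, $C$ cannot enclose a vertex of $T\setminus P_i$; hence $u=v$. Now $v$ is strictly inside $C$, so by planarity every edge from $v$ to a neighbor in $\Gtri$ stays on or inside $C$, placing each of $v$'s six neighbors either on $C$ (where they lie in $P_i$) or strictly inside $C$. By hypothesis, at least one neighbor $w$ of $v$ is not in $P_i$, so $w$ is strictly inside $C$. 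Since $T$ is a triangular (hence convex) region of $\Gtri$, the bounded region enclosed by any cycle in $T$ lies inside $T$, forcing $w\in T$. But then $w\in T\setminus P_i$ is enclosed by the cycle $C\subseteq P_i$, contradicting simple connectivity of $P_i$.

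\textbf{Main obstacle.} The proof is short once the topological setup is right; the only real subtlety is making sure the argument still works when $v\in bd(T)$ and some of $N(v)$ lies outside $T$. Convexity of the triangle $T$ resolves this cleanly: vertices outside $T$ cannot be enclosed by a cycle contained in $T$, so the argument collapses the two possible sources of a "hole" ($v$ itself, or a missing boundary neighbor) into the same contradiction with simple connectivity of~$P_i$.
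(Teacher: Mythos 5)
Your proof is correct and takes essentially the same approach as the paper's: connectivity of $P_i\setminus\{v\}$ is obtained by rerouting paths through the connected $i$-neighborhood, and the bound $|N(v)\cap P_i|\le 5$ is used to rule out the creation of a new hole. Your Step 2 actually spells out carefully (via the enclosed neighbor $w\notin P_i$ and convexity of $T$) what the paper dispatches in a single sentence, so if anything your write-up is the more rigorous of the two.
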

\begin{proof} Recall that by a partition of $T$, we mean a partition of $T$ into three simply connected districts, so $P_i$ is simply connected.
	
	First we show $P_i \setminus \{v\}$ is connected. Because $P_i$ is connected, this means for any pair of vertices $x$ and $y$ in $P_i$, there exists a path between them consisting entirely of vertices in $P_i$. If that path passes through $v$, let $n_1$ be the neighbor of $v$ that is before $v$ on this path, and let $n_2$ be the neighbor of $v$ that is after $v$ on this path. Because the $i$-neighborhood of $v$ is connected, there exists a path in $P_i$ from $n_1$ to $n_2$ through $N(v) \cap P_i$.  Replacing the path segment $n_1 - v - n_2$ with this path from $n_1$ to $n_2$ through $N(v) \cap P_i$ results in a walk from $x$ to $y$ that does not pass through $v$.  Because all pairs of vertices $x$ and $y$ are connected by walks that do not pass through $v$, $P_i \setminus \{v\}$ is connected. 
	
	Because $v$ has at most 5 neighbors in $P_i$, removing it cannot possibly create a hole in $P_i$ that was not there before.  We conclude $P_i \setminus \{v\}$ is simply connected.
\end{proof}

\noindent The inverse of this lemmas is also true. 

\begin{lem}
	\label{lem:cutvertex}
	Let $P$ be a partition of $T$. For $v \in P_i$, if $v$'s $i$-neighborhood is not connected then $P_i \setminus \{v\}$ is not connected. 
\end{lem}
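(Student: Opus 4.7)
My plan is to prove the contrapositive via a planar simple-connectivity argument. I will suppose for contradiction that the $i$-neighborhood $N(v)\cap P_i$ is disconnected but $P_i\setminus\{v\}$ is still connected, pick $n_1,n_2\in N(v)\cap P_i$ in two different components of the $i$-neighborhood, and use connectedness of $P_i\setminus\{v\}$ to get a path $\pi\subseteq P_i\setminus\{v\}$ from $n_1$ to $n_2$. Concatenating $\pi$ with the two-edge path from $n_1$ to $n_2$ through $v$ produces a simple cycle $C\subseteq P_i$ passing through $v$. The goal is to locate a vertex of $T\setminus P_i$ inside $C$, which will directly contradict the simple connectivity of $P_i$.

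Next I would analyze $C$ locally at $v$. The two edges $vn_1$ and $vn_2$ of $C$ split the $6$-cycle $N(v)$ into two arcs from $n_1$ to $n_2$; correspondingly, near $v$ they split the hexagonal neighborhood of $v$ into a ``local interior sector'' and a ``local exterior sector'' with respect to the Jordan curve $C$. Because $n_1$ and $n_2$ lie in different components of $N(v)\cap P_i$, neither arc can consist entirely of $P_i$-vertices, so each arc contains at least one vertex not in $P_i$. I will pick $m$ to be such a non-$P_i$ vertex on the arc lying in the local interior sector. Since $vm$ is a single lattice edge and two distinct edges of $\Gtri$ meet only at shared endpoints, $vm$ intersects $C$ only at $v$; as $vm$ enters the local interior sector at $v$, it follows that $m$ lies in the global interior of $C$.

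To close the argument I will use that $C\subseteq T$ and $T$ is a triangle, hence simply connected as a planar region, so the interior of $C$ is entirely contained in $T$; in particular $m\in T$, so $m\in T\setminus P_i$. But simple connectivity of $P_i$ means every vertex of $T$ enclosed by a cycle of $P_i$ must itself lie in $P_i$, contradicting $m\notin P_i$. This gives the desired contradiction and proves $P_i\setminus\{v\}$ cannot be connected.

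The step I expect to be the main obstacle is the boundary case $v\in bd(T)$, where some vertices of $N(v)$ lie outside $T$ and are therefore automatically not in $P_i$. The worry is that the ``extra'' non-$P_i$ vertices separating $n_1$ from $n_2$ in the interior arc might all lie outside $T$, so the vertex $m$ I extract might fail to be in $T$. The local-sector argument handles this uniformly: any outside-$T$ neighbor of $v$ would be reached by a lattice edge from $v$ that must leave the interior of $C$, which it cannot do without crossing $C$; hence outside-$T$ neighbors of $v$ are all forced into the local exterior sector, and any non-$P_i$ vertex chosen from the interior arc is automatically in~$T$.
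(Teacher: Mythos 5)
Your proposal is correct and follows essentially the same route as the paper's proof: both form a cycle through $v$ from a path in $P_i\setminus\{v\}$ joining two separated $i$-neighbors, observe that this cycle must enclose the arc of $N(v)$ on its interior side, and extract an enclosed vertex not in $P_i$ to contradict simple connectivity. Your explicit handling of the boundary case (outside-$T$ neighbors being forced to the exterior side because $T$ is convex) is a detail the paper leaves implicit, but it is the same argument.
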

\begin{proof}
	Let $v \in P_i$ be such that $v$'s $i$-neighborhood is not connected. Let $a$ and $b$ be two neighbors of $v$ in $P_i$ that are not connected by a path in $N(v) \cap P_i$. This means each of the two paths from $a$ to $b$ in $N(v)$ must either contain a vertex not in $T$ or a vertex in $T$ that is in a different district of $P_i$.  Suppose for the sake of contradiction that $P_i \setminus \{v\}$ is connected. Then, there would need to exist a path from $a$ to $b$ in $P_i \setminus \{v\}$.  We have already seen such a path must leave the neighborhood of $v$. 
	Consider the cycle formed by this path from $a$ to $b$ together with $v$. This cycle is entirely contained in $P_i$, and necessarily encircles one of the two paths from $a$ to $b$ in $N(v)$. This means the cycle encircles a vertex not in $P_i$, a contradiction as $P$ is a valid partition and so $P_i$ must be simply connected. We conclude that $P_i \setminus \{v\}$ cannot be connected, and $v$ must be a cut vertex of~$P_i$. 	
\end{proof}

The two previous lemmas show us that $v$ is a cut vertex of $P_i$ if and only if $v$'s $i$-neighborhood is disconnected.

\begin{cor}\label{cor:cutvertex}
	Let $P$ be a partition of $T$.  Vertex $v \in P_i$ is a cut vertex of $P_i$ if and only if its $i$-neighborhood is disconnected. 
\end{cor}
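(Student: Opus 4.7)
The plan is to deduce this corollary immediately from the two preceding lemmas, with a small gap-filling argument for the boundary case.

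For the ``$v$'s $i$-neighborhood disconnected $\Rightarrow v$ is a cut vertex'' direction, this is literally the content of Lemma~\ref{lem:cutvertex}: a disconnected $i$-neighborhood guarantees $P_i \setminus \{v\}$ is disconnected, and disconnecting $P_i$ upon removal of $v$ is the definition of $v$ being a cut vertex of $P_i$. So this direction is just a restatement.

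For the converse, I would argue the contrapositive: if $v$'s $i$-neighborhood is connected, then $P_i \setminus \{v\}$ is connected, so $v$ is not a cut vertex. If $|N(v) \cap P_i| \le 5$, Lemma~\ref{lem:remove} applies directly and gives the stronger statement that $P_i \setminus \{v\}$ is simply connected, which in particular implies connectedness. The only case not covered by Lemma~\ref{lem:remove} is when all six neighbors of $v$ lie in $P_i$, so $|N(v) \cap P_i| = 6$. I would handle this case by reusing the rerouting argument from the proof of Lemma~\ref{lem:remove}: given any $x, y \in P_i \setminus \{v\}$ connected in $P_i$ by a path passing through $v$, let $n_1, n_2$ be the neighbors of $v$ immediately before and after $v$ on that path; since $N(v) \cap P_i = N(v)$ is the entire 6-cycle $N(v)$, there is a path from $n_1$ to $n_2$ inside $N(v) \cap P_i$ avoiding $v$, and splicing it in produces a walk from $x$ to $y$ in $P_i \setminus \{v\}$. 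Hence $P_i \setminus \{v\}$ is connected.

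There is no real obstacle to overcome here; the corollary is just packaging the two lemmas together, with one minor observation that the case $|N(v) \cap P_i|=6$ (excluded by the hypothesis of Lemma~\ref{lem:remove} for reasons of simple connectivity, not connectivity) is still handled by the same rerouting idea. Note that in this last case $P_i \setminus \{v\}$ may fail to be simply connected if $v$ lies in the interior of $T$, which is exactly why Lemma~\ref{lem:remove} needed the size-at-most-5 hypothesis; but the corollary only asserts connectedness, so this subtlety does not affect the statement.
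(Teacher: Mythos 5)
Your proof is correct and takes essentially the same route as the paper, which simply cites Lemma~\ref{lem:remove} and Lemma~\ref{lem:cutvertex} for the two directions. In fact your treatment is slightly more careful: the paper's one-line proof silently ignores the case $|N(v)\cap P_i|=6$, which is excluded from the hypotheses of Lemma~\ref{lem:remove} (since removal then destroys simple connectivity), and your rerouting argument correctly shows that connectivity---which is all the corollary needs---still holds in that case.
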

\begin{proof}
	The two directions of this if and only if statement are Lemma~\ref{lem:remove} and Lemma~\ref{lem:cutvertex}.
\end{proof}
For here on, we will interchangeably use the conditions $v$ being a cut vertex of $P_i$ and $v$ having a disconnected $i$-neighborhood. 
Having considered the remove of a vertex from district $P_i$, we now consider the addition of a vertex to~$P_i$. 

\begin{lem}\label{lem:add}
	Let $P$ be a partition of $T$. For a vertex $v$ of $T$ that is not in $P_i$, if $v$'s $i$-neighborhood is nonempty and connected, then $P_i \cup \{v\}$ is simply connected. 
\end{lem}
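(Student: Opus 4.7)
The plan is to prove two things: that $P_i \cup \{v\}$ is connected, and that it has no holes. The strategy mirrors Lemma~\ref{lem:remove}, but with the roles of removing and adding $v$ reversed, using a path-replacement construction to reduce a hypothetical hole back to a hole of $P_i$ alone.

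For connectedness, the nonempty $i$-neighborhood of $v$ gives some $u \in P_i$ adjacent to $v$; since $P_i$ is connected as part of the partition $P$, every vertex of $P_i$ is connected to $u$, hence to $v$ via the edge $uv$, so $P_i \cup \{v\}$ is connected. For simple connectivity, I would argue by contradiction: assume $P_i \cup \{v\}$ has a hole, witnessed by a simple cycle $C \subseteq P_i \cup \{v\}$ enclosing some vertex $w \notin P_i \cup \{v\}$. If $v \notin C$, then $C \subseteq P_i$ encloses $w \notin P_i$, contradicting simple connectivity of $P_i$. So $v \in C$, with two neighbors $n_1, n_2 \in C \cap N(v) \cap P_i$. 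Because $v$'s $i$-neighborhood is a connected sub-arc of the $6$-cycle $N(v)$, both $n_1$ and $n_2$ lie on this arc, and there is a path $\pi$ from $n_1$ to $n_2$ lying entirely inside $N(v) \cap P_i$. Replacing the subpath $n_1 v n_2$ in $C$ with $\pi$ yields a cycle $C' \subseteq P_i$.

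The crux is showing that $C'$ still encloses $w$. The symmetric difference between the regions enclosed by $C$ and $C'$ is the ``swap region'' bounded by $n_1 v n_2$ together with $\pi$, which is a union of triangular faces of the lattice all incident to $v$. In the triangular lattice these faces contain no lattice vertices in their interior, so the only lattice vertex whose enclosure status can change in passing from $C$ to $C'$ is $v$; since $w \neq v$, the vertex $w$ is still enclosed by $C'$. This gives a cycle in $P_i$ enclosing $w \notin P_i$, contradicting simple connectivity of $P_i$. The main obstacle I anticipate is the planar-geometry justification of this ``no interior lattice vertices in the swap region'' step: it is intuitively clear from the hexagonal structure of $N(v)$, but writing it carefully may require either a brief appeal to the Jordan curve theorem or a short case analysis on the arc length of $\pi$ (the extreme cases being $n_1, n_2$ adjacent in $N(v)$, where the swap region is a single triangular face, and $n_1, n_2$ spanning the full arc of $v$'s $i$-neighborhood, where it is a fan of triangular faces sharing vertex $v$).
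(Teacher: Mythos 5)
Your proposal is correct and follows essentially the same route as the paper's proof: establish connectedness from the nonempty $i$-neighborhood, then assume a hole witnessed by a cycle $C$, note $v$ must lie on $C$, and reroute through a path in $N(v)\cap P_i$ (using connectedness of the $i$-neighborhood) to obtain a cycle in $P_i$ still enclosing the hole, contradicting simple connectivity of $P_i$. Your extra care about the ``swap region'' containing no lattice vertices other than $v$ is a slightly more explicit justification of the step the paper dispatches by observing that the enclosed vertex $x$ is not among the vertices of the replacement path.
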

\begin{proof}
	Because $P$ is a partition, we know that $P_i$ is simply connected. Because $v$ is adjacent to at least one vertex in $P_i$, adding it to $P_i$ cannot disconnect $P_i$, so $P_i \cup \{v\}$ is connected.  It only remains to show that $P_i \cup \{v\}$ does not surround any vertices that are not in $P_i \cup \{v\}$.  
	
	Suppose $P_i \cup \{v\}$ surrounds some vertex $x \notin P_i \cup \{v\}$.  This means there exists a cycle $C$ consisting of vertices in $P_i \cup \{v\}$ that encircles $x$. Because $P_i$ is simply connected, $C \subseteq P_i$ is not possible, so $v$ must be one of the vertices of $C$.  Let $a$ be the vertex before $v$ in $C$, and let $b$ be the vertex after $v$ in $C$.  Because $a$ and $b$ are both in $v$'s $i$-neighborhood and $v$'s $i$-neighborhood is connected, there must be a path $Q$ from $a$ to $b$ in $N(v)\cap P_i$.  There is then a cycle $C'$ which is the same as $C$ except $v$ is replaced by this path $Q$ from $a$ to $b$ in $N(v) \cap P_i$. Because $x \notin P_i$ and therefore $x$ is not one of the vertices in $Q$, $C'$ also surrounds $x$.  Because $C'$ is a cycle in $P_i$ that surrounds a vertex $x \notin P_i$, this implies $P_i$ is not simply connected, a contradiction.  It follows that $P_i \cup \{v\}$ does not have any holes and therefore is simply connected. 	
\end{proof}

Lemmas~\ref{lem:remove} and~\ref{lem:add} together show us that if $x \in P_i$ has a connected $i$-neighborhood and a connected $j$-neighborhood, then reassigning $x$ from $P_i$ to $P_j$ results in a valid partition (this partition may not be balanced or nearly balanced); this is formalized as follows, which is the formal version of the Flip Lemma presented in the proof overview.

\begin{lem}[Flip Lemma]\label{lem:remove-add}
	Let $P$ be a partition of $T$, and suppose $v \in P_i$ and $v$ is adjacent to a vertex in $P_j$ for $j \neq i$.  If $v$'s $i$-neighborhood and $j$-neighborhood are both connected, then removing $v$ from $P_i$ and adding it to $P_j$ produces a valid partition.  
\end{lem}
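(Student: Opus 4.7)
The plan is to obtain this lemma as a direct combination of Lemma~\ref{lem:remove} and Lemma~\ref{lem:add}, since the Flip Lemma is essentially the conjunction of "removing $v$ from $P_i$ preserves simple connectivity" and "adding $v$ to $P_j$ preserves simple connectivity," together with the trivial observation that the third district $P_l$ is untouched. There is no real obstacle here; the work is just verifying that the hypotheses of the two earlier lemmas are met.

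First, I would apply Lemma~\ref{lem:remove} to conclude that $P_i \setminus \{v\}$ is simply connected. Its hypotheses require $v$'s $i$-neighborhood to be connected (given) and of size at most $5$. The size bound is immediate: in $\Gtri$ every vertex has exactly $6$ neighbors, and at least one of these neighbors lies in $P_j$ by assumption, so at most $5$ lie in $P_i$. Next, I would apply Lemma~\ref{lem:add} to $P_j$ and the vertex $v$ (now viewed as being added to $P_j$) to conclude that $P_j \cup \{v\}$ is simply connected. Its hypotheses require $v$'s $j$-neighborhood to be nonempty (which holds since $v$ is adjacent to a vertex of $P_j$) and connected (given).

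Finally, I would observe that the third district, call it $P_l$ with $l \notin \{i,j\}$, is unaffected by the reassignment and so remains simply connected. The three resulting sets $P_i \setminus \{v\}$, $P_j \cup \{v\}$, $P_l$ are pairwise disjoint (we merely moved one vertex from $P_i$ to $P_j$) and still cover $V(T)$, so they form a partition of $T$ into three simply connected districts, which is the definition of a valid partition used throughout the paper. Since no assumption is made about sizes, the resulting partition need not be balanced or nearly balanced — but the statement only asks for validity in the connectivity sense, matching exactly what we have established.
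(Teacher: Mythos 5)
Your proof is correct and follows essentially the same route as the paper: both apply Lemma~\ref{lem:remove} (noting the $i$-neighborhood has size at most $5$ because $v$ has a neighbor in $P_j$) and then Lemma~\ref{lem:add}, with the third district trivially unaffected. No issues.
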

\begin{proof}
	Because $v$ has a neighbor in $P_j$, it's $i$-neighborhood is connected and of size at most 5. By Lemma~\ref{lem:remove}, when $v$ is removed from $P_i$ then $P_i \setminus \{v\}$ is simply connected. Because $v$'s $j$-neighborhood is nonempty and simply connected, by Lemma~\ref{lem:add}, $P_j \cup \{v\}$ is simply connected.  Thus this is a valid move. 
\end{proof}



\subsection{Finding a single vertex to reassign: Alternation and Shrinkability Lemmas} 

In order to know whether a vertex that can be removed from $P_i$ can be added to another district, we have to first know that it's adjacent to another district. 

\begin{defn}
	Let $P$ be a partition of $T$. A vertex in $P_i$ is {\em exposed} if it is adjacent to a vertex in a different district $P_j$, $j \neq i$. 
\end{defn}

We begin by showing that if a connected component of $P_i$ has an exposed vertex, then it has an exposed vertex whose $i$-neighborhood is connected. 

\begin{lem}\label{lem:remove-W}
Let $P$ be a partition of $T$, and let $P_i$ be one district of $P$. Let $W \subseteq P_i$ be a simply connected subset of $P_i$. Let $S$ be a connected component of $P_i \setminus W$ that has at least one exposed vertex. Then there exists an exposed vertex $x \in S$ such that $P_i \setminus x$ is simply connected.  
\end{lem}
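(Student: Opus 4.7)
The plan is to first observe that $P_i\setminus S$ is always connected: every component of $P_i\setminus W$ other than $S$ must be adjacent to $W$ (otherwise $P_i$ itself would be disconnected), so $P_i\setminus S = W\cup(\text{other components of }P_i\setminus W)$ is connected through $W$. With this in hand, I will prove a slightly more general helper by strong induction on $|X|$, from which the lemma follows by taking $X=S$: if $P_i$ is simply connected, $X\subseteq P_i$ is connected, $P_i\setminus X$ is connected, and $X$ contains an exposed vertex, then there is an exposed $x\in X$ with $P_i\setminus x$ simply connected. Working with ``$P_i\setminus X$ connected'' rather than ``$W$ simply connected'' lets the induction close cleanly, since trying to apply the original lemma recursively with $W'=P_i\setminus C_k$ would require the topologically subtle fact that $W'$ itself is simply connected.

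The base case $|X|=1$ is immediate: writing $X=\{x\}$, the hypothesis gives $P_i\setminus x = P_i\setminus X$ connected, so $x$'s $i$-neighborhood is connected by Corollary~\ref{cor:cutvertex}, and since $x$ is exposed we have $|N(x)\cap P_i|\le 5$, so Lemma~\ref{lem:remove} concludes $P_i\setminus x$ is simply connected. For the inductive step $|X|\ge 2$, I would pick any exposed $v\in X$. If $v$'s $i$-neighborhood is connected we are done by Lemma~\ref{lem:remove}, so assume $v$ is a cut vertex of $P_i$. Then $P_i\setminus v$ decomposes into one component $C_W$ containing the connected set $P_i\setminus X$, together with further components $C_1,\ldots,C_m$ (with $m\ge 1$), each contained in $X\setminus v$ (because each such $C_l$ is connected, disjoint from $P_i\setminus X$, and adjacent to $v\in X$).

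To close the induction I need to show that at least one $C_k$ ($k\ge 1$) itself contains an exposed vertex: given this, since $P_i\setminus C_k = \{v\}\cup C_W\cup\bigcup_{l\ne k}C_l$ is connected through $v$ and $|C_k|<|X|$, the inductive hypothesis applied to $X'=C_k$ yields an exposed $x\in C_k\subseteq X$ with $P_i\setminus x$ simply connected. To produce such an exposed vertex in some $C_k$ I would use the structural fact that components of $P_i\setminus v$ correspond bijectively to the arcs of $v$'s $i$-neighborhood in the $6$-cycle $N(v)$ (one direction is Corollary~\ref{cor:cutvertex}; the other follows since two neighbors of $v$ in the same arc are joined by a path inside $N(v)\cap P_i\subseteq P_i\setminus v$), together with the observation that any $T$-vertex adjacent to $C_k$ but not in $C_k$ must be either $v$ or a vertex of another district---a $P_i$-vertex outside $C_k$ adjacent to $C_k$ would contradict $C_k$'s status as a maximal component of $P_i\setminus v$. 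Since $C_k\cup\{v\}\subsetneq T$, a case analysis on arc sizes and on where $v$'s other-district neighbors sit in $N(v)$ should force $C_k$ to touch another district somewhere---either via an endpoint of the arc $A_k\subseteq N(v)\cap C_k$ that is adjacent (in the $6$-cycle) to one of $v$'s other-district neighbors, or via a vertex of $C_k$ outside $N(v)$ whose $T$-neighbors leave $C_k$. I expect this case analysis, verifying that some $C_k$ always has an exposed vertex, to be the main obstacle; it is also where the triangular lattice structure (degree at most $6$ and consecutive neighbors of $v$ being adjacent to each other) is essentially used.
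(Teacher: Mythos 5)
Your skeleton is essentially the paper's: induct on the size of the component, and when the chosen exposed vertex $v$ is a cut vertex of $P_i$, peel off a component of $P_i\setminus v$ contained in $X$ and recurse. Your repackaging — replacing the hypothesis ``$W$ simply connected'' by ``$P_i\setminus X$ connected,'' which makes the base case immediate via Corollary~\ref{cor:cutvertex} and Lemma~\ref{lem:remove} — is a clean simplification (the paper instead has to argue in the base case that $N(x)\cap W$ is connected using simple connectivity of $P_i$ and $W$, and in the inductive step it recurses with $W'=\{w\}$, which is trivially simply connected, so the subtlety you worried about never actually arises there). The preliminary observation that $P_i\setminus S$ is connected, and the bookkeeping that $P_i\setminus C_k$ is connected through $v$, are both fine.

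The genuine gap is the step you yourself flag: you never prove that some component $C_k$ of $P_i\setminus v$ other than $C_W$ contains an exposed vertex, and without it the induction cannot close. This is not a routine case analysis on arc sizes; it is the one place the lemma does real work, and the paper closes it with a specific short argument (its Lemmas~\ref{lem:remove-W} and~\ref{lem:remove-W1}): take $s\in C_k\cap N(v)$ and $s'\in N(v)$ lying in a different component of $P_i\setminus v$ (both exist since every component of $P_i\setminus v$ is adjacent to $v$). Since $v$ is a cut vertex it cannot be a corner of $T$, so its neighbors outside $T$ form a single arc of at most two consecutive vertices of the $6$-cycle $N(v)$; hence at least one of the two $s$--$s'$ paths in $N(v)$ stays entirely inside $T$. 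Because $s$ and $s'$ lie in different arcs of $N(v)\cap P_i$ (here simple connectivity of $P_i$ is used, via the argument of Lemma~\ref{lem:cutvertex}), that path contains a vertex $q\notin P_i$ with $q\in T$, and the last vertex of the path before the first such $q$ lies in $C_k$ and is adjacent to another district, i.e., is exposed. Note this shows \emph{every} component of $P_i\setminus v$ has an exposed vertex, not just some $C_k$. Until you supply this argument (or an equivalent one), your proof is incomplete precisely at its load-bearing step.
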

\begin{proof}
	This proof proceeds by strong induction on $|S|$. 
	
	First, suppose $|S| = 1$.  This means $S$ consists of exactly one vertex, $x$. Because $S$ contains an exposed vertex, $x$ must be exposed. We claim that $N(x) \cap W$ is connected. To see this, suppose that $N(x) \cap W$ is not connected, and let $a$ and $b$ be two vertices in different connected components of $N(x) \cap W$.  Because $W$ is connected and $x \notin W$, there must be a path $Q$ in $W$ from $a$ to $b$.  Together, $x$ and $Q$ form a cycle that is entirely within $P_i$, and encircles some vertex of $N(x)$ that is not in $W$. Note because $|S| = 1$, $x$ cannot have any neighbors in $P_i$ that are not in $W$, as these vertices would then also be in connected component $S$ and we would have $|S| > 1$. This means the vertex of $N(x)$ that is not in $W$ and is encircled by $Q \cup x$ is not in $P_i$.  This contradicts the simple connectivity of $P_i$. Thus it must be that $N(x) \cap W$ is connected. Because $W$ is simply connected, $|N(x) \cap W| \neq 6$, otherwise $x$ would be a hole in $W$.  It follows from Lemma~\ref{lem:remove} that $P_i \setminus \{x\}$ is simply connected. This concludes the proof when $|S| = 1$. 
	
	Next, suppose $|S| > 1$.  Consider all exposed vertices in $S$, which by assumption is a nonempty set. If $S$ has an exposed vertex that is not a cut vertex of $P_i$, by Lemma~\ref{lem:remove} and Corollary~\ref{cor:cutvertex}, $P_i \setminus x$ is simply connected and we are done.  Otherwise, let $w$ be any exposed cut vertex of $S$. Consider $P_i \setminus \{w\}$, and look at any connected component $S'$ of $P_i \setminus \{w\}$ that does not contain $W$.  Note it must be that $S' \subseteq S$, and because $w \in S \setminus S'$, $|S'| < |S|$. Furthermore, look at $N(w) \cap S'$, which must be connected.  Look at the two vertices in $N(w)$ that are not in $S'$ but are adjacent to $S'$; these vertices cannot be in $P_i$, because if they were they would be in $S'$. While one of these vertices could be outside of $T$, because of the shape of $T$ it is not possible that both are.  Thus at least one vertex in $S' \cap N(w)$ is adjacent to a vertex in a different district, so $S'$ has an exposed vertex.  By the induction hypothesis, using $W' = w$ and $S'$ in place of $W$ and $S$, $S'$ has an exposed vertex $x$ such that $P_i \setminus x$ is simply connected.  Because $S' \subseteq S$, this vertex $x$ proves the lemma for $S$ as well.
\end{proof}

When $W = \{w\}$ is a cut vertex of $P_i$, it is not necessary to check the extra condition that $S$ contains an exposed vertex as we can show it always will, just as we did in the proof of the previous Lemma.  

\begin{lem}\label{lem:remove-W1}
	Let $P$ be a partition of $T$, and let $P_i$ be one district of $P$. Let $w \in P_i$ be a cut vertex of $P_i$. For any connected component $S$ of $P_i \setminus w$, there exists an exposed vertex $x \in S$ such that $P_i \setminus x$ remains simply connected. 
\end{lem}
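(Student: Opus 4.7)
The plan is to reduce directly to Lemma~\ref{lem:remove-W} with $W = \{w\}$. The singleton $\{w\}$ is trivially simply connected, and $S$ is a connected component of $P_i \setminus W$ by assumption, so the only hypothesis left to verify is that $S$ contains an exposed vertex. Once that is established, Lemma~\ref{lem:remove-W} immediately produces the desired $x$.

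To show $S$ has an exposed vertex, I would examine $N(w) \cap S$. Since $w$ is a cut vertex of $P_i$, Corollary~\ref{cor:cutvertex} tells us that $N(w) \cap P_i$ is disconnected as a subset of the $6$-cycle $N(w)$. Its maximal connected arcs are in bijection with the connected components of $P_i \setminus w$, so $N(w) \cap S$ is itself a single nonempty arc. Let $a'$ and $b'$ be the two vertices of $N(w)$ that are immediately adjacent to this arc in the $6$-cycle but not in the arc. Neither $a'$ nor $b'$ can lie in $P_i$: if, say, $a' \in P_i$, then $a' \neq w$ and $a'$ is adjacent in $P_i \setminus w$ to its neighbor inside the arc, which lies in $S$, forcing $a' \in S$ and contradicting $a' \notin N(w)\cap S$.

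The key geometric observation is that at least one of $a', b'$ must lie in $T$. Because $T$ is a triangular region of the triangular lattice, $N(w) \cap T$ is a single contiguous arc in the $6$-cycle $N(w)$ (the only way $N(w) \setminus T$ fails to be one arc is when $w$ is a corner, where $|N(w) \cap T| = 2$ and those two are adjacent, so $w$ cannot be a cut vertex). If both $a'$ and $b'$ were outside $T$, then the $S$-arc would fill the entire $T$-arc, giving $N(w) \cap P_i = N(w) \cap S$, which is connected; this contradicts $w$ being a cut vertex. Therefore at least one of $a', b'$ lies in $T \setminus P_i$, and thus in some district $P_j$ with $j \neq i$. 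The endpoint of $N(w) \cap S$ adjacent to that vertex is then an exposed vertex of $S$.

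With the exposed-vertex hypothesis now verified, Lemma~\ref{lem:remove-W} applied to $W = \{w\}$ and $S$ yields an exposed $x \in S$ with $P_i \setminus x$ simply connected, which is exactly the conclusion sought. The only nontrivial step is the geometric claim that $a'$ and $b'$ cannot both lie outside $T$; this is where the triangular shape of $T$ (and the exclusion of corner vertices as cut vertices) is essential, and it is the part that would need to be revisited if one tried to generalize this lemma to other lattice regions.
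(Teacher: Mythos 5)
Your proof is correct and takes essentially the same approach as the paper's: reduce to Lemma~\ref{lem:remove-W} with $W=\{w\}$ after locating an exposed vertex of $S$ inside the $6$-cycle $N(w)$, using that the portion of $N(w)$ lying outside $T$ is a single (possibly empty) contiguous arc and that a corner of $T$ cannot be a cut vertex. The one step you assert without justification---that the maximal arcs of $N(w)\cap P_i$ are in bijection with the components of $P_i\setminus w$, so that $N(w)\cap S$ is a single arc---is true but requires the simple-connectivity argument from the proof of Lemma~\ref{lem:cutvertex} (vertices in distinct arcs lie in distinct components); the paper sidesteps this by instead taking a path in $N(w)\cap T$ from a vertex of $S$ to a vertex of a second component and declaring the vertex just before the first exit from $P_i$ to be the exposed vertex.
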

\begin{proof}
	Because $w$ is a cut vertex of $P_i$, $N(w)\cap P_i$ must have at least two connected components. Becuase of this, $w$ can't be a corner vertex of $T$, and so $w$ has at most two neighbors outside $T$, and any such neighbors must be adjacent. 	
	We also note that $N(w)$, as defined, is always a cycle of length 6, though this cycle may include some vertices outside $T$.
	
	Let $S$ be any connected component of $P_i \setminus w$, and let $s$ be any vertex in $S \cap N(w)$. Let $S'$ be any other component of $P_i \setminus w$, and let $s' \in S' \cap N(w)$.  Examine both paths from $s$ to $s'$ in $N(w)$: at most one can leave $T$, so at least one must be entirely contained in $T$. Let $Q$ be a path from $s$ to $s'$ in $N(w)$ that is entirely contained in $T$. Because $s \in S$ and $s' \in S'$ are in different connected components of $P_i \setminus w$, path $Q$ must contain at least one vertex $q \notin P_i$. The last vertex along $Q$ before $q$ is in $S$ and is adjacent to a vertex of $T$ that is not in $P_i$, meaning it is an exposed vertex. Because $S$ contains an exposed vertex, we can apply Lemma~\ref{lem:remove-W} with $W = \{w\}$, and we conclude there exists an exposed vertex $x \in S$ such that $P_i \setminus x$ is simply connected. 
\end{proof}

The previous lemmas identify a vertex $x$ that can be removed from $P_i$ while keeping $P_i$ simply connected; the next lemma will help with understanding when that vertex $x$ can be added to another district. Now that we will be regularly discussing multiple districts at the same time, without loss of generality we focus on removing a vertex from $P_1$ and adding it to $P_2$ or $P_3$. Recall a vertex's $1$-neighborhood is all of its neighbors that are in set $P_1$, not all vertices at distance 1 away, and similarly for a vertex's 2-neighborhood and 3-neighborhood.  We say a vertex of $T$ is a {\em boundary vertex} if it has neighbors in $\Gtri$ that are outside of $T$. We use $bd(T)$ to denote all boundary vertices of $T$. 
The following is the formal statement of the Alternation Lemma.

\begin{lem}[Alternation Lemma]\label{lem:alternation}
	Let $P$ be a partition of $T$, and let $x$ be a vertex of $P_1$ that is not in $bd(T)$, has a connected $1$-neighborhood, and is exposed. Then $x$'s $2$-neighborhood or $3$-neighborhood is connected and nonempty. 
\end{lem}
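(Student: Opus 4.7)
Because $x \notin bd(T)$, all six neighbors of $x$ lie in $T$, so $N(x)$ is a $6$-cycle in $T$. Since the $1$-neighborhood $N(x)\cap P_1$ is connected (and $x$ is exposed, so a proper subarc of $N(x)$), its complement $A := N(x)\setminus P_1 \subseteq P_2 \cup P_3$ is a contiguous arc of the cycle $N(x)$, i.e., a path $v_1, v_2, \ldots, v_k$ with $1 \le k \le 5$. The goal is to show that $A \cap P_2$ or $A \cap P_3$ is connected and nonempty.

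I proceed by contradiction: assume both $A \cap P_2$ and $A \cap P_3$ fail to be ``connected and nonempty.'' If one of them, say $A \cap P_3$, is empty, then $A = A \cap P_2$ is a path, hence connected; since $x$ is exposed it is nonempty, contradicting the assumption. Therefore both $A \cap P_2$ and $A \cap P_3$ are nonempty but disconnected as subsets of $A$. Reading the labels $P_2, P_3$ along $v_1,\ldots,v_k$, each of $P_2$ and $P_3$ then occurs in at least two maximal runs; since runs alternate between $P_2$ and $P_3$, there are at least four runs, and so I can extract four vertices appearing in order along $A$,
\[
a, b, c, d \quad\text{with } a,c \in P_2 \text{ and } b,d \in P_3.
\]

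Now I use planarity together with the simple connectivity of $P_2$ and $P_3$. Since $P_2$ is connected and contains both $a$ and $c$, there is a path $Q \subseteq P_2$ from $a$ to $c$; $Q$ avoids $x$ because $x \in P_1$. Form the closed curve
\[
C \;=\; x \to a \to Q \to c \to x
\]
in the planar embedding of $\Gtri$. Locally at the point $x$, the edges $xa$ and $xc$ divide a small disk around $x$ into two sectors, and since $a,b,c,d$ appear in this cyclic order around $x$, the vertex $b$ lies in one sector and $d$ lies in the other. Since all vertices of $C$ other than $x$ lie outside a sufficiently small neighborhood of $x$, the Jordan curve theorem applied to $C$ puts $b$ and $d$ in different components of the plane minus $C$. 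Every vertex of $C$ lies in $P_1 \cup P_2$, so $C \cap P_3 = \emptyset$; hence any path in $P_3$ from $b$ to $d$ would have to include a vertex of $C$, which is impossible. This contradicts the connectedness of $P_3$, completing the proof.

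\textbf{Where I expect the work to be.} The combinatorial counting of runs (reducing to the $a,b,c,d$ alternation) and the empty/nonempty case split are routine. The one place that needs care is the Jordan-curve step: I need to confirm that $C$ is a simple closed curve (so the path $Q \subseteq P_2$ meets $\{x,a,c\}$ only at its endpoints, which follows because $Q \subseteq P_2$ while $x \in P_1$, and by taking $Q$ to be a simple path in $P_2$), and that the two ``wedge'' sectors at $x$ really correspond to the two regions of the plane cut out by $C$. Both are standard planarity facts, but they are the heart of why the argument works and should be written out carefully.
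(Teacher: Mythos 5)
Your proposal is correct and follows essentially the same route as the paper: both proofs reduce to the impossibility of four neighbors of $x$ alternating $P_2, P_3, P_2, P_3$ in cyclic order around $x$, ruled out by forming a cycle through one district and $x$ that separates two vertices of the other district, contradicting its connectivity. The only cosmetic difference is that you close the curve through a $P_2$-path to separate the $P_3$ vertices, while the paper closes it through a $P_3$-path to separate the $P_2$ vertices; the argument is symmetric.
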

\begin{proof}
	First, suppose $x$ is adjacent to vertices of only one of $P_2$ and $P_3$, and without loss of generality suppose it is $P_2$. Because $x$ is not adjacent to the boundary of $T$, all of $x$'s neighbors that are not in $P_1$ must be in $P_2$.  As $x$'s $1$-neighborhood is connected, $P_2 \cap N(x)$ must also be a connected set. This implies $x$'s $2$-neighborhood is connected and nonempty, proving the lemma. 
	
	Next, suppose that $x$ is adjacent to vertices of both $P_2$ and $P_3$. Suppose, for the sake of contradiction, that $x$'s 2-neighborhood is disconnected and $x$'s $3$-neighborhood is disconnected. Because $x$'s $2$-neighborhood is disconnected, this means $x$ must have two neighbors, $a$ and $b$, that are both in $P_2$ but are not connected by a path in $N(x) \cap P_2$.  There are two paths from $a$ to $b$ in $N(x)$, and because $x$'s $1$-neighborhood is connected, at most one of these paths can contain a vertex $y \in P_1$.  The other path must contain at least one vertex $c$ that is not in $P_1$ and not in $P_2$.  This means $c \in P_3$, because  it is impossible for $N(x)$ to contain a vertex outside of $T$ as $x \notin bd(T)$. See Figure~\ref{fig:alternation}(a). 
	
	\begin{figure}
		\centering
		\includegraphics{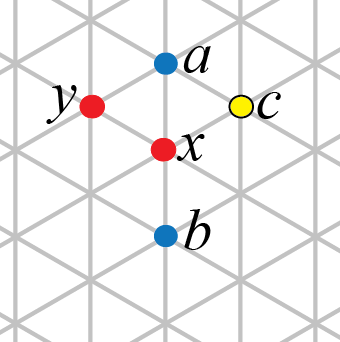} \hspace{1cm}
		\includegraphics{"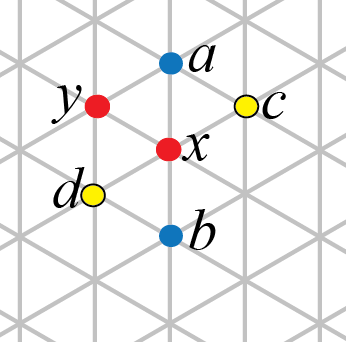"} \hspace{1cm}
		\includegraphics{"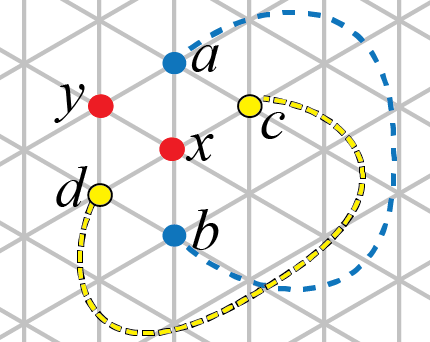"}\\
		(a) \hspace{3.5cm} (b) \hspace{4cm} (c)  
		\caption{Examples from the proof of Lemma~\ref{lem:alternation}. (a) Vertex $x \in P_1$ (red), with neighbor $y$ in $P_1$  and neighbors $a$ and $b$ in different components of $x$'s 2-neighborhood (blue). Because $x$'s 1-neighborhood is connected and $x \notin bd(T)$, one of the two paths from $a$ to $b$ in $N(x)$ must contain a vertex $c \in P_3$ (yellow). (b) If $x$'s 3-neighborhood is also disconnected, then there must be a another vertex $d \in P_3 \cap N(x)$, and each path from $c$ to $d$ in $N(x)$ must contain a vertex of $P_1$ or $P_2$.  (c) Because $P_2$ and $P_3$ must `alternate' around $x$, it's impossible for $a$ and $b$ to be connected by a path in $P_2$ while $c$ and $d$ are simultaneously connected by a path in $P_3$.  }
		\label{fig:alternation}
	\end{figure}
	
 Because $x$'s $3$-neighborhood is disconnected, this means $x$ must have another neighbor $d \in P_3$, and at least one of the paths from $c$ to $d$ in $N(x)$ must not contain any vertices of $P_1$.  For the same reasons as above, this means this path, which we will call $Q$, must contain a vertex of $P_2$. However, $Q$ cannot contain both $a$ and $b$ because $c$ is on the path from $a$ to $b$ in $N(x)$ that avoids $P_1$. This means there exists at least one vertex of $P_2$ in $Q$ and at least one vertex of $P_2$ in $N(x)$ but not in $Q$. This implies $P_2$ and $P_3$ alternate around $N(x)$, though not necessarily consecutively, as shown in the example in Figure~\ref{fig:alternation}(b). 
 
  Consider the cycle formed by a path from $c$ to $d$ in $P_3$ (which exists because $P_3$ is connected) together with $x$; this path from $c$ to $d$ in $P_3$ is shown as a yellow dashed line in Figure~\ref{fig:alternation}(c). This cycle contains no vertices of $P_2$, but has at least one vertex of $P_2$ inside it and at least one vertex of $P_2$ outside it.  This implies $P_2$ is disconnected, a contradiction.  We conclude $x$'s $2$-neighborhood and $3$-neighborhood can't both be disconnected, proving the lemma.
\end{proof}



We now show how one can find a single vertex that can be removed from one district and added to another district.  We informally defined what it means for a component to be {\it shrinkable} in the proof overview, but here we include the formal definition. 

\begin{defn}\label{defn:shrinkable}
	Let $P$ be a partition of $T$, and let $P_i$ be one district of $P$. Let $W \subseteq P_i$ be a simply connected subset of $P_i$. Let $S$ be a connected component of $P_i \setminus W$.  We say $S$ is {\em shrinkable} if it contains a vertex that can be removed from $P_i$ and added to another district, producing a valid partition. 
\end{defn}


Recall Figure~\ref{fig:non-shrinkable} gives two different examples of sets that are not shrinkable: one because it contains no exposed vertices, and the other because $S$ is a path ending at $bd(T)$. The informal Shrinkability Lemma stated in the proof overview gave two sufficient conditions for $S$ to be shrinkable: (I) $S \cap bd(T) = \emptyset$ or (II) $S$ contains an exposed vertex and $|P_i \cap bd(T)| \geq 2$.  While these remain the two main conditions we will use, we also present Conditions~\ref{item:exp_corner}-\ref{item:cut_corner} which are sufficient to imply Condition~\ref{item:exp_2bd} and included here to make future applications of this lemma more straightforward. 

\todoo{It doesn't seem like we need the hypothesis that $S$ is simply connected? Tentatively it seems alright to remove this hypothesis but will leave for now. Check and don't seem to need this hypothesis in Lemma~\ref{lem:remove-W}.  We do need the hypothesis that $W$ is simply connected for Condition~\ref{item:nobd}, but not for any of the other conditions. }

\begin{lem}[Shrinkability Lemma]\label{lem:shrinkable}
	Let $P$ be a partition of $T$, and let $P_i$ be one district of $P$. Let $W \subseteq P_i$ be a simply connected subset of $P_i$. Let $S$ be a simply connected component of $P_i \setminus W$.  Each of the following is a sufficient condition for $S$ to be shrinkable: 
	\begin{enumerate}[label=(\Roman*)]
		\item \label{item:nobd} $S \cap bd(T) = \emptyset$
		\item \label{item:exp_2bd} $S$ contains an exposed vertex and $P_i$ contains at least two vertices in $bd(T)$
		\item \label{item:exp_corner} $S$ contains an exposed vertex and $P_i$ contains a corner of $T$
		\item \label{item:cut_2bd} $W$ is a cut vertex and $P_i$ contains at least two vertices in $bd(T)$
		\item \label{item:cut_corner} $W$ is a cut vertex and $P_i$ contains a corner of $T$.
	\end{enumerate}
\end{lem}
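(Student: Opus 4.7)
The plan is to reduce (III) and (V) to (II) and (IV) respectively, then to prove (I), (II), and (IV) by selecting an exposed vertex of $S$ via Lemma~\ref{lem:remove-W} or Lemma~\ref{lem:remove-W1} and flipping it into another district using the Flip Lemma.

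For the reduction, any corner of $T$ lies in $bd(T)$, and both of its in-$T$ neighbors also lie in $bd(T)$ (they sit on the two edges meeting the corner). Whenever $|P_i| \geq 2$ (which holds whenever $W$ is nonempty and $S$ is nonempty, and always in the main theorem's setting where $|P_i| \geq n \geq 5$), a corner of $P_i$ together with any $P_i$-neighbor of that corner furnishes two vertices of $P_i \cap bd(T)$. Hence (III) implies (II) and (V) implies (IV).

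For (I), I first observe that $S$ must contain an exposed vertex: otherwise, every external neighbor of $S$ (all in $T$, since $S \cap bd(T) = \emptyset$) would lie in $W$, so $S$ would be a hole of $W$, contradicting simple connectivity of $W$. Apply Lemma~\ref{lem:remove-W} to obtain an exposed $x \in S$ with $P_i \setminus x$ simply connected; by Corollary~\ref{cor:cutvertex}, $x$'s $i$-neighborhood is connected. Since $x \notin bd(T)$, the Alternation Lemma provides some $j \neq i$ with $x$'s $j$-neighborhood connected and nonempty, and the Flip Lemma completes the move.

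For (II) and (IV), apply Lemma~\ref{lem:remove-W} (under (II)) or Lemma~\ref{lem:remove-W1} (under (IV)) to produce an exposed $x \in S$ with $P_i \setminus x$ simply connected. If $x \notin bd(T)$, conclude as in (I). The main obstacle is $x \in bd(T)$, where the Alternation Lemma's cycle argument fails because $N(x)$ partially leaves $T$. Here I would enumerate the arrangements of non-$i$ in-$T$ neighbors along the arc $N(x) \cap T$ (which has length $4$ at a non-corner boundary vertex, length $2$ at a corner); since $x$'s $i$-neighborhood is a connected sub-arc, most arrangements yield a connected nonempty $j$- or $l$-neighborhood and admit the Flip Lemma directly. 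The lone problematic case is precisely the dead-end of Figure~\ref{fig:non-shrinkable}(b): a non-corner boundary vertex $x$ whose $i$-neighborhood consists of the two central in-$T$ neighbors, with both flanking in-$T$ neighbors in a single district $P_j$. Simple connectivity of $P_j$ then forces an interior $P_j$-path joining the two flanking neighbors, which together with $x$ forms a cycle $C$ enclosing a region $R$ of $T$ that contains the two $i$-neighbors of $x$. The component of $P_i$ containing $x$ must lie in $R \cup \{x\}$, since the only way to escape $R$ is through $x$ but $x$'s remaining neighbors all lie in $P_j$ or outside $T$. Hence $P_i$ meets $bd(T)$ only at $x$, contradicting $|P_i \cap bd(T)| \geq 2$. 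The hard part will be making this containment-and-contradiction argument rigorous in the planar setting, and it is precisely the hypothesis $|P_i \cap bd(T)| \geq 2$ that excludes the Figure~\ref{fig:non-shrinkable}(b) dead-end configuration.
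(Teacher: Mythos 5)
Your proposal is correct and follows essentially the same route as the paper: Conditions \ref{item:exp_corner} and \ref{item:cut_corner} are reduced to \ref{item:exp_2bd} and \ref{item:cut_2bd} via the corner-plus-neighbor observation; Condition \ref{item:nobd} is handled by showing $S$ must contain an exposed vertex (else $W$ would enclose it, contradicting its simple connectivity) and then chaining Lemma~\ref{lem:remove-W}, Corollary~\ref{cor:cutvertex}, the Alternation Lemma, and the Flip Lemma; and the boundary case of \ref{item:exp_2bd} is dispatched by trapping $P_i \setminus x$ inside a cycle through $x$ and contradicting $|P_i \cap bd(T)| \geq 2$.

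One inaccuracy in your case analysis for \ref{item:exp_2bd}: the configuration you single out is not the lone problematic arrangement. If $x$'s $i$-neighborhood is a \emph{single} central vertex of the length-4 arc $N(x) \cap T$ (say $n_2$ of $n_1,n_2,n_3,n_4$) and all three remaining in-$T$ neighbors lie in one district $P_j$, then the $j$-neighborhood again splits into two components ($\{n_1\}$ and $\{n_3,n_4\}$) while the third district contributes nothing, so the Flip Lemma fails there too. Your trap argument covers this case verbatim --- join two $j$-neighbors of $x$ lying in distinct components of $N(x)\cap P_j$ by a path in $P_j$, observe that the resulting cycle encloses $x$'s $i$-neighbors and hence all of $P_i \setminus x$, and conclude $P_i \cap bd(T) = \{x\}$ --- but it must be run for every arrangement in which $x$'s neighborhood in the unique adjacent foreign district is disconnected, not just one. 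The paper sidesteps the enumeration: it first notes that if $x$ is adjacent to both other districts then one of them contributes exactly one neighbor (hence a connected nonempty neighborhood), and otherwise applies the trap argument directly to an arbitrary disconnected $j$-neighborhood, which is cleaner and automatically exhaustive.
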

\begin{proof}
First, consider Condition~\ref{item:nobd}.  Consider $N(S)$, all vertices of $\Gtri$ not in $S$ but adjacent to a vertex of $S$. Because $W$ is simply connected, $N(S) \setminus W$ is nonempty; if all vertices of $N(S)$ were in $W$, $W$ would necessarily contain a cycle encircling $S$, a contradiction. We also note that because $S \cap bd(T) = \emptyset$, $N(S) \subseteq T$.  This means $N(S)$ must contain at least one vertex in a district that is not $P_i$. This means $S$ has at least one exposed vertex, and so by Lemma~\ref{lem:remove-W}, this means there is an exposed vertex $x \in S$ such that $P_i \setminus x$ is simply connected. By Lemma~\ref{lem:alternation}, appropriately interchanging the roles of districts 1, 2, and 3, then for $j,k \neq i$, $x$'s $j$-neighborhood or $k$-neighborhood is connected and nonempty.  By Lemma~\ref{lem:remove-add}, this means $x$ can be removed from $P_i$ and added to $P_j$ or $P_k$, meaning $S$ is shrinkable, as desired. 

Next, consider Condition~\ref{item:exp_2bd}. By Lemma~\ref{lem:remove-W}, there exists an exposed vertex $x \in S$ such that $P_i \setminus x$ is simply connected.  If $x \notin bd(T)$ then $S$ is shrinkable by Lemma~\ref{lem:alternation} as above.  If $x \in bd(T)$, let $P_j$ and $P_k$ be the other two districts, with $j \neq i \neq k$. 
Because $x$ is exposed, $x$ has a neighbor in $P_j$ or $P_k$.  If $x$'s $j$-neighborhood or $k$-neighborhood is nonempty and connected, then we are done by Lemma~\ref{lem:remove-add}. If $x$ is adjacent to both $P_j$ and $P_k$, then because $x$ must have at least one neighbor in $P_i$ and $x$ has at most four neighbors in $T$, in at least one of $P_j$ and $P_k$ it must have exactly one neighbor. This neighborhood is thus connected and we are done. 

All that remains is to prove Condition~\ref{item:exp_2bd} is sufficient for shrinkability in the case where $x$ has neighbors in exactly one of $P_j$ and $P_k$; without loss of generality suppose it is $P_j$. Because we are done if $x$'s $j$-neighborhood is connected, we assume $x$ has two neighbors, $a$ and $b$, that are both in $P_j$ but are not connected by a path in $N(x) \cap P_j$.  There are two paths from $a$ to $b$ in $N(x)$, and because $x$'s $1$-neighborhood is connected, at most one of these paths can contain vertices in $P_1$. The other path must contain a vertex outside of $T$. If this were to be the case, $x$ could not be added to $P_j$ while maintaining simple connectivity; for an example, see Figure~\ref{fig:bdryx}, where adding $x$ to $P_j$ results in a cycle in $P_j$ encircling vertices not in $P_j$. 
\begin{figure}
	\centering
	\includegraphics{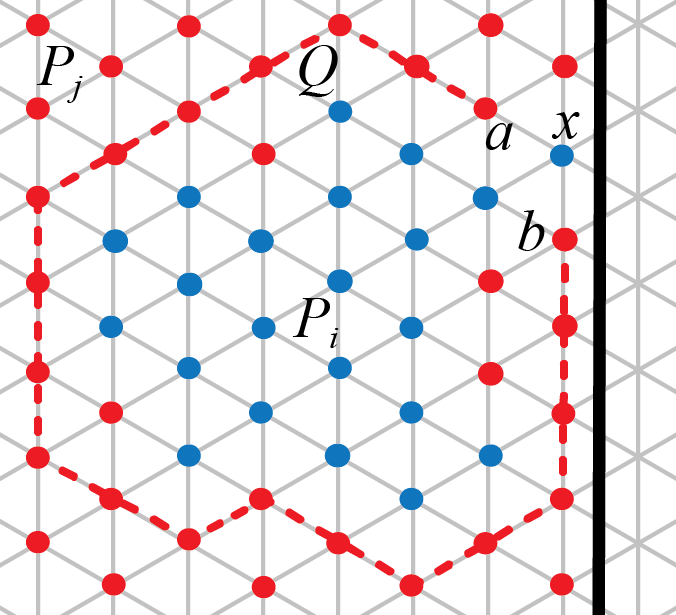}
	\caption{An example of an exposed vertex $x$ in $P_i$ (blue) that, despite having a connected $i$-neighborhood, cannot be added to a different district while maintaining simple connectivity: adding it to $P_j$ (red) causes $P_j$ to no longer be simply connected because $a$ and $b$ must be connected by some path $Q$ in $P_j$ (dashed red).  Vertex $x$ also can't be added to $P_k$ for $k \neq i,j$ because it has no neighbors in $P_k$. }\label{fig:bdryx}
\end{figure}
However, because $a$ and $b$ are both in $P_j$, they must be connected via a path $Q$ in $P_j$, and $Q$ must be in $T$ (one example of such a $Q$ is shown as a dashed red line in Figure~\ref{fig:bdryx}). 
This path $Q$, together with $x$, creates a cycle $C$ whose only vertex in $P_i$ is $x$, while all other vertices of $C$ are in $P_j$.  Because $x$'s $i$-neighborhood is connected, $x$ cannot have neighbors in $P_i$ both inside and outside $C$. Because $a$ and $b$ are in different components of $N(x) \cap P_j$, the path from $a$ to $b$ in $N(x)$ that goes inside $C$ must contain a vertex not in $P_j$, and because $x$ has no neighbors in $P_k$, this vertex must be in $P_i$. Therefore $x$ has a neighbor inside $C$ in $P_i$ and therefore no neighbors in $P_i$ outside $C$. This means all of  $P_i$ must be inside $C$, except for $x \in C$. 
This means $x$ is the only vertex in $P_i \cap bd(T)$, a contradiction as Condition~\ref{item:exp_2bd} assumes $|P_i \cap bd(T) | \geq 2$.  Therefore this case is impossible under the assumptions of Condition~\ref{item:exp_2bd}, and in fact $x$ must fall into one of the previous cases in which $x$ can be successfully added to $P_j$ or $P_k$.

The remaining conditions all imply Condition~\ref{item:exp_2bd} holds in a fairly straightforward way, and are meant to make later applications of this lemma easier. 
Consider Condition~\ref{item:exp_corner}. If $P_i$ contains a corner of $T$, then because each district has more than one vertex, this corner vertex cannot be the only vertex of $P_i$. Any corner vertex has two neighbors in $T$, and both are in $bd(T)$. At least one of these must be in $P_i$, meaning $P_i$ has at least two boundary vertices, the corner and its neighbor. Therefore $P_i$ containing a corner vertex of $P_i$ implies $|P_i \cap bd(T)| \geq 2$, so Condition~\ref{item:exp_corner} implies Condition~\ref{item:exp_2bd}. 

Next consider Conditions~\ref{item:cut_2bd} and \ref{item:cut_corner}. By Lemma~\ref{lem:remove-W1}, when $W$ is a single cut vertex of $P_i$, then each component of $W \setminus P_i$ contains an exposed vertex.  Therefore Conditions~\ref{item:cut_2bd} and \ref{item:cut_corner} imply Conditions~\ref{item:exp_2bd} and \ref{item:exp_corner}, respectively. \end{proof}

\subsection{Tower Moves and the Tower Lemma}

There will be certain points in our sweep-line proof where we will want to add a particular vertex to a different district, but may not be able to in one move. Instead, a whole sequence of moves might be required.  This motivates the following definition. 



\begin{defn}
	A {\em tower} is a sequence of at least two adjacent vertices $v_1$, $v_2$, $\ldots$, $v_t$ lying on a straight line such that: 
	\begin{itemize}
		\item $v_1$ is in the same district as its two common neighbors with $v_2$. 
		\item No two adjacent vertices $v_i$ and $v_{i+1}$ are in the same district
		\item For any $l = 2, \ldots, t$, vertex $v_l$ cannot be removed from its current district and reassigned to the district of $v_{l-1}$. 
		\item The vertex on this line after $v_t$ cannot be added to the tower. 
	\end{itemize}
	The {\em height} of this tower is $t$, the number of vertices in it. We call $v_1$ the {\em top} of the tower and $v_t$ the {\em bottom} of the tower.  
\end{defn}

\begin{figure}
	\centering
	\includegraphics{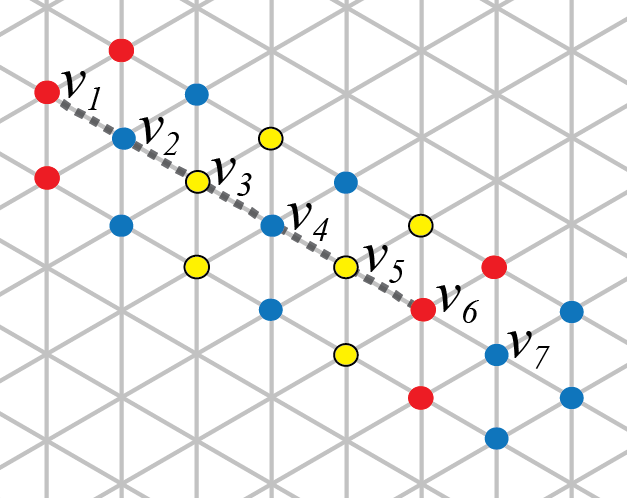}\hspace{1.2cm}
	\includegraphics{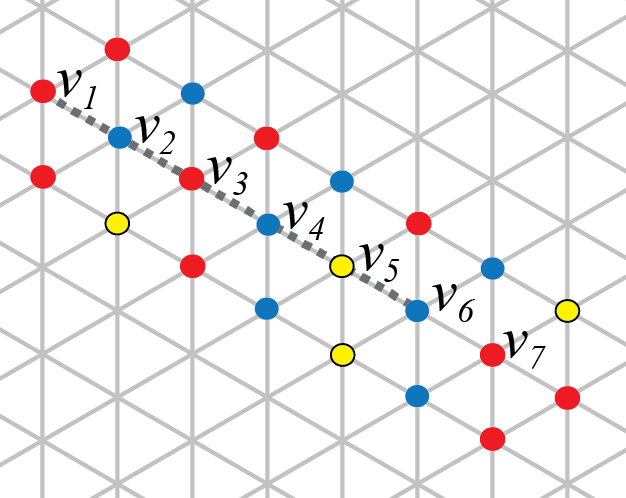}\\
	(a) \hspace{6cm} (b)
	\caption{(a) Vertices $v_1$, $v_2$, $\ldots$, $v_6$ form a tower of height $6$ where each $v_i$'s common neighbors with $v_{i+1}$ are in the same district as $v_i$. Note $v_7$ is not part of the tower, because it can be added to the district of $v_6$. (b) Vertices $v_1$, $v_2$, $\ldots$, $v_6$ form a tower of height $6$ where it is not true that each $v_i$'s common neighbors with $v_{i+1}$ are in the same district as $v_i$. Note $v_7$ is not part of the tower, because it can be added to the district of $v_6$.}
	\label{fig:tower}
\end{figure}

This definition of a tower is similar to that of~\cite{lrs} and~\cite{cdrr16}. 
The simplest example of a tower has $v_i$ in the same district as its common neighbors with $v_{i+1}$, as seen in Figure~\ref{fig:tower}(a).  However, towers do not exclusively have this property.  Figure~\ref{fig:tower}(b) gives a tower whose second vertex does not satisfy this property; this vertex $v_2$ cannot be added to the district containing $v_1$ because its 1-neighborhood is not connected. Despite this, towers must have a fairly rigid structure. 
For the following lemma which restricts what a tower can look like, you may find it helpful to reference Figure~\ref{fig:tower-forbid}.

\begin{figure}
\centering
	\includegraphics{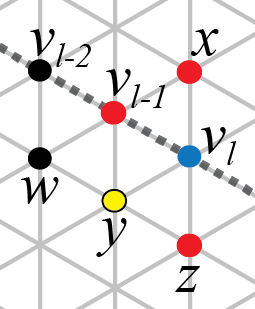}\hspace{2cm}
	\includegraphics{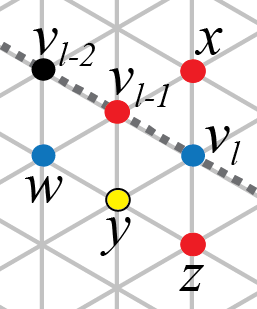}\hspace{2cm}
	\includegraphics{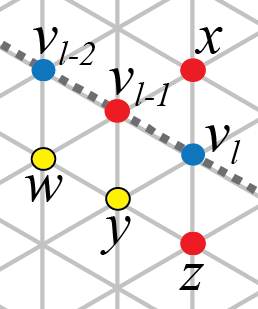}\\
(a) \hspace{3.5cm} (b) \hspace{3.5cm} (c)
\caption{Images from the proof of Lemma~\ref{lem:tower-forbid}, with $P_1$ in red, $P_2$ in blue, and $P_3$ in yellow.  Vertices whose district has not been specified are shown in black. (a) The partition (up to permuting districts, rotation, and reflection) in the neighborhood of a tower that is proved to be impossible in Lemma~\ref{lem:tower-forbid}.  (b) When $w \in P_2$, the cycle formed by any path from $w$ to $v_l$ in $P_2$ together with $y$ separates $v_{l-1} \in P_1$ from $z \in P_1$, a contradiction.  (c) When $w \in P_3$ and $v_{l-2} \in P_2$, the cycle formed by any path from $v_{l-2}$ to $v_l$ in $P_2$ together with $w$ and $y$ separates $v_{l-1} \in P_1$ from $z \in P_1$, a contradiction.} \label{fig:tower-forbid}	
\end{figure}

\begin{lem}\label{lem:tower-forbid}
	Let $T$ be a tower, and let $v_{l-2}$, $v_{l-1}$, and $v_{l}$ be three consecutive vertices in the tower for $3 \leq l \leq t$. Let $x$ and $y$ be the two common neighbors of $v_{l-1}$ and $v_l$, let $z$ be $y$ and $v_l$'s common neighbor that is not $v_{l-1}$, and let $w$ be $y$ and $v_{l-1}$'s common neighbor that is not $v_l$.	
	Without loss of generality, suppose $v_{l-1} \in P_1$ and $v_{l} \in P_2$.  
	The following district assignments are impossible: $x \in P_1$, $y \in P_3$, $z \in P_1$, and $w \notin P_1$. 
\end{lem}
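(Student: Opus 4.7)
The plan is to argue by contradiction: assume the forbidden configuration $v_{l-1}\in P_1$, $v_l\in P_2$, $x\in P_1$, $y\in P_3$, $z\in P_1$, $w\notin P_1$, and derive a contradiction by splitting on the district of $w$ and (when $w\in P_3$) on the district of $v_{l-2}$. The master principle is that $P_1$ is connected, so $v_{l-1}$ and $z$ must be joined by a path $R$ entirely in $P_1$; in each case I will produce a cycle $C$ that is disjoint from $P_1$ but which, in the planar embedding, separates $v_{l-1}$ from $z$, forcing $R$ to meet $C$ and giving the contradiction.

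Case 1 is $w\in P_2$. Since $v_l$ and $w$ are both in the connected district $P_2$, pick a path $Q\subset P_2$ from $w$ to $v_l$. Using the lattice edges $w\text{--}y$ and $y\text{--}v_l$, form the cycle $C=Q\cup\{y\}$. Every vertex of $C$ lies in $P_2\cup P_3$, so $C\cap P_1=\emptyset$. It remains to check that $C$ separates $v_{l-1}$ from $z$. This is a purely local check at $y$: the six neighbors of $y$ appear in cyclic order as $z,v_l,v_{l-1},w$, and two further vertices. The cycle enters $y$ along $v_l$ and leaves along $w$, so the two arcs of $N(y)$ cut out by $C$ put $v_{l-1}$ on one side and $z$ on the other. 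By the Jordan-curve property of a simple cycle in a planar graph, any path between $v_{l-1}$ and $z$ must share a vertex with $C$, contradicting $R\subset P_1$.

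Case 2 is $w\in P_3$. If $v_{l-2}\in P_2$, I repeat the previous construction but now join the two $P_2$-vertices $v_{l-2}$ and $v_l$ by a path $Q\subset P_2$ and close it off through the three $P_3$ vertices, using the lattice edges $v_{l-2}\text{--}w$, $w\text{--}y$, $y\text{--}v_l$, to get a cycle $C\subseteq P_2\cup P_3$; the same local analysis at $y$ (nothing changes in $N(y)$) again separates $v_{l-1}$ from $z$, contradicting connectivity of $P_1$. If instead $v_{l-2}\in P_3$, then $v_{l-1}$'s $3$-neighborhood contains $v_{l-2},w,y$, and a direct check in the cyclic order of $N(v_{l-1})$ shows these are three consecutive neighbors, so the $3$-neighborhood of $v_{l-1}$ is connected and nonempty. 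The $1$-neighborhood contains $x$ and possibly the sixth neighbor of $v_{l-1}$, and in either case is connected. The Flip Lemma (Lemma~\ref{lem:remove-add}) now lets us reassign $v_{l-1}$ from $P_1$ to $P_3$, producing a valid partition. But $v_{l-2}\in P_3$ is the district of $v_{l-1}$'s predecessor in the tower, so this violates the tower condition at index $l-1$ (which is $\ge 2$ since $l\ge 3$), a contradiction.

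The main obstacle is the Jordan-style separation step in Cases~1 and~2a: one must verify rigorously that the explicit cycle constructed actually separates the two $P_1$-vertices in the planar embedding, so that a purely $P_1$ path cannot connect them. This is handled by the local cyclic-order computation at $y$, which shows the cycle $C$ exits $y$ along edges that interleave $v_{l-1}$ and $z$ among the neighbors of $y$. The rest of the argument is a bookkeeping exercise: checking that $C$ is genuinely a simple cycle (which follows because $Q\subset P_2$ is disjoint from $\{w,y\}\subset P_3$ and from $v_l$'s role as an endpoint) and that it avoids $P_1$ (which is immediate from the district-assignment choices in each case).
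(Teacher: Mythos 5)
Your proposal is correct and follows essentially the same route as the paper: the same $P_2$-path-plus-$y$ (resp.\ plus $w$ and $y$) cycle separating $v_{l-1}$ from $z$ when $w\in P_2$ or when $w\in P_3$ with $v_{l-2}\in P_2$, and the same Flip Lemma contradiction with the tower definition when $v_{l-2}\in P_3$. The only (trivial) omission is that you never explicitly rule out $v_{l-2}\in P_1$, which follows immediately from the tower condition that consecutive tower vertices lie in different districts.
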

\begin{proof}
	This proof will have two cases, for $w \in P_2$ and $w \in P_3$, finding a contradiction in both. First, suppose $w \in P_2$; see Figure~\ref{fig:tower-forbid}(b). This means that sequentially in $N(y)$, we have $w \in P_2$, $v_{l-1} \in P_1$, $v_{l} \in P_2$, and $z \in P_1$.  Because $w$ and $v_{l}$ are both in $P_2$, there must exist a path between them in $P_2$.  This path, together with $y$, forms a cycle that encircles exactly one of $v_{l-1}$ and $z$. This cycle contains no vertices in $P_1$ but separates two vertices of $P_1$, namely $v_{l-1}$ and $z$, implying that $P_1$ is not connected. This is a contradiction. Therefore if $w \in P_2$, this partition in the neighborhood of $v_{l-1}$ and $v_{l}$ is impossible. 
	
	Next, suppose $w \in P_3$. In this case we will need to look at $v_{l-2}$.  This vertex cannot be in $P_1$ because $v_{l-1} \in P_1$ and sequential tower vertices must be in different districts.  If $v_{l-2} \in P_3$, then $v_{l-1}$'s $1$-neighborhood and $3$-neighborhood would both be connected; this contradicts that $v_{l-1}$ cannot be added to $P_3$, which is part of the definition of a tower. Thus, $v_{l-2}$ must be in $P_2$.  However, in this case, clockwise around $w \cup y$, we have $v_{l-2} \in P_2$, $v_{l-1} \in P_1$, $v_{l} \in P_2$ and $z \in P_1$.  Because $v_{l-2}$ and $v_{l}$ are both in $P_2$, there must exist a path between them in $P_2$.  This path, together with $y$ and $w$, forms a cycle that encircles exactly one of $v_{l-1}$ and $z$. This cycle contains no vertices in $P_1$ but separates two vertices of $P_1$, implying that $P_1$ is not connected, a contradiction.  Thus when $w \in P_3$, this is also not an allowable partition.  We conclude the lemma is true. 
\end{proof}

In an abuse of notation, for the bottom vertex $v_t$ in a tower of height $t$, $v_{t+1}$ will refer to the next vertex along the line defining the tower past $v_t$, despite the fact that this vertex is not in the tower.  For a tower of height $t$, we begin by proving some facts about $v_{t+1}$, under assumptions that we will later show must always hold.

\begin{lem}
	\label{lem:tower-end}
	Consider a tower of height $t$.  Assume of the two common neighbors of $v_t$ and $v_{t-1}$, both are in $T$; at least one is in the same district as $v_{t-1}$; and neither is in the same district as $v_t$. Furthermore, if $t \geq 3$, assume the same is true for $t-1$: the two common neighbors of $v_{t-1}$ and $v_{t-2}$ are in $T$, at least one is in the same district as $v_{t-2}$, and neither is the same district as $v_{t-1}$. Then  $v_{t+1}$ is in $T$ and is not in the same district as  $v_t$.
\end{lem}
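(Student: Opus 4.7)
The plan is to prove both claims by contradiction, in each case producing a reassignment of $v_t$ into $v_{t-1}$'s district that violates the tower definition. Let $P_\alpha$ denote $v_t$'s district and $P_\beta$ denote $v_{t-1}$'s district; these are distinct by the tower's requirement that consecutive vertices lie in different districts.

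For $v_{t+1} \in T$, suppose $v_{t+1} \notin T$. Because the tower lies along one of the three edge directions of the triangular lattice, and each side of $T$ is parallel to an edge direction, $v_{t+1} \notin T$ forces $v_t$ onto the side of $T$ whose crossing is the forward tower direction (for example, the rightmost column when the tower extends rightward). In this configuration $v_t$ has exactly four in-$T$ neighbors: $v_{t-1}$, the two common neighbors $a, b$ of $v_{t-1}, v_t$ (both in $T$ by hypothesis), and one remaining vertex $w$. These four form a path in $N(v_t) \cap T$ with $v_{t-1}$ flanked by $a$ and $b$ and $w$ at the opposite end. Since $a, b \notin P_\alpha$ by hypothesis and $v_{t-1} \in P_\beta \ne P_\alpha$, the only in-$T$ neighbor of $v_t$ that can lie in $P_\alpha$ is $w$. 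If $w \notin P_\alpha$, then $v_t$ has no in-$T$ neighbor in $P_\alpha$, forcing $P_\alpha = \{v_t\}$ by connectivity and contradicting $|P_\alpha| \geq k_\alpha - 1 \geq n - 1 \geq 4$. Hence $w \in P_\alpha$, so $v_t$'s $\alpha$-neighborhood is the single-vertex set $\{w\}$, trivially connected. The $\beta$-neighborhood contains $v_{t-1}$ together with at least one of $a, b$ (by the ``at least one'' hypothesis), and both candidates are adjacent to $v_{t-1}$ in the path, so it is nonempty and connected. By the Flip Lemma (Lemma~\ref{lem:remove-add}), $v_t$ can be reassigned from $P_\alpha$ to $P_\beta$, contradicting the tower definition.

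For $v_{t+1} \notin P_\alpha$, suppose $v_{t+1} \in P_\alpha$. Label $v_t$'s six neighbors cyclically as $v_{t-1}, a, c, v_{t+1}, d, b$, where $a, b$ are the two common neighbors of $v_{t-1}, v_t$ and $c, d$ are the two common neighbors of $v_t, v_{t+1}$. Then $v_t$'s $\alpha$-neighborhood is contained in the three consecutive cycle positions $\{c, v_{t+1}, d\}$ and contains $v_{t+1}$, so it is a contiguous arc and is connected. The $\beta$-neighborhood contains $v_{t-1}$ together with at least one of the cycle-adjacent vertices $a, b$; it could fail to be connected only in ``bad'' configurations where one of $c, d$ also lies in $P_\beta$ on the flank opposite the active common neighbor (for instance $a \in P_\beta$, $b \notin P_\beta$, and $d \in P_\beta$). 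These are ruled out as follows: for $t = 2$, the first tower condition forces \emph{both} common neighbors $a, b$ into $P_\beta$, which precludes the required asymmetry. For $t \geq 3$, I apply Lemma~\ref{lem:tower-forbid} with $l = t$; the ``outer'' vertex $w$ in that lemma is geometrically identified as one of the two common neighbors of $v_{t-1}, v_{t-2}$, which by the $t-1$ hypothesis of the present lemma does not lie in $P_\beta$, placing us in exactly the forbidden configuration of Lemma~\ref{lem:tower-forbid}. With the bad cases eliminated, the $\beta$-neighborhood is connected, the Flip Lemma again delivers a reassignment of $v_t$ into $P_\beta$, and we contradict the tower definition. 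Hence $v_{t+1} \notin P_\alpha$.

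The main obstacle is Part 2, specifically the geometric identification of the ``outer'' vertex $w$ in Lemma~\ref{lem:tower-forbid} with a common neighbor of $v_{t-1}, v_{t-2}$ (which is why the $t-1$ hypothesis is needed for $t \geq 3$), together with the separate handling of the base case $t = 2$ via the first tower condition.
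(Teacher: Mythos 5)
Your proposal is correct and follows essentially the same route as the paper's proof: the boundary case forces a single $\alpha$-neighbor and a flip contradiction, the $v_{t+1}\in P_\alpha$ case gives a connected $\alpha$-neighborhood via $c,d$, the $t=2$ case is killed by both common neighbors of $v_1,v_2$ lying in $P_\beta$, and the $t\geq 3$ case reduces to Lemma~\ref{lem:tower-forbid} by identifying $w$ with a common neighbor of $v_{t-1},v_{t-2}$ (note the $t-1$ hypothesis is also what guarantees $w\in T$, which you should state explicitly since Lemma~\ref{lem:tower-forbid} only rules out $w\in P_2$ and $w\in P_3$). No substantive differences from the paper's argument.
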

\begin{proof}
	Without loss of generality, assume  $v_{t-1} \in  P_1$ and $v_t \in P_2$. Suppose, for the sake of contradiction, that $v_{t+1} \notin T$.   Then $v_t \in bd(T)$ and $|N(v_t) \cap T| \leq 4$. By the assumptions of the lemma, $v_t$ and $v_{t-1}$'s common neighbors $x$ and $y$ are in $T$, at least one is in $P_1$, and neither is in $P_2$; see Figure~\ref{fig:tower-end}(a). Three of $v_t$'s neighbors must be $v_{t-1}$, $x$, and $y$, none of which are in $P_2$, so $v_t$'s one remaining neighbor must be in $P_2$, otherwise $P_2$ would be disconnected.  Because $v_t$ has only one neighbor in $P_2$, its $2$-neighborhood is connected. Because of the presence of the boundary of $T$, $v_t$'s $1$-neighborhood consists of $v_{t-1}$ and at least one of its common neighbors with $v_{t-1}$; in any case, this $1$-neighborhood is connected.  Because $v_t$'s $1$-neighborhood and $2$-neighborhood are both connected, by Lemma~\ref{lem:remove-add}, $v_t$ can be removed from $P_2$ and added to $P_1$.  This contradicts that $v_t$ is in the tower.  Because of this contradiction, it must be that $v_{t+1}\in T$.

	\begin{figure}
		\centering
		\includegraphics{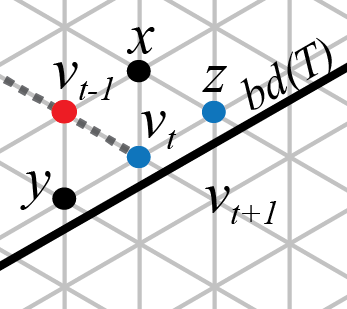}\hspace{1cm}
		\includegraphics{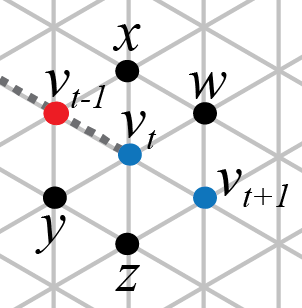}\hspace{1cm}
		\includegraphics{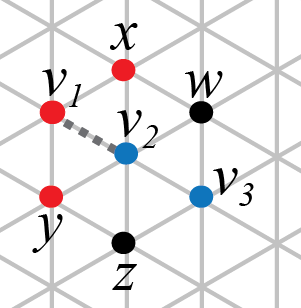}\hspace{1cm}
		\includegraphics{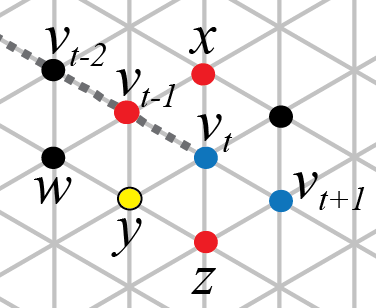}\\
		(a) \hspace{3cm} (b) \hspace{3cm} (c) \hspace{3cm} (d)
		\caption{Images from the proof of Lemma~\ref{lem:tower-end}. (a) The case where $v_{t+1} \notin T$. (b) An example where $v_{t}$ and $v_{t+1}$ are in the same district. (c) When $t = 2$, $v_2$'s 1-neighborhood must be connected. (d) When $t = 2$ and $v_t$ and $v_{t+1}$ are in the same district, $v_t$'s 1-neighborhood being disconnected implies the partition around $v_t$ must be the impossible configuration described in Lemma~\ref{lem:tower-forbid}. } \label{fig:tower-end}	
	\end{figure}

	Next, suppose for the sake of contradiction that $v_{t+1}$ and $v_t$  are in the same district, so $v_{t+1} \in P_2$ because we have assumed (without loss of generality) that $v_t$ is. See Figure~\ref{fig:tower-end}(b).  By the Lemma's assumptions, $v_t$'s common neighbors with $v_{t-1}$, which we call $x$ and $y$, are not in $P_2$. Let $w$ and $z$ be $v_t$'s two common neighbors with $v_{t+1}$. Vertices $w$ or $z$ may or may not be in $P_2$, but regardless of whether they are or not, because they are adjacent to $v_{t+1} \in P_2$, $v_t$'s 2-neighborhood will always be connected.

	If $t = 2$, see Figure~\ref{fig:tower-end}(c).  In this case $v_2$'s $1$-neighborhood also must be connected, because $v_1$ and $v_2$'s common neighbors $x$ and $y$ must both be in $P_1$ by the definition of a tower. This contradicts that $v_2$ cannot be added to $P_1$, so we conclude that $v_{t+1} = v_3$ must be in a different district than $v_t = v_2$.
	
	If $t \geq 3$, then $v_{t-2}$ must be in the tower. Because $v_t$ is part of the tower, it cannot be  removed from $P_2$ and added to $P_1$, so by Lemma~\ref{lem:remove-add}, it must be that $v_t$'s $1$-neighborhood is not connected. The only way for this to occur is if one of $v_t$ and $v_{t+1}$'s common neighbors $z$ is in $P_1$, and its neighbor $y$ in $N(v_{t-1} ) \cap N(v_t)$ is not in $P_1$; by our assumptions, $y \notin P_2$, so it must be that $y \in P_3$. See Figure~\ref{fig:tower-end}(d).  Because at least one of $v_{t-1}$ and $v_{t}$'s common neighbors must be in $P_1$, their other common neighbor $x$ must be in $P_1$, as shown.  By Lemma~\ref{lem:tower-forbid}, $y$'s common neighbor $w \neq v_l$ with $v_{l-1}$ cannot be in $P_2$ or $P_3$; the Lemma's assumptions also imply $w$ must be in $T$ and cannot be in $P_1$. This gives a contradiction, as $w$ must be in one of the three districts. We conclude $v_{t+1}$ and $v_t$ cannot be in the same district. 
\end{proof}

\noindent This enables us to prove the following lemma about vertices $v_l$ in a tower, even when $v_l = v_t$ is the bottom vertex of the tower. 

\begin{lem}\label{lem:tower-nbrs}
Consider a tower of height $t$. For any $l = 1, \ldots, t$, of the two common neighbors of $v_l$ and $v_{l+1}$, both are in $T$; at least one is in the same district as $v_l$; and neither is in the same district as $v_{l+1}$.  
\end{lem}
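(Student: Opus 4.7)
The plan is to proceed by strong induction on $l$. The base case $l = 1$ is immediate from the first two bullets of the tower definition: the first bullet places both common neighbors of $v_1$ and $v_2$ in the district of $v_1$ (implicitly forcing them into $T$, since they are assigned to a district), and the second bullet places $v_2$ in a distinct district.

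For the inductive step with $l \geq 2$, I first establish that $v_{l+1}$ is in $T$ and in a different district than $v_l$. When $l < t$ this is immediate from the tower definition. When $l = t$, I invoke Lemma~\ref{lem:tower-end}: its hypotheses on the common neighbors of $v_{t-1}, v_t$ (and of $v_{t-2}, v_{t-1}$ when $t \geq 3$) are exactly the statement of the present lemma at $l-1$ (and $l-2$), which is given by the inductive hypothesis. Next, let $a, b$ denote the common neighbors of $v_{l-1}, v_l$ and let $x, y$ denote the common neighbors of $v_l, v_{l+1}$ in $\Gtri$; assume WLOG $v_l \in P_1$ and $v_{l+1} \in P_2$. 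By the inductive hypothesis at $l-1$, $a, b \in T$ with neither in $P_1$; by the tower definition's second bullet, $v_{l-1}, v_{l+1} \notin P_1$. So four of $v_l$'s six lattice neighbors are already known to lie in $T$ and outside $P_1$.

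Claim~(a) then follows because the outside-$T$ neighbors of any vertex of $T$ form a contiguous arc in the cyclic order of its neighborhood, whose length (in a triangular region) is $0$, $2$, or $4$; since $v_{l-1}, a, b, v_{l+1}$ are in $T$ and $x, y$ are non-adjacent in the cyclic order around $v_l$ (being separated on one side by $v_{l+1}$ and on the other by $\{v_{l-1}, a, b\}$), the only possibility is a length-$0$ arc, forcing both $x, y \in T$. Claim~(b) follows because $v_l$'s $P_1$-neighborhood within $N(v_l)$ is contained in $\{x, y\}$, and since $|P_1| \geq n - 1 \geq 4$ and $P_1$ is connected, $v_l$ must have at least one $P_1$-neighbor, necessarily in $\{x, y\}$.

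For claim~(c), suppose for contradiction $x \in P_2$; claim~(b) then forces $y \in P_1$, so $v_l$'s $P_1$-neighborhood is exactly $\{y\}$ and is connected. I then case-split on $i_{l-1} \in \{2, 3\}$ and on the placements of $a, b$ (at least one of which lies in $P_{i_{l-1}}$ by the inductive hypothesis). In every subcase except one, $v_l$'s $P_{i_{l-1}}$-neighborhood is also connected and nonempty, so by Lemma~\ref{lem:remove-add} $v_l$ could be reassigned to $P_{i_{l-1}}$, contradicting the third bullet of the tower definition. The exceptional subcase is $i_{l-1} = 2$ with exactly one of $a, b$ in $P_2$ and the other in $P_3$: there the cyclic order around $v_l$ places the $P_2$-vertices of $N(v_l)$ into two disjoint arcs, so for $P_2$ to remain globally connected it would have to encircle $v_l$, contradicting the required simple connectivity of $P_2$. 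The main obstacle I expect is handling this last subcase cleanly—since it requires a global planarity/simple-connectivity argument rather than a purely local neighborhood check—together with verifying the $l = t$ endpoint, where the only link to $v_{l+1}$ comes through Lemma~\ref{lem:tower-end} rather than the tower definition itself.
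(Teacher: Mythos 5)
Your overall architecture tracks the paper's proof closely: induction on $l$, Lemma~\ref{lem:tower-end} to supply the needed facts about $v_{t+1}$ at the endpoint, an arc/convexity argument for membership in $T$, connectivity of $v_l$'s district for claim~(b), and a reduction of claim~(c) to the non-flippability of $v_l$ into $v_{l-1}$'s district via Lemma~\ref{lem:remove-add}. Your case analysis also correctly isolates the one genuinely hard configuration. The problem is that the argument you offer for that exceptional subcase does not work, and this is precisely the point where the paper has to do real work.

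In your exceptional subcase the $P_2$-vertices of $N(v_l)$ form two disjoint arcs, separated on one side by a vertex of $P_3$ and on the other by $y \in P_1$. Connectivity of $P_2$ gives a path $Q$ in $P_2$ joining the two arcs, but $Q$ together with $v_l$ is \emph{not} a cycle in $P_2$ --- it passes through $v_l \in P_1$ --- so it cannot witness a failure of $P_2$'s simple connectivity, and $P_2$ is not forced to ``encircle $v_l$.'' What the Jordan-curve argument actually yields is that the closed curve $Q \cup \{v_l\}$ encloses exactly one of the two gap vertices, and hence encloses either all of $P_3$ or all of $P_1 \setminus \{v_l\}$ (the latter because $y$ is $v_l$'s only $P_1$-neighbor). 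Neither conclusion is impossible by itself: districts may legitimately avoid $bd(T)$ entirely, which is exactly the situation the paper's rebalancing Cases~B and~C are built to handle. So the local data around $v_l$ plus simple connectivity yields no contradiction; the configuration is realizable in a valid partition, just not in a tower. The paper closes this case only by reaching outside $N(v_l)$: it considers the vertex $w$ that is the common neighbor of $v_{l-1}$ and the relevant $n_6$ other than $v_l$ (equivalently, a common neighbor of $v_{l-2}$ and $v_{l-1}$), uses Lemma~\ref{lem:tower-forbid} --- whose proof in turn needs the district of $v_{l-2}$, pinned down by the tower's non-flippability condition at $v_{l-1}$ --- to exclude $w \in P_2$ and $w \in P_3$, uses the inductive hypothesis at level $l-2$ to exclude $w$ from $v_{l-1}$'s district, and uses convexity (plus the observation that this subcase forces $l \geq 3$) to exclude $w \notin T$. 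Without some version of that two-levels-up argument, your exceptional subcase does not close, and the lemma is not proved.
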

\begin{proof}
	We will prove this by induction. First, by the definition of a tower, $v_1$'s two common neighbors with $v_2$ are both in $T$ and in the same district as $v_1$, which is a different district than $v_2$, and the lemma is satisfied. 
	
	Now, for some $2 \leq l \leq t$, consider $v_l$ and $v_{l+1}$, and assume the lemma is true for all pairs of consecutive vertices earlier in the tower. If $l < t$, $v_{l+1} \in T$ because it is a tower vertex.  If $l = t$, then $v_{l+1} \in T$ by Lemma~\ref{lem:tower-end}, where the hypotheses of that lemma hold because of the induction hypothesis here. In either case, it must be true that $v_{l+1} \in T$. 
	
\begin{figure}
	\centering
	\includegraphics{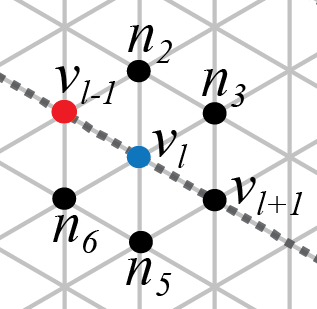}\hspace{1cm}
	\includegraphics{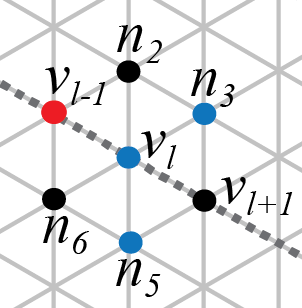}\hspace{1cm}
	\includegraphics{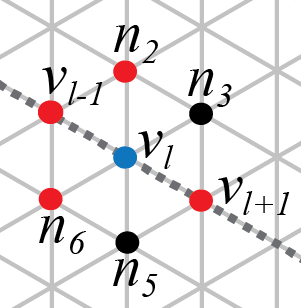}\hspace{1cm}
	\includegraphics{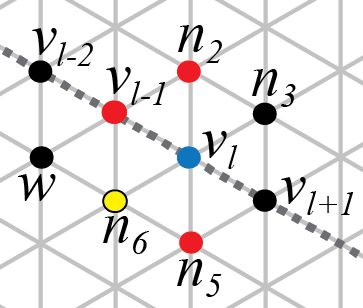}\\
	(a) \hspace{3cm} (b) \hspace{3cm} (c) \hspace{3cm} (d)
	\caption{Images from the proof of Lemma~\ref{lem:tower-nbrs}. (a) The labeling of the vertices in $N(v_l)$. (b) When $v_l$ has a disconnected 2-neighborhood, it must be that $n_3 \in P_2$ and $n_5 \in P_2$, and the claims of the lemma are satisfied. (c) When $v_l$ has a disconnected 1-neighborhood and $n_2, n_6 \in P_1$, then $v_{l+1} \in P_1$ and the claims of the lemma are satisfied. (d) When $v_l$ has a disconnected 1-neighborhood and one of $n_2$ and $n_6$ is not in $P_1$ -- without loss of generality, $n_6$ -- and $n_5 \in P-1$, we find a contradiction via Lemma~\ref{lem:tower-forbid}.  
	 } \label{fig:tower-nbrs}	
\end{figure}

	Without loss of generality, suppose that $v_{l-1} \in P_1$ and $v_l \in P_2$. Label the six neighbors of $v_l$ in $\Gtri$ in clockwise order as $n_1 = v_{l-1}$, $n_2$, $n_3$, $n_4 = v_{l+1}$, $n_5$, and $n_6$; see Figure~\ref{fig:tower-nbrs}(a). It is $n_3$ and $n_5$ that are $v_l$'s common neighbors with $v_{l+1}$. We know that $n_1=v_{l-1}$, $n_2$, $n_4 = v_{l+1}$, and $n_6$ are all in $T$.  It follows that $n_3$ and $n_5$ must also be in $T$, because $T$ is convex.  This proves the first claim. 	
		
	By induction, we know that $v_{l-1} = n_1$, $n_2$, and $n_6$ are not in $P_2$, and at least one of $n_2$ and $n_6$ is in $P_1$. If $l < t$, then $v_{l+1}$ is not in $P_2$ by the definition of a tower. If $l = t$, then by Lemma~\ref{lem:tower-end}, whose hypotheses follow from the induction hypothesis, $v_{l+1}$ is not in $P_2$. 
	This leaves $n_3$ and $n_5$ as $v_l$'s only potential neighbors in $P_2$. Because $P_2$ is connected, at least one of these vertices must be in $P_2$, the same district as $v_l$.  This proves the next part of the lemma.  
	
	Recall we have assumed without loss of generality that $v_{l-1} \in P_1$ and $v_l \in P_2$. Because $v_l$ cannot be removed from $P_2$ and added to $P_1$, it must be that $v_l$'s $2$-neighborhood is disconnected or $v_l$'s $1$-neighborhood is disconnected. If $v_l$'s $2$-neighborhood is disconnected, it must be that $n_3$ and $n_5$ are both in $P_2$ while we know $n_4 = v_{l+1}$ is not in $P_2$; see Figure~\ref{fig:tower-nbrs}(b). In this case $v_l$ and $v_{l+1}$'s two common neighbors are both in the same district as $v_l$ and neither is in the same district as $v_{l+1}$, proving the lemma. 
	
	If $v_l$'s $2$-neighborhood is connected but its $1$-neighborhood is not connected, there are more cases to consider.  First, we will do the easier case, when both $n_2$ and $n_6$ (along with $n_1 = v_{l-1}$) are in $P_1$; see Figure~\ref{fig:tower-nbrs}(c).  Then, for $v_l$'s $1$-neighborhood to be disconnected, it must be that $v_{l+1}\in P_1$, while $n_3, n_5 \notin P_1$, as required.  
	
	Next, suppose exactly one of $n_2$ and $n_6$ is in $P_1$. Note when $l = 2$, $n_2$ and $n_6$ are necessarily in the same district because of the definition of a tower, so if we are in this case it must be that $l \geq 3$.  This is important because we will need to look at $v_{l-2}$, and we now know that $l-2 \geq 1$.   Because we know by induction neither $n_2$ nor $n_6$ can be in $P_2$, without loss of generality assume $n_2 \in P_1$ and $n_6\in P_3$; see Figure~\ref{fig:tower-nbrs}(d). For $v_l$'s 1-neighborhood to not be connected, one or both of $v_{l+1} = n_4$ and $n_5$ must be in $P_1$. 	Assume for the sake of contradiction that $n_5 \in P_1$. Applying Lemma~\ref{lem:tower-forbid}, where $x = n_2$, $y = n_6$, $z = n_5$, we see that the vertex $w$, which is $n_6$ and $v_{l-1}$'s common neighbor that is not $v_l$, cannot be in $P_2$ or $P_3$.  It also cannot be in $P_1$, by the induction hypothesis for $v_{l-1}$ and $v_{l-2}$. The only remaining option is that vertex $w$ is not in $T$. By convexity, this would necessarily imply that $v_{l-2}$ is also not in $T$, a contradiction as $l \geq 3$. Thus it cannot be the case that $n_5 \in P_1$. 
	
	If $v_l$'s $1$-neighborhood is to be disconnected, it must be that $v_{l+1} = n_4$ is in $P_1$ while $n_3$ and $n_5$ are not.  Thus $v_{l+1}$ is in a different district than both of its common neighbors with $v_{l}$, as desired.   
	This completes the proof for $v_l$ and $v_{l+1}$'s common neighbors, and by induction we conclude it is true for all $l$.
\end{proof}

\begin{lem}\label{lem:tower}
	Consider a tower with top vertex $v_1\in P_i$, bottom vertex $v_t$, and $v_{t+1}\in P_j$ the next vertex along the line of the tower past $v_t$. There exists a sequence of recombination steps, where only vertices $v_2$, $v_3$ $\ldots$, $v_{t+1}$ change to different districts, after which $v_2$ has been added to $P_i$. After this sequence of moves $|P_i|$ has increased by one and $|P_j|$ has decreased by one; if $i = j$, the sizes of the districts remain unchanged after this sequence of moves. At every intermediate step, the size of $P_j$ is equal to or one less than its initial size, while the size of the other two partitions are equal to or one greater than their initial sizes. 
\end{lem}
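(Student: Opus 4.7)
The plan is to execute a sequence of $t$ flip moves, propagating from the bottom of the tower upward. For $l$ running from $t+1$ down to $2$, the $l$-th flip reassigns vertex $v_l$ from its current district to the district that originally contained $v_{l-1}$ (write $d_l$ for the original district index of each tower vertex, so $d_1 = i$ and $d_{t+1} = j$). Flips are the one-vertex special case of recombination steps, so exhibiting such a valid sequence suffices. Only $v_2, \ldots, v_{t+1}$ ever change district, matching the statement.

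I would establish validity inductively via the Flip Lemma. The initial flip, of $v_{t+1}$ from $P_j$ to $P_{d_t}$, is valid by the very definition of the tower: $v_{t+1}$ is excluded from the tower precisely because it \emph{can} be reassigned to $v_t$'s district. For the inductive step, I would use that at the moment we attempt to flip $v_l$, the only neighbor of $v_l$ whose district has changed from the original partition is $v_{l+1}$, which has moved from $P_{d_{l+1}}$ into $P_{d_l}$. Labeling $v_l$'s six neighbors cyclically as in the proof of Lemma~\ref{lem:tower-nbrs} ($n_1 = v_{l-1}$, $n_4 = v_{l+1}$, with $n_2, n_3, n_5, n_6$ the side neighbors), Lemma~\ref{lem:tower-nbrs} gives that at least one of $\{n_3, n_5\}$ originally lies in $P_{d_l}$ and at least one of $\{n_2, n_6\}$ lies in $P_{d_{l-1}}$. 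Therefore at the moment of the flip, $v_l$'s $d_l$-neighborhood contains $n_4$ together with an adjacent vertex in $\{n_3, n_5\}$, forming a connected arc in $N(v_l)$; and $v_l$'s $d_{l-1}$-neighborhood contains $n_1$ together with an adjacent vertex in $\{n_2, n_6\}$, likewise forming a connected arc. The Flip Lemma then certifies the move.

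The main obstacle is ruling out ``gap'' configurations that would disconnect the $d_{l-1}$-neighborhood, such as $n_3 \in P_{d_{l-1}}$ with $n_2 \notin P_{d_{l-1}}$, which would leave $n_3$ separated from the $\{n_1, n_6\}$ arc. I would show any such configuration produces a cycle in $P_{d_{l-1}} \cup \{v_l\}$ that encircles a vertex of the third district near $v_l$ while separating it from a third-district vertex on the other side of $v_l$ (in particular from $v_{l+1}$ in the case $d_{l+1}$ equals the third index), contradicting connectivity of the third district in the original partition. Lemma~\ref{lem:tower-forbid} encodes precisely the worst of these forbidden local pictures; combined with the simple connectivity of all three districts and the boundary-behavior controlled by Lemma~\ref{lem:tower-end}, it handles the remaining cases. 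This ``gap'' analysis is the delicate heart of the argument and is where the careful structure of towers is essential.

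Finally, the size accounting is a clean telescoping. The initial flip contributes $-1$ to $|P_j|$ and $+1$ to $|P_{d_t}|$; each subsequent flip of $v_l$ into $P_{d_{l-1}}$ contributes $-1$ to $|P_{d_l}|$, canceling the previous step's $+1$ there, and $+1$ to $|P_{d_{l-1}}|$. After the final flip the $+1$ has arrived at $P_{d_1} = P_i$, giving the net change $(+1,-1)$ between $P_i$ and $P_j$, with the $i = j$ case yielding no net change. At every intermediate moment, $|P_j|$ sits at its initial value minus one (since the first flip), and exactly one of the other two districts carries the $+1$ while the remaining one is at its initial value, matching the stated intermediate bound.
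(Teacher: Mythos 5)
Your proposal is correct and matches the paper's argument: the paper proves this by induction on the tower height (flip $v_{t+1}$ into $v_t$'s district, show $v_1,\ldots,v_{t-1}$ is a tower of height $t-1$ because $v_t$ can now be reassigned, and recurse), which unrolls to exactly your bottom-to-top sequence of flips justified by Lemma~\ref{lem:tower-nbrs}, with the disconnected-neighborhood ``gap'' case dispatched by Lemma~\ref{lem:tower-forbid} and the same telescoping size accounting. The only cosmetic difference is that you verify each flip directly from the original tower's properties (using that only $v_{l+1}$ among $v_l$'s neighbors has changed), whereas the paper re-certifies the shrunken prefix as a tower and applies the induction hypothesis.
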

\begin{proof}
	We prove this by induction on the number of vertices in the tower.  First, suppose the tower is height $2$, with $v_1 \in P_i$, $v_2 \in P_l$, and $v_3 \in P_j$. It must be that $i \neq l$ and $l \neq j$, but possibly $i = j$.  Because $v_3$ is not in the tower, it can be added to to $P_l$, causing $|P_j|$ to decrease by one and causing $|P_l|$ to increase by one.  Now that $v_3$ has been added to $P_l$, $v_2$ and $v_3$ are in the same district.  This means $v_2$ cannot be the bottom of a tower, because if it were, $v_2$ and $v_3$ would need to be in different districts. It follows that because $v_1$ and $v_2$ are still in different districts, that it must be possible to add $v_2$ to $P_i$. This causes $|P_i|$ to increase by one and $|P_l|$ to decrease by one. If $i \neq j$, the net result of this process is that $|P_i|$ has increased by one and $|P_j|$ has decreased by one; if $i = j$, the net result is that all parts stayed the same size.  During these two steps, $|P_j|$ was equal to or one less than its original size, and all other parts were the size or one greater than their original size.  This proves the lemma when $t = 2$.

	Now, consider a tower of some height $t > 2$. Let $v_{t-1} \in P_m$ and $v_t \in P_l$, where $m \neq l$ and $l \neq j$. 
	Because vertex $v_{t+1}$ isn't in the tower, and by Lemmas~\ref{lem:tower-end} and Lemma~\ref{lem:tower-nbrs} it is in a different district than $v_t$, $v_{t+1}$ can be removed from its current district and reassigned to $P_l$. After this step, we claim that $v_1$, $\ldots$, $v_{t-1}$ is still a tower, albeit one of a slightly smaller height.   The first three properties of a tower are still satisfied by $v_1$, $\ldots$, $v_{t-1}$ because $v_{t+1}$ is the only vertex that has changed districts. We will now see why $v_t$ cannot be included in this tower, making it a tower of height $t-1$. 
	Note that after adding $v_{t+1}$ to $P_l$, $v_t$'s $l$-neighborhood must be connected. If $v_t$'s $m$-neighborhood isn't connected, then we are in the same situation as in Lemma~\ref{lem:tower-forbid}, and we get a contradiction.  Thus $v_t$'s $m$-neighborhood must also be connected.  This means $v_t$ could be removed from its current district and added to the district of $v_{t-1}$. Thus $v_t$ does not satisfy the requirements to be in the tower.  For all other vertices in the tower, their neighborhood has remained unchanged, and thus they are still part of the tower.  This is now a tower of height $t-1$, from $v_1$ to $v_{t-1}$.  By the induction hypothesis, there exists a sequence of moves for this tower after which $v_2$ has been added to $P_i$. These moves had increased the size of $P_i$ by one and decreased the size of $P_l$ by one. When this process started, $P_j$ was one smaller than its original size and $P_l$ was one larger than its original size (before $v_{t+1}$ was reassigned).  During this process, the size of $P_j$ remained its original size or one less, $P_l$ remained its original size or one more, and the other district remained its original size or one more. Together with the first move that decreased the size of the partition $P_j$ of $v_{t+1}$ by one and increased the size of $P_l$, this is a sequence of moves after which $v_2$ has been added to $P_i$. The net result is that $|P_j|$ has decreased by one and $|P_i|$ has increased by one. If $P_i = P_j$, there is no net change in the district sizes. 
		
\end{proof}

\section{Sweep Line Procedure for Balanced Partitions} 

Let $P$ be a balanced partition of $T$.  Without loss of generality, suppose that $T$'s single leftmost vertex is in $P_1$. We will give a sequence of steps that transforms $P$ into the ground state $\sigma_{123}$. We will proceed column-by-column, left to right, modifying the partition so that all vertices in one column are in $P_1$ before continuing on the the next column.  Eventually we will reach a configuration where for some integer $c$, all vertices in the first $c$ columns are in $P_1$; there may be some vertices in $P_1$ in the $(c+1)^{st}$ column; and there are no vertices in $P_1$ in any other columns. For here it is then easy to reach a ground state using recombination steps, as we describe at the end of this section. 
 
For simplicity throughout this section, in images vertices in $P_1$ will be red, vertices in $P_2$ will be blue, and vertices in $P_3$ will be yellow.

Let $\ci \subseteq T$ be vertices that are in the $i^{th}$ column of $T$, where $T$'s single leftmost corner vertex is $\cc_1$ and $T$'s rightmost column is $\cc_n$. We further define $\cli = \mathcal{C}_1 \cup \ldots \cup \mathcal{C}_{i-1}$, $\clei = \mathcal{C}_1 \cup \ldots \cup \mathcal{C}_i$, $\cgei = \mathcal{C}_{i} \cup \ldots \cup \mathcal{C}_n$, and $\cgi = \mathcal{C}_{i+1} \cup \ldots \cup \mathcal{C}_n$.  

We begin with a simple lemma we will need regarding the pattern of sequential vertices in $bd(T)$. 

\begin{lem}\label{lem:bdry-seq}
	Let $a$, $b$, $c$, and $d$ be four consecutive vertices in $bd(T)$. In any partition $P$, it is impossible to have $a,c \in P_i$ and $b,d \in P_j$ for $i \neq j$. 
\end{lem}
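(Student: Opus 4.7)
The plan is a standard planarity argument via the Jordan curve theorem. Assume for contradiction that a partition $P$ of $T$ has $a,c \in P_i$ and $b,d \in P_j$ with $i \neq j$. Because $P_i$ is connected and contains $a$ and $c$ while $b, d \in P_j$ are not in $P_i$, there is a path $\pi_1 \subseteq P_i$ from $a$ to $c$, and this path avoids $b$ and $d$. Symmetrically, there is a path $\pi_2 \subseteq P_j$ from $b$ to $d$ that avoids $a$ and $c$. Since $P_i \cap P_j = \emptyset$, the paths $\pi_1$ and $\pi_2$ are vertex-disjoint.

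I would then invoke planarity. Fix the natural embedding of $T \subseteq \Gtri$ in the plane. The graph $T$ fills a closed topological disc $D$ whose boundary $\partial D$ is traced by the cycle $bd(T)$, and four consecutive boundary vertices of $T$ appear in cyclic order $a,b,c,d$ on $\partial D$. The simple path $\pi_1$ is an arc in $D$ with endpoints $a,c$ on $\partial D$ and interior in $D$; by the Jordan curve theorem applied to $\pi_1$ together with a boundary arc from $a$ to $c$, the set $D \setminus \pi_1$ splits into two connected regions, one bordered by the boundary arc through $b$ (and hence containing $b$) and the other bordered by the boundary arc through $d$ (and hence containing $d$). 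Any curve from $b$ to $d$ in $D$ must therefore meet $\pi_1$. In particular, $\pi_2$, drawn as a union of edges in the planar embedding, must share a point with $\pi_1$; since no two edges cross in a planar embedding, this shared point must be a vertex, contradicting vertex-disjointness.

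I expect the only subtlety to be the topological setup: namely that the outer face of the planarly embedded $T$ is bounded by exactly the vertices in $bd(T)$, in a cyclic order consistent with the graph-theoretic notion of ``consecutive.'' For a triangular subregion of the triangular lattice, however, $bd(T)$ consists precisely of the vertices on the three sides of the triangle and forms a single cycle bounding the outer face, so this is immediate.
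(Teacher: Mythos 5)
Your proof is correct and follows essentially the same separation argument as the paper: both hinge on the fact that the disjoint connected districts $P_i$ and $P_j$ would have to contain vertex-disjoint paths $a$--$c$ and $b$--$d$ whose endpoints interleave on the boundary of the planar disc filled by $T$, which planarity forbids. The only cosmetic difference is that the paper closes the $a$--$c$ path into a Jordan curve by detouring through $b$'s two neighbors outside $T$ and concludes that this cycle separates $b$ from $d$ while containing no vertex of $P_j$, whereas you treat the $a$--$c$ path as a crosscut of the disc and conclude that the two paths would have to cross at a common vertex.
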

\begin{proof}
	Because $a$ and $c$ are both in $P_i$ and $P_i$ is connected, there exists a path in $P_i$ from $a$ to $c$.  There also exists a path from $a$ to $c$ with two intermediate vertices, namely $b$'s two neighbors in $\Gtri$ that are outside of $T$.  Putting these two paths together creates a cycle in $\Gtri$ that contains no vertices of $P_2$, has $b \in P_2$ on its interior, and has $d \in P_2$ on its exterior.  This is a contradiction, as $b$ and $d$ must be connected by a path in $P_2$.  We conclude this partition is impossible. 
\end{proof}

\noindent In fact this lemma is true even if $a$, $b$, $c$ and $d$ are not consecutive on the boundary of $T$, though we will not need this fact. 

We now show, if we are at an intermediate step of our sweep-line procedure, that one can always find a sequence of moves through balanced or nearly balanced partitions resulting in an additional vertex of~$P_1$~in~$\cc_i$.  
For this lemma and throughout the rest of this section, we will assume we are at a stage where: The first $i-1$ columns of $T$ are already in $P_1$; we have not yet added all vertices in the $i^{th}$ column to $P_1$; and there remain vertices in $P_1$ in columns with indices greater than $i$.

\begin{lem}\label{lem:increase-Ci}
	Let $P$ be a balanced partition of $T$ where $\cli \subseteq P_1$, $\ci \cap P_1 \neq \ci$,  and $\cgi \cap P_1 \neq \emptyset$. There exists a sequence of moves, though balanced and nearly balanced partitions, that maintains  $\cli \subseteq P_1$ and increases $|\ci \cap P_1|$.  The resulting partition is balanced or has $|P_1| = k_1 + 1$. 
\end{lem}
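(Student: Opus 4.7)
The plan is to add one vertex of $\ci$ to $P_1$ by a single flip when possible, and otherwise by executing a tower move whose terminating flip removes the obstruction. I would proceed in roughly four stages, with the main challenge being the termination of the tower construction and the handling of edge cases.

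First, I would choose $v_1$. Since $\ci$ is a path with $\ci \cap P_1 \neq \ci$, there is a vertex $v_1 \in \ci \setminus P_1$ adjacent to a vertex of $P_1$ via its neighbors in $\cli \subseteq P_1$, and (if $\ci \cap P_1 \neq \emptyset$) I would prefer one also adjacent to a vertex of $P_1 \cap \ci$. Without loss of generality $v_1 \in P_2$. The $\cli$-neighbors of $v_1$ are consecutive in $N(v_1)$ and all lie in $P_1$; together with any $P_1$-neighbor in $\ci$ this makes $v_1$'s $1$-neighborhood connected. If $v_1$'s $2$-neighborhood is also connected, the Flip Lemma (Lemma~\ref{lem:remove-add}) lets me flip $v_1$ directly to $P_1$. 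The result has $|P_1| = k_1 + 1$ and $|P_2| = k_2 - 1$; it is nearly balanced and untouched on $\cli$, as required.

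Otherwise $v_1$'s $2$-neighborhood is disconnected, and I build a tower. The disconnection forces the middle of $v_1$'s three non-$P_1$ neighbors, call it $v_2$, to lie outside $P_2$ while the two end non-$P_1$ neighbors (the common neighbors of $v_1$ and $v_2$) lie in $P_2$. Hence $v_1$ satisfies the top-of-tower condition with next vertex $v_2$. I extend iteratively along the line through $v_1$ and $v_2$: at each stage, if the next vertex $v_{l+1}$ along the line can be flipped to $v_l$'s district the tower stops, and otherwise $v_{l+1}$ is appended. Lemma~\ref{lem:tower-nbrs} guarantees that at each step the two common neighbors of $v_l$ and $v_{l+1}$ satisfy the correct district pattern, and Lemma~\ref{lem:tower-end} guarantees $v_{l+1} \in T$. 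Since $T$ is finite and the tower travels along a fixed line, the process terminates at some finite height $t$ with $v_{t+1} \in T$ and flippable to $v_t$'s district.

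I then apply the Tower Lemma (Lemma~\ref{lem:tower}) to execute the tower moves, after which $v_2 \in P_2$; $v_1$'s $2$-neighborhood is now connected, so the Flip Lemma permits adding $v_1$ to $P_1$. Only vertices in $\ci \cup \cgi$ have been altered, so $\cli \subseteq P_1$ is preserved, and $|\ci \cap P_1|$ has strictly increased. For the size bookkeeping, the Tower Lemma contributes $|P_2|{+}1$ and $|P_j|{-}1$ where $P_j$ is $v_{t+1}$'s original district, and the final flip contributes $|P_1|{+}1$ and $|P_2|{-}1$. If $j = 1$ the partition is again balanced; if $j \in \{2,3\}$ the partition has $|P_1| = k_1 + 1$ and some $|P_{j'}| = k_{j'} - 1$, nearly balanced as required. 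The Tower Lemma's intermediate-size guarantee keeps all intermediate partitions inside $\Omega$.

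The main obstacle is the careful handling of edge cases when choosing $v_1$ and building the tower: when $v_1$ is the topmost or bottommost vertex of $\ci$, when $\ci \cap P_1 = \emptyset$, or when the tower line approaches $bd(T)$, the neighborhood and tower hypotheses must be verified directly. In particular, the argument that the tower must terminate inside $T$ relies crucially on the hypotheses of Lemma~\ref{lem:tower-end} being maintained at each extension step, which in turn uses the hypothesis $\cgi \cap P_1 \neq \emptyset$ (there is room to the right for the tower to travel through) implicitly. Organizing this verification so that one generic construction handles all positions of $v_1$ and all valid tower directions is where the bulk of the technical care is required.
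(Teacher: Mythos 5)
There is a genuine gap in your case analysis at the very first step. You assert that $v_1$'s $1$-neighborhood is automatically connected because its neighbors in $\cli$, together with a $P_1$-neighbor in $\ci$, form a consecutive arc of $N(v_1)$. But $v_1$ also has two neighbors in $\cc_{i+1}$, and one of them can lie in $P_1$ without being adjacent (in the cyclic order of $N(v_1)$) to that arc. Concretely, writing $v_1$'s six neighbors in cyclic order as up, upper-right, lower-right, down, lower-left, upper-left, your hypotheses force up, upper-left, lower-left $\in P_1$; if lower-right $\in P_1$ while upper-right $\in P_2$ and down $\in P_3$ (or the reverse), then $v_1$'s $1$-neighborhood has two components while its $2$-neighborhood is a single vertex and hence connected. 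Such configurations are realizable (a thin arm of $P_1$ reaching column $i+1$ from the right, wedged between $P_2$ above and $P_3$ below). In this situation neither of your branches fires: the direct flip fails because the $1$-neighborhood is disconnected, and your tower is never triggered because you only build it when the $2$-neighborhood is disconnected. Nor can a tower rooted at $v_1$ be started here even if you wanted to: the top-of-tower condition requires both common neighbors of $v_1$ and $v_2$ to lie in $v_1$'s district, and here they lie in $P_2$ and $P_3$ respectively.

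The paper avoids this by rooting the tower one vertex earlier, at the neighbor $x \in \cc_{i-1} \subseteq P_1$ of $v$ adjacent to $w$: then $v$ is the \emph{second} vertex of the tower, the only condition needed is that $v$ cannot be flipped to $P_1$ (for whatever reason, covering both disconnection modes), and Lemma~\ref{lem:tower} directly delivers $v \in P_1$ with the correct size bookkeeping. Your construction could be repaired by adopting this anchoring. Separately, you defer the case $v_1 \in bd(T)$, where there is no ``middle of three non-$P_1$ neighbors'' and the tower line may immediately exit $T$; the paper handles that case without a tower, by a short direct argument using Lemma~\ref{lem:bdry-seq}. You acknowledge that deferral, but as written it is a second unproven case.
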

\begin{proof}
	Let $v$ and $w$ be any adjacent vertices in $\ci$ where $v \notin P_1$ and $w \in P_1$.  Without loss of generality, suppose $v \in P_2$ and that $w$ is above $v$.  
	
	First, suppose that $v$ is a boundary vertex of $T$. This means $v$ must have exactly one neighbor in $\mathcal{C}_{i-1}$, which must be in $P_1$; $v$ only has one neighbor ($w$) in $\ci$, and this neighbor is in $P_1$; and $v$ has two neighbors  in $\mathcal{C}_{i+1}$.  Vertex $v$ must have at least one neighbor in $P_2$ because $P_2$ is connected, and this neighbor must be in $\mathcal{C}_{i+1}$ because $v$'s neighbors in $\mathcal{C}_{i-1}$ and $\ci$ are all in $P_1$.  At most $v$ can have two neighbors in $P_2$, specifically its two neighbors in $\mathcal{C}_{i+1}$.  Regardless, $v$'s $2$-neighborhood must be connected. If $v$'s $1$-neighborhood is also connected, then we can remove $v$ from $P_2$ and add it to $P_1$ (giving $|P_1| = k_1 + 1$), satisfying the lemma.  So, we assume $v$'s $1$-neighborhood is not connected. Let $x$ and $y$ be $v$'s two neighbors in $\cc_{i+1}$, where $x$ is a boundary vertex of $T$ and $y$ is the common neighbor of $v$ and $w$.  The only way for $v$'s $1$-neighborhood to not be connected is if $x \in P_1$ and $y \in P_2$; see Figure~\ref{fig:v-bdry-a}.  
	\begin{figure}
		\centering
		\begin{subfigure}[b]{0.4\textwidth}
			\centering
			\includegraphics[scale = 0.5]{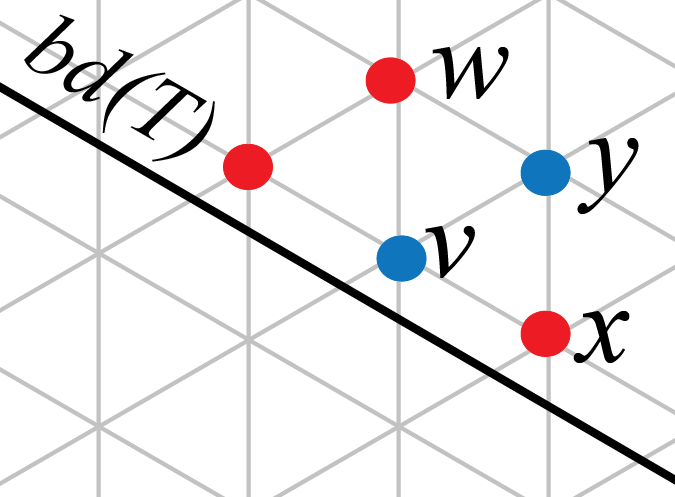}
			\caption{}
			\label{fig:v-bdry-a}
		\end{subfigure}
		\begin{subfigure}[b]{0.4\textwidth}
			\centering
			\includegraphics[scale = 0.5]{"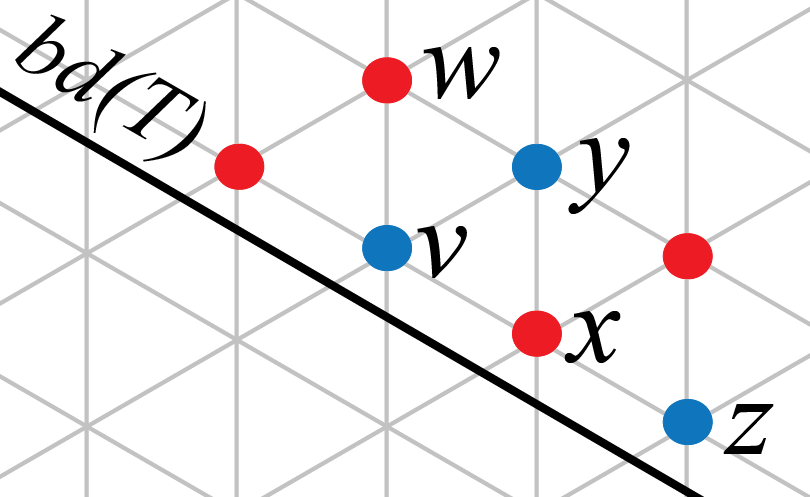"}
			\caption{}
			\label{fig:v-bdry-b}
		\end{subfigure}
		\caption{Cases from the proof of Lemma~\ref{lem:increase-Ci} when $v \in bd(T)$. (a) If $v \in P_2$ (blue) cannot be added to $P_1$ (red), its neighborhood must consist of this configuration. (b) If $v$'s neighbor $x \in P_1\cap \cc_{i+1}$ cannot be added to $P_2$, its neighborhood must consist of this configuration. Note there four vertices in alternating districts along the boundary, which Lemma~\ref{lem:bdry-seq} shows is impossible.  } 
		\label{fig:v-bdry}
	\end{figure}
	In this case, we first examine $x$.  Note $x$ has at most four neighbors in $T$, and two of them ($v$ and $y$) are in $P_2$.  At least one and at most two of $x$'s two remaining neighbors must be in $P_1$, and because these neighbors are adjacent, $x$'s $1$-neighborhood must be connected.  The only way it is possible for $x$'s $2$-neighborhood to be disconnected is if $x$'s boundary neighbor $z$ in column $\cc_{i+2}$ is in $P_2$ and $x$'s other neighbor in $\cc_{i+2}$ is in $P_1$; see Figure~\ref{fig:v-bdry-b}.  However, this gives us four sequential vertices along the boundary of $T$ where the first and third are in $P_1$ while the second and fourth are in $P_2$; by Lemma~\ref{lem:bdry-seq}, this configuration is impossible.  We conclude $x$'s $2$-neighborhood must be connected. Because $x$'s $1$-neighborhood and $2$-neighborhood are both connected, $x$ can be removed from $P_1$ and added to $P_2$.  Now $v$'s $1$-neighborhood is also connected, so $v$ can be removed from $P_2$ and added to $P_1$.  This results in a balanced partition satisfying the conditions of the lemma. 
	
	Next, suppose that $v$ is not a boundary vertex of $T$. Label $v$'s two neighbors in $\cc_{i-1}$ as $x$ and $y$, where $x$ is adjacent to $w$. If $v$ can be immediately removed from $P_2$ and added to $P_1$, such a move results in a configuration satisfying the conditions of the lemma with $|P_1| = k_1 + 1$. If not, then note that $x$ and $v$ form the start of a tower: their common neighbors are in the same district as $x$, $x$ and $v$ are in different districts, and $v$ cannot be added to $P_1$. Because $v$ cannot be added to $P_1$, this begins a tower.  Examine this tower, whose first vertex is $x$ and whose second vertex is $v$.  By Lemma~\ref{lem:tower}, there exists a sequence of recombination steps, where only vertices in the tower are reassigned, after which $v$ has been added to $P_1$. All other vertices in the tower are in columns $\cc_{i+1}$ or greater, so no other vertices in $\ci$ have changed to a different district and it is still true that $\cli \subseteq P_1$. The resulting partition is balanced or has $|P_1 | = k_1 + 1$, and $|P_1 \cap \ci|$ has increased by one because $v$ has been added to $P_1$. This proves the lemma.    
\end{proof}

If applying Lemma~\ref{lem:increase-Ci} results in a balanced partition, then we have successfully made progress in our sweep line algorithm.  However, if applying Lemma~\ref{lem:increase-Ci} results in a partition with $|P_1| = k_1 + 1$, we must first correct this imbalance before continuing with our sweep line procedure. While correcting this imbalance, we must be sure not to decrease the number of vertices in $P_1 \cap \clei$. This rebalancing process is the most challenging part of our proof. 

\subsection{Rebalancing: Lemmas}

We will need to following lemmas so that we are able to reach a balanced partition without decreasing $|\clei \cap P_1|$. 
Throughout this section, without loss of generality we assume it is the case that  $|P_2| = k_2$ and $|P_3| = k_3 - 1$.
We begin with two preliminary results about finding vertices in $P_1 \cap \cgi$ that can be removed from $P_1$.  

\begin{lem}\label{lem:1-to-change}
	Let $P$ be a nearly balanced partition of $T$ where $\cli \subseteq P_1$, $\ci \cap P_1 \neq \ci$,  and $\cgi \cap P_1 \neq \emptyset$. There exists a vertex in $P_1 \cap \cgi$ that can be removed from $P_1$ and added to another district. 
\end{lem}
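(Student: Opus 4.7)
My plan is to apply the Shrinkability Lemma (Lemma~\ref{lem:shrinkable}) via condition~\ref{item:exp_2bd} to a component of $P_1 \cap \cgi$. Take $W = P_1 \cap \clei$ and $S = P_1 \setminus W = P_1 \cap \cgi$; $S$ is nonempty by hypothesis. First I would verify $W$ is simply connected: $W = \cli \cup (P_1 \cap \ci)$ is connected since each vertex of $P_1 \cap \ci$ is adjacent to the simply connected triangular region $\cc_{i-1} \subseteq \cli \subseteq W$, and any potential hole in $W$ would be a vertex $v \in \ci \setminus P_1$ enclosed by $W$, but $v$'s neighbors in $\cc_{i+1}$ lie outside $W$, so no such hole exists. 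For the other hypothesis of condition~\ref{item:exp_2bd}, I would verify $|P_1 \cap bd(T)| \geq 2$: when $i \geq 3$, the three boundary vertices in $\cc_1 \cup \cc_2 \subseteq \cli \subseteq P_1$ suffice; when $i = 2$, $\cc_1 \in P_1$, and connectivity of $P_1$ between $\cc_1$ and some vertex of $\cgi \cap P_1$ forces at least one of the two boundary vertices of $\cc_2$ into $P_1$.

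The central step is showing that some component $S'$ of $S$ contains an exposed vertex. I would argue by contradiction. If no vertex of $S$ is exposed, then no vertex of $P_2 \cup P_3$ is adjacent to $S$. The induced subgraph on $\cgi$ is connected (a trapezoidal region of the lattice) and partitions as $S \cup ((P_2 \cup P_3) \cap \cgi)$ with no edges between the two parts, so since $S$ is nonempty we must have $(P_2 \cup P_3) \cap \cgi = \emptyset$ and hence $P_2 \cup P_3 \subseteq \ci$. But near-balance and $k_j \geq n$ give $|P_2| + |P_3| \geq (k_2 - 1) + (k_3 - 1) \geq 2(n-1)$, while $|\ci| = i \leq n$, producing the contradiction $2(n-1) \leq n$ when $n \geq 5$. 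This counting argument is the main obstacle, and it is what makes the hypotheses $n \geq 5$ and $k_j \geq n$ essential.

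Finally, I would check that $S'$ is simply connected so Lemma~\ref{lem:shrinkable} applies cleanly. Any hole in $S'$ would be surrounded by a cycle $C \subseteq S'$ enclosing some vertex $u \notin S'$; since $P_1$ is simply connected, $u \in P_1$, and since $\cgi$ itself is simply connected, the enclosed region of $C$ lies within $\cgi$, so $u \in P_1 \cap \cgi = S$. But any $P_1$-path from $u$ to $W \subseteq \clei$ must exit the region enclosed by $C$ through a vertex of $C \subseteq S'$, placing $u$ in the same component $S'$, a contradiction. With $W$ and $S'$ simply connected, $S'$ containing an exposed vertex, and $|P_1 \cap bd(T)| \geq 2$, condition~\ref{item:exp_2bd} of Lemma~\ref{lem:shrinkable} yields a vertex $x \in S' \subseteq P_1 \cap \cgi$ that can be removed from $P_1$ and added to another district, as required.
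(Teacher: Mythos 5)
Your proof is correct and follows essentially the same route as the paper's: both arguments show that a component of $P_1 \cap \cgi$ contains an exposed vertex (yours via a global count using $k_2, k_3 \geq n$ and the connectivity of the trapezoid $\cgi$, the paper's via a count localized to column $i+1$) and then invoke the Shrinkability Lemma (Lemma~\ref{lem:shrinkable}), you through Condition~\ref{item:exp_2bd} and the paper through Condition~\ref{item:exp_corner} with the corner $\cc_1 \in P_1$. Your choice of $W = P_1 \cap \clei$ is a tidy variant that makes the components of $P_1 \setminus W$ coincide exactly with the components of $P_1 \cap \cgi$, but it does not change the substance of the argument.
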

\begin{proof}
	Let $v$ be any vertex in $P_1 \cap \cc_{i+1}$, which exists because $\cgi \cap P_1 \neq \emptyset$ and $P_1$ is connected. Consider the connected component $S$ of $P_1 \cap \cgi$ containing $v$. To begin, look specifically at the component of $S \cap \cc_{i+1}$ containing $v$. This component of $S \cap \cc_{i+1}$ cannot extend for all of $\cc_{i+1}$:  $\cc_i \setminus P_1$ is nonempty, so $\cc_i$ contains some vertices not in $P_1$.  Each district has at least $n$ vertices, $\cc_i$ has fewer than $n$ vertices, and $\cli \subseteq P_1$, so because $P_2$ and $P_3$ are connected, there must be some vertices of $P_2$ or $P_3$ in $\cc_{i+1}$. Because the component of $S \cap \cc_{i+1}$ containing $v$ doesn't extend for all of $\cc_{i+1}$, 
	this means a vertex above or below this component is not in $P_1$, meaning its adjacent vertex in $\cc_{i+1} \cap S$ is exposed and therefore $S$ contains an exposed vertex. 
	
	Let $Q$ be any path from $v$ to $\cli$ in $P_1$. Let $w$ be the first vertex in $\ci \cap P_1$ along this path, and let $W$ be the connected component of $\ci \cap P_1$ containing $w$. We know $W$ has both a neighbor in $P_1 \cap \cc_{i-1}$ and a neighbor in $P_1 \cap \cc_{i+1}$, and $W$'s removal separates $P_1$ into at least two parts, one of which is $S$. Because $S$ contains an exposed vertex and $P_1$ contains corner vertex $\cc_1$, by Condition~\ref{item:exp_corner} of Lemma~\ref{lem:shrinkable}, $S$ is shrinkable. This means it contains a vertex that can be removed from $P_1$ and added to a different district, completing the proof. 
\end{proof}

The following lemma can be used in conjunction with Lemma~\ref{lem:shrinkable} to find a vertex that can be removed from $P_1$ and added to another district in a different way, by looking at cut vertices. 
\begin{lem}\label{lem:s1_c1}
	Let $P$ be a partition such that $\cli \subseteq P_1$ and $P_1 \cap \cgi \neq \emptyset$.  Suppose $x$ is a cut vertex of $P_1$ and $S_1$ is a component of $P_1 \setminus \{x\}$ that does not contain $\cc_1$.   Then $S_1 \subseteq \cgi$. 
\end{lem}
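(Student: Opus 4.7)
The plan is to show that the component $S_0$ of $P_1\setminus\{x\}$ containing $\cc_1$ already absorbs every vertex of $\clei\cap P_1$ other than $x$ itself; once this is established, any other component $S_1$ of $P_1\setminus\{x\}$ is forced to be disjoint from $\clei$ and hence contained in $\cgi$.

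First, I would rule out $x=\cc_1$. The leftmost corner $\cc_1$ has exactly two neighbors in $T$, both lying in $\cc_2$, and these two are themselves adjacent in the triangular lattice; so $\cc_1$'s $1$-neighborhood is always connected and Corollary~\ref{cor:cutvertex} forbids $\cc_1$ from being a cut vertex of $P_1$. Hence $\cc_1\in P_1\setminus\{x\}$ and the component $S_0$ is well-defined.

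The heart of the argument is to verify that $R:=(\clei\cap P_1)\setminus\{x\}=(\cli\setminus\{x\})\cup(\cc_i\cap P_1\setminus\{x\})$ is a single connected subgraph of $P_1\setminus\{x\}$ containing $\cc_1$. Since $\cli$ is the union of the complete columns $\cc_1,\ldots,\cc_{i-1}$, it forms a solid triangular strip which is $2$-connected as a subgraph of the triangular lattice for $i\geq 3$, so $\cli\setminus\{x\}$ remains connected even when $x\in\cli$; the cases $i=1,2$ are immediate since then $\cli$ is empty or equal to $\{\cc_1\}$, and $x\neq\cc_1$. Moreover, every vertex $v\in\cc_i\cap P_1$ has at least one neighbor in $\cc_{i-1}\subseteq\cli$, and any interior vertex of $\cc_i$ has two such neighbors, so at least one of them survives the removal of $x$ and supplies an attachment of $v$ to $\cli\setminus\{x\}$.

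Combining these observations, $R$ is connected, contains $\cc_1$, and sits inside $P_1\setminus\{x\}$, so $R\subseteq S_0$; any other component $S_1$ of $P_1\setminus\{x\}$ must then be disjoint from $R\cup\{x\}\supseteq\clei\cap P_1$, and since $S_1\subseteq P_1\subseteq T$ this is exactly $S_1\subseteq\cgi$. The main obstacle is the residual boundary sub-case left open in the previous paragraph: when $x$ is the top (respectively, bottom) vertex of $\cc_{i-1}$, the top (respectively, bottom) vertex of $\cc_i$ has $x$ as its \emph{unique} neighbor in $\cc_{i-1}$, and the direct attachment argument fails for that vertex. Resolving this will require a careful local analysis, most likely combining the simple connectivity of $P_2$ and $P_3$ with the Alternation Lemma (Lemma~\ref{lem:alternation}), to ensure that any such boundary vertex of $\cc_i\cap P_1$ is still joined to $S_0$ through $\cc_i$ itself rather than becoming an isolated component outside $\cgi$.
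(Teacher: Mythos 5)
Your route is genuinely different from the paper's. The paper assumes $S_1$ contains a vertex of $\clei$ and derives a contradiction from the simple connectivity of $P_1$, by building a closed walk in $P_1$ through $x$ and arguing it must enclose a vertex of $T\setminus P_1$ lying on one of the two arcs of $N(x)$ between the two components of $N(x)\cap P_1$. You instead try to show directly that $(\clei\cap P_1)\setminus\{x\}$ remains attached to $\cc_1$ after $x$ is deleted. The parts you complete are sound: $x\neq\cc_1$, $\cli\setminus\{x\}$ stays connected, and every vertex of $\ci\cap P_1$ with two neighbors in $\cc_{i-1}$ keeps an attachment. This already yields $S_1\cap\cli=\emptyset$, i.e., $S_1\subseteq\cgei$.

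However, the sub-case you flag and leave open is not a loose end to be patched by ``careful local analysis''; it is the crux, and it cannot be closed, because the statement as literally written fails exactly there. Take $n=8$, $k_1=k_2=k_3=12$, $i=5$, and let $P_1$ be the solid triangle $\cli=\cc_1\cup\cdots\cup\cc_4$ (ten vertices) together with $v$, the top vertex of $\cc_5$, and $v'$, the top vertex of $\cc_6$; the remaining $24$ vertices split readily into two simply connected districts of size $12$, and $P_1$ itself is simply connected, so this is a valid balanced partition. The top vertex $x$ of $\cc_4$ is the unique neighbor of $v$ in $\cc_4$, the vertex below $v$ in $\cc_5$ lies outside $P_1$, and the only $P_1$-neighbor of $v'$ is $v$; hence $x$ is a cut vertex of $P_1$, the hypotheses $\cli\subseteq P_1$ and $P_1\cap\cgi\neq\emptyset$ hold, and $S_1=\{v,v'\}$ is a component of $P_1\setminus\{x\}$ not containing $\cc_1$ with $v\in\ci$, so $S_1\not\subseteq\cgi$. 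No appeal to the Alternation Lemma or to the connectivity of $P_2$ and $P_3$ can reattach $v$ to $\cc_1$, because in this configuration it genuinely is severed. (The paper's own closed-walk argument also degenerates here: the only path from $s=v$ back to $\cc_1$ inside $\clei\cap P_1$ must re-enter $x$, so the walk retraces the edge between $x$ and $v$ and encloses no vertex outside $P_1$.) What your argument actually establishes is the weaker conclusion $S_1\subseteq\cgei$; closing the gap as stated would require an additional hypothesis ruling out the extreme vertices of $\ci$, for instance that $S_1$ is enclosed by a cycle disjoint from $\clei$, as it is in the contexts where the paper applies this lemma.
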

\begin{proof}
	Suppose, for the sake of contradiction, that there is a vertex $s \in S_1$ where $s \in \clei$. 
	Let $y$ be a vertex in the component of $N(x) \cap P_1$ containing $\cc_1$, and let $z$ be a vertex in $N(x) \cap S_1$. Consider the closed walk $C$ in $P_1$ formed by any path from $\cc_0$ to $y$; vertex $x$; any path in $S_1$ from $z$ to $s$; and any path from $s$ to $\cc_0$ within $\clei$. One path from $y$ to $z$ in $N(x)$ must be enclosed by $C$. This path must have at least one vertex not in $P_1$ because $y$ and $z$ are in separate components of $P_1 \cap N(x)$, and must be in $T$ because it's inside $C$, which means $C$ - a closed walk entirely contained in $P_1$ - surrounds a vertex not in $P_1$, contradicting that $P_1$ is simple connected.  Because we have found a contradiction, it is impossible for $S_1$ to contain a vertex of $\clei$, so $S_1 \subseteq \cgi$, as desired. 
\end{proof}

If there is a vertex in $P_1 \cap \cgi$ that can be removed from $P_1$ and added to $P_3$ which has $|P_3| = k_3 - 1$, then we can perform this step to obtain a valid balanced partition. However, while the previous lemma guarantees there is a vertex in $P_1 \cap \cgi$ that can be removed, there is no guarantee it can be added to $P_3$: for example, maybe every vertex that can be removed from $P_1 \cap \cgi$ can only be added to $P_2$.  Doing so is not sufficient to produce a balanced partition.  In such situations significantly more work is needed to reach a balanced partition.

\subsubsection{Disconnected 3-neighborhoods}

There are some cases that are easy to resolve and will not require extensive case analysis to discover how to reach a balanced partition. These revolve around vertices in $P_1$ or $P_2$ that have disconnected 3-neighborhoods.  We present these results here, first as a general lemma for $P_j$ where $j = 1$ or $2$ and then as corollaries for $P_1$ and $P_2$ specifically.

\begin{lem}\label{lem:pj_3nbhd_discon}
	Let $P$ be a partition. Suppose there exists $x \in P_j$ for $j = 1,2$ such that $x$'s 3-neighborhood is disconnected. Furthermore, suppose that if $C$ is a cycle formed by any path in $P_3$ connecting different components of $P_3 \cap N(x)$ together with $x$, there is at least one vertex of $P_\ell$ outside $C$ for $\ell = 1,2$, $\ell \neq j$. Then there exists a move assigning a vertex of $P_j$ inside $C$ to~$P_3$.
\end{lem}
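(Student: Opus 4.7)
The plan is to exhibit a single flip move that takes a vertex of $P_j$ strictly inside $C$ and reassigns it to $P_3$, using in sequence the auxiliary Lemmas~\ref{lem:remove-W}, \ref{lem:alternation}, and \ref{lem:remove-add}. The first preliminary observation is that the strict interior of $C$ contains no vertex of $P_\ell$: every vertex of $C$ lies in $P_j \cup P_3$, so $C$ contains no vertex of $P_\ell$, and combined with the hypothesis that some vertex of $P_\ell$ lies outside $C$ and the connectedness of $P_\ell$, any vertex of $P_\ell$ strictly inside $C$ would produce a path in $P_\ell$ forced to cross $C$, a contradiction. The second preliminary observation is that because $x$'s 3-neighborhood is disconnected and the path of $C$ joins two of its components, the portion of $N(x)$ lying strictly inside $C$ between those two components must contain a vertex $v_1 \notin P_3$, which by the first observation lies in $P_j$.

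Next, I would let $S$ be the connected component of $P_j \setminus \{x\}$ containing $v_1$. Since $\{x\}$ is the only intersection of $C$ with $P_j$, no path in $P_j \setminus \{x\}$ can cross $C$, so $S$ lies entirely strictly inside $C$. Because $P_j$ is simply connected, $S$ is a simply connected component of $P_j \setminus \{x\}$, and $v_1 \in S$ is exposed (being adjacent to a vertex of $P_3$ on $C$). Applying Lemma~\ref{lem:remove-W} with $W = \{x\}$ produces an exposed vertex $v \in S$ such that $P_j \setminus \{v\}$ is simply connected; by Corollary~\ref{cor:cutvertex}, $v$'s $j$-neighborhood is connected. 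Since $v$ lies strictly inside $C$ where no $P_\ell$ vertex occurs, $v$ is adjacent to a vertex of $P_3$. If $v \notin bd(T)$, the Alternation Lemma applied to $v$ forces its 3-neighborhood to be the neighborhood that is connected and nonempty (its $\ell$-neighborhood being empty), and Lemma~\ref{lem:remove-add} then gives a valid flip of $v$ from $P_j$ to $P_3$, producing the desired move.

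The main obstacle is the remaining subcase $v \in bd(T)$, where $v$ has at most four neighbors in $T$ forming a boundary path; here it is possible for the $P_j$-neighbors of $v$ to lie in the middle of this path with the $P_3$-neighbors split on either side, so that the 3-neighborhood of $v$ is disconnected and the direct flip would leave $P_3$ non-simply-connected. To resolve this case I would pick $C$ minimally by choosing the shortest $P_3$-path joining two components of $P_3 \cap N(x)$, thereby minimizing the number of $bd(T)$-vertices strictly inside $C$, and, in any residual case, replace $v$ by a $P_j$-neighbor of $v$ lying in $S$ that is not on $bd(T)$, verifying that its $j$-neighborhood remains connected; the Alternation Lemma then applies to this interior substitute and closes the proof.
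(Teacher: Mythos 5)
Your argument through the first two paragraphs is essentially the paper's proof: the paper also shows all of $P_\ell$ lies outside $C$, extracts a nonempty component $S$ of $P_j\setminus\{x\}$ inside $C$, and then shrinks it. The only real difference is packaging --- the paper invokes Condition~\ref{item:nobd} of the Shrinkability Lemma (Lemma~\ref{lem:shrinkable}), whose proof is exactly your chain Lemma~\ref{lem:remove-W} $\to$ Alternation Lemma $\to$ Flip Lemma.

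The problem is your third paragraph. The subcase $v \in bd(T)$ is vacuous, and you miss the one observation that kills it: since $S$ lies strictly inside $C$ and every vertex of $C$ is in $T$, every vertex of $S$ is in $T \setminus bd(T)$ (a vertex of $bd(T)$ has a neighbor outside $T$, which would then also have to be enclosed by $C$, impossible for a cycle contained in the convex region $T$; the paper uses this same fact in Lemma~\ref{lem:p1_3nbhd_discon_bdry} and it is precisely why Condition~\ref{item:nobd}, namely $S \cap bd(T) = \emptyset$, applies). So $v \notin bd(T)$ automatically and the Alternation Lemma always applies. Worse, the repairs you propose for this phantom case would not constitute a proof as written: taking a shortest $P_3$-path ``minimizes'' the number of boundary vertices inside $C$ but gives no guarantee that the number is zero, and replacing $v$ by a $P_j$-neighbor of $v$ in $S$ that avoids $bd(T)$ gives a vertex whose $j$-neighborhood you have no control over --- Lemma~\ref{lem:remove-W} produced connectivity of the $j$-neighborhood only for the specific vertex $v$ it output, not for its neighbors. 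Delete the third paragraph, insert the observation that $S \cap bd(T) = \emptyset$, and your proof is complete and matches the paper's.
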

\begin{proof}
	Consider vertex $x$, and let $y, z \in N(x) \cap P_3$ be in two different components of $N(x) \cap P_3$, possible because we know $x$'s 3-neighborhood is disconnected; see Figure~\ref{fig:pj_3nbhd_discon} where $P_j = P_2$ is blue, though the picture is identical when $j = 1$ with red vertices replacing blue. Because $y$ an $z$ are both in $P_3$, they must be connected by a path in $P_3$, shown in dashed yellow in Figure~\ref{fig:pj_3nbhd_discon}. 	
	Consider the cycle $C$ formed by this path from $y$ to $z$ in $P_3$ together with $x$, shown as dashed in Figure~\ref{fig:pj_3nbhd_discon}. By assumption, there is at least one vertex of $P_\ell$ outside of $C$. Because cycle $C$ is entirely comprised of vertices in $P_j$ and $P_3$, it must be that all of $P_\ell$ is outside of $C$. It follows that all vertices enclosed by $C$ are in $P_j$ or $P_3$. 
	
	\begin{figure}
		\centering
		\includegraphics[scale = 0.7]{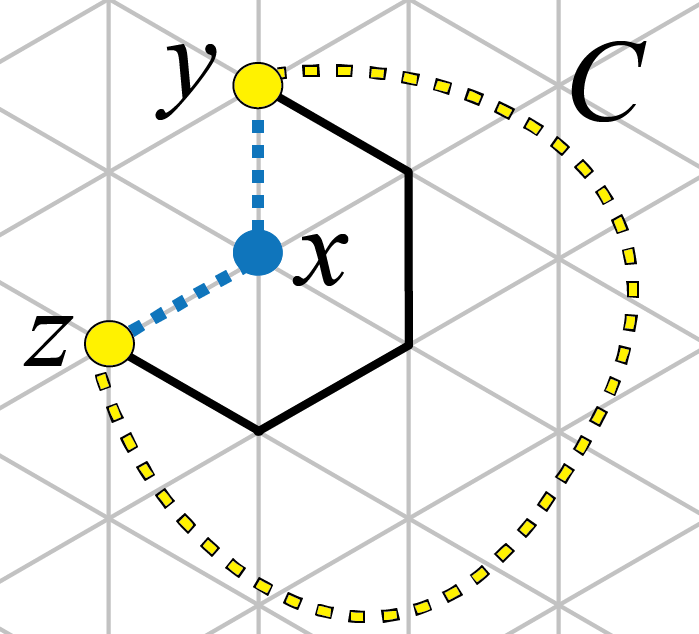}
		
		\caption{From the proof of Lemma~\ref{lem:p2_3nbhd_discon} when $P_j = P_2$; the same argument applies when $P_j = P_1$.  Vertex $x \in P_2$ has neighbors $y$ and $z$ in different connected components of $N(x) \cap P_3$. Vertices $y$ and $z$ must be connected by some path in $P_3$ (dashed yellow). When looking at the path from $y$ to $z$ in $N(x)$ enclosed by $C$ (black lines), some vertex on this path must be in $P_2$. } 
		\label{fig:pj_3nbhd_discon}
	\end{figure}
	
	We next consider $N(x)$ in more detail, and look at the paths from $y$ to $z$ in $N(x)$: there are two, one of which goes outside $C$ and the other of which goes inside $C$, and we focus on the path that goes inside $C$, shown as bold black in Figure~\ref{fig:pj_3nbhd_discon}. Because it is surrounded by $C$, all vertices on this path must be in $T$.  Because $y$ and $z$ are in different connected components of $N(x) \cap P_3$, this path must contain a vertex that is not in $P_3$.  It cannot contain a vertex of $P_\ell$ because $P_\ell$ is outside of $C$, so it must contain a vertex of $P_j$.  We let $S$ be the component of $P_j \setminus x$ that is inside $C$, and note $S$ must be nonempty and must be simply connected. It follows that $S \cap bd(T) = \emptyset$. 
	Therefore by Condition~\ref{item:nobd} of Lemma~\ref{lem:shrinkable}, $S$ is shrinkable and there exists a vertex $v \in S $ that can be removed from $P_j$ and added to another district.  Because all vertices enclosed by $C$ are in $P_j$ or $P_3$, $v$ can be removed from $P_j$ and added to $P_3$, proving the lemma. 
\end{proof}

The following corollary shows how to rebalance when a vertex of $P_2$ has a disconnected $3$-neighborhood. 

\begin{lem}\label{lem:p2_3nbhd_discon}
	Let $P$ be a partition such that $\cli \subseteq P_1$, $\ci \cap P_1 \neq \ci$, $\cgi \cap P_1 \neq \emptyset$, $|P_1| = k_1 + 1$, and $|P_3| = k_3 - 1$. Suppose vertex $x \in P_2$ has a disconnected 3-neighborhood. Then there exists a sequence of one or two moves resulting in a balanced partition that does not reassign any vertices in $P_1 \cap \clei$. 
\end{lem}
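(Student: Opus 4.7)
The plan is to use the disconnected 3-neighborhood at $x$ to enable a $P_2 \to P_3$ move, paired (if necessary) with a $P_1 \to P_2$ move, so as to restore balance while leaving $P_1 \cap \clei$ untouched. By Lemma~\ref{lem:1-to-change}, some $u \in P_1 \cap \cgi$ can be removed from $P_1$ and added to another district. If $u$ can be added to $P_3$, I perform this single move: $|P_1|$ drops from $k_1+1$ to $k_1$ and $|P_3|$ rises from $k_3-1$ to $k_3$, yielding a balanced partition in which no vertex of $P_1 \cap \clei$ is disturbed.

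Otherwise $u$ can only be added to $P_2$, so Move~1 reassigns $u$ from $P_1$ to $P_2$, giving $|P_1| = k_1$, $|P_2| = k_2 + 1$, $|P_3| = k_3 - 1$. Because Move~1 changes only $u$, and $u \notin P_3$, the vertex $x \in P_2$ still has its original, disconnected 3-neighborhood, and the candidate cycles $C$ through $x$ and $P_3$ are unchanged. I then apply Lemma~\ref{lem:pj_3nbhd_discon} with $j = 2$ and $\ell = 1$ to obtain a vertex $v \in P_2$ strictly inside the associated cycle $C$ that can be reassigned to $P_3$; performing this as Move~2 restores $|P_2| = k_2$ and $|P_3| = k_3$, producing a balanced partition.

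The main obstacle is verifying the hypothesis of Lemma~\ref{lem:pj_3nbhd_discon}: that some vertex of $P_1$ lies outside $C$. I argue this from the observation that the leftmost corner $\cc_1$ of $T$ has only two neighbors in $T$, so no cycle contained in $T$ can encircle it; since $\cc_1 \in P_1$ and $P_1$ is simply connected, all of $P_1$ (in particular all of $\cli$) must lie on the outside of any such $C$. This continues to hold after Move~1, because Move~1 does not touch $\cc_1$ (indeed $u \in \cgi$). The same location arguments, together with $v \in P_2$, show that neither of the two moves reassigns a vertex of $P_1 \cap \clei$, as required.
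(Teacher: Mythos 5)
Your proposal is correct and follows essentially the same route as the paper: invoke Lemma~\ref{lem:1-to-change} for the $P_1\to P_2$ (or directly $P_1\to P_3$) vertex, observe that $\cc_1\in P_1$ cannot be enclosed by the cycle $C$ so Lemma~\ref{lem:pj_3nbhd_discon} applies with $j=2$, $\ell=1$, and pair the two moves. The only (immaterial) difference is ordering: the paper identifies the $P_2\to P_3$ vertex $v'$ first and argues the moves do not interfere because $v$ is outside $C$ while $v'$ is inside, whereas you perform the $P_1\to P_2$ move first and re-verify the lemma's hypotheses for the resulting partition.
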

\begin{proof}
	By Lemma~\ref{lem:1-to-change}, there exists a vertex $v \in P_1 \setminus \clei$ that can be removed from $P_1$ and added to another district. If $v$ can be added to $P_3$, we do so and reach a balanced partition. We therefore assume $v$ can only be added to $P_2$.
	
	First, we note that because $\cli \subseteq P_1$, this means $P_1$ contains a vertex of $bd(T)$. If $C$ is a cycle formed by any path in $P_3$ connecting different components of $P_3 \cap N(x)$ together with $x$, then $\cc_1$ must be outside of $C$, so $P_1$ contains a vertex outside of $C$. By Lemma~\ref{lem:pj_3nbhd_discon}, we know there exists a vertex $v'$ of $P_2$ inside $C$ that can be added to $P_3$.  Because $v \in P_1$ is outside $C$ and $v' \in P_2$ is inside $C$, $v$ and $v'$ are not adjacent and so adding $v$ to $P_2$ doesn't affect whether $v'$ can be added to $P_3$.   We add $v$ to $P_2$ and subsequently add $v'$ to $P_3$, reaching a balanced partition. 

\end{proof}

\noindent We can prove a similar lemma about vertices in $P_1$ with a disconnected 3-neighborhood.


\begin{lem}\label{lem:p1_3nbhd_discon}
	Let $P$ be a partition such that $\cli \subseteq P_1$, $\cgi \cap P_1 \neq \emptyset$, $|P_1| = k_1 + 1$, and $|P_3| = k_3 - 1$.	
	Suppose there exists $x \in P_1$ such that $x$'s 3-neighborhood is disconnected. Furthermore, suppose that if $C$ is a cycle formed by any path in $P_3$ connecting different components of $P_3 \cap N(x)$ together with $x$, there is at least one vertex of $P_2$ outside $C$. Then there exists a move resulting in a balanced partition that does not reassign any vertices in $P_1 \cap \clei$. 
\end{lem}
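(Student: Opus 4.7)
The plan is to invoke Lemma~\ref{lem:pj_3nbhd_discon} with $j=1$ and $\ell=2$. The present lemma's hypothesis guarantees a vertex of $P_2$ lies outside any cycle $C$ formed by a path in $P_3$ joining two components of $N(x) \cap P_3$ together with $x$, which is exactly what Lemma~\ref{lem:pj_3nbhd_discon} requires. Applying that lemma yields a valid single reassignment of some vertex $v \in P_1$ lying inside $C$ from $P_1$ into $P_3$. This move lowers $|P_1|$ from $k_1+1$ to $k_1$, raises $|P_3|$ from $k_3-1$ to $k_3$, and leaves $|P_2|=k_2$ untouched, so the resulting partition is balanced.

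It remains to check $v \notin \clei$, so that no vertex of $P_1 \cap \clei$ is reassigned. The proof of Lemma~\ref{lem:pj_3nbhd_discon} places $v$ inside the component $S$ of $P_1 \setminus \{x\}$ enclosed by $C$. First, $x \neq \cc_1$: the corner vertex $\cc_1$ has only two neighbors in $T$, and those two are adjacent, so $\cc_1$'s 3-neighborhood is automatically connected, whereas $x$'s is not. Second, and this is the main geometric point, I would argue that $\cc_1$ lies in the exterior of $C$. Since $T$ is convex and $C \subseteq T$, the interior of $C$ is contained in $T$; every neighbor of $\cc_1$ in $\Gtri \setminus T$ is therefore in the exterior of $C$, and because $\cc_1 \notin C$, $\cc_1$ itself is forced into the exterior of $C$.

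Combining these, $\cc_1 \notin S$, so $S$ is a component of $P_1 \setminus \{x\}$ not containing $\cc_1$, and Lemma~\ref{lem:s1_c1} applies to give $S \subseteq \cgi$. In particular $v \in \cgi$, which is disjoint from $\clei$, finishing the argument. The only genuine subtlety is the planarity observation that the corner $\cc_1$ cannot be enclosed by any cycle contained in $T$; once that is in hand, the conclusion follows mechanically from the shrinkability and cut-vertex machinery already developed.
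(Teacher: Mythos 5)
Your proposal is correct and follows essentially the same route as the paper's proof: apply Lemma~\ref{lem:pj_3nbhd_discon} with $j=1$ and $\ell=2$ to obtain a vertex $v \in P_1$ inside $C$ that can be reassigned to $P_3$, then conclude $v \in \cgi$ via Lemma~\ref{lem:s1_c1}. You supply more detail than the paper does in verifying the hypotheses of Lemma~\ref{lem:s1_c1} (that $x \neq \cc_1$ and that $\cc_1$ lies outside $C$, hence outside the enclosed component $S$), but the underlying argument is identical.
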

\begin{proof}
	First, by Lemma~\ref{lem:pj_3nbhd_discon} with $j = 1$ and $\ell = 2$, there exists a move reassigning a vertex $v \in P_1$ inside $C$ to $P_3$.  That $v \notin P_1 \cap \clei$ follow from Lemma~\ref{lem:s1_c1}. 
\end{proof}

The following lemma provides an alternate condition on $P_2$ that can be used, that will make future applications of this lemma more straightforward. 

\begin{lem}\label{lem:p1_3nbhd_discon_bdry}
	Let $P$ be a partition such that $\cli \subseteq P_1$, $P_1 \cap \cgi \neq \emptyset$, $|P_1| = k_1 + 1$, and $|P_3| = k_3 - 1$. Suppose there exists $x \in P_1$ such that $x$'s 3-neighborhood is disconnected and $P_2 \cap bd(T) \neq \emptyset$. Then there exist a move resulting in a balanced partition that does not reassign any vertices in $P_1 \cap \clei$. 
\end{lem}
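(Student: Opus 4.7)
The plan is to derive this lemma as an essentially immediate corollary of Lemma~\ref{lem:p1_3nbhd_discon}. Every hypothesis of Lemma~\ref{lem:p1_3nbhd_discon} is already in the statement except for the geometric condition that for every cycle $C$ formed by a path in $P_3$ between different components of $N(x)\cap P_3$ together with $x$, there is at least one vertex of $P_2$ outside $C$. So the entire content of the proof reduces to checking that this condition is forced by $P_2\cap bd(T)\neq \emptyset$.

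To verify this, I would fix any $v \in P_2\cap bd(T)$ (guaranteed to exist by hypothesis) and any such cycle $C$, and argue that $v$ lies outside $C$. First, $v\notin C$: the vertices of $C$ all lie in $\{x\}\cup P_3\subseteq P_1\cup P_3$ by construction (since $x\in P_1$ and the connecting path lies in $P_3$), whereas $v\in P_2$. So $v$ is either strictly inside $C$ or strictly outside $C$ in the planar embedding.

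Next I would rule out $v$ being strictly inside $C$ by a Jordan curve argument. In the standard embedding of $\Gtri$, the region $T$ is a closed equilateral triangle, which is convex and in particular simply connected. The cycle $C$ is contained in $T$, so by the Jordan curve theorem its bounded face is also contained in $T$: otherwise, any point of the bounded face lying outside $T$ could be joined to infinity by a path in the exterior of $T$, which would have to cross $C$ (as $C$ lies in $T$), a contradiction. Since $v\in bd(T)$, every open neighborhood of $v$ in the plane contains points outside $T$; but if $v$ were in the open bounded face of $C$, a small open ball around $v$ would lie entirely in that face and hence in $T$, contradicting $v\in bd(T)$. Combined with $v\notin C$, this forces $v$ to lie in the unbounded face of $C$, i.e.\ outside $C$.

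With this observation in hand, all hypotheses of Lemma~\ref{lem:p1_3nbhd_discon} are met, and invoking it produces the desired move reaching a balanced partition without reassigning any vertex of $P_1\cap\clei$. I do not expect any real obstacle here: the only nontrivial step is the Jordan-curve observation above, and it is essentially routine given the convexity of $T$. The value of isolating this lemma lies in having the more convenient boundary-based hypothesis $P_2\cap bd(T)\neq\emptyset$ available for the later case analysis, rather than having to re-verify the outside-$C$ condition each time Lemma~\ref{lem:p1_3nbhd_discon} is applied.
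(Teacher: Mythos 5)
Your proposal is correct and matches the paper's proof: both reduce to Lemma~\ref{lem:p1_3nbhd_discon} by observing that any such cycle $C$ lies entirely in $T$, so its interior contains no boundary vertices, and hence a vertex of $P_2\cap bd(T)$ must lie outside $C$. Your write-up simply makes the Jordan-curve/convexity details more explicit than the paper does.
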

\begin{proof}
	Consider vertex $x$, and let $y, z \in N(x) \cap P_3$ be in two different components of $N(x) \cap P_3$, possible because we know $x$'s 3-neighborhood is disconnected. Consider the cycle $C$ formed by any path from $y$ to $x$ in $P_3$ together with $x$. Because this cycle is in $T$, every vertex inside $C$ is in $T \setminus bd(T)$.  It follows that because $P_2 \cap bd(T)$ is nonempty, there exists a vertex of $P_2$ outside $C$.  The conclusion follows from Lemma~\ref{lem:p1_3nbhd_discon}. 
\end{proof}

\subsubsection{Unwinding Lemmas} 

In this section we give some rebalancing lemmas which we call unwinding lemmas.
While attempting to rebalance a partition, we cannot repeatedly remove vertices from the same district because this will produce partitions that are not balanced or nearly balanced.  Instead, if we wish to remove multiple vertices from a particular district, we must alternate with adding new vertices to that district somewhere else.  It is important the vertices we are adding are not adjacent to the vertices we are removing, otherwise we can't know any real progress is being made. 

The following Unwinding Lemma gets at this idea, where we have $S_1\subseteq P_1$ that we want to add to $P_2$ and $S_2 \subseteq P_2$ that we want to add to $P_1$. This lemma is only applied in the case where $|P_3| = k_3 - 1$, so adding a vertex to $P_3$ to bring it up to its ideal size is also considered a successful outcome. It is called the Unwinding Lemma because $S_1$ and $S_2$ are frequently long, winding arms of $P_1$ and $P_2$, respectively, that we wish to contract so that our partition is less~intertwined. 

In order for this lemma to be true, we must know that both $S_1$ and $S_2$ are shrinkable. Additionally, we need both to remain shrinkable even as vertices are removed from $S_1$ and added to $P_2 \setminus S_2$ and as vertices are removed from $S_2$ and added to $P_1 \setminus S_1$. We accomplish this by only considering $S_1$ where one of three conditions hold: $\cc_1 \in P_1 \setminus S_1$; $S_1 \cap bd(T) = \emptyset$; or $(P_1 \setminus S_1) \cap bd(T) \neq \emptyset$. 
Because $S_1$ only shrinks as vertices are removed and added to $P_2$, and $P_1 \setminus S_1$ only grows as vertices from $S_2$ are added to it, if these conditions are initially true they will remain true throughout the unwinding process. 
Similar conditions are used for $S_2$, though we omit the $\cc_1 \in P_2 \setminus S_2$ condition because we will never need it. 

For $S_1$, the first condition $\cc_1 \in P_1 \setminus S_1$ implies the last condition $(P_1 \setminus S_1) \cap bd(T) \neq \emptyset$ also holds, but we include both to ease future applications of this lemma.  While we will always know that $\cc_1 \in P_1$ during our rebalancing process described in this section, when we consider the steps necessary to turn an arbitrary nearly balanced partition into a balanced partition in Section~\ref{sec:getbalanced}, we will not be able to rely on this assumption and instead will use one of the other two hypotheses for $S_1$.

\begin{lem}[Unwinding Lemma]\label{lem:s1s2}
	Consider a partition where $|P_1| = k_1 + 1$, $|P_2| = k_2$, and $|P_3| = k_3 - 1$. 
	Suppose $w_1 \in P_1$ is a cut vertex of $P_1$ and  $w_2 \in P_2$ is a cut vertex of $P_2$.  Suppose $S_1 \subseteq P_1$ is a component of $P_1 \setminus w_1$ and $S_2 \subseteq P_2$ is a component of $P_2 \setminus w_2$, where no vertex of $S_1$ is adjacent to any vertex of $S_2$. Suppose one of the following is true for $S_1$: $\cc_1 \in P_1 \setminus S_1$, $S_1 \cap bd(T) = \emptyset$, or $(P_1 \setminus S_1) \cap bd(T) \neq \emptyset$. Suppose one of the following is true for $S_2$: $S_2 \cap bd(T) = \emptyset$ or $(P_2 \setminus S_2) \cap bd(T) \neq \emptyset$. 
	
	 There exists a sequence of moves through balanced or nearly balanced partitions after which (1) the partition is balanced, (2) all vertices in $S_1$ have been added to $P_2$, or (3) all vertices in $S_2$ have been added to $P_1$. In these moves only vertices in $S_1$ and $S_2$ have been reassigned, and in outcomes (2) and (3) the resulting partition has $|P_1| = k_1 + 1$ and $|P_3| = k_3 - 1$.
\end{lem}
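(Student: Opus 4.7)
The plan is an alternating, flip-based iteration that peels vertices off $S_1$ and $S_2$ in turn. Starting from sizes $(k_1{+}1,\,k_2,\,k_3{-}1)$, apply the Shrinkability Lemma (Lemma~\ref{lem:shrinkable}) to the surviving portion of $S_1$ in $P_1$ to obtain a vertex $v_1$ that can be removed from $P_1$. If $v_1$ can be reassigned to $P_3$, the partition becomes balanced, giving outcome~(1); otherwise $v_1$ is added to $P_2$, producing nearly balanced sizes $(k_1,\,k_2{+}1,\,k_3{-}1)$. From here, apply the Shrinkability Lemma to the surviving portion of $S_2$ in $P_2$ to obtain a vertex $v_2$: adding $v_2$ to $P_3$ gives balance (outcome~(1)), while adding it to $P_1$ restores the starting sizes. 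Each completed pair of moves strictly decreases $|S_1 \cap P_1| + |S_2 \cap P_2|$, so the iteration terminates. If we never produce outcome~(1), then eventually one of $S_1 \cap P_1$ or $S_2 \cap P_2$ is exhausted; performing (if needed) one further move on the opposite side to restore the sizes gives outcome~(2) or~(3), respectively, with $|P_1| = k_1{+}1$ and $|P_3| = k_3{-}1$ as required.

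At each iteration I must verify the Shrinkability Lemma still applies. Initially, if $S_1 \cap bd(T) = \emptyset$ then Condition~(I) of Lemma~\ref{lem:shrinkable} holds directly; otherwise $S_1$ has a boundary vertex, while the hypothesis that $\mathcal{C}_1 \in P_1 \setminus S_1$ or $(P_1 \setminus S_1) \cap bd(T) \neq \emptyset$ supplies a second boundary vertex in $P_1$, so $|P_1 \cap bd(T)| \geq 2$; an exposed vertex of $S_1$ is produced by Lemma~\ref{lem:remove-W1} since $w_1$ is a cut vertex of $P_1$, giving Condition~(II). A parallel argument handles $S_2$. These conditions persist: $S_1 \cap P_1$ only loses vertices (preserving the empty-boundary case when it started empty), while $P_1 \setminus (S_1 \cap P_1)$ only gains vertices from $S_2$, so boundary vertices in $P_1 \setminus S_1$ are never lost. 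The symmetric statement holds for $S_2$.

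The main obstacle is justifying that, throughout the procedure, the surviving portion of $S_1$ in $P_1$ really is still a component of $P_1 \setminus W_1$ for some simply connected $W_1$, as required to invoke Lemma~\ref{lem:shrinkable}. Here the non-adjacency of $S_1$ and $S_2$ is essential: vertices pulled from $S_2$ into $P_1$ attach to $P_1 \setminus S_1$ without touching any surviving vertex of $S_1$, so one may take $W_1$ to be $w_1$ together with the newly absorbed $S_2$ vertices, which accumulate as a connected region of $P_1 \setminus S_1$; this keeps $W_1$ simply connected and keeps the remnant of $S_1$ as a union of components of $P_1 \setminus W_1$. If a Shrinkability move ever fragments the remnant of $S_1$ into several components, one simply picks any of them to continue the iteration, since each is itself simply connected and still satisfies the invariants above. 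A symmetric analysis applies to $S_2$, completing the proof.
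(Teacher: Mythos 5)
Your proposal is correct and follows essentially the same route as the paper's proof: alternately peel a shrinkable vertex off each of $S_1$ and $S_2$ (trying $P_3$ first, else swapping between $P_1$ and $P_2$), use non-adjacency to keep the two sides independent, and terminate via the decreasing potential $|S_1|+|S_2|$. The only cosmetic difference is that the paper selects both $v_1$ and $v_2$ before moving either and keeps $W_1=\{w_1\}$ throughout (which suffices, since the absorbed $S_2$ vertices never touch the remnant of $S_1$), whereas you interleave the moves and enlarge $W_1$.
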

\begin{proof}	
	First, we note that the hypothesis $\cc_1 \in P_1 \setminus S_1$ implies the hypothesis  $(P_1 \setminus S_1) \cap bd(T) \neq \emptyset$, so we only work with the latter (the former is included to ease later applications of this lemma).

	We will show that we reach outcome (1), (2), or (3), or we do two moves and find new sets $\overline{S_1} \subseteq S_1$ and $\overline{S_2} \subseteq S_2$ that still satisfy the hypotheses of the lemmas but have $|\overline{S_1}| + |\overline{S_2}| < |S_1| + |S_2|$. In particular, we will see $\overline{S_1}$ is $S_1$ with a single vertex removed and added to $P_2$, and $\overline{S_2}$ is $S_2$ with a single vertex removed and added to $P_1$. Because set sizes must be positive integers, we cannot indefinitely continue to find smaller and smaller sets satisfying the hypotheses of the lemma. This means that, unless we reach outcome (1) along the way, we continue until $|\overline{S_1}|$ or $|\overline{S_2}|$ is empty and we have reached outcome (2) or (3), respectively.  
	
First, if $S_1 \cap bd(T) = \emptyset$, then $S_1$ is shrinkable by Condition~\ref{item:nobd} of Lemma~\ref{lem:shrinkable}. If $S_1 \cap bd(T) \neq \emptyset$, then because $S_1$ satisfies at least one of the hypotheses given in the lemma, it must be true that $(P_1 \setminus S_1) \cap bd(T)  \neq \emptyset$ as well. This implies $|bd(T) \cap P_1| \geq 2$.  
Because $S_1$ was created by removing a single cut vertex from $P_1$, it follows that $S_1$ is shrinkable by Condition~\ref{item:cut_2bd} of Lemma~\ref{lem:shrinkable}. In either case, there must exist a vertex $v_1 \in S_1$ that can be removed from $S_1$ and added to another district. Following the same reasoning for $P_2$ and $S_2$, there also exists $v_2 \in S_2$ that can be removed from $S_2$ and added to another district.	


	
If $v_1$ can be added to $P_3$, we do so and the result is a balanced partition.  If $v_1$ can't be added to $P_3$, it can be added to $P_2$. If $v_2$ can be added to $P_3$, we do so and also add $v_1$ to $P_2$; because $S_1$ and $S_2$ are not adjacent, adding $v_1$ to $P_2$ will not affect whether $v_2$ can be added to $P_3$. This results in a balanced partition. 
	
If neither $v_1$ nor $v_2$ can be added to $P_3$, we add $v_1$ to $P_2$ and add $v_2$ to $P_1$, both of which must be valid: because $S_1$ and $S_2$ are not adjacent, adding $v_1$ to $P_2$ cannot affect whether $v_2$ can be added to $P_1$. This results in a partition that still has $|P_1| = k_1 + 1$, $|P_2| = k_2$, and $|P_3| = k_3 - 1$.
Let $\overline{S_1}$ be $S_1 \setminus v_1$, and let $\overline{S_2}$ be $S_2 \setminus v_2$, and note $|\overline{S_1}| + |\overline{S_2}| = |S_1| + |S_2| - 2$. If $|\overline{S_1}| = 0$, then all vertices of $S_1$ have been added to $S_2$, and we have reached outcome (2).  If $|\overline{S_2}| = 0$, then all vertices of $S_2$ have been added to $S_1$, and we have reached outcome (3).  Otherwise, now that $v_1$ has been added to $P_2$, $\overline{S_1}$ is a component of $P_1 \setminus w_1$ and so $w_1$ is still a cut vertex of $P_1$. 
Similarly, $\overline{S_2}$ is a component of $P_2 \setminus w_2$ and $w_2$ is still a cut vertex of $P_2$. $\overline{S_1}$ and $\overline{S_2}$ are not adjacent because $S_1$ and $S_2$ were not.  
Whichever of  $S_1 \cap bd(T) = \emptyset$ or $(P_1 \setminus S_1) \cap bd(T) \neq \emptyset$ was true for $S_1$ must still be true for $\overline{S_1}$: $\overline{S_1} \cap bd(T) = \emptyset$ or $(P_1 \setminus \overline{S_1} ) \cap bd(T)) \neq \emptyset$, because the former set only shrinks and the latter set only grows. 
Similarly, whichever of  $S_2 \cap bd(T) = \emptyset$ or $(P_2 \setminus S_2) \cap bd(T) \neq \emptyset$ was true for $S_2$ must still be true for $\overline{S_2}$.
Therefore sets $\overline{S_1}$ and $\overline{S_2}$, smaller than $S_1$ and $S_2$, satisfy the hypotheses of the lemma. We repeat this process until one of the three outcomes occurs, which must eventually happen because discrete sets cannot get arbitrarily small. 
	
Throughout the entire process outlined above, only vertices originally in $S_1$ and $S_2$ have been reassigned. 
	
\end{proof}

We will also need a modified version of the lemma, which allow us to specify a vertex of $S_2$ adjacent to $w_2$ that is removed from $S_2$ last. This will be needed in certain cases to ensure progress is always made.  The proof largely proceeds similarly to the above proof, but with extra care taken around $S_2$ and $w_2$. While the previous lemma and the next lemma could be integrated into one lemma, we've kept them separate for the sake of readability. Because the following lemma is only applied when $\cc_1 \in P_1 \setminus S_1$ and $S_2 \cap bd(T) = \emptyset$, this lemma includes fewer possible hypotheses for $S_1$ and $S_2$ than the previous lemma.



\begin{lem}\label{lem:s1s2x}
	Consider a partition where $|P_1| = k_1 + 1$, $|P_2| = k_2$, and $|P_3| = k_3 - 1$. 
	Suppose $w_1 \in P_1$ is a cut vertex of $P_1$ and $w_2 \in P_2$ is a cut vertex of $P_2$.  Suppose $S_1 \subseteq P_1$ is a component of $P_1 \setminus w_1$  and $S_2 \subseteq P_2$ is a component of $P_2 \setminus w_2$, where no vertex of $S_1$ is adjacent to any vertex of $S_2$. Suppose $\cc_1 \in P_1 \setminus S_1$ and $S_2 \cap bd(T) = \emptyset$. 
	Let $x$ be a vertex in $S_2$ that is adjacent to $w_2$, and suppose 
	$S_2$ contains at least one additional vertex in addition to $x$. 
	
	There exists a sequence of moves through balanced or nearly balanced partitions after which (1) the partition is balanced, (2) all vertices in $S_1$ have been added to $P_2$, or (3) all vertices in $S_2$ except $x$ have been added to $P_1$. In these moves only vertices in $S_1$ and $S_2\setminus x$ have been reassigned, and in outcomes (2) and (3) the resulting partition has $|P_1| = k_1 + 1$ and $|P_3| = k_3 - 1$.
\end{lem}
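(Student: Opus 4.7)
The plan is to follow the proof of Lemma~\ref{lem:s1s2} almost verbatim, with the single structural change that whenever we extract a vertex to reassign out of $S_2$, we restrict our attention to $S_2 \setminus \{x\}$ so that $x$ is never touched. The main new ingredient is to apply the Shrinkability Lemma not to $S_2$ itself, but to the components of $S_2 \setminus \{x\}$, viewed as components of $P_2 \setminus W$ for the enlarged (but still simply connected) set $W = \{w_2, x\}$.

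First, I would establish shrinkability of $S_1$ exactly as in the previous proof. Since $\cc_1 \in P_1 \setminus S_1$ and $\cc_1 \in bd(T)$, we have $(P_1 \setminus S_1) \cap bd(T) \neq \emptyset$; combined with the cases $S_1 \cap bd(T) = \emptyset$ or $S_1 \cap bd(T) \neq \emptyset$, Condition~\ref{item:nobd} or Condition~\ref{item:cut_2bd} of Lemma~\ref{lem:shrinkable} yields a vertex $v_1 \in S_1$ that can be removed from $P_1$ and added to another district. For $S_2$, I would take $W = \{w_2, x\}$, which is simply connected because $w_2$ and $x$ are adjacent. Each component of $S_2 \setminus \{x\}$ is then a simply connected component of $P_2 \setminus W$, and since $S_2 \cap bd(T) = \emptyset$, each such component avoids $bd(T)$. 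Condition~\ref{item:nobd} of Lemma~\ref{lem:shrinkable} then produces a vertex $v_2 \in S_2 \setminus \{x\}$ that can be removed from $P_2$ and added to another district; in particular $v_2 \neq x$.

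Given $v_1$ and $v_2$, the remainder of the argument parallels Lemma~\ref{lem:s1s2}. If $v_1$ can be added to $P_3$, do so and reach outcome (1). Otherwise $v_1$ can be added to $P_2$; if $v_2$ can be added to $P_3$, perform both moves (using non-adjacency of $S_1$ and $S_2$ to decouple them) to reach outcome (1). If neither can go to $P_3$, add $v_1$ to $P_2$ and $v_2$ to $P_1$, which leaves $|P_1| = k_1+1$, $|P_2| = k_2$, $|P_3| = k_3-1$. Set $\overline{S_1} = S_1 \setminus \{v_1\}$, $\overline{S_2} = S_2 \setminus \{v_2\}$, and verify the hypotheses still hold for the next iteration: $w_1,w_2$ remain cut vertices; $\cc_1 \in P_1 \setminus \overline{S_1}$ since $\cc_1 \notin S_1$; $\overline{S_2} \cap bd(T) = \emptyset$ since $\overline{S_2} \subseteq S_2$; $x \in \overline{S_2}$ remains adjacent to $w_2$ since neither was moved; and $\overline{S_1},\overline{S_2}$ are non-adjacent as subsets of $S_1, S_2$. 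Only vertices in $S_1 \cup (S_2 \setminus \{x\})$ are ever reassigned.

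Since $|\overline{S_1}| + |\overline{S_2}|$ drops by $2$ per iteration, the loop terminates: either outcome (1) occurs along the way, or we hit $\overline{S_1} = \emptyset$ (outcome (2)) or $\overline{S_2} = \{x\}$ (outcome (3)). The main obstacle, and the only real departure from the previous proof, is ensuring $\overline{S_2} \setminus \{x\}$ stays shrinkable after each step. This is handled by the $W = \{w_2, x\}$ viewpoint: the pair $\{w_2, x\}$ never changes, $W$ remains simply connected, and each successive $\overline{S_2}$ is still a component of the current $P_2 \setminus \{w_2\}$ whose restriction minus $x$ lies entirely off $bd(T)$, so Condition~\ref{item:nobd} of Lemma~\ref{lem:shrinkable} continues to apply as long as $\overline{S_2} \setminus \{x\}$ is nonempty, which is exactly the termination condition checked before each extraction of $v_2$.
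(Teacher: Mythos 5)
Your proposal is correct and follows essentially the same route as the paper's proof: the paper likewise obtains $v_2$ by considering the components of $P_2 \setminus \{w_2, x\}$ contained in $S_2$ (your $W = \{w_2,x\}$ viewpoint) and applying Condition~\ref{item:nobd} of the Shrinkability Lemma, then iterates exactly as in Lemma~\ref{lem:s1s2}. The only cosmetic difference is that the paper certifies shrinkability of $S_1$ via Condition~\ref{item:cut_corner} (using $\cc_1 \in P_1$) rather than your Condition~\ref{item:nobd}/Condition~\ref{item:cut_2bd} dichotomy, which is equally valid.
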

\begin{proof}
	We will show that we reach outcome (1), (2), or (3), or we do two moves and find new sets $\overline{S_1}$ and $\overline{S_2}$ that still satisfy the hypotheses of the lemmas but have $|\overline{S_1}| + |\overline{S_2}| < |S_1| + |S_2|$. Because set sizes must be positive integers, we cannot indefinitely continue to find smaller and smaller sets satisfying the hypotheses of the lemma. This means that, unless we reach outcome (1) along the way, we continue until $\overline{S_1} = \emptyset$ or $\overline{S_2} = x$ and we have reached outcome (2) or (3), respectively.  
	
	First, because $\cc_1 \in P_1$ and $S_1$ was created by removing a single cut vertex from $P_1$, by Condition\ref{item:cut_corner} of Lemma~\ref{lem:shrinkable}, $S_1$ is shrinkable. Therefore there exists $v_1 \in S_1$ that can be removed from $S_1$ and added to another district.	
	We note that $P_2 \setminus \{w_2, x\}$ has at least one connected component contained in $S_2$; we will call this component $S_2'$. We note the hypotheses of the lemma imply $|S_2'| \geq 1$. As $S_2 \cap bd(T) = \emptyset$, then $S_2' \cap bd(T) = \emptyset$ as well and $S_2'$ is shrinkable by Condition~\ref{item:nobd} of Lemma~\ref{lem:shrinkable}.
	Therefore we have shown there exists a vertex  $v_2 \in S_2' \subseteq S_2$, $v_2 \neq x$, that can be removed from $S_2$ and added to another district.
	
	If $v_1$ can be added to $P_3$, we do so and the result is a balanced partition.  If $v_1$ can't be added to $P_3$, it can be added to $P_2$. If $v_2$ can be added to $P_3$, we do so and also add $v_1$ to $P_2$; because $S_1$ and $S_2$ are not adjacent, adding $v_1$ to $P_1$ cannot affect whether $v_2$ can be added to $P_3$. This results in a balanced partition. 
	
	If neither $v_1$ nor $v_2$ can be added to $P_3$, we add $v_1$ to $P_2$ and add $v_2$ to $P_1$, both of which must be valid: because $S_1$ and $S_2$ are not adjacent, adding $v_1$ to $P_2$ cannot affect whether $v_2$ can be added to $P_1$. This results in a partition that still has $|P_1| = k_1 + 1$, $|P_2| = k_2$, and $|P_3| = k_3 - 1$. 	
	Let $\overline{S_1}$ be $S_1 \setminus v_1$ and let $\overline{S_2}$ be $S_2 \setminus v_2$, and note $|\overline{S_1}| + |\overline{S_2}| = |S_1| + |S_2| - 2$ and $x \in \overline{S_2}$ because $v_2 \neq x$. If $|\overline{S_1}| = 0$, then all vertices of $S_1$ have been added to $P_2$, and we ave reached outcome (2).  If $|\overline{S_2}| = 1$, then because $x \in \overline{S_2}$, all vertices of $S_2$ except $x$ have been added to $P_1$, and we have reached outcome (3). 
	Otherwise, after $v_1$ has been added to $P_2$, $\overline{S_1}$ is still a component of $P_1 \setminus w_1$ and so $w_1$ is still a cut vertex of $P_1$.  Because $\cc_1 \in P_1 \setminus S_1$, it remains true that $\cc_1 \in P_1 \setminus \overline{S_1}$. 
	Similarly, $\overline{S_2}$ is a component of $P_2 \setminus w_2$, $w_2$ is still a cut vertex of $P_2$, and because $S_2 \cap bd(T) = \emptyset$ it remains true that $\overline{S}_2 \cap bd(T) = \emptyset$.  
 %
	 It also remains true that $x \in \overline{S_2}$ is adjacent to $w_2$ and, because we did not reach outcome (3), $\overline{S_2}$ contains at least one vertex in addition to $x$. 	Furthermore,  $\overline{S_1}$ and $\overline{S_2}$ are not adjacent because $S_1$ and $S_2$ were not. Therefore sets $\overline{S_1}$ and $\overline{S_2}$, smaller than $S_1$ and $S_2$, satisfy the hypotheses of the lemma. We repeat this process until one of the three outcomes occurs, which must eventually happen because discrete sets cannot get arbitrarily small. 
	


	Throughout the entire process outlined above, only vertices originally in $S_1$ and $S_2 \setminus x$ have been reassigned. 
\end{proof}

\subsubsection{Cycle Recombination Lemmas} 

At times we will need to rearrange the partition inside a cycle beyond just reassigning a single vertex, as we did in the previous two lemmas.  We will use a breadth-first search tree to determine how we want to rearrange districts inside this cycle, so begin with the following lemma. 

\begin{lem}\label{lem:BFS-remove}
	Let $U$ be a simply connected subset of the triangular lattice and let $x$ be any vertex of $U$.  In any breadth-first search tree of $U$ rooted at $x$, let $v$ be the last vertex added to this tree.  Then $N(v) \cap U$ is connected and of size less than six.  
\end{lem}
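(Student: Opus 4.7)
The plan is to prove the two conclusions separately, handling the connectedness first and then the size bound by contradiction.

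For the connectedness of $N(v) \cap U$, the key observation is that $v$ must be a leaf of the BFS tree $T$: if $v$ had any child $w$ in $T$, then $w$ would be added after $v$, contradicting that $v$ is the last vertex added. Removing a leaf from a spanning tree yields a spanning tree of the remaining vertex set, so $T \setminus \{v\}$ is a spanning tree of $U \setminus \{v\}$, and in particular $U \setminus \{v\}$ is connected. Lemma~\ref{lem:cutvertex} is stated for $P_i$ in a partition, but its proof uses only the simple connectivity of the ambient set and the planar structure of $\Gtri$. Applying the same argument with $U$ in place of $P_i$, if $N(v) \cap U$ were disconnected, then $U \setminus \{v\}$ would be disconnected as well; since we have just shown $U \setminus \{v\}$ is connected, $N(v) \cap U$ must be connected.

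For the size bound, I would argue by contradiction. Suppose $|N(v) \cap U| = 6$; then the six neighbors of $v$ in $\Gtri$ form a hexagonal 6-cycle $H \subset U$ whose only enclosed vertex in $\Gtri$ is $v$ itself. Since $v$ is the last added, every vertex of $U$ has BFS distance at most $d(v)$, and at least one hex neighbor (the parent $p$ of $v$) has distance exactly $d(v) - 1$. Each hex neighbor's shortest path in $U$ to $x$ cannot pass through $v$, because any such path has length at least $d(v) + 1$; consequently each hex neighbor has distance at most $d(v)$ in $U \setminus \{v\}$ as well. Since $U \setminus \{v\}$ remains connected but contains $H$ enclosing the missing vertex $v$, it is not simply connected.

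To derive a contradiction I would use the planar structure of the triangular lattice. The six hex neighbors have BFS depths in $\{d(v) - 1, d(v)\}$, and because adjacent hex vertices differ in depth by at most one, these depths form contiguous arcs of values around $H$. The case $U = H \cup \{v\}$ is trivially eliminated by direct computation on the seven-vertex hex-plus-center, so $U$ must contain vertices outside $H$; simple connectivity of $U$ then forces one of these outside-hex vertices to lie in $U$. Since distances in $U$ can only exceed distances in $\Gtri$, and since at least two hex neighbors of $v$ lie on the ``far side'' of $v$ from $x$ at infinite-lattice distance $d_{\Gtri}(v,x)+1$, I would show that any $U$-path from $x$ to such a far-side hex neighbor forces either that vertex or an adjacent outside-hex vertex of $U$ to have BFS distance at least $d(v) + 1$, contradicting that $v$ is last.

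The main obstacle is formalizing this ``extension beyond $H$'' argument precisely. The cleanest route is probably to combine two facts: first, that the depth pattern on $H$ must correspond to actual ``entries'' into $H$ from outside-hex vertices of $U$ (each arc of depth-$(d(v){-}1)$ vertices being reachable only through such an entry, by simple connectivity); and second, that at least one entry forces a detour whose length exceeds $d(v)$, because the hex would otherwise be reachable only through $v$ itself. An alternative approach is to analyze $U \setminus \{v\}$ directly: since its BFS tree $T \setminus \{v\}$ must contain all but one edge of $H$, the remaining non-tree edge produces a fundamental cycle around the hole $v$, and bounding the length of this cycle against $d(v)$ yields the same contradiction.
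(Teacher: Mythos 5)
Your treatment of the connectedness claim is correct, and it takes a slightly different (arguably cleaner) route than the paper: you observe that the last-added vertex is a leaf of the BFS tree, so $U \setminus \{v\}$ is spanned by $T \setminus \{v\}$ and hence connected, and then invoke the argument of Lemma~\ref{lem:cutvertex} (which indeed uses only simple connectivity of the ambient set, so it transfers from $P_i$ to $U$). The paper instead argues directly with shortest paths: a second component of $N(v)\cap U$ would contain a vertex $w$ whose shortest path to $x$ either passes through $v$ (impossible, since $v$ is at maximum distance) or closes up with a path through $v$ into a walk encircling a vertex of $N(v)\setminus U$, contradicting simple connectivity. Both arguments rest on the same simple-connectivity contradiction; yours packages it through the cut-vertex characterization, which is fine.

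The size bound is where your proposal has a genuine gap, and you flag it yourself: the decisive step is stated as ``I would show that any $U$-path from $x$ to such a far-side hex neighbor forces \dots BFS distance at least $d(v)+1$,'' and neither of the two routes you float afterward is carried out. Moreover, the one quantitative fact you do establish --- that at least two neighbors of $v$ sit at lattice distance $d_{\Gtri}(x,v)+1$ --- only yields the contradiction when $d_U(x,v) = d_{\Gtri}(x,v)$. Since $d_U \geq d_{\Gtri}$ but $U$ may be arbitrarily ``windy,'' one can have $d(v) = d_U(x,v)$ much larger than $d_{\Gtri}(x,v)$, in which case a far-side neighbor's lattice distance of $d_{\Gtri}(x,v)+1$ is perfectly compatible with its $U$-distance being at most $d(v)$; the whole content of the claim is that simple connectivity prevents the needed short detours around $v$, and that is exactly the part left unproved. (For what it is worth, the paper itself dispatches this half in one sentence, asserting it ``follows immediately from the regular structure of the triangular lattice'' --- so you are wrestling honestly with a step the paper treats as immediate --- but an outline whose key implication is deferred is not yet a proof, and to complete it you would need to make precise why every path from $x$ to the neighbor of $v$ antipodal to its BFS parent either passes through $v$ or forces some vertex to lie at distance exceeding $d(v)$.)
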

\begin{proof}
	Let $l$ be the distance from $x$ to $v$ in $U$, which means $v$ is in level $l$ of the BFS tree. Because $v$ was the last vertex added to the BFS tree, there are no vertices in $U$ at distance larger than $l$ from $v$.  
	
	First, we show that $|N(v) \cap U| < 6$.  This follows immediately from the regular structure of the triangular lattice, as it is impossible for all six neighbors of $v$ to be at distance less than or equal to $l$ from $v$. 
	
	Next, we suppose that $N(v) \cap U$ is disconnected. Let $Q$ be any path from $v$ to $x$ in $U$, and let $N_1$ be the component of $N(v) \cap U$ containing the first vertex along this path.  Suppose, for the sake of contradiction, that there exists a second component of $N(v) \cap U$ containing a vertex $w$. Let $Q'$ be a shortest path from $w$ to $x$ in $S$. If $Q'$ passes through $v$, then because the distance from $v$ to $x$ in $U$ is $l$, $w$ must be at distance at least $l+1$ from $v$.  This is a contradiction, as there are no vertices in $U$ at distance larger than $l$ from $v$.  Suppose instead that $Q'$ does not contain $v$. Combining $Q$ and $Q'$, together with the edge from $v$ to $w$, forms a closed walk in $U$.  However, because this closed walk passes through two different components of $N(v) \cap U$, it necessarily encircles some vertex of $N(v)$ that is not in $S$, namely at least one vertex separating component $N_1$ of $N(v) \cap U$ from component $N_2$ of $N(v) \cap U$.  This is also a contradiction, as we know $S$ is simply connected.  We conclude $N(v) \cap U$ must be connected. 
\end{proof}
 
Here we give the situation in which we rearrange the entire partition enclosed by a cycle, as well as what we can guarantee as a result of this recombination. This is the formal Cycle Recombination Lemma, which was stated informally in the Proof Overview in Section~\ref{sec:pfoverview}. We note it is  one of the two main lemmas in this paper which requires a non-flip recombination step, reassigning multiple vertices to new districts at the same time (the other is Lemma~\ref{lem:groundstate}).

\begin{lem}[Cycle Recombination Lemma]\label{lem:cycle-recom}
	Let $P$ be a partition of $T$.  Let $C$ be a cycle in $T$ consisting entirely of vertices in $P_1$ except for one vertex $x \in P_2$. Suppose all vertices enclosed by $C$ are in $P_1$ or $P_2$. Let $y$ be one of the two vertices in $C$ adjacent to $x$. Let $N_C(y)$ be all vertices in $N(y)$ that are in or inside $C$, and suppose $N_C(y) \cap P_1$ is disconnected. Then there exists a recombination step for $P_1$ and $P_2$, only changing the district assignment of vertices enclosed by $C$ and leaving $|P_1|$ and $|P_2|$ unchanged, after which $N_C(y) \cap P_1$ is connected. 
\end{lem}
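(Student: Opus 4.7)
The plan is to define the recombination step via a BFS on the interior of $C$, and then verify both that it is a valid recombination (preserves the merged district $P_1\cup P_2$ with $|P_1|$ and $|P_2|$ unchanged, and keeps both districts simply connected) and that it produces the desired connectedness of $N_C(y)\cap P_1$.

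Let $U$ denote the set of vertices strictly enclosed by $C$, set $m:=|P_1\cap U|$, and let $r$ be the common neighbor of $x$ and $y$ that lies inside $C$; such an $r$ exists in $U$ whenever $|C|\ge 4$, a case we may assume because the hypothesis that $N_C(y)\cap P_1$ is disconnected rules out $|C|=3$. Build a BFS tree of $U$ rooted at $r$, obtaining an ordering $u_1=r,u_2,\ldots,u_{|U|}$ by non-decreasing BFS distance with ties broken arbitrarily. The recombination step I propose reassigns the interior so that $u_1,\ldots,u_{|U|-m}$ go into the new $P_2$ and $u_{|U|-m+1},\ldots,u_{|U|}$ go into the new $P_1$, while all vertices on $C$ or outside $C$ keep their original districts. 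This clearly preserves $|P_1|$ and $|P_2|$ and only alters assignments enclosed by $C$.

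The first verification is that the new $P_1$ and $P_2$ are simply connected. Setting $A:=\{u_1,\ldots,u_{|U|-m}\}$, I argue by iterating Lemma~\ref{lem:BFS-remove}: at each stage the current last BFS vertex has a connected neighborhood of size less than~$6$ in the current set, so its removal (via Lemma~\ref{lem:remove}) preserves simple connectedness, and after rebuilding a BFS on the smaller set the process continues. After $m$ removals, $A$ is simply connected. Since $r=u_1\in A$ is adjacent to $x$, gluing $A$ to $\{x\}$ and to the unchanged exterior $P_2$ yields a simply connected new $P_2$. An analogous planar argument, anchored on the unchanged boundary piece $C\setminus\{x\}$, gives simple connectedness of the new $P_1$.

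For the main conclusion, let $n_1=r,n_2,\ldots,n_k$ (with $k\le 4$) be the vertices of $N(y)\cap U$ listed in cyclic order along $N(y)$ from the $x$-side to the other $C$-neighbor $z$ of $y$. The arc $n_1,n_2,\ldots,n_j$ lies in $U$ and has length $j-1$; moreover $d_U(r,n_j)=j-1$, since any strictly shorter path from $n_1$ to $n_j$ in the triangular lattice would be forced to use $y$ itself (routine to verify: for $j\ge 3$ the only common lattice neighbors of $n_1$ and $n_j$ besides those along the arc are either $y$ or the intermediate $n_i$). Thus the BFS encounters $n_1,n_2,\ldots,n_k$ strictly in that order, so the subset of $\{n_1,\ldots,n_k\}$ assigned to the new $P_1$ is a (possibly empty) suffix $\{n_\ell,n_{\ell+1},\ldots,n_k\}$. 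Together with the unchanged $z\in P_1$, the set $N_C(y)\cap P_1=\{z\}\cup\{n_\ell,\ldots,n_k\}$ is a contiguous arc of the cycle $N(y)$, hence connected.

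The step I expect to be the main obstacle is the simple-connectedness verification for the new $P_1$: the BFS-peeling cleanly handles $A$, but showing that $B:=U\setminus A$ together with $C\setminus\{x\}$ (and the unchanged exterior $P_1$) has neither disconnections nor holes is delicate, because $B$ need not be internally connected and may have components that only coalesce through $C\setminus\{x\}$. A careful planar Jordan-curve–style argument within the closed disk bounded by $C$ is needed to complete this verification.
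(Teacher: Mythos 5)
Your overall plan is the same as the paper's (assign the last $m$ BFS vertices of the interior to $P_1$ and the rest to $P_2$), but two steps do not go through as written. First, you take $U$ to be the entire set of vertices strictly enclosed by $C$ and build a single BFS of $U$ rooted at $r$. The strict interior of $C$ need not be connected (the paper's Figure~\ref{fig:cycle-recom-exs} shows interiors with two components), so ``the'' BFS ordering $u_1,\dots,u_{|U|}$ of $U$ is not well defined, Lemma~\ref{lem:BFS-remove} (which requires a simply connected, hence connected, set) does not apply, and your count $m=|P_1\cap U|$ mixes in $P_1$-vertices from components your BFS never reaches. The paper avoids this by first reducing to the case where exactly one interior component $S$ of $T\setminus C$ is adjacent to $x$ (shrinking $C$ to a subcycle $C'$ if two are), performing the recombination only on $S$, and leaving any other interior components untouched; you need some version of this reduction before the BFS step makes sense.

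Second, and more importantly, the part you flag as ``the main obstacle'' --- simple connectivity of the new $P_1$ --- is the actual content of the verification, and it is left unproven. The resolution is not a separate Jordan-curve argument over the whole disk but an incremental one: process the vertices of $B$ in reverse BFS order (farthest from the root first). At each stage the current last BFS vertex $v$ of the remaining unassigned set $U'\cup\{x\}$ has, by Lemma~\ref{lem:BFS-remove}, a connected neighborhood of size at most $5$ inside $U'\cup\{x\}$; hence $N(v)\setminus(U'\cup\{x\})$ is connected and nonempty, and every vertex of it lies in the \emph{current} $P_1$ (it is either on $C\setminus\{x\}$ or a previously reassigned vertex). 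So Lemma~\ref{lem:add} keeps $P_1$ simply connected when $v$ is added, and Lemma~\ref{lem:remove} keeps $U'\cup\{x\}$ simply connected when $v$ is deleted. Iterating $m$ times handles exactly the scenario you worry about ($B$ internally disconnected, coalescing only through $C\setminus\{x\}$), because each new vertex attaches to a connected, nonempty arc of the already-built $P_1$. Your argument for the new $P_2$ and your monotonicity argument for $N_C(y)\cap P_1$ (distances of $y$'s interior neighbors increasing along the arc from the $x$-side to $z$) are essentially the paper's and are fine once the two issues above are repaired.
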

\begin{proof}
	Let $C$, $x$, and $y$, be as described, with $N_C(y) \cap P_1$ disconnected. See Figure~\ref{fig:cycle-recom-exs}(a,b) for two examples, which are not meant to be exhaustive. 

	\begin{figure}
		\centering
	\begin{subfigure}[b]{0.3\textwidth}
		\centering
		\includegraphics[scale = 0.8]{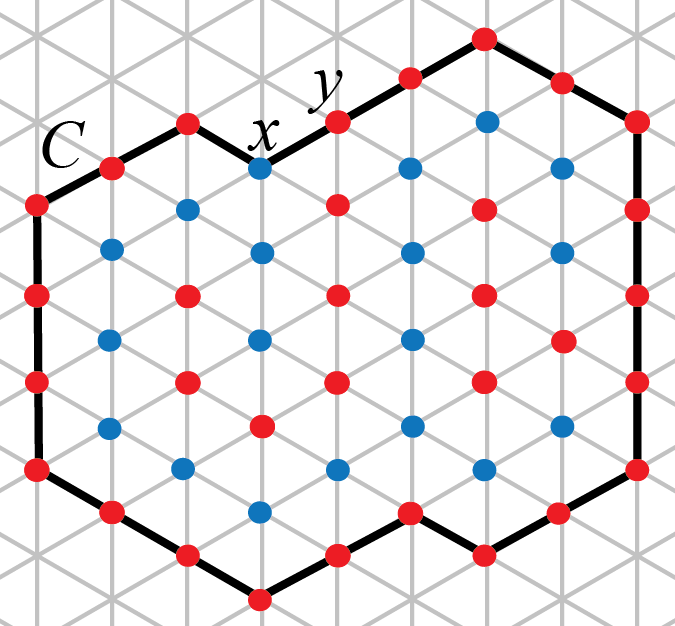}
		\caption{}
		\label{fig:cycle-recom-exs-onepart}
	\end{subfigure}
\hfill
	\begin{subfigure}[b]{0.3\textwidth}
		\centering
		\includegraphics[scale = 0.8]{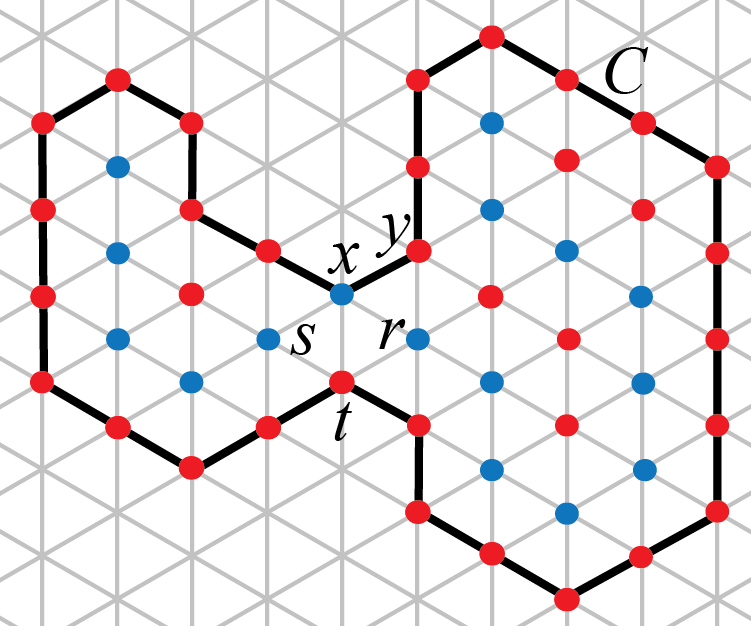}
		\caption{}
		\label{fig:cycle-recom-exs-twoparts}
	\end{subfigure}
\hfill
	\begin{subfigure}[b]{0.3\textwidth}
		\centering
		\includegraphics[scale = 0.8]{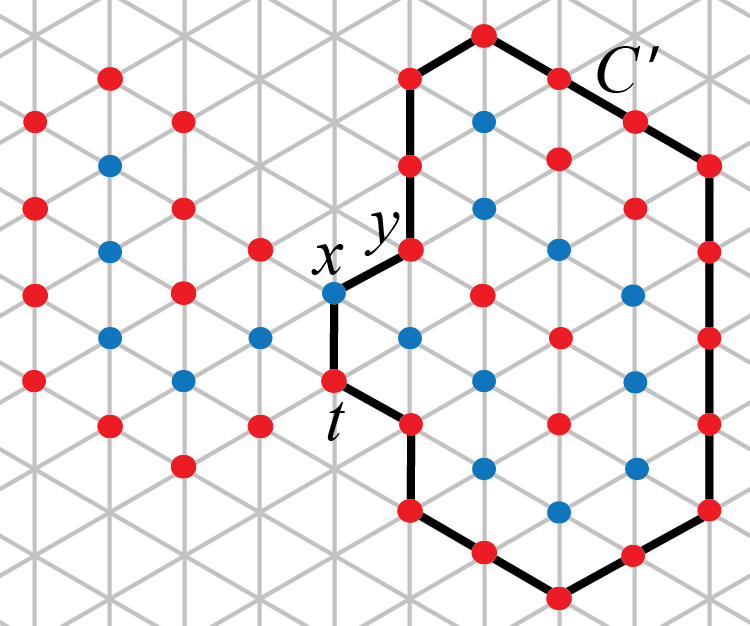}
		\caption{}
		\label{fig:cycle-recom-exs-newC}
	\end{subfigure}
		\caption{From the proof of Lemma~\ref{lem:cycle-recom}. (a) An example of a cycle $C$ and vertices $x$ and $y$ satisfying the hypotheses of the lemma, where $T \setminus C$ has one connected component inside $C$ that is adjacent to $x$. (c) 	 An example of a cycle $C$ and vertices $x$ and $y$ satisfying the hypotheses of the lemma, where $T \setminus C$ has two connected components inside $C$ that are adjacent to $x$.	(c) When $T \setminus C$ has two connected components inside $C$ that are adjacent to $x$, we can replace $C$ with a smaller cycle $C'$ that only has one connected component of $T \setminus C$ inside $C$ and adjacent to $x$. 	}
		\label{fig:cycle-recom-exs}
	\end{figure}
			
Consider the connected components of $T \setminus C$, of which some are inside $C$ and some are outside $C$. Of the ones inside $C$, at least one and at most two can be adjacent to $x$. In Figure~\ref{fig:cycle-recom-exs}(a), $T \setminus C$ has a single component inside $C$, and in Figure~\ref{fig:cycle-recom-exs}(b), $T \setminus C$ has two connected components inside $C$, both of which are adjacent to $x$.  If there are two components of $T \setminus C$ inside $C$ that are adjacent to $x$, then $x$'s neighborhood must include vertices $r$ and $s$ that are in $P_2$ and inside $C$ but in different components of $x$'s 2-neighborhood; the path between them in $N(x)$ that does not include $y$ must include some $t \in P_1 \cap C$. These vertices are labeled in the example in Figure~\ref{fig:cycle-recom-exs}(b). Because $C$ includes a path from $y$ to $t$, we simply replace $C$ with $C'$ consisting of the path from $y$ to $t$ in $C$ together with $x \in P_2$; see Figure~\ref{fig:cycle-recom-exs}(c).  This cycle $C'$ still satisfies the hypotheses of the lemma, as $N_{C'}(y) \cap P_1 = N_{C'}(y) \cap P_1$, but now there is only one internal component of $T \setminus C'$ adjacent to $x$. Therefore, we assume without loss of generality that there is at most one component of of $T \setminus C$ that is inside $C$ and adjacent to $x$. 
	
We now show there must be at least one component of $T \setminus C$ inside $C$ adjacent to $x$.  Because $N_C(y) \cap P_1$ is disconnected, $N_C(y)$ contains only vertices in or inside $C$, and all vertices enclosed by $C$ are in $P_1$ or $P_2$, this means $y$ must have a neighbor inside $C$ that is in $P_2$. There must be a path from this vertex to $x \in P_2$, and this path must be entirely enclosed by $C$ (except for its last vertex $x$) because $C$ (except $x$) is in $P_1$. Therefore this path must be part of a component of $T \setminus C$ inside $C$ and adjacent to $x$, showing such a component must exist. 
We label the unique component of $T \setminus C$ inside $C$ and adjacent to $x$ as $S$, and note $S$ initially includes vertices of both $P_1$ and $P_2$. 	The same reasoning also shows any vertices of $P_2$ in $N_C(y)$ must be in $S$. 
	
We will construct a new partition $P'$ from $P$ using a single recombination step.  $P$ and $P'$ will agree on all vertices of $T$ that are not in $S$. Note $S$ must be simply connected, because it is connected and cannot possibly have any holes. Additionally, every vertex that is not in $S$ but adjacent to a vertex of $S$ must be in $C$ and therefore, unless it is $x$ itself, must be in $P_1$.
	
We now explain why $P_1 \setminus S$ must be simply connected. For any two vertices in $P_1$ that are in or outside $C$, they can be connected by a path in $P_1$ that does not go inside $C$: any path going inside $C$ can be replaced by a path going along $C$ instead. Any vertices inside $C$ but not in $S$ can be connected by a path to $C$, and therefor to all other vertices in $P_1$.  It follows that $P_1 \setminus S$ must be connected.  To see that it is simply connected, note there could not have been a cycle in $P_1$ surrounding $S$ because $S$ contains some vertices of $P_2$, so removing $S$ cannot create a hole in $P_1$. Therefore $P_1 \setminus S$ is simply connected. 
	
We also note that $P_2 \setminus S$ must be simply connected as well. It must be connected because no shortest paths in $P_2$ connecting vertices of $P_2 \setminus S$ will go inside $C$ and $S$ is entirely inside $C$, meaning all vertices of  $P_2 \setminus S$ are connected by paths in this set. It is simply connected because $P_2$ cannot contain a cycle encircling $S$ as $S$ contains vertices not in $P_2$, contradicting that $P_2$ itself is simply connected. 
	
We now explore how to recombine the vertices of $S$ to satisfy the lemma. While we describe the below process in terms of erasing all district assignments in $S$ and then adding vertices to $P_1$ one at a time, the resulting configuration can in fact be reached in one recombination step, recombining $P_1$ and $P_2$ to reach the final partition.

\begin{figure}\centering
	\begin{subfigure}[b]{0.3\textwidth}
	\centering
	\includegraphics[scale = 0.8]{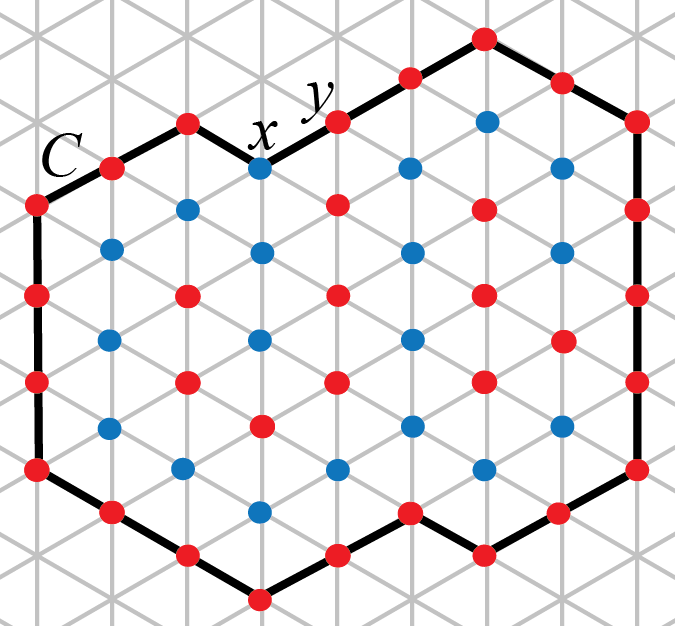}
	\caption{}
	\label{fig:cycle-recom-pf-ex}
\end{subfigure}
\hfill
\begin{subfigure}[b]{0.3\textwidth}
	\centering
	\includegraphics[scale = 0.8]{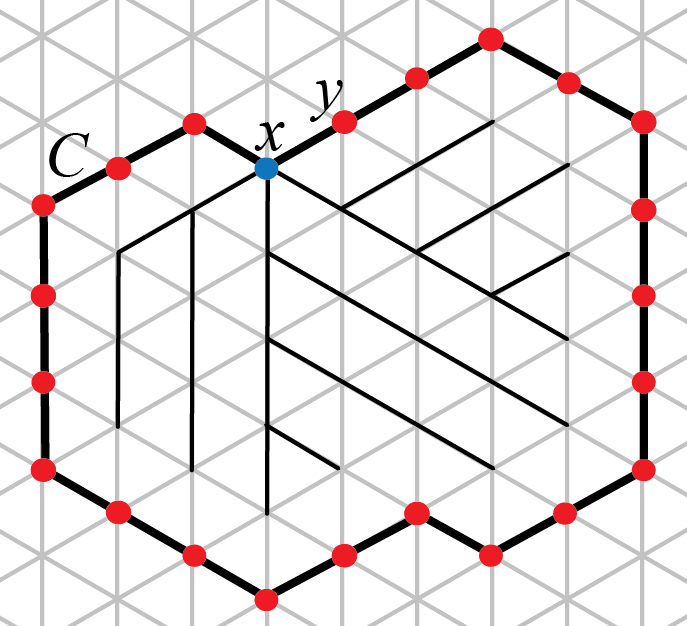}
	\caption{}
	\label{fig:cycle-recom-pf-tree}
\end{subfigure}
\hfill
\begin{subfigure}[b]{0.3\textwidth}
	\centering
	\includegraphics[scale = 0.8]{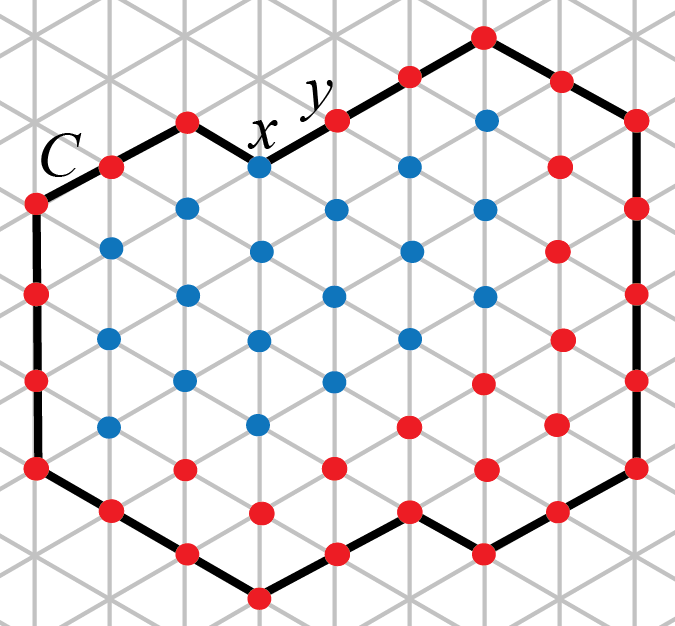}
	\caption{}
	\label{fig:cycle-recom-pf-result}
\end{subfigure}
		
		\caption{An example of the recombination step described in Lemma~\ref{lem:cycle-recom}. For the example shown in (a), $S$ is all $28$ vertices inside $C$, $m = 10$ of which are in $P_1$ (red) while the remaining $18$ are in $P_2$ (blue). 
			 (b) A breadth-first search tree for $S \cup x$ rooted at $x$. (b) The result of the described recombination step inside $C$ which leaves $|P_1|$ and $|P_2|$ unchanged while ensuring $N_C(y) \cap P_1$ is now connected: the last $10$ vertices added to the breadth first search tree are in $P_1$, while all remaining vertices are in~$P_2$. }
		\label{fig:cycle-recom-pf}
	\end{figure}

	
	
Let $m$ be the number of vertices of $P_1$ that are initially in $S$ before any recombination. We initially label all vertices in $S$ as 'unassigned' and put them in a set $U$.  We will sequentially find $m$ vertices that can be removed from $U$ and assigned to $P_1$ such that $P_1$ remains simply connected and, when the remaining unassigned vertices in $U$ are added to $P_2$, $P_2$ is simply connected as well. The way this process is done will ensure $N_C(y) \cap P_1$ is connected.
	
Because $U$ is connected, we create a breadth-first search tree of $U$ rooted at $x$, and let $v$ be the last vertex added to this tree. By Lemma~\ref{lem:BFS-remove} applied to $U \cup x$, $N(v) \cap (U \cup x)$ is connected and of size at most 5. This means $N(v) \setminus (U \cup x)$ must also be connected and have size at least 1.  The neighbors of $v$ that are not in $U \cup x$ must be in $P_1$, so this means $v$'s 1-neighborhood is connected and nonempty. 
%
%
%
%
We add $v$ to $P_1$ and remove it from $U$: $P_1$ remains simply connected by Lemma~\ref{lem:add} and $U \cup \{x\}$ remains simply connected by Lemma~\ref{lem:remove}. We repeat this process on the now-smaller set $U \subseteq S$ until $m$ vertices have been added to $P_1$.  At this point $P_1$ is simply connected and the correct size.  We add all remaining unassigned vertices in $U$ to $P_2$, and $P_2$ remains simply connected because $U \cup \{x\}$ is a tree. Looked at in its totality, this process assigns the last $m$ vertices added to the original breadth first search tree for $S$ to $P_1$, while all remaining vertices in $S$ are in $P_2$. See Figure~\ref{fig:cycle-recom-pf} for an example of this process.


It only remains to check that $N_C(y) \cap P_1$ is now connected.  Note $N_C(y)$ consists of $x$, some sequence of vertices that are inside $C$, and ends with $y$'s other neighbor in $C$, which we will call $z$. Note the vertices of $N(y)$ inside $C$ have their distances from $x$ in $S$ monotonically increasing from $x$ to $z$.  Because vertices are assigned to $P_1$ in order of decreasing distance from $x$, this means it is impossible for a vertex closer to $x$ to be assigned to $P_1$ while a vertex farther from $x$ is assigned to $P_2$. Thus $N_C(y) \cap P_1$ consists of $z$ possibly with some sequential vertices along the path from $z$ back to $x$ in $N(y)$. Regardless, $N_C(y) \cap P_1$ is connected, as desired. 	
\end{proof}

This lemma and Lemma~\ref{lem:groundstate} are the main ones in this paper to use a recombination move rather than a flip move.  We expect, without too much additional work, it could be modified to show the same resulting configuration is achievable with flip moves, but we have not pursued that option as our focus for this paper is irreducibility of recombination chains. 

On one occasion, 
we will need to ensure that a particular neighbor of $x$ inside $C$ remains in $P_2$. This is a straightforward corollary of the previous lemma. 
\begin{lem}\label{lem:cycle-recom-vtx}
	Let $P$ be a partition of $T$.  Let $C$ be a cycle in $T$ consisting entirely of vertices in $P_1$ except for one vertex $x \in P_2$. Suppose all vertices enclosed by $C$ are in $P_1$ or $P_2$. Let $y$ be one of the two vertices in $C$ adjacent to $x$. Let $N_C(y)$ be all vertices in $N(y)$ that are in or inside $C$, and suppose $N_C(y) \cap P_1$ is disconnected. Let $z$ be any neighbor of $x$ inside $C$. Then there exists a recombination step for $P_1$ and $P_2$, only changing the district assignment of vertices enclosed by $C$ and leaving $|P_1|$ and $|P_2|$ unchanged, after which $N_C(y) \cap P_1$ is connected and $z \in P_2$.
\end{lem}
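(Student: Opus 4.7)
The plan is to re-run the BFS construction from Lemma~\ref{lem:cycle-recom} while exercising the freedom to break ties within BFS levels so as to pin $z$ to $P_2$. Recall that in that construction, after possibly reducing $C$ to a smaller cycle $C'$ so that $T \setminus C$ has a single internal component $S$ adjacent to $x$, one builds a BFS tree of $S \cup \{x\}$ rooted at $x$ and assigns the last $m := |S \cap P_1|$ vertices in BFS order to $P_1$, with the remaining $|S|-m$ vertices going to $P_2$. The depth at which each vertex enters the BFS is forced by the graph, but within each depth level the traversal order is free.

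Since $z$ is a neighbor of $x$, it sits at depth $1$ in the BFS tree. Whenever $z$ lies in the component $S$ operated on by the BFS, I break ties so that $z$ is the first vertex visited at level $1$, hence the second vertex overall (after the root $x$). The procedure assigns to $P_1$ only the tail of the BFS order, so $z$ lands in $P_2$ provided $m < |S|$. This inequality follows from the hypothesis that $N_C(y) \cap P_1$ is disconnected: the neighbor $y'' \neq x$ of $y$ in $C$ lies in $P_1$, so any second component of $N_C(y) \cap P_1$ is separated from $y''$ by a $P_2$ vertex of $N(y)$ inside $C$, and such a vertex must lie in a component of $T \setminus C$ adjacent to $x$, witnessing that $S$ contains at least one $P_2$ vertex.

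The only delicate case is when $z$ lies in a component of $T \setminus C$ that would be cut off by the reduction to $C'$. I handle this by extending the single recombination step to perform an analogous BFS on $z$'s component $S_z$: root the BFS at $x$, order $z$ first in level $1$, and assign the last $|S_z \cap P_1|$ vertices to $P_1$ and the remainder to $P_2$. Because this preserves the number of $P_1$ vertices within $S_z$, the global sizes $|P_1|$ and $|P_2|$ stay unchanged, and the two rearrangements compose into a single valid recombination step for $P_1$ and $P_2$ since $S$ and $S_z$ are vertex-disjoint and each interacts with the rest of the partition only through $C$ and $x$. Simple connectivity of the new $P_1$ and $P_2$ follows by applying the arguments of Lemma~\ref{lem:cycle-recom} independently inside each modified component: in both $S$ and $S_z$, newly assigned $P_1$ vertices attach to the all-$P_1$ arc $C \setminus \{x\}$, and newly assigned $P_2$ vertices attach to $x$.

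The main obstacle is this multi-component bookkeeping, in particular verifying that $S_z$ always contains a $P_2$ vertex (so that the BFS has room to place $z$ in $P_2$ without overshooting the $P_1$-count in $S_z$); once that check is in hand, the corollary follows from Lemma~\ref{lem:cycle-recom}'s procedure together with the level-$1$ tiebreaking that places $z$ immediately after the BFS root.
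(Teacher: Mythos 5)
Your proposal is correct and takes essentially the same route as the paper's proof: both exploit the freedom in the BFS construction of Lemma~\ref{lem:cycle-recom} to make $z$ the first vertex visited from the root $x$, so that $z$ cannot lie among the last $m$ vertices assigned to $P_1$, since $m$ is strictly smaller than the number of vertices the BFS covers (there being at least one $P_2$ vertex inside $C$ because $N_C(y)\cap P_1$ is disconnected). Your extra treatment of the case where $z$ sits in an internal component discarded by the reduction to $C'$ actually goes beyond the paper's one-paragraph argument, which silently assumes $z$ lies in the component the BFS operates on; the check you flag there (that $S_z$ contains a vertex of $P_2$) does hold, since the reduction step of Lemma~\ref{lem:cycle-recom} only arises when $x$ has $P_2$-neighbors in both internal components (and in the paper's sole application $z\in P_2$ to begin with).
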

\begin{proof}
	Because $N_C(y)$ is disconnected and $P_3$ is outside $C$, $y$ must have a neighbor inside $C$ that is in $P_2$, meaning there is at least one vertex of $P_1$ inside $C$. The proof of Lemma~\ref{lem:cycle-recom} works with any breadth first search tree for the interior of $T$, so we choose a breadth first search tree in which $z$ is the very first vertex visited from $x$.  This means that $z$ is guaranteed to be in $P_2$ when the vertices inside $C$ are reassigned: only the last $m$ vertices added to the BFS tree are put in $P_1$, and $m$ is strictly less than the number of vertices inside $C$, so as the first vertex added to the tree $z$ will never be placed in $P_1$. 
\end{proof}

\subsection{Correcting Imbalances: Overview of Cases}


We will suppose throughout the rest of this section that we have a partition with districts $P_1$, $P_2$, and $P_3$ where $|P_1| = k_1 + 1$ and $|P_3| = k_3 - 1$. We wish to correct this imbalance without affecting any of the vertices in $P_1 \cap \clei$, and we assume $P_1$ has at least one vertex in $\cc_{i+1}$, that is, that the sweep line procedure is not yet completed. By Lemma~\ref{lem:1-to-change} there exists at least one $v \in P_1 \cap \cgi$ that can be removed from $P_1$ and added to another district; we assume all such $v$ can be added to $P_2$ but not to $P_3$, as otherwise we are done. 

The four cases we will consider, which will show are disjoint and span all possibilities for a partition $P$ where $\cc_1 \in P_1$, are as follows: 
{\begin{enumerate}[label=(\Alph*)]
	\item There exists $a \in P_2 \cap bd(T)$ and $b \in P_3 \cap bd(T)$ that are adjacent
	\item $P_2 \cap bd(T) = \emptyset$
	\item $P_3 \cap bd(T) = \emptyset$ 
	\item No vertex of $P_2$ is adjacent to any vertex of $P_3$
\end{enumerate}}
Cartoonish depictions of what the relationship between $P_1$, $P_2$, and $P_3$ looks like in each of these four cases can be found in Figure~\ref{fig:4cases}.

\renewcommand{\thesubfigure}{\Alph{subfigure}}

\begin{figure}
	\centering 
	\begin{subfigure}[b]{0.35\textwidth}
	\centering
	\includegraphics[scale = 0.45]{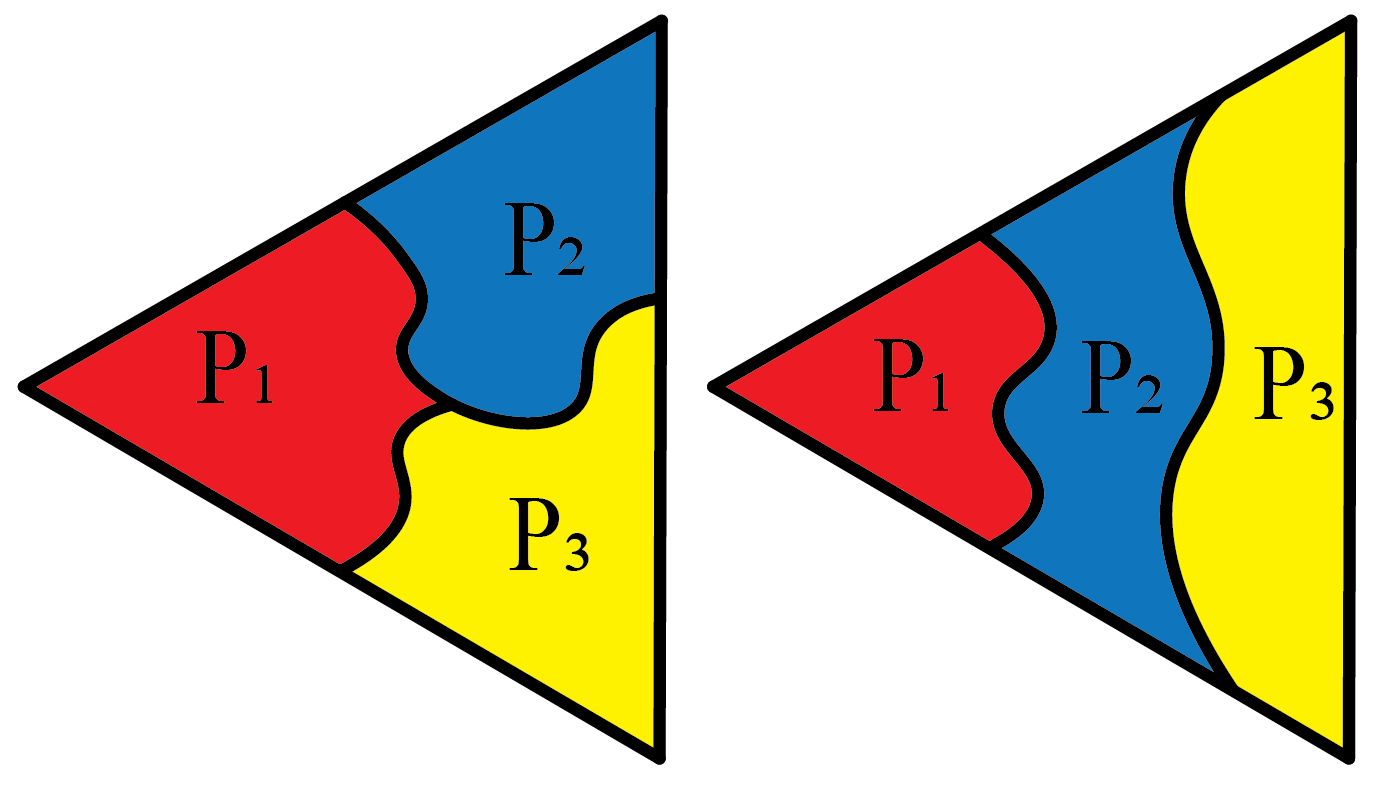}
	\caption{}
	\label{fig:4cases-A}
\end{subfigure}
\hfill
\begin{subfigure}[b]{0.2\textwidth}
	\centering
	\includegraphics[scale = 0.45]{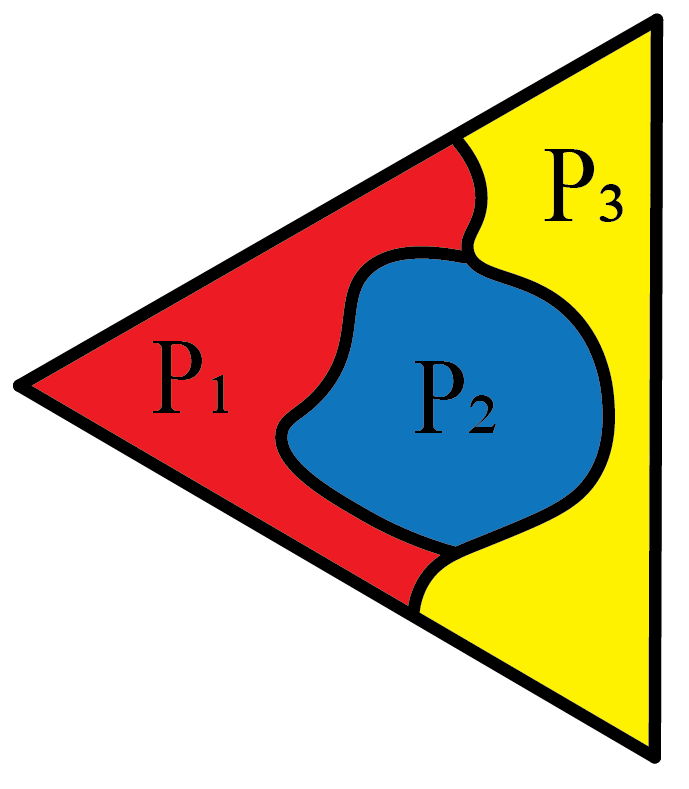}
	\caption{}
	\label{fig:4cases-B}
\end{subfigure}
\hfill
\begin{subfigure}[b]{0.2\textwidth}
	\centering
	\includegraphics[scale = 0.45]{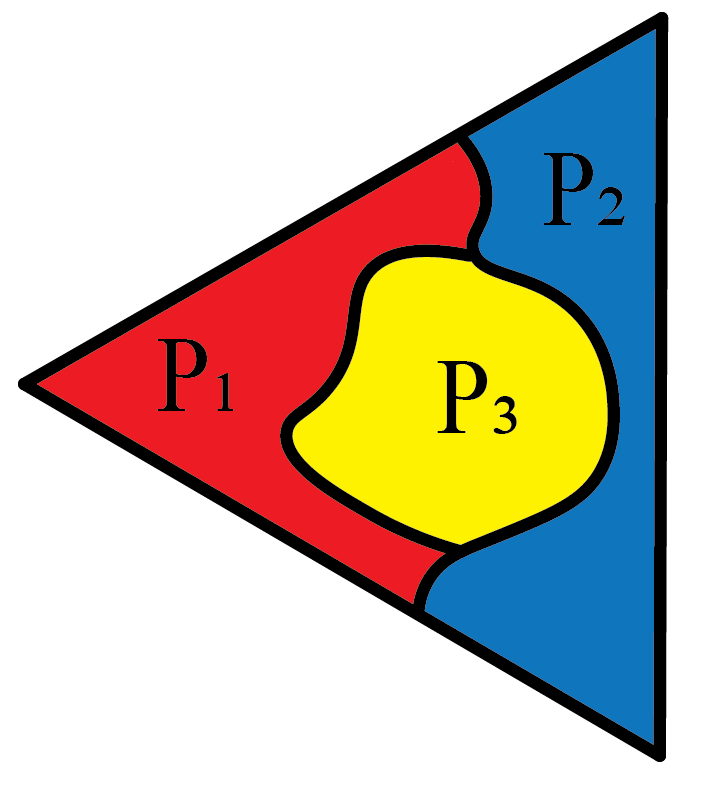}
	\caption{}
	\label{fig:4cases-C}
\end{subfigure}
\hfill
\begin{subfigure}[b]{0.2\textwidth}
	\centering
	\includegraphics[scale = 0.45]{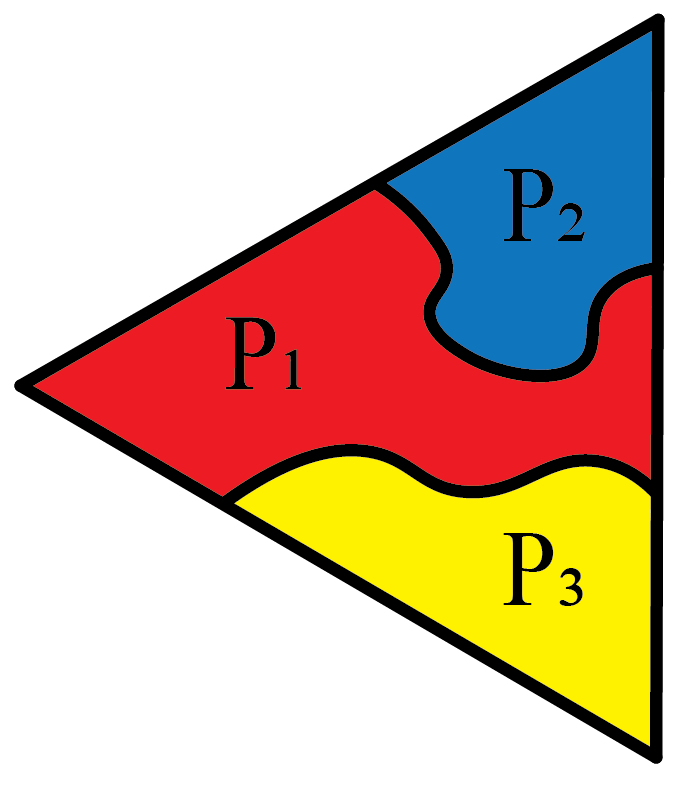}
	\caption{}
	\label{fig:4cases-D}
\end{subfigure}
	\caption{Cartoon representations of the relationship between $P_1$, $P_2$, and $P_3$ in the four rebalancing cases we consider: (A), (B), (C), and (D). There are two possibilities in Case (A), but we handle them simultaneously. }
	\label{fig:4cases}
\end{figure}

\renewcommand{\thesubfigure}{\alph{subfigure}}

We begin with a series of lemmas that show these four cases are disjoint and cover all possible partitions, and some lemmas about additional structures that must exist in the various cases.  We then, in subseqeunt subsections, consider each case individually and show how to reach a balanced partition without changing $P_1 \cap \clei$. 
This lemma first shows Case (B) and Case (C) cannot occur at the same time. 

\begin{lem}\label{lem:p2p3int}
	Let $P$ be a partition. It is not possible to have both $P_3 \cap bd(T) = \emptyset$ and $P_2 \cap bd(T) = \emptyset$. 
\end{lem}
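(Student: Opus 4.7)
The plan is a short proof by contradiction using the simple connectivity of $P_1$. Suppose, for contradiction, that $P_2 \cap bd(T) = \emptyset$ and $P_3 \cap bd(T) = \emptyset$ simultaneously. Since every vertex of $T$ lies in exactly one of $P_1$, $P_2$, $P_3$, this forces $bd(T) \subseteq P_1$.

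Next I would use the fact that $bd(T)$, being the boundary of the triangular region $T$, forms a single cycle in $\Gtri$ enclosing every vertex of $T \setminus bd(T)$. Since each $k_i \geq n \geq 5$, each district has at least $k_i - 1 \geq 4$ vertices, so in particular $P_2$ is nonempty. Pick any $v \in P_2$. Because $P_2 \cap bd(T) = \emptyset$, the vertex $v$ lies in the interior of $T$, so it is enclosed by the cycle $bd(T)$.

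The main (and only) observation is then that $bd(T)$ is a cycle of vertices entirely contained in $P_1$ which encloses $v \notin P_1$. This directly contradicts the assumption that $P_1$ is simply connected, establishing the lemma. There is no real obstacle here: the step to watch is simply verifying that $bd(T)$ really is a cycle enclosing all interior vertices (immediate from the geometry of the triangular region) and that $P_2$ must actually contain an interior vertex (immediate from $|P_2| \geq k_2 - 1 \geq n - 1 \geq 4 > 0$ together with $P_2 \cap bd(T) = \emptyset$).
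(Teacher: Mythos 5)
Your proof is correct and follows essentially the same argument as the paper's: both derive $bd(T) \subseteq P_1$, observe that $bd(T)$ is a cycle in $P_1$ enclosing the (nonempty) interior districts, and conclude that $P_1$ cannot be simply connected. The only difference is that you explicitly justify nonemptiness of $P_2$ from $k_2 \geq n$, a detail the paper leaves implicit.
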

\begin{proof}
	Suppose, for the sake of contradiction, that $P_2 \cap bd(T) = \emptyset$ and $P_3 \cap bd(T) = \emptyset$. Then it must be that every vertex in $bd(T)$ is in $P_1$. The vertices of $bd(T)$ form a cycle. This cycle, consisting entirely of vertices in $P_1$, must encircle all the vertices of $P_2$ and $P_3$.  This implies $P_1$ is not simply connected, a contradiction.  Therefore at least one of $P_2 \cap bd(T) \neq \emptyset$ and $P_3 \cap bd(T) \neq \emptyset$ must be true.
\end{proof}

\noindent It is straightforward to see that if Case (A) occurs, then Cases (B), (C), and (D) cannot occur. This next lemma shows the converse: if Cases (B), (C), and (D) do not occur, then Case (A) must occur. 
\begin{lem}\label{lem:23bd}
	Let $P$ be a partition where $\cc_1 \in P_1$,  $P_2$ and $P_3$ are adjacent, $P_2 \cap bd(T) \neq \emptyset$, and $P_3 \cap bd(T) \neq \emptyset$. Then there exist adjacent vertices $a \in P_2 \cap bd(T)$ and $b \in P_3 \cap bd(T)$. 
\end{lem}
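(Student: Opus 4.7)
The plan is a Jordan-curve/planarity argument carried out in two stages: first show that $(P_2 \cup P_3) \cap bd(T)$ forms a single arc of the boundary cycle $bd(T)$, then locate consecutive arc-vertices in different districts.

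First observe that $P_2 \cup P_3$ is a connected subgraph of $T$, since $P_2$ and $P_3$ are individually connected and share at least one adjacent pair by hypothesis. Suppose for contradiction that $(P_2 \cup P_3) \cap bd(T)$ breaks into two or more arcs along the cycle $bd(T)$. Then there exist four vertices $b_a, b_b, b_c, b_d$ appearing in this cyclic order along $bd(T)$ with $b_a, b_c \in P_1$ and $b_b, b_d \in P_2 \cup P_3$. Let $Q$ be a simple path in the connected set $P_1$ from $b_a$ to $b_c$. In the planar embedding of $T$, $Q$ is a Jordan arc whose endpoints lie on $\partial T$, so it divides $T$ into two regions. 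Because $b_a, b_b, b_c, b_d$ alternate in cyclic order around $\partial T$, the vertices $b_b$ and $b_d$ sit in opposite regions, and therefore any path in $T$ from $b_b$ to $b_d$ must share a vertex with $Q$. But the connected set $P_2 \cup P_3$ contains such a path; it would then meet $Q \subseteq P_1$ at some vertex, contradicting $P_1 \cap (P_2 \cup P_3) = \emptyset$.

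With $(P_2 \cup P_3) \cap bd(T)$ a single arc $A$, the nonempty subsets $P_2 \cap bd(T)$ and $P_3 \cap bd(T)$ both sit inside $A$. Walking along $A$ from one end to the other, since both labels appear they must change at least once, so some two consecutive vertices $a, b$ on $A$ satisfy $a \in P_2$ and $b \in P_3$. Since consecutive vertices of $bd(T)$ are always adjacent in $T$, the pair $(a, b)$ witnesses the conclusion.

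The main obstacle is making the planarity step rigorous at the graph level: I must argue that the path $Q \subseteq P_1$ genuinely separates $b_b$ from $b_d$ within $T$, so that any competing path visits a vertex of $Q$. This rests on the non-crossing property of the planar embedding of $T$ together with a standard Jordan-curve argument applied to $Q$ combined with the arc of $bd(T)$ from $b_c$ to $b_a$ through $b_d$, a technique already used in the paper's proofs (e.g., Lemma~\ref{lem:cutvertex} and Lemma~\ref{lem:alternation}), so it should transfer smoothly.
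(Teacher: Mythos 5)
Your proof is correct, but it runs the separation argument in the opposite direction from the paper, and the comparison is worth noting. The paper's proof starts from an adjacent pair $a' \in P_2$, $b' \in P_3$ (guaranteed by the adjacency hypothesis), extends paths $Q_2 \subseteq P_2$ and $Q_3 \subseteq P_3$ out to boundary vertices $x_2$ and $x_3$, and closes these up through the exterior of $T$ into a cycle containing no vertices of $P_1$; connectivity of $P_1$ together with $\cc_1 \in P_1$ then forces the boundary arc from $x_2$ to $x_3$ avoiding $\cc_1$ to be free of $P_1$, and the transition is found on that arc. You instead draw the separating curve inside $P_1$: a path $Q \subseteq P_1$ between two boundary vertices of $P_1$ would separate two boundary components of $(P_2 \cup P_3) \cap bd(T)$, contradicting the connectivity of $P_2 \cup P_3$ (which is where you consume the adjacency hypothesis). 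This yields the cleaner intermediate fact that the boundary trace of $P_2 \cup P_3$ is a single arc, after which the transition argument is immediate. The two proofs are dual uses of the same planarity principle; yours has the mild advantages that it isolates a reusable structural statement and never actually needs the hypothesis $\cc_1 \in P_1$ (if $P_1 \cap bd(T) = \emptyset$ the "arc" is the whole boundary cycle and the color-change argument still applies), while the paper's is slightly more direct in that it never needs to discuss the number of boundary components. The one step you flag as needing care --- that $Q$ together with an exterior arc genuinely separates $b_b$ from $b_d$ --- is exactly the argument the paper itself uses in Lemma~\ref{lem:bdry-seq}, and it is sound here for the same reason: a path avoiding the vertices of $Q$ avoids $Q$ as a plane curve, since edges of a planar embedding meet only at shared endpoints.
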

\begin{proof}
	Let $a' \in P_2$ and $b' \in P_3$ be adjacent, and suppose it is not true that $a' \in bd(T)$ and $b' \in bd(T)$. Because $P_2 \cap bd(T) \neq \emptyset$ there exists some vertex $x_2 \in P_2 \cap bd(T)$, and because $P_3 \cap bd(T) \neq \emptyset$ there exists some vertex $x_3 \in P_3 \cap bd(T)$.  Let $Q_2$ be any path in $P_2$ from $a'$ to $x_2$, and let $Q_3$ be any path in $P_3$ from $b'$ to $x_3$. Form a cycle $C$ consisting of $Q_2$, $Q_3$, and any path outside $T$ from $x_2$ to $x_3$.  Because this cycle contains no vertices of $P_1$ and $P_1$ is connected, all of $P_1$ must lie on the same side of this cycle.  Consider the path $Q_{bd}$ in $bd(T)$ from $x_2$ to $x_3$ that does not contain $\cc_1 \in P_1$; this path is entirely on the opposite side of $C$ from $P_1$, so does not contain any vertices of $P_1$.  Because it begins at a vertex of $P_2$ and ends at a vertex in $P_3$, somewhere along path $Q_{bd}$ there must be a vertex of $P_2$ adjacent to a vertex of $P_3$. These vertices $a \in P_2$ and $B \in P_3$ satisfy the conclusions of the lemma as $Q_{bd} \subseteq bd(T)$. 	
\end{proof}

\noindent The next lemma tells us that if Case (B) or (C) occurs, then Case (D) cannot also occur. 
\begin{lem}\label{lem:int_adj}
	Let $P$ be a partition such that $P_i \cap bd(T) = \emptyset$.  Then $P_i$ must be adjacent to both other districts. 
\end{lem}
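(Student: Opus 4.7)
The plan is to argue by contradiction. Suppose $P_i \cap bd(T) = \emptyset$ yet $P_i$ is not adjacent to $P_k$, where $\{i,j,k\} = \{1,2,3\}$. I will derive a contradiction with the simple connectivity of $P_j$; by symmetry in $j$ and $k$, the lemma will follow.

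First I would note that, because $P_i \cap bd(T) = \emptyset$, every neighbor of every vertex of $P_i$ lies in $T$, so $N(P_i) \subseteq T$. Combined with the assumption that no edge of $T$ joins $P_i$ to $P_k$, this forces $N(P_i) \subseteq P_j$.

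Next I would produce a cycle $C \subseteq N(P_i)$ that encloses $P_i$ in the planar embedding. Let $R$ be the union of all closed triangular faces of $T$ incident to at least one vertex of $P_i$. Because $P_i$ is simply connected and disjoint from $bd(T)$, $R$ is a closed topological disk whose topological boundary is a simple closed curve made of edges of $T$. An edge $\{u,w\}$ of $T$ lies on this curve precisely when exactly one of its two adjacent faces belongs to $R$; a short case check on which vertices of those two faces lie in $P_i$ forces $u, w \notin P_i$ while the third vertex of the face inside $R$ is a vertex of $P_i$, so both $u, w \in N(P_i)$. Hence the boundary of $R$ is a cycle $C$ in $N(P_i)$ enclosing $P_i$.

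Finally, since $C \subseteq N(P_i) \subseteq P_j$ and $C$ encloses the nonempty set $P_i$, which is disjoint from $P_j$, the cycle $C$ in $P_j$ surrounds vertices not in $P_j$, contradicting the simple connectivity of $P_j$. The main obstacle is producing the enclosing cycle $C$; modulo that, this is exactly the surround-and-strangle cycle argument the paper uses repeatedly (for example in the proof of Lemma~\ref{lem:alternation}).
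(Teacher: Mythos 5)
Your proof is correct and follows essentially the same route as the paper: both form the union of the triangular faces meeting $P_i$, extract the boundary cycle $C$ (every vertex of which lies in $T$ and in $N(P_i)$ because $P_i \cap bd(T) = \emptyset$), and conclude that $C$ cannot lie entirely in one district without violating simple connectivity. The only cosmetic difference is how the enclosing cycle is justified — the paper argues that the separating edge set is a minimal cut and cites Proposition 4.6.1 of Diestel, whereas you assert that the face union is a topological disk; both glosses cover the same geometric fact.
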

\begin{proof}
	Suppose, for the sake of contradiction, that $P_i $ is adjacent to $P_j$  but not $P_k$ for distinct $i,j,k$. Let $F$ be the union of all faces in $\Gtri$ that have at least one vertex in $P_i$.	
	Let $E$ consist of all edges $e$ of $\Gtri$ such that one of the faces $e$ is incident on is in $F$ and the other face $e$ is incident on is in the infinite component of $\Gtri\setminus F$. Note no edges in $E$ can have an endpoint in $P_i$. 
	Note that these edges separate $P_i$ from the infinite region $\Gtri \setminus T$, and all are necessary for this separation, so $E$ is a minimal cut set.  It follows that the edges of $E$ must form a cycle $C$ surrounding $P_i$  (see, for example, Proposition 4.6.1 of~\cite{Diestal}). Because $P_i \cap bd(T) = \emptyset$ and each vertex of $C$ is at distance at most one from a vertex of $P_i$, then every vertex of $C$ is in $T$.  Because both other districts must be simply connected, it is impossible for all vertices of $C$ to be in the same district, so $C$ must contain some vertices of each of the other districts.  Because all vertices in $C$ are incident on a face containing a vertex of $P_i$ and thus adjacent to a vertex of $P_i$, this means $P_i$ must be adjacent to both other districts. 
\end{proof}

\noindent We can now use the previous four lemmas to show that exactly one of the four cases (A), (B), (C), or (D) must apply to any partition satisfying our intermediate step of the rebalancing procedure. 

\begin{lem}\label{lem:4cases}
	Let $P$ be a partition of $T$ such that $\cc_1 \in P_1$. The the partition satisfies exactly one of conditions (A), (B), (C), and (D). 
\end{lem}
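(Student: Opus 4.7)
The plan is to verify pairwise disjointness of the four conditions and then coverage, using the three lemmas (Lemma~\ref{lem:p2p3int}, Lemma~\ref{lem:23bd}, Lemma~\ref{lem:int_adj}) that were just set up as building blocks. None of this requires new geometric insight — it is essentially a finite case-check leveraging what has already been proven.

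First I would handle disjointness. Condition (A) asserts the existence of specific boundary vertices $a \in P_2 \cap bd(T)$ and $b \in P_3 \cap bd(T)$, so (A) immediately rules out (B) (since $a$ witnesses $P_2 \cap bd(T) \neq \emptyset$), rules out (C) (since $b$ witnesses $P_3 \cap bd(T) \neq \emptyset$), and rules out (D) (since $a$ and $b$ are adjacent vertices in $P_2$ and $P_3$ respectively). The incompatibility of (B) and (C) is exactly the content of Lemma~\ref{lem:p2p3int}. For (B) and (D): if $P_2 \cap bd(T) = \emptyset$, then by Lemma~\ref{lem:int_adj} applied with $i = 2$, district $P_2$ must be adjacent to both other districts, including $P_3$, contradicting (D). The argument that (C) and (D) are incompatible is symmetric, applying Lemma~\ref{lem:int_adj} with $i = 3$.

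Next I would show that at least one of the four conditions holds. Suppose for contradiction that (B), (C), and (D) all fail. Then $P_2 \cap bd(T) \neq \emptyset$, $P_3 \cap bd(T) \neq \emptyset$, and some vertex of $P_2$ is adjacent to some vertex of $P_3$ (i.e., $P_2$ and $P_3$ are adjacent). These are exactly the hypotheses of Lemma~\ref{lem:23bd} (together with $\cc_1 \in P_1$, which is assumed), so that lemma delivers adjacent vertices $a \in P_2 \cap bd(T)$ and $b \in P_3 \cap bd(T)$, which is precisely condition (A). Hence at least one of the four conditions holds.

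There is no real obstacle here; the lemma is essentially a bookkeeping consequence of the preceding three lemmas, and the proof is a short chain of implications. The only item requiring any care is making sure each pair is genuinely covered by a cited lemma rather than informally dismissed, in particular the (B)/(D) and (C)/(D) pairs which rely crucially on Lemma~\ref{lem:int_adj}.
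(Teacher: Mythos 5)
Your proof is correct and follows essentially the same route as the paper: coverage via Lemma~\ref{lem:23bd} applied to the negations of (B), (C), (D), and disjointness via the witnesses in (A), Lemma~\ref{lem:p2p3int} for (B)/(C), and Lemma~\ref{lem:int_adj} for (B)/(D) and (C)/(D). Your write-up is in fact slightly more explicit than the paper's in spelling out each pairwise incompatibility, but the argument is the same.
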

\begin{proof}
By Lemma~\ref{lem:23bd}, if Conditions (B), (C), and (D) are not met, then condition (A) must be met; this implies that every partition such that $\cc_1$ satisfies at least one of the four conditions. 

It is straightforward to see that if Condition (A) is met, then none of (B), (C), and (D) can be satisfied. Lemma~\ref{lem:p2p3int} implies (B) and (C) cannot occur at the same time. Finally, Lemma~\ref{lem:int_adj} implies that if (B) or (C) is satisfied, then (D) cannot be satisfied.  This means at most one of these four conditions can be satisfied at once, proving the lemma. 
\end{proof}

We now move on to exploring what structures must be present in each of the cases. These lemmas are included here rather than in each case separately because some apply to multiple cases, and additionally there is some overlap in the ideas and techniques used to show each.  This first lemma applies to both Case (B) and Case (C). 


\begin{lem}\label{lem:2tricolortri} 
	Let $P$ be a partition where $P_2 \cap bd(T) = \emptyset$ or $P_3 \cap bd(T) = \emptyset$. 
	Then there exists exactly two distinct triangular faces $F'$ and $F''$ whose three vertices are in three different districts. Exactly one of these triangular faces has its vertices in $P_1$, $P_2$, and $P_3$ in clockwise order, and the other must have its vertices of $P_1$, $P_2$, and $P_3$ in counterclockwise order.
\end{lem}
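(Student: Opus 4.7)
The plan is to exploit a cycle surrounding $P_2$ (when $P_2 \cap bd(T) = \emptyset$; the case $P_3 \cap bd(T) = \emptyset$ is entirely symmetric) and to count tri-color faces via the arc structure of $P_1$ and $P_3$ on this cycle. Following the construction in the proof of Lemma~\ref{lem:int_adj}, the edges of $\Gtri$ between the union $F$ of triangular faces containing a $P_2$-vertex and its complement form a cycle $C$ whose vertices all lie in $P_1 \cup P_3$ and are each adjacent to some vertex of $P_2$. My first step is to argue that the region enclosed by $C$ in the plane contains only $P_2$-vertices, so that any path in $P_1$ or $P_3$ connecting two points on $C$ must lie in the exterior of $C$.

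The structural heart of the argument is the claim that on $C$, the vertices of $P_1$ form a single arc and the vertices of $P_3$ form a single arc, giving exactly two transition edges between them. I would prove this by contradiction: suppose $P_1 \cap C$ had two disjoint arcs $A_1, A_2$, necessarily separated on $C$ by $P_3$-arcs $B_1, B_2$. Any $P_1$-path from $A_1$ to $A_2$ must lie in the exterior of $C$, and combined with the $C$-arc through $B_1$ it bounds a closed Jordan curve separating $B_2$ from $B_1$ in the exterior. Any $P_3$-path from $B_1$ to $B_2$ would then have to cross this curve (which contains only $P_1$-vertices together with the $P_3$-arc $B_1$) or pass through the interior of $C$ (which is $P_2$-only), both impossible---contradicting $P_3$'s connectedness.

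Given this, I would show tri-color faces are in bijection with transition edges. Each transition edge $v_1 v_3$ on $C$ has two incident triangular faces; the one on the interior side of $C$ has its third vertex inside $C$, hence in $P_2$, producing a tri-color face. Conversely, any tri-color face has a $P_2$-vertex $u$ inside $C$ and neighbors $v_1 \in P_1$, $v_3 \in P_3$ on $C$; the edge $v_1 v_3$ cannot be a chord strictly inside $C$, since such a chord would split the interior into two regions each containing a $P_2$-vertex as the third vertex of the chord's two triangular faces, and a $P_2$-path connecting them could not cross the chord in the planar embedding, contradicting connectedness of $P_2$. Thus $v_1 v_3$ is on $C$, and the face sits at a transition edge. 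This yields exactly two tri-color faces, and traversing $C$ in a consistent direction, one transition goes $P_1 \to P_3$ and the other $P_3 \to P_1$; with the $P_2$-vertex always on the interior side, the two faces have opposite cyclic orderings of districts---one clockwise, one counterclockwise.

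The main obstacle will be proving the first step, that the interior of $C$ contains only $P_2$-vertices. A potentially stray non-$P_2$ vertex $w$ strictly inside $C$ would have all six faces at $w$ contained in $F$, forcing $N(w)$ to have no two consecutive non-$P_2$ vertices, so the non-$P_2$ vertices of $N(w)$ (at most three) would be pairwise non-adjacent in the 6-cycle $N(w)$. Case analysis on the districts of these non-$P_2$ neighbors, combined with simple connectivity of each $P_i$ and planar topology, should lead to contradictions: for instance, if $w \in P_1$ has two non-$P_2$ neighbors both in $P_1$ but separated in $N(w)$ by $P_2$-vertices, then any $P_1$-path connecting them outside $N(w)$ must, together with $w$, bound a cycle in $P_1$ enclosing those intervening $P_2$-vertices---contradicting $P_1$ being simply connected. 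Analogous arguments using the simple connectivity of $P_1$ and $P_3$ should rule out non-$P_2$ neighbors of $w$ from the district opposite to $w$'s. This case analysis is the most technically involved part of the argument.
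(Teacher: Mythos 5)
Your overall strategy is the same as the paper's: build the cycle $C$ separating the faces meeting $P_2$ from the infinite outside (as in Lemma~\ref{lem:int_adj}), observe that $C$ lies in $P_1 \cup P_3$ with every vertex adjacent to $P_2$, and read off the tricolor faces from the $P_1$--$P_3$ transition edges of $C$, with the chirality claim coming from the orientation of $C$. You go further than the paper in trying to pin down the ``exactly two'' part via the single-arc structure and a bijection with transition edges; the paper is content to exhibit two transition edges and assert the rest.

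The genuine gap is your first step: the claim that the region enclosed by $C$ contains only $P_2$-vertices is false, and your proposed case analysis cannot close it because the offending configuration is realizable. Concretely, let $z$ be an interior vertex of $T$ and let $P_2$ consist of five of the six neighbors of $z$ together with a tail extending away from $z$ (a simply connected path, disjoint from $bd(T)$), with $z$ and the sixth neighbor $m$ in $P_1$. Every one of the six faces at $z$ contains one of the five $P_2$-neighbors, so all six lie in $F$, no edge at $z$ lies in the cut set, and $z$ is strictly enclosed by $C$ while belonging to $P_1$. Your case analysis only derives contradictions when $w$ has two non-$P_2$ neighbors separated in $N(w)$; here $z$ has a single non-$P_2$ neighbor and nothing goes wrong, so no contradiction is available. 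This matters because both of your later steps lean on the enclosed region being pure $P_2$: the single-arc claim uses it to force $P_1$- and $P_3$-paths between $C$-vertices into the exterior, and the bijection uses it to conclude that the interior face at a transition edge has its third vertex in $P_2$. The second of these is easily repaired without the false claim (the interior face lies in $F$, so it contains a $P_2$-vertex, which must be the third vertex since the other two are in $P_1$ and $P_3$ --- this is the paper's argument), and with that repair the existence of two oppositely oriented tricolor faces goes through. But the single-arc claim, which is exactly what you need for ``exactly two,'' still lacks a proof: you would need a different planarity argument (in the spirit of Lemma~\ref{lem:alternation}, using that all of $P_2$ lies strictly inside $C$ even though not everything inside $C$ is $P_2$) to rule out $P_1 \cap C$ and $P_3 \cap C$ each splitting into two or more arcs.
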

\begin{proof}
	Without loss of generality, we suppose $P_2 \cap bd(T) = \emptyset$; the same argument applies if $P_3 \cap bd(T) = \emptyset$. 
	Let $F$ be the union of all faces in $\Gtri$ that have at least one vertex in $P_2$.
	For each edge $e$ of $\Gtri$, it is incident on exactly two faces of $\Gtri$. 	
	Let set $E$ contain all edges $e$ of $\Gtri$ incident on both a face of $F$ and a face in the infinite component of $\Gtri \setminus F$.  Note no edges in $E$ can have an endpoint in $P_2$. 
	Note that these edges separate $P_2$ from the infinite region $\Gtri \setminus T$, and all are necessary for this separation, so $E$ is a minimal cut set.  It follows that the edges of $E$ must form a cycle $C$ surrounding $P_2$  (see, for example, Proposition 4.6.1 of~\cite{Diestal}). Because $P_2 \cap bd(T) = \emptyset$ and each vertex of $C$ is at distance at most one from a vertex of $P_2$, then every vertex of $C$ is in $T$.  Because $P_1$ and $P_3$ must both be simply connected, it is impossible for all vertices of $C$ to be in the same district, so $C$ must contain some vertices of $P_1$ and some vertices of $P_3$. In particular, $C$ must contain at least two distinct edges $e$ with one endpoint in $P_1$ and the other endpoint in $P_3$. Each edge is incident on a face of $F$, and so by the definition of $F$ the third vertex in that face of $F$ must be in $P_2$ and inside $C$. Therefore we have found  two triangular faces whose three vertices are in three different districts. It follows easily, because both have their vertex in $P_2$ inside $C$ but have their vertices in $P_1$ and $P_3$ in opposite orders around $C$, that exactly one of these triangular faces has its vertices in $P_1$, $P_2$, and $P_3$ in clockwise order, and the other must have its vertices of $P_1$, $P_2$, and $P_3$ in counterclockwise order. 
\end{proof}


\noindent The following two lemmas give some additional information about what the partition must look like in Case (D).

\begin{lem}\label{lem:23notadj_touchbdry}
	Let $P$ be a partition of $T$ where $P_2$ and $P_3$ are not adjacent. Then $P_2$ and $P_3$ must both touch the boundary of $T$. 
\end{lem}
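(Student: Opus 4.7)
The plan is to deduce this as an immediate consequence of the already-established Lemma~\ref{lem:int_adj}, which says that any district not meeting $bd(T)$ must be adjacent to both of the other two districts. Since the hypothesis here excludes the possibility of $P_2$--$P_3$ adjacency, Lemma~\ref{lem:int_adj} will force neither of these two districts to lie in the interior.

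Concretely, I would argue by contradiction for each district separately. First, suppose $P_2 \cap bd(T) = \emptyset$. Applying Lemma~\ref{lem:int_adj} with $i = 2$ yields that $P_2$ is adjacent to both remaining districts, in particular to $P_3$. This directly contradicts the hypothesis that $P_2$ and $P_3$ are not adjacent, so we must have $P_2 \cap bd(T) \neq \emptyset$. Next, by an identical argument with the roles of 2 and 3 interchanged, suppose $P_3 \cap bd(T) = \emptyset$. Then Lemma~\ref{lem:int_adj} with $i = 3$ gives that $P_3$ is adjacent to $P_2$, again contradicting the hypothesis. Hence $P_3 \cap bd(T) \neq \emptyset$ as well, and both districts meet the boundary.

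There is no genuine obstacle here: the statement is essentially a restatement of the contrapositive of Lemma~\ref{lem:int_adj}. The only thing to double-check is that Lemma~\ref{lem:int_adj} is stated symmetrically in the indices (it is, since it applies to any district $P_i$ satisfying $P_i \cap bd(T) = \emptyset$), so that the same argument works for both $P_2$ and $P_3$ without any change. No additional combinatorial or topological machinery is needed beyond what Lemma~\ref{lem:int_adj} already supplies via its minimal-cut/surrounding-cycle argument.
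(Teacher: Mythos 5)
Your proposal is correct and is essentially identical to the paper's own proof: both argue by contradiction, applying Lemma~\ref{lem:int_adj} to each of $P_2$ and $P_3$ in turn to conclude that an interior district would be forced to be adjacent to the other, contradicting the hypothesis. No gaps.
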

\begin{proof}
	If $P_2 \cap bd(T) = \emptyset$ then by Lemma~\ref{lem:int_adj}, $P_2$ must be adjacent to $P_3$, a contradiction, so we conclude $P_2 \cap bd(T) \neq \emptyset$. The same argument holds for $P_3$. 	
\end{proof}
\begin{lem}\label{lem:23notadj_bdryconn}
	Let $P$ be a partition of $T$ where $P_2$ and $P_3$ are not adjacent. Then $P_2 \cap bd(T)$ is connected and $P_3 \cap bd(T)$ is connected.
\end{lem}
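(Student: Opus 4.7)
The plan is to prove $P_2 \cap bd(T)$ is a single connected arc of the boundary cycle; the claim for $P_3 \cap bd(T)$ follows symmetrically.  The central idea is a Jordan-curve-style argument: if $P_2 \cap bd(T)$ had two distinct components along $bd(T)$, a connecting path inside $P_2$ would partition $T$ into two subregions, but the connectivity of $P_1 \cup P_3 = T \setminus P_2$ would force a crossing of that partition without passing through $P_2$, contradicting planarity of $\Gtri$.

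First I would establish that $T \setminus P_2 = P_1 \cup P_3$ is connected.  Since $P_3 \subsetneq T$ and $P_3$ is not adjacent to $P_2$ by hypothesis, every neighbor of $P_3$ outside $P_3$ must lie in $P_1$, so $P_3$ is adjacent to $P_1$.  Combined with the individual connectivity of $P_1$ and $P_3$, this gives connectivity of the union.  Next, suppose for contradiction that $P_2 \cap bd(T)$ has two distinct components $A_1$ and $A_2$, splitting $bd(T)$ cyclically into $A_1 \cup B \cup A_2 \cup B'$.  Each of $B$ and $B'$ must contain some vertex outside $P_2$, as otherwise $A_1$ and $A_2$ would belong to one component.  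Pick $u \in B \setminus P_2$ and $u' \in B' \setminus P_2$, and use the connectivity of $P_2$ to choose a simple path $Q \subseteq P_2$ from a vertex $p \in A_1$ to a vertex $p' \in A_2$.

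The key step is planar separation: $T$ is a topological disk and $Q$ is a simple graph path whose endpoints both lie on $bd(T)$, so $Q$ together with one of the two boundary arcs between $p$ and $p'$ forms a cycle enclosing a subregion of $T$.  This splits $T$ into two parts $T_B \ni u$ and $T_{B'} \ni u'$, and by planarity of $\Gtri$ no edge of the lattice can have one endpoint in $T_B \setminus Q$ and the other in $T_{B'} \setminus Q$.  Since $T \setminus P_2$ is connected in the graph, there is a path $R$ from $u$ to $u'$ in $P_1 \cup P_3$; this path avoids $Q \subseteq P_2$ but must nevertheless cross from $T_B$ to $T_{B'}$, a contradiction.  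Therefore $P_2 \cap bd(T)$ is connected, and by the same argument with the roles of $P_2$ and $P_3$ swapped, so is $P_3 \cap bd(T)$.

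The main obstacle is merely stating the planar separation fact cleanly, but this is the same planarity principle already used repeatedly in the paper (for example in the Alternation Lemma and in Lemma~\ref{lem:s1_c1}): a simple path between two boundary vertices of $T$, together with an arc of $bd(T)$, forms a cycle whose interior is a subset of $T$, and no lattice edge crosses this cycle.  Given how frequently this style of argument appears, the proof should slot in cleanly with minimal new machinery.
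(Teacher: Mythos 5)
Your proof is correct and follows essentially the same route as the paper: a path inside $P_2$ joining two components of $P_2 \cap bd(T)$ separates the disk $T$, and a connected set disjoint from $P_2$ cannot have vertices on both sides. The paper reaches the contradiction slightly more directly --- since $P_2$ and $P_3$ are not adjacent, the two boundary vertices immediately flanking one component of $P_2 \cap bd(T)$ must both lie in $P_1$, so connectivity of $P_1$ alone is violated and your auxiliary step showing $P_1 \cup P_3$ is connected is not needed.
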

\begin{proof}
	Suppose, for the sake of contradiction, that $P_2\cap bd(T)$ is not connected. Let $Q$ be one connected component of $P_2 \cap bd(T)$ and let $Q'$ be a different connected component of $bd(T)$. Note these two components must be connected by a path $Q'' \subseteq P_2$. Consider the vertices in $bd(T)$ immediately before and after $Q$: because $P_2$ and $P_3$ are not adjacent, they must both be in $P_1$.  However, this is a contradiction, as path $Q''\subseteq P_2$ separates these two vertices from each other and $P_1$ is connected.  We conclude there can be at most one component in $bd(T) \cap P_2$ and, for the same reasons, in $bd(T) \cap P_3$. 
\end{proof}

We will also use the following lemma throughout multiple cases, so we include it here. It is a striaghforward consequence of our assumption that each $k_i \geq n$.  
\begin{lem}\label{lem:i_n-2}
	Let $P$ be a partition such that $\cli \subseteq P_1$ and $P_1 \cap \cgi \neq \emptyset$. Then $i \leq n-2$. 
\end{lem}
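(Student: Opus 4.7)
The plan is a short counting argument: the hypotheses force $|P_1|$ to be at least about $i^2/2$, but the balanced or nearly balanced condition together with the assumption $k_2, k_3 \geq n$ caps $|P_1|$ at about $n^2/2 - n$, and comparing these two bounds immediately rules out $i \in \{n-1, n\}$.

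First I would record that in the left-to-right, top-to-bottom ordering fixed in Section~\ref{sec:pfoverview}, column $\cc_j$ of $T$ contains exactly $j$ vertices, so
\[
|\cli| = 1 + 2 + \cdots + (i-1) = \frac{i(i-1)}{2}.
\]
The hypothesis $\cli \subseteq P_1$ together with $P_1 \cap \cgi \neq \emptyset$ then yields the lower bound $|P_1| \geq i(i-1)/2 + 1$.

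Next I would invoke that $P \in \Omega$, so $|P_1| \leq k_1 + 1$, which combined with the previous bound gives $k_1 \geq i(i-1)/2$. On the other hand, $k_1 = n(n+1)/2 - k_2 - k_3 \leq n(n+1)/2 - 2n = n(n-3)/2$, using the hypothesis $k_2, k_3 \geq n$. Combining these yields $i(i-1) \leq n(n-3)$. Now checking the two forbidden values: for $i = n$ the left side is $n(n-1) = n(n-3) + 2n > n(n-3)$, and for $i = n-1$ it is $(n-1)(n-2) = n(n-3) + 2 > n(n-3)$; both violate the inequality. Therefore $i \leq n-2$.

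There is no real obstacle here: the lemma is a direct arithmetic consequence of the standing hypotheses on $T$ and the $k_j$'s. Its purpose in what follows is just to guarantee that whenever the sweep-line procedure is still in progress (so there remain vertices of $P_1$ strictly to the right of $\cc_i$), the index $i$ is small enough that subsequent rebalancing work has room to operate in columns $\cc_{i+1}, \ldots, \cc_n$.
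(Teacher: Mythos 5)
Your proof is correct and is essentially the paper's argument viewed from the complementary side: the paper bounds $|P_2 \cup P_3|$ from above by the size of the last one or two columns and notes it must be at least $k_2 + k_3 - 1 \geq 2n-1$, while you bound $|P_1|$ from below by $|\cli| + 1$ and from above by $k_1 + 1 \leq n(n-3)/2 + 1$; since the total $n(n+1)/2$ is fixed these are the same counting argument. Both rely on the same (implicit in the lemma statement) assumption that $P$ is balanced or nearly balanced, so no gap beyond what the paper itself has.
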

\begin{proof}
	Recall $T$ is a triangle of side length $n$, and $k_1, k_2, k_3 \geq n$. Assume, for the sake of contradiction, that $i = n-1$ or $i = n$. In either case the remaining vertices in $T$ that are not in $P_1$ would be strictly less than $2n-1$.  The number of vertices in $P_2$ and $P_3$ in a balanced or nearly balanced partition is at least $k_2 + k_3 - 1 \geq 2n-1$. This is a contradiction, so $i = n-1$ or $i = n$ is not possible. We conclude that $i \leq n-2$. 
\end{proof}

\subsection{Case A: $P_2$ and $P_3$ adjacent along boundary}\label{sec:23bdryadj}

In this case, there exists $a \in P_2 \cap bd(T)$ and $b \in P_3 \cap bd(T)$ that are adjacent. We split our main result into two cases: when removing $a$ from $P_2$ and adding it to $P_3$ is a valid move, and when it is not. Each of the following two lemmas considers one of these cases. 

\begin{lem}\label{lem:23bdryadj_abvalid}
		Let $P$ be a partition such that $\cli \subseteq P_1$, $P_1 \cap \cgi \neq \emptyset$, $|P_1| = k_1 + 1$, and $|P_3| = k_3 - 1$.  Suppose there exists adjacent vertices $a \in P_2 \cap bd(T)$ and $b \in P_3 \cap bd(T)$ where $a$'s 2-neighborhood and 3-neighborhood are connected. There there exists a sequence of moves resulting in a balanced partition that does not reassign any vertices in $P_1 \cap \clei$. 
\end{lem}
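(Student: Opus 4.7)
My plan is to perform exactly two moves that together restore balance without disturbing $P_1 \cap \clei$: (a) flip $a$ from $P_2$ to $P_3$, shifting $|P_2|$ down by one and $|P_3|$ up by one; and (b) remove some vertex $v \in P_1 \cap \cgi$ from $P_1$ and add it to $P_2$, shifting $|P_1|$ down by one and $|P_2|$ up by one. After both, $|P_1| = k_1$, $|P_2| = k_2$, $|P_3| = k_3$, and only vertices strictly past column $\cc_{i-1}$ have moved.

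Move (a) is immediately valid by the Flip Lemma (Lemma~\ref{lem:remove-add}): $a$'s 2-neighborhood is connected by hypothesis, and $a$'s 3-neighborhood is connected by hypothesis and nonempty because $b \in N(a) \cap P_3$. For move (b), Lemma~\ref{lem:1-to-change} supplies $v \in P_1 \cap \cgi$ that can be removed from $P_1$ and added to another district; by the standing assumption of this rebalancing section (that every such removable vertex can be added only to $P_2$, not $P_3$, else we are already done), such a $v$ can be added to $P_2$.

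The main question is compatibility of the two moves. If $v$ is not adjacent to $a$, the moves commute trivially: neither alters the relevant neighborhoods of the other, so either order preserves validity. If $v$ and $a$ are adjacent, I would carry out (b) then (a). After adding $v$ to $P_2$, $a$'s 2-neighborhood becomes $(N(a)\cap P_2)\cup\{v\}$; this remains connected provided $v$ is adjacent to some vertex already in $N(a)\cap P_2$, which in the triangular lattice means one of the two common neighbors of $v$ and $a$ lies in $P_2$. The only way this can fail is if $v$'s 2-neighborhood consists solely of $\{a\}$ (since if $v$ had any other neighbor in $P_2$, connectedness of $v$'s 2-neighborhood in $N(v)$ forces it to be adjacent to $a$, hence to be a common neighbor of $v$ and $a$ in $P_2$). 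In that narrow edge case, I would instead perform (a) first; then $v$ gains $a \in P_3$ as a new neighbor, and I would reapply Lemma~\ref{lem:1-to-change} on the post-flip partition to obtain a possibly different vertex $v'$. Because the standing assumption of the section is an ``otherwise we are done'' clause, if the new $v'$ can be added to $P_3$ we finish immediately (since $|P_3| = k_3$ already, this direction would require re-examining, so I would instead argue $v'$ is addable to $P_2$); the key observation is that flipping one boundary vertex $a$ does not destroy the structural conditions required by Lemma~\ref{lem:1-to-change}.

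The main obstacle I expect is the narrow edge case in which $v$ is adjacent to $a$ and $v$'s 2-neighborhood is exactly $\{a\}$; disentangling the two moves here requires a careful local argument on the six-vertex neighborhood of $a$, using the hypotheses that $a \in bd(T)$, $b \in P_3\cap bd(T)$, and that both 2- and 3-neighborhoods of $a$ are connected, together with the Alternation Lemma (Lemma~\ref{lem:alternation}) applied to the post-flip configuration to guarantee an alternate vertex $v'$ with $v'$'s 2-neighborhood nonempty and connected.
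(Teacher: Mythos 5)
Your two-move plan and your reduction of the problem are correct as far as they go: when $v \notin N(a)$ the moves are independent, and when $v \in N(a)$ your observation that connectedness of $N(v)\cap P_2$ forces a common neighbor of $v$ and $a$ to lie in $P_2$ (unless $N(v)\cap P_2 = \{a\}$) is exactly right, and in that situation adding $v$ to $P_2$ merely extends the connected set $N(a)\cap P_2$ along the cycle $N(a)$, so the subsequent flip of $a$ into $P_3$ remains valid. This matches the first part of the paper's argument. The problem is that the ``narrow edge case'' you defer --- $v \in N(a)$ with $N(v)\cap P_2 = \{a\}$ --- is not narrow at all: it is where essentially all of the work in the paper's proof lives, and your sketched resolution does not close it. Performing the flip of $a$ first leaves $|P_1| = k_1+1$, $|P_2| = k_2-1$, $|P_3| = k_3$, after which you specifically need a vertex of $P_1 \cap \cgi$ that can be added to $P_2$. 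Lemma~\ref{lem:1-to-change} only guarantees a removable vertex addable to \emph{some} other district; if the $v'$ it produces can only go to $P_3$, you land at $|P_2| = k_2-1$, $|P_3| = k_3+1$, which is nearly balanced but not balanced, and you have no mechanism to recover. You acknowledge you ``would instead argue $v'$ is addable to $P_2$,'' but no such argument is given, and none follows from the lemmas you cite: the standing assumption that removable vertices go only to $P_2$ was established for the pre-flip partition and does not transfer.

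For comparison, the paper never flips $a$ until it has a companion vertex in hand. In the bad configuration it sets $W = P_1 \cap N(a)$ (which it shows has size at most $2$, and size $1$ is impossible in the bad case because then one of $v$'s neighbors in $N(a)$ would have to be in $P_2$), and then either finds a replacement $v'$ in a component of $P_1\setminus W$ away from $N(a)$ via Lemma~\ref{lem:s1_c1} and Condition~(III) of the Shrinkability Lemma (Lemma~\ref{lem:shrinkable}), or, when $P_1\setminus W$ is connected, runs a bespoke analysis of the two possible local configurations around $a$, invoking Lemma~\ref{lem:p1_3nbhd_discon_bdry} and the Alternation Lemma and sometimes adding a vertex directly to $P_3$ instead. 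That case analysis, including the verification that the replacement vertex lies in $\cgi$, is the substance of the lemma; your proposal as written has a genuine gap there.
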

\begin{proof}
	Let $v \in P_1 \cap \cgi$ be a vertex that can be removed from $P_1$ and added to another district; $v$ exists by Lemma~\ref{lem:1-to-change}. If $v$ can be added to $P_3$ we are done, so we assume $v$ can only be added to $P_2$. 
	Note because $a$ has a connected 2-neighborhood and 3-neighborhood, $a$ can be removed from $P_2$ and added to $P_3$.  If $v \notin N(a)$, we add $v$ to $P_2$ and subsequently add $a$ to $P_3$, producing a balanced partition. However, if $v \in N(a)$, adding $v$ to $P_2$ could potentially disconnected $a$'s 2-neighborhood. We consider the case $v \in N(a)$ in more detail.  
	
	We examine $P_1 \cap N(a)$, which must be nonempty if $v \in N(a)$.  
	First, suppose the component of $N(a) \cap P_1$ containing $v$ has only one vertex, $v$. Because this component of $N(a) \cap P_1$ does not contain any vertices other than $v$, both of $v$'s neighbors in $N(a)$ must be in $P_2$, in $P_3$, or outside of $T$. Note $(P_3 \cup (\Gtri \setminus T)) \cap N(a)$ must be connected because $a$ has a connected 3-neighborhood and $b \in P_3 \cap bd(T)$.  Therefore, it cannot be the case that both of $v$'s neighbors in $N(a)$ are in $P_3 \cup (\Gtri \setminus T)$: this would mean $N(a) \subseteq \{v\} \cup P_3 \cup (\Gtri \setminus T)$, impossible because $a$ must have at least one neighbor in $P_2$. We conclude $v$ must have at least one neighbor in $P_2$. This means adding $v$ to $P_2$ cannot result in $a$'s 2-neighborhood becoming disconnected, because it simply grows an existing component of $N(a) \cap P_2$ rather than creating a new one. In this case, we add $v$ to $P_2$ and then add $a$ to $P_3$, which will remain a valid move.

	If the component of $P_1 \cap N(a)$ containing $v$ is of size 2, we must be more careful. Note because $a$ only has four neighbors in $T$, has neighbor $b \in P_3$, and has at least one neighbor in $P_2$ as $|P_2| > 1$, two is the largest possible size for $P_1 \cap N(a)$. Let $W = P_1 \cap N(a)$ be this component of size 2. 
	If $P_1 \setminus W$ has at least two connected components, we let $S$ be one component not containing $\cc_1$. We will show using the Shrinkability Lemma applied to $S$ (Lemma~\ref{lem:shrinkable}) we can find another vertex $v' \in P_1 \cap \cgi$ that can be removed from $P_1$ and added to another district, where $v' \notin N(a)$. 
	 To apply Condition~\ref{item:exp_corner} to this lemma, we first check if $S$ has at least one exposed vertex.  Let $c \in S$ be a vertex adjacent to $W$, and let $d$ be a vertex adjacent to $W$ in the component of $P_1 \setminus W$ containing $\cc_1$. In $N(W)$, which is a cycle of length $8$, there are two paths from $c$ to $d$, one clockwise and one counterclockwise.  Because $c$ and $d$ are in separate components of $P_1 \setminus W$, each of these paths must contain a vertex not in $P_1$. Let $e$ be the first vertex that is not in $P_1$ on one of these paths from $c$ to $d$ in $N(W)$, and let $f$ be the first vertex that is not in $P_1$ on the other of these paths. It is not possible that both $e \notin T$ and $f \notin T$, so it must be that at least one of $e$ and $f$ is in $P_2$ or $P_3$, meaning $S$ has an exposed vertex, the last vertex on the appropriate path before $e$ or $f$.  By Condition~\ref{item:exp_corner} of Lemma~\ref{lem:shrinkable}, there exists a vertex $v' \in S$ that can be removed from $P_1$ and added to another district; such a vertex must be in $\cgi$ by Lemma~\ref{lem:s1_c1}. If $v'$ can be added to $P_3$, we do so and are done.  If $v'$ can only be added to $P_2$, we do so.  Note that $v' \notin N(a)$, so $a$ still has a connected 2-neighborhood and 3-neighborhood.  We then add $a$ to $P_3$ to reach a balanced partition.

	If $W = P_1 \cap N(a)$ is a component of size 2 and $P_1 \setminus W$ only has one component, additional cases are required. Label $a$ and $b$'s common neighbor in $T$ as $c$, and label $a$ and $c$'s other common neighbor as $d$, and label $a$ and $d$'s other common neighbor as $e$, where $e \in bd(T)$; see Figure~\ref{fig:23bdryadj_size2cases}(a). Because $a$ only has four neighbors in $T$, there are only two ways for $P_1 \cap N(a)$ to have a component of size 2: either $c$ and $d$ are in $P_1$ while $e \in P_2$, or $d$ and $e$ are in $P_1$ while $c \in P_2$; see Figure~\ref{fig:23bdryadj_size2cases}(b,c).
	
	\begin{figure}
	\begin{subfigure}[b]{0.3\textwidth}
		\centering
		\includegraphics[scale = 1]{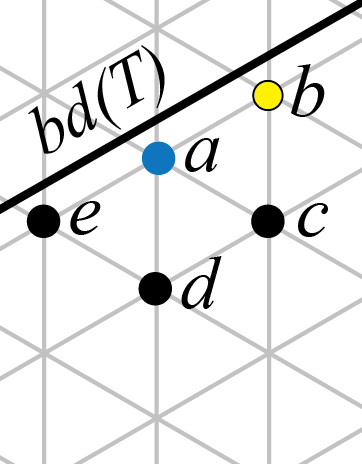}
		\caption{}
		\label{fig:23bdryadj_size2cases_nbhdlabels}
	\end{subfigure}
	\hfill
	\begin{subfigure}[b]{0.3\textwidth}
		\centering
		\includegraphics[scale = 1]{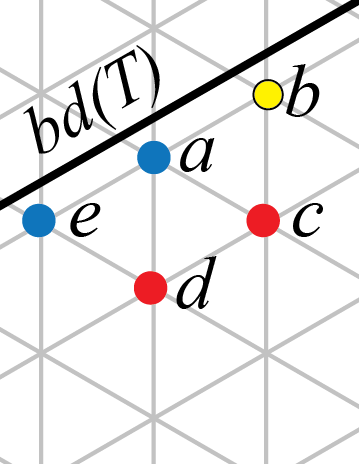}
		\caption{}
		\label{fig:23bdryadj_size2cases-cdP1}
	\end{subfigure}
	\hfill
	\begin{subfigure}[b]{0.3\textwidth}
		\centering
		\includegraphics[scale = 1]{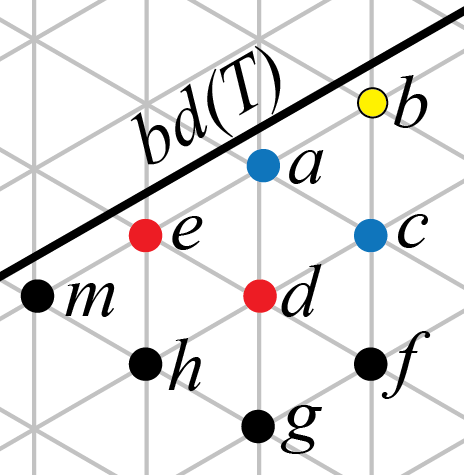}
		\caption{}
		\label{fig:23bdryadj_size2cases-deP1}
	\end{subfigure}
		\caption{In Lemma~\ref{lem:23bdryadj_abvalid}, vertex $a \in P_2 \cap bd(T)$ is adjacent to $b \in P_3 \cap bd(T)$, and $a$ has a connected 2-neighborhood and 3-neighborhood. In this figure $P_2$ is blue, $P_3$ is yellow, and vertices whose district is unknown are black. (a) The labeling used for the vertices in $a$'s neighborhood. (b,c) The two possibilities for when $a$'s 1-neighborhood has a component of size 2.} \label{fig:23bdryadj_size2cases}
	\end{figure}
	
	First, suppose $c,d \in P_1$. If $v \in P_1$ is not in $\{c,d\}$, we can add $v$ to $P_2$ and subsequently add $a$ to $P_3$ to produce a balanced partition.  If $v = d$, adding $v$ to $P_2$ doesn't disconnect $a$'s 2-neighborhood, so we add $v = d$ to $P_2$ and add $a$ to $P_3$. The most challenging case is when $v = c$, as adding it to $P_2$ will disconnect $a$'s 2-neighborhood, meaning we can't subsequently  add $a$ to $P_3$. In this case, because $v = c$, we know $c$'s 1-neighborhood is connected and $c \in \cgi$. We know $c$'s 3-neighborhood is not empty because $c$ is adjacent to $b \in P_3$. If $c$'s 3-neighborhood is not connected, we apply Lemma~\ref{lem:p1_3nbhd_discon_bdry}; if $c$'s 3-neighborhood is connected, we can add $c$ directly to $P_3$. In either case we reach a balanced partition. 	

	Next, suppose $d,e \in P_1$. If $v \neq e$, just as in the previous case we can add $v$ to $P_2$ and subsequently add $a$ to $P_3$. If $v = e$, we will show that in fact $v' = d$ is also a vertex in $\cgi$ that can be removed from $P_1$ and added to $P_2$ and use $v'$ instead. We begin by examining $e$'s other neighbor $m \neq a$ in $bd(T)$; see Figure~\ref{fig:23bdryadj_size2cases}(c). We note $m$ must be in $P_1$: if $m \in P_2$ then the path from $e$ to $\cc_1$ in $P_1$ separates $m$ from $a$, while if $m \in P_3$ then the path from $e$ to $\cc_1$ in $P_1$ separates $m$ from $b$. We also note that as $P_1 \setminus \{d,e\}$ has only one component (because $W = \{d,e\}$ and we are in the case where $P_1 \setminus W$ has only one component), $P_1 \cap N(\{d,e\})$ must be connected: if it was not connected, $P_1$ would have a hole and not be simply connected. Let $h$ be $e$ and $m$'s common neighbor in $T$, let $g$ be $h$ and $d$'s other common neighbor, and let $f$ be $d$ and $g$'s other common neighbor (see Figure~\ref{fig:23bdryadj_size2cases}(c); these labels were chosen for consistency with later lemmas). Because we know $e$'s 1-neighborhood is connected, $h$ must be in $P_1$. It follows that $g$ may or may not be in $P_1$, and if $g$ is in $P_1$, $f$ may or may not be in $P_1$. In any case, we see that $d$'s 1-neighborhood is connected. We know (by the Alternation Lemma, Lemma~\ref{lem:alternation}) that $d$ must have a connected 2-neighborhood or 3-neighborhood, and thus can be added to $P_2$ or $P_3$. If $d$ can be added to $P_3$, do so; if $d$ can be added to $P_2$, we add $d$ to $P_2$ and then add $a$ to $P_3$. It only remains to check that $d \in \cgi$.  If $a$ and $b$ are adjacent to the top or bottom of $T$ with $a$ left of $b$, then $e$ is also left of $d$ and $d \in \cgi$ because $e = v$ is in $\cgi$.  If $a$ and $b$ are adjacent to the top or bottom of $T$ with $b$ left of $a$, then $b \notin P_1$ is left of $d$: because $\cli \subseteq P_1$, this means $b \in \cgei$ and so $d \in \cgi$. If $a$ and $b$ are along the vertical right edge of $T$, then $a,b \in \cc_n$ and $d \in \cc_{n-1}$. By  Lemma~\ref{lem:i_n-2}, $i \leq n-2$, so $d \in \cgi$. This completes the proof. 
\end{proof}

\begin{lem}\label{lem:23bdryadj_abinvalid}
	Let $P$ be a partition such that $\cli \subseteq P_1$, $P_1 \cap \cgi \neq \emptyset$, $|P_1| = k_1 + 1$, and $|P_3| = k_3 - 1$.  Suppose there exists adjacent vertices $a \in P_2 \cap bd(T)$ and $b \in P_3 \cap bd(T)$ where $a$'s 2-neighborhood is not connected or $a$'s 3-neighborhood is not connected. Then there exists a sequence of moves resulting in a balanced partition that does not reassign any vertices in $P_1 \cap \clei$. 
\end{lem}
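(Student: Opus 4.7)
The plan is to split into two cases based on which of $a$'s neighborhoods is disconnected, and reduce each case to an earlier lemma.

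First, if $a$'s 3-neighborhood is disconnected, since $a \in P_2$ we apply Lemma~\ref{lem:p2_3nbhd_discon} directly with $x = a$. The lemma yields at most two moves producing a balanced partition; since those moves involve $a$ and a vertex of $P_2$ enclosed by a cycle containing $a$, none touches $P_1 \cap \clei$.

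In the remaining case, $a$'s 3-neighborhood is connected but its 2-neighborhood is disconnected. Adopting the labeling $b, c, d, e$ of $a$'s four interior neighbors from Lemma~\ref{lem:23bdryadj_abvalid} (with $b, e \in bd(T)$ and $b \in P_3$), the connectivity of the 3-neighborhood forces it to equal $\{b\}$, so $c, d, e \notin P_3$; disconnectedness of the 2-neighborhood then forces $c, e \in P_2$ and $d \in P_1$. The strategy is to flip $d$ from $P_1$ to $P_2$, which reconnects $a$'s 2-neighborhood as the consecutive arc $\{c, d, e\}$ in $N(a)$, and then flip $a$ from $P_2$ to $P_3$. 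The net effect is to decrease $|P_1|$ by one and increase $|P_3|$ by one, producing a balanced partition while touching only $a$ and $d$, both of which lie outside $P_1 \cap \clei$ (since $a \in bd(T)$ and $d \in N(a)$, neither is in $\clei$ provided $i$ is bounded as in Lemma~\ref{lem:i_n-2}).

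To make the $d$-flip work, I need $d$'s 1-neighborhood and 2-neighborhood to both be connected. The cyclic order of $N(d)$ places $c, a, e$ as a consecutive arc, so the 2-neighborhood automatically contains the connected set $\{a, c, e\}$; the only obstruction is if $d$ has an additional $P_2$ neighbor on the opposite side from $a$ that is disconnected from this arc. In that pathological subconfiguration, the stranded $P_2$ neighbor combined with $b \in P_3$ forces some nearby vertex to have a disconnected 3-neighborhood, and we reduce to Lemma~\ref{lem:p2_3nbhd_discon} or, if the problematic vertex is in $P_1$, to Lemma~\ref{lem:p1_3nbhd_discon_bdry} (whose boundary hypothesis is satisfied since $a \in P_2 \cap bd(T)$). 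A parallel analysis handles the subcase where $d$'s 1-neighborhood is disconnected: $d$ is then a cut vertex of $P_1$, so by Lemma~\ref{lem:s1_c1} the non-$\cc_1$ component of $P_1 \setminus d$ lies in $\cgi$ and by the Shrinkability Lemma contains an alternate exposed vertex that can be reassigned, which after careful bookkeeping sets up a further rebalancing step via one of the same 3-neighborhood lemmas or via the Cycle Recombination Lemma applied to a cycle through $d$'s neighborhood.

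The main obstacle will be the detailed subcase analysis in this second case — verifying for every pathological configuration of $d$'s neighborhood that either some 3-neighborhood disconnectedness arises (enabling reduction to Lemma~\ref{lem:p2_3nbhd_discon} or Lemma~\ref{lem:p1_3nbhd_discon_bdry}) or the Cycle Recombination Lemma can be applied to a small cycle local to $d$, in each scenario without disturbing any vertex in $P_1 \cap \clei$.
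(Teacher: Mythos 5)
Your first case (disconnected 3-neighborhood, reduce to Lemma~\ref{lem:p2_3nbhd_discon}) and your setup for the second case (the forced configuration $c,e \in P_2$, $d \in P_1$, and the plan to flip $d$ to $P_2$ and then $a$ to $P_3$) match the paper exactly. The gap is in what happens when the $d$-flip fails. You claim that when $d$ has a stranded $P_2$-neighbor disconnected from the arc $\{c,a,e\}$, this ``forces some nearby vertex to have a disconnected 3-neighborhood,'' enabling a reduction to Lemma~\ref{lem:p2_3nbhd_discon} or Lemma~\ref{lem:p1_3nbhd_discon_bdry}. That is false in general: in the configuration where $g \in P_2$ is the stranded neighbor and $d$'s remaining neighbors $f,h$ are both in $P_1$ (the paper's Figure~\ref{fig:23bdryadj_a2discon}(c)), every vertex in sight can have a connected 3-neighborhood, and no such reduction is available. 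The same problem afflicts your treatment of the disconnected-1-neighborhood subcase: finding an alternate exposed vertex in the far component of $P_1 \setminus d$ via Lemma~\ref{lem:s1_c1} and the Shrinkability Lemma only lets you move one vertex from $P_1$ to $P_2$, which does not produce a balanced partition and does not by itself reconnect $d$'s or $a$'s neighborhood; ``careful bookkeeping'' hides the real difficulty, which is that you must alternate removing vertices from $P_1$ and removing vertices from $P_2$ without the two operations interfering with each other.

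The missing ingredient is the Unwinding Lemma (Lemma~\ref{lem:s1s2}). The paper identifies a component $S_1$ of $P_1 \setminus d$ and a component $S_2$ of $P_2 \setminus a$ that are separated by a suitable cycle (built from a path in $P_2$ or $P_3$ together with $d$ and/or $a$), hence non-adjacent, and then alternately shrinks them through nearly balanced partitions until either some vertex lands in $P_3$ (balanced, done), or $S_1$ is exhausted (so $d$'s neighborhoods become connected, returning to the easy flip case), or $S_2$ is exhausted (so $a$'s 2-neighborhood becomes connected, reducing to Lemma~\ref{lem:23bdryadj_abvalid}). Establishing the non-adjacency of $S_1$ and $S_2$ — via the observation that all paths from $g$ to $a$ in $P_2$ must use exactly one of $c$ or $e$, else $P_2$ would not be simply connected — is the substantive geometric step your proposal omits. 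Your suggestion of the Cycle Recombination Lemma is also not obviously applicable here, since it requires a cycle lying entirely in $P_1$ except for one $P_2$-vertex with no $P_3$ inside, and no such cycle is naturally present local to $d$.
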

\begin{proof}
	If $a$'s 3-neighborhood is disconnected, we apply Lemma~\ref{lem:p2_3nbhd_discon} and are done. Therefore we assume $a$'s 3-neighborhood is connected, in which case it must be that $a$'s 2-neighborhood is disconnected.
	
	Let $v \in P_1 \cap \cgi$ be a vertex that can be removed from $P_1$ and added to another district; $v$ exists by Lemma~\ref{lem:1-to-change}. If $v$ can be added to $P_3$ we are done, so we assume $v$ can only be added to $P_2$.
	
	Let $c$ be $a$ and $b$'s common neighbor in $T$; let $d$ be $a$ and $c$'s other common neighbor; and let $e$ be $a$ and $d$'s other common neighbor, where $e \in bd(T)$. The only way for $a$ to have a connected 3-neighborhood but a disconnected 2-neighborhood is to have $c , e \in P_2$ and $d \in P_1$; see Figure~\ref{fig:23bdryadj_a2discon}(a). If $a$ and $b$ are along the top or bottom of $T$, then $d$ has one of $e,c \notin P_1$ to its right, so $d$ cannot be in $\clei$ and so $d \in \cgi$.  If $a$ and $b$ are along the vertical right edge of $T$, then $a,b \in \cc_n$ and $d \in \cc_{n-1}$, and because $i \leq n-2$ (Lemma~\ref{lem:i_n-2}), $d \in \cgi$. In either case $d \in \cgi$, so if $d$ can be removed from $P_1$ and added to $P_2$, we do so, after which we can remove $a$ from $P_2$ and add it to $P_3$, resulting in a balanced partition satisfying the conclusion of the lemma.
	
	If $d$ cannot be removed from $P_1$ and added to $P_2$, it must be that $d$'s 1-neighborhood or 2-neighborhood is disconnected. Let $f \neq a$ be $d$ and $c$'s common neighbor; let $g \neq c$ be $d$ and $f$'s common neighbor; and let $h \neq f$ be $d$ and $g$'s common neighbor; see Figure~\ref{fig:23bdryadj_a2discon}(b).  We do further cases based on $d$'s neighborhood. 
	
	\begin{figure}
		\centering
			\begin{subfigure}[b]{0.22\textwidth}
			\centering
			\includegraphics[scale = 1]{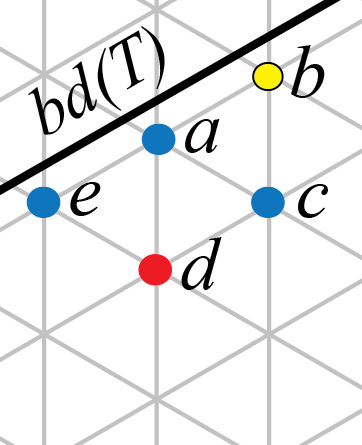}
			\caption{}
			\label{fig:23bdryadj_a2discon_2discon}
		\end{subfigure}
		\hfill
		\begin{subfigure}[b]{0.22\textwidth}
			\centering
			\includegraphics[scale = 1]{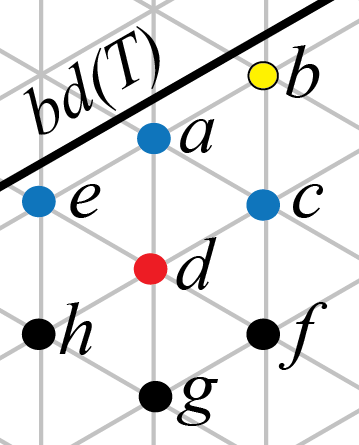}
			\caption{}
			\label{fig:23bdryadj_a2discon-fgh}
		\end{subfigure}
		\hfill
		\begin{subfigure}[b]{0.22\textwidth}
			\centering
			\includegraphics[scale = 1]{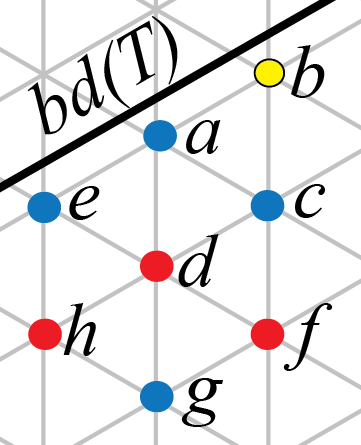}
			\caption{}
			\label{fig:23bdryadj_a2discon-g2}
		\end{subfigure}
		\hfill
			\begin{subfigure}[b]{0.22\textwidth}
			\centering
			\includegraphics[scale = 1]{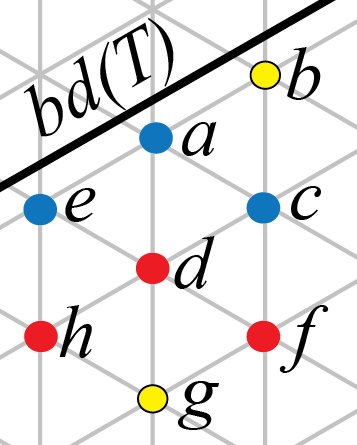}
			\caption{}
			\label{fig:23bdryadj_a2discon-g3}
		\end{subfigure}
		\caption{Figures from the proof of Lemma~\ref{lem:23bdryadj_abinvalid}. Vertices in $P_1$ are red, $P_2$ is blue, $P_3$ is yellow, and vertices whose district is unknown are black.  (a) The districts of $a$'s neighborhood if $a$'s 3-neighborhood is connected but $a$'s 2-neighborhood is disconnected. (b) The labeling of $d$'s neighborhood from the proof of Lemma~\ref{lem:23bdryadj_abinvalid} (c,d) The main configurations that must be considered when $d$' 2-neighborhood is disconnected; both $g \in P_2$ and $g \in P_3$ are possible. } \label{fig:23bdryadj_a2discon}
	\end{figure}
	
	\underline{\it $d$'s 2-neighborhood is disconnected}: First, we suppose $d$'s 2-neighborhood is not connected. In this case, it must be that $g \in P_2$ and $h,f \notin P_2$. First, note that $h \notin P_3$: If $h \in P_3$, the cycle $Q$ formed by any path in $P_3$ from $h$ to $b$ together with $e$ and $a$ separates $d \in P_1$ from $bd(T)$, a contradiction as $P_1$ is connected and includes some vertices of $bd(T)$. It follows that $h \in P_1$.  Next, suppose $f \in P_3$. In this case $d$ will necessarily have a connected 1-neighborhood and a connected 3-neighborhood.  For the same reasons as above $d \in \cgi$, so we remove $d$ from $P_1$ and add $d$ to $P_3$ and are done.  
	
	It remains to consider the case where $f,h \in P_1$; see Figure~\ref{fig:23bdryadj_a2discon}(c), which we will resolve using the Unwinding Lemma (Lemma~\ref{lem:s1s2}). 
	Consider all paths from $g$ to $a$ in $P_2$. Either all of these paths use $c$, or all of these paths use $e$: if there were paths using both, $P_2$ would not be simply connected. First, suppose all paths from $g$ to $a$ use $e$.  Let $Q$ be any such path from $g$ to $a$ in $P_1$. Let $S_1$ be the component of $P_1 \setminus d$ containing $h$, and let $S_2$ be the component of $P_2 \setminus a$ containing $c$.  Note that  $S_1 \subseteq \cgi$ by Lemma~\ref{lem:s1_c1}, so moves reassigning vertices in $S_1$ will not change $P_1 \cap \clei$. Note further that $S_1$ and $S_2$ have no adjacent vertices, as they are separated by the cycle formed by path $Q$ from $g$ to $a$ together with $d$. Because $\cc_1 \in P_1\setminus S_1$ and $(P_2 \setminus S_2 ) \cap bd(T) \neq \emptyset$ because it contains $a$ and $e$, Lemma~\ref{lem:s1s2} (the Unwinding Lemma) applies. Thus there exists a sequence moves after which (1) the partition is balanced, and we are done; (2) all vertices of $S_1$ have been added to $P_2$, or (3) all vertices of $S_2$ have been added to $P_1$. No vertices outside of $S_1$ and $S_2$ have been reassigned, and in outcomes (2) and (3), it remains true after these moves that $|P_1| = k_1 + 1$ and $|P_3| = k_3 - 1$. For outcome (2), if all vertices of $S_1$ have been added to $P_2$, then in particular $h \in S_1$ has been added to $P_2$. Because no vertices outside of $S_1$ and $S_2$ have been reassigned, $a$, $e$, and $g$ remain in $P_2$ while $f$ remains in $P_1$, while $c$ may be in $P_1$ or $P_2$. No matter the district of $c$, $d$ now has a connected 2-neighborhood and 1-neighborhood, an earlier case of this lemma we have already resolved. For outcome (3), if all vertices of $S_2$ have been added to $P_1$, then in particular now $c \in P_2$ while $d$ remains in $P_1$ and $e$ remains in $P_2$. This means $a$ now has  a connected 2-neighborhood and 3-neighborhood, and Lemma~\ref{lem:23bdryadj_abvalid} applies so we know it is possible to reach a balanced partition.

	If all paths from $f$ to $a$ in $P_2$ use $c$, we perform a similar process, with $S_2$ being the component of $P_2 \setminus a$ containing $e$ and $S_1$ being the component of $P_1 \setminus d$ containing $f$. Note  $S_1 \subseteq \cgi$ by Lemma~\ref{lem:s1_c1}. Components $S_1$ and $S_2$ are separated by the cycle $C$ formed by any path $Q$ from $a$ to $g$ in $P_2$ together with $d$, so are not adjacent. Because $\cc_1 \in P_1 \setminus S_1$ and $(P_2 \setminus S_2) \cap bd(T) \neq \emptyset$ because it contains $a$, Lemma~\ref{lem:s1s2} applies.  Thus there exists a sequence moves after which (1) the partition is balanced, and we are done; (2) all vertices of $S_1$ have been added to $P_2$, or (3) all vertices of $S_2$ have been added to $P_1$. No vertices outside of $S_1$ and $S_2$ have been reassigned, and in outcomes (2) and (3), it remains true after these moves that $|P_1| = k_1 + 1$ and $|P_3| = k_3 - 1$. For outcome (2), if all vertices in $S_1$ have been added to $P_2$, then in particular $f$ has been added to $P_2$ and $d$ now has a connected 2-neighborhood and 1-neighborhood, an earlier case of this lemma we have already resolved. For outcome (3), if all vertices of $S_2$ have been added to $P_1$, then in particular $e$ has been added to $P_1$ and $a$ now has a connected 2-neighborhood and 3-neighborhood. This means Lemma~\ref{lem:23bdryadj_abvalid} applies and so we can reach a balanced partition. 
	
	\underline{\it $d$'s 2-neighborhood is connected:} Because we have already resolved the case where $d$ has a connected 1-neighborhood and 2-neighborhood, we assume $d$'s 1-neighborhood is disconnected. It must be that $g \in P_3$ and $f,h \in P_1$; see Figure~\ref{fig:23bdryadj_a2discon}(d). Let $S_1$ be the component of $P_1 \setminus d$ containing $f$, and let $S_2$ be the component of $P_2 \setminus a$ containing $e$.  Consider the cycle $Q$ formed by any path from $b$ to $g$ in $P_3$ together with $a$ and $d$.  This cycle separates $S_1$ from $bd(T)$, meaning $\cc_1 \notin S_1$, and so by Lemma~\ref{lem:s1_c1}, $S_1 \subseteq \cgi$. This cycle also separates $S_1$ from $S_2$, as $S_1$ is inside it and $S_2$ is outside it. Additionally $P_2 \setminus S_2$ contains $a \in bd(T)$.  We again apply the Unwinding Lemma, Lemma~\ref{lem:s1s2}. After this, we have (1) reached a balanced partition without reassigning any vertices in $P_1 \cap \clei$, (2) all vertices of $S_1$ have been added to $S_2$, or (3) all vertices of $S_2$ have been added to $S_1$.  Only vertices in $S_1$ and $S_2$ have been reassigned, and in outcomes (2) and (3), the resulting partition has $|P_1| = k_1 + 1$ and $|P_3| = k_3 - 1$. For outcome (2), $d$'s 1-neighborhood and 2-neighborhood are now connected, an earlier case we've already resolved.  For outcome (3), $a$'s 2-neighborhood and 3-neighborhood are now connected, and Lemma~\ref{lem:23bdryadj_abvalid} applies. 
	
	In all cases, we have shown there exists a sequence of moves resulting in a balanced partition that does not reassign any vertices in $P_1 \cap \clei$. 
\end{proof}

The following corollary resolves Case A, where $P_2$ and $P_3$ have adjacent vertices in $bd(T)$. 

\begin{cor}\label{cor:23bdryadj}
	Let $P$ be a partition such that $\cli \subseteq P_1$, $P_1 \cap \cgi \neq \emptyset$, $|P_1| = k_1 + 1$, and $|P_3| = k_3 - 1$.  Suppose there exists adjacent vertices $a \in P_2 \cap bd(T)$ and $b \in P_3 \cap bd(T)$. Then there exists a sequence of moves resulting in a balanced partition that does not reassign any vertices in $P_1 \cap \clei$. 
\end{cor}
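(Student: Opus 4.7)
The plan is to observe that this corollary is an immediate consequence of the two preceding lemmas combined via case analysis on the structure of $a$'s neighborhood. Given the hypotheses, the vertex $a \in P_2 \cap bd(T)$ either has both a connected 2-neighborhood and a connected 3-neighborhood, or it does not.

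First, I would note that these two possibilities are exhaustive and mutually exclusive. In the first case, where $a$'s 2-neighborhood and 3-neighborhood are both connected, Lemma~\ref{lem:23bdryadj_abvalid} directly applies and yields a sequence of moves reaching a balanced partition without altering $P_1 \cap \clei$. In the second case, where at least one of $a$'s 2-neighborhood or 3-neighborhood fails to be connected, Lemma~\ref{lem:23bdryadj_abinvalid} applies and provides the desired sequence of moves.

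There is no real obstacle here: the corollary is essentially a bookkeeping statement combining the two preceding lemmas. The only thing to verify is that the hypotheses of the corollary (namely $\cli \subseteq P_1$, $P_1 \cap \cgi \neq \emptyset$, $|P_1| = k_1 + 1$, $|P_3| = k_3 - 1$, and the existence of adjacent $a \in P_2 \cap bd(T)$ and $b \in P_3 \cap bd(T)$) match exactly the hypotheses of both lemmas, with the additional assumption on $a$'s neighborhoods being the only distinguishing feature between them. Since that assumption is a dichotomy, together the two lemmas cover all situations. The conclusion of each lemma is identical to that of the corollary, so the result follows.
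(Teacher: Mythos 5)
Your proposal is correct and matches the paper's own proof exactly: the paper also splits on whether $a$'s 2-neighborhood and 3-neighborhood are both connected, applying Lemma~\ref{lem:23bdryadj_abvalid} in the first case and Lemma~\ref{lem:23bdryadj_abinvalid} otherwise. Nothing further is needed.
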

\begin{proof}
If $a$'s 2-neighborhood is connected and $a$'s 3-neighborhood is connected, we apply Lemma~\ref{lem:23bdryadj_abvalid}.  If at least one of $a$'s 2-neighborhood and 3-neighborhood is disconnected, we apply Lemma~\ref{lem:23bdryadj_abinvalid}. 
\end{proof}

\subsection{Case B: $P_2 \cap bd(T) = \emptyset$ }

In this case and the next, $P_2$ and $P_3$ are adjacent but there do not exist adjacent vertices $a \in P_2 \cap bd(T)$ and $b \in P_3 \cap bd(T)$. 

We begin with the following lemma; we note it does not use the condition that $P_2 \cap bd(T) = \emptyset$, so it applies in both this case (Case B) and the next case (Case C). 

\begin{lem}\label{lem:23int_a23con}
	Let $P$ be a partition such that $\cli \subseteq P_1$, $P_1 \cap \cgi \neq \emptyset$, $|P_1| = k_1 + 1$, and $|P_3| = k_3 - 1$. Suppose there exists adjacent vertices $a \in P_2$ and $b \in P_3$ such that $a$ and $b$ are not both in $bd(T)$. If $a$'s 2-neighborhood and 3-neighborhood are both connected, there exists a sequence of moves resulting in a balanced partition that does not reassign any vertices in $P_1 \cap \clei$. 
\end{lem}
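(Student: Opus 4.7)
The plan mirrors the structure of Lemma~\ref{lem:23bdryadj_abvalid}. By Lemma~\ref{lem:1-to-change}, there is a vertex $v \in P_1 \cap \cgi$ that can be removed from $P_1$ and added to another district. If $v$ can be added to $P_3$, doing so immediately balances the partition. Otherwise $v$ can only be added to $P_2$, and the intent is to add $v$ to $P_2$ and then flip $a$ from $P_2$ to $P_3$; that second flip is valid by the Flip Lemma (Lemma~\ref{lem:remove-add}) given the hypothesis that $a$'s $2$- and $3$-neighborhoods are both connected (and nonempty, since $b \in P_3$ and $|P_2| \geq 1$). The composition of these two moves yields a balanced partition, and by Lemma~\ref{lem:s1_c1} the choice of $v \in \cgi$ guarantees that $P_1 \cap \clei$ is untouched.

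If $v \notin N(a)$, the two moves commute trivially and we are done. The technical work lies in the case $v \in N(a)$, where adding $v$ to $P_2$ could disconnect $a$'s $2$-neighborhood. I would analyze the arc structure of $N(a)$: since $a$'s $2$-neighborhood and $3$-neighborhood are each single arcs along the cycle $N(a)$ (which has length $6$, with possibly some vertices outside $T$ if $a \in bd(T)$), the remaining vertices form at most two $P_1$-arcs. If $v$'s $P_1$-arc has size $1$, then $v$'s two cycle-neighbors in $N(a)$ lie in $P_2$, $P_3$, or outside $T$; using connectedness of the $P_2$- and $P_3$-arcs together with the fact that $a$ has at least one $P_2$-neighbor (namely, $a$'s $P_2$-arc is nonempty even after removing $v$, unless $v$'s arc sits exactly between $P_3$ and outside-$T$, a situation which I would rule out via the same type of alternation argument used in Lemma~\ref{lem:alternation}), at least one cycle-neighbor of $v$ must be in $P_2$. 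Then adding $v$ to $P_2$ merely extends the $P_2$-arc, preserving its connectedness, and flipping $a$ succeeds.

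The main obstacle is the subcase where $v$'s $P_1$-arc in $N(a)$ has size $\geq 2$ with $v$ in its interior, so that adding $v$ to $P_2$ creates an isolated $P_2$-vertex in $N(a)$. Here I would look for a substitute vertex $v' \in P_1 \cap \cgi$ that lies outside $N(a)$ and is still removable. Concretely, taking $W = P_1 \cap N(a)$ (or a single cut-vertex of $P_1$ near $a$) and letting $S$ be a component of $P_1 \setminus W$ not containing $\cc_1$, the Shrinkability Lemma (Condition~\ref{item:exp_corner}, using that $\cc_1 \in P_1$ is a corner) produces such a $v'$; Lemma~\ref{lem:s1_c1} again places $v'$ in $\cgi$. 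If $v'$ goes to $P_3$ we are done; if it goes to $P_2$ without being adjacent to $a$, the planned two-step move works.

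When even this substitution fails, the fallback is the Unwinding Lemma (Lemma~\ref{lem:s1s2}) applied to an appropriately chosen $S_1 \subseteq P_1$ (a component of $P_1$ minus a cut vertex, separated from $\cc_1$) and $S_2 \subseteq P_2$ (a component of $P_2$ minus a cut vertex near $a$), chosen so that $S_1$ and $S_2$ lie on opposite sides of a cycle through $a$, $b$, and some path in $P_3$. One of the three outcomes of the Unwinding Lemma either balances the partition directly, empties $S_1$ (reducing to an earlier-handled configuration of $a$'s neighborhood), or empties $S_2$ (producing a configuration where $a$'s $3$-neighborhood becomes disconnected and Lemma~\ref{lem:p2_3nbhd_discon} applies). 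I expect the bookkeeping around ensuring $S_1$ and $S_2$ are genuinely nonadjacent and satisfy the boundary hypotheses of Lemma~\ref{lem:s1s2} to be the fiddliest part, but the lack of a forced $a,b \in bd(T)$ condition actually gives more flexibility in routing the separating cycle than in Lemma~\ref{lem:23bdryadj_abinvalid}.
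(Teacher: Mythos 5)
Your opening matches the paper's: invoke Lemma~\ref{lem:1-to-change} to get $v \in P_1 \cap \cgi$, dispose of the case where $v$ can go to $P_3$ or where $v \notin N(a)$, and then fight with the adjacent case. But note that the paper performs the two moves in the opposite order (first $a \to P_3$, then $v \to P_2$), so its hard case is ``$a$ is $v$'s \emph{only} neighbor in $P_2$, so flipping $a$ out empties $v$'s 2-neighborhood,'' not ``adding $v$ to $P_2$ disconnects $a$'s 2-neighborhood.'' Your ordering is legitimate in principle (it is the one used in Lemma~\ref{lem:23bdryadj_abvalid}), but the degenerate configurations it forces you to handle are different from the paper's, and your treatment of them has concrete gaps. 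First, the size-one-arc subcase where $v$ sits in $N(a)$ between the $P_3$-arc and the outside-$T$ arc cannot be ``ruled out via the alternation argument'': when $a \in bd(T)$ and $b \notin bd(T)$ (which the hypotheses permit), one can have $N(a)$ in cyclic order be (outside, outside, $v \in P_1$, $b \in P_3$, \dots) with the $P_2$-arc elsewhere, and $a$'s 2- and 3-neighborhoods both connected. This is a realizable configuration in which adding $v$ to $P_2$ genuinely disconnects $a$'s 2-neighborhood; it needs its own resolution (e.g., via $v$'s 3-neighborhood and Lemma~\ref{lem:p1_3nbhd_discon_bdry}), not a nonexistence claim.

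Second, your substitution step assumes $P_1 \setminus W$ has a component avoiding $\cc_1$; when $P_1 \setminus W$ is a single component the Shrinkability Lemma gives you nothing, and this is exactly where the paper's proof does its heaviest lifting (it chooses $v$ extremal --- farthest from $\cc_1$ in $P_1$ --- builds $W = Q \cup \clei \cup (\ci \cap P_1) \cup N$ from a shortest path $Q$, and then splits on $N(v) \cap P_3$ and $v \in bd(T)$; there is also a case resolved by shrinking $P_2$ rather than $P_1$, via Condition~\ref{item:nobd} applied to a component of $N(v) \cap P_2$ trapped inside a $P_3$-cycle). Your proposal has no analogue of any of this. Third, your description of the Unwinding Lemma fallback is wrong on its face: Lemma~\ref{lem:s1s2} applied to $S_1 \subseteq P_1$ and $S_2 \subseteq P_2$ never reassigns a vertex of $P_3$ except possibly to terminate in a balanced partition, so emptying $S_2$ cannot ``produce a configuration where $a$'s 3-neighborhood becomes disconnected''; the intended payoff of outcome (3) in arguments of this shape is that $a$'s \emph{2-neighborhood} becomes connected. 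As written, the plan covers the easy cases correctly but leaves the genuinely hard configurations either unaddressed or resolved by steps that do not go through.
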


\begin{proof}
By Lemma~\ref{lem:1-to-change}, there exists $v \in P_1 \cap \cgi$ that can be removed from $P_1$ and added to another district.  If there exists any such $v\in P_1 \cap \cgi$ that can be removed from $P_1$ and added to $P_3$, we do so and are done, so we assume all such $v$ can only be added to $P_2$. If there are multiple vertices of $P_1 \cap \cgi$ that can be removed and added to $P_2$, we let $v$ be the one whose distance to $\cc_1$ along paths in $P_1$ is longest.

Because $a$'s 2-neighborhood and 3-neighborhood are connected, removing $a$ from $P_2$ and adding it to $P_3$ is a valid move.  If $a$ and $v$ are not adjacent, we add $a$ to $P_3$ and subsequently add $v$ to $P_2$.  However, if $a$ and $v$ are adjacent, adding $a$ to $P_3$ may mean that subsequently adding $v$ to $P_2$ is no longer a valid move (and/or vice versa). Even if $a$ and $v$ are adjacent, it may still be that adding $v$ to $P_2$ remains a valid move even after $a$ has been added to $P_3$, and if so we do that. 

We now consider cases where $v$ and $a$ are adjacent and adding $a$ to $P_3$ means adding $v$ to $P_2$ would no be longer a valid move. Adding $a$ to  $P_3$ does not change $v$'s 1-neighborhood. However, adding $a$ to $P_3$ could result in $v$'s 2-neighborhood becoming disconnected or becoming empty. 

First, suppose adding $a$ to $P_3$ would result in $v$'s 2-neighborhood becoming disconnected.  If this is the case, it must be that $a$'s two common neighbors with $v$ are both in $P_2$; we call these neighbors $c$ and $d$. One path from $c$ to $d$ in $N(a)$ is of length 2 and passes through only $v \in P_1$.  Because $b \in P_3$ is also a neighbor of $a$, the other path from $c$ to $d$ in $N(a)$ must pass through $b$. This means $c$ and $d$ are not connected within $N(a) \cap P_2$, as each of the two possible paths between them in $N(a)$ contains a vertex that is not in $P_2$. This is a contradiction, as we assumed $a$'s 2-neighborhood was connected.  We conclude that adding $a$ to $P_3$ cannot disconnect $v$'s 2-neighborhood. 

The last case to consider is when $a$ is $v$'s only neighbor in $P_2$, and adding $a$ to $P_3$ results in $v$'s 2-neighborhood becoming empty. We will do cases based on $N(v) \cap P_3$ and whether or not $v\in bd(T)$.

\underline{\it Case: $N(v) \cap P_3 = \emptyset$, $v \notin bd(T)$.} In this case, it must be that $v$ has five neighbors in $P_1$ and a single neighbor, $a$, in $P_2$. Label $v$'s neighbors in $P_1$, in (clockwise or counterclockwise) order beginning with $a$, as $c$, $d$, $e$, $f$, and $g$, with $g$ also adjacent to $a$; see Figure~\ref{fig:23int_cbdry-labelling} for an example. We note $b$ may be adjacent to $c$, may be adjacent to $g$, or may be adjacent to neither. We let $Q$ be the shortest path in $P_1$ from $v$ to $\cc_1$.  We note by minimality this path uses at most one of $v$'s neighbors in $P_1$, so in particular it avoids at least one of the sets $\{c,d\}$ or $\{f,g\}$.  Without loss of generality, suppose $Q$ doesn't use $c$ or $d$. 
 We let $N$ be the component of $N(a) \cap P_1$ containing $c$, $v$, and $g$, and let $W = Q \cup \clei \cup (\ci \cap P_1) \cup N$. We note $d \notin W$ because $d \notin Q$, so $P_1 \setminus W$ has at least one nonempty component, and we let $S$ be the component of $P_1 \setminus W$ containing $d$. We now must show $S$ has at least one exposed vertex so that we can apply Lemma~\ref{lem:shrinkable} and find a vertex of $P_1$ outside of $N(a)$ that can be removed from $P_1$ and added to a different district. To show this, we examine $N(c)$ in order, beginning with $v$ followed by $d$. Each vertex in $P_1$ around $c$ is in component $S$, and the last vertex of $N(c)$ in $P_1$ in this order is either (1)  a boundary vertex because the next vertex around $c$ is not in $T$, meaning $c \in bd(T)$, as in Figure~\ref{fig:23int_cbdry}(b,c), or (2) exposed because the next vertex around $c$ is not in $P_1$.

\begin{figure}
	\centering
	\begin{subfigure}[b]{0.3\textwidth}
		\centering
		\includegraphics[scale = 1]{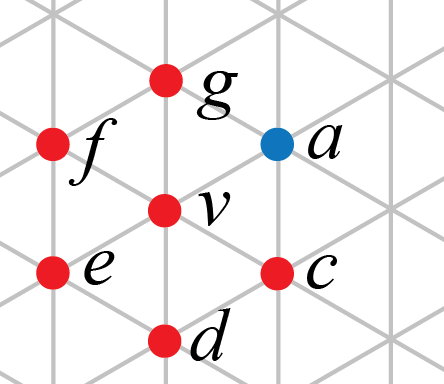}
		\caption{}
		\label{fig:23int_cbdry-labelling}
	\end{subfigure}
	\hfill
	\begin{subfigure}[b]{0.3\textwidth}
		\centering
		\includegraphics[scale = 1]{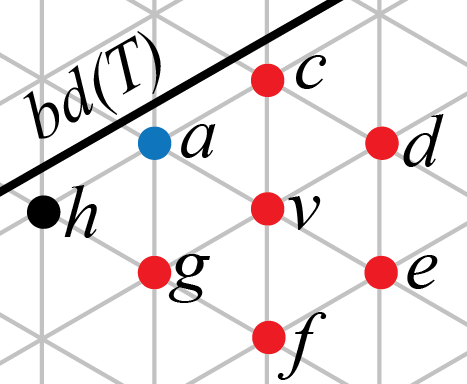}
		\caption{}
		\label{fig:23int_cbdry-abdry}
	\end{subfigure}
	\hfill
	\begin{subfigure}[b]{0.3\textwidth}
		\centering
		\includegraphics[scale = 1]{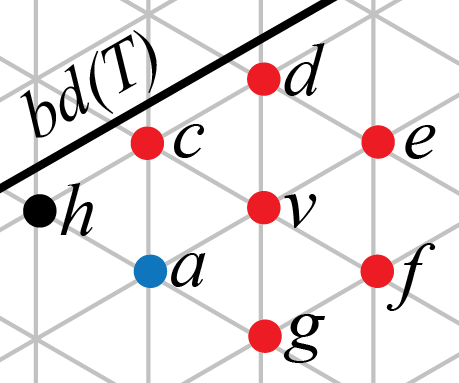}
		\caption{}
		\label{fig:23int_cbdry-aint}
	\end{subfigure}
	
	\caption{Images from the proof of Lemma~\ref{lem:23int_a23con} when $N(v) \cap P_3 = \emptyset$ and $v \notin bd(T)$. (a) The labeling used for $N(v)$. (b) The case where $c \in bd(T)$ and $a \in bd(T)$. (c) The case where $c \in bd(T)$ and $a \notin bd(T)$.  }
	\label{fig:23int_cbdry}
\end{figure}

We consider option (1) first. Because we know $v \notin bd(T)$, if $c \in bd(T)$ then either  $a \in bd(T)$ or $d \in bd(T)$. If $a \in bd(T)$, see Figure~\ref{fig:23int_cbdry-abdry}. Vertex $a$'s two neighbors in $T \setminus bd(T)$ must be $v$ and $g$, leaving only one more neighbor of $a$ in $T$ unaccounted for: this neighbor $h$ must be in $P_2$ because $|P_2| > 1$, but $h$ must also be in $P_3$ because we chose $a$ to be adjacent to $b \in P_3$.  A vertex cannot simultaneously be in $P_2$ and $P_3$, so this contradiction implies $a \notin bd(T)$. Thus it must be that $d \in bd(T)$; see Figure~\ref{fig:23int_cbdry-aint}. Vertex $c$'s two neighbors in $T \setminus bd(T)$ must be $v$ and $a$, and its remaining neighbor in $bd(T)$ that is not $d$ we call $h$. 
If $h \in P_2$ or $h \in P_3$, instead of adding $v$ to another district, we will add $c$ to another district.  Note that $c$, $v$, and $a$ form a triangle, and so exactly two of these must be in the same column. If $c$ and $v$ are in the same column, then $c \in \cgi$ because $v$ is; if $a$ and $c$ are in the same column, then $h$ must be right of $c$ and -- when $h \in P_2$ or $h \in P_3$ -- this means $c \in \cgi$; if $a$ and $v$ are in the same column, then it must be that $c \in \cc_n$. Regardless, $c \notin \clei$. 
If $h \in P_3$, then $c$'s 1-neighborhood and 3-neighborhood are both connected, so (without adding $a$ to $P_3$ first) we add $c$ to $P_3$, immediately resulting in a balanced partition. If $h \in P_2$, then adding $a$ to $P_3$ will result in $c$ having a connected 1-neighborhood and a connected 2-neighborhood, so we add $c$ to $P_2$ (instead of adding $v$ to $P_2$), and the end result after these two valid moves is a balanced partition, as required. Finally, we must consider when $h \in P_1$, meaning $c$ is a cut vertex of $P_1$. In this case we find a contradiction.  Let $S'$ be the component of $P_1 \setminus c$ containing $h$; this component must not contain $\cc_1$, because the shortest path from $v$ to $\cc_1$ doesn't go through $c$. By Condition~\ref{item:cut_corner} of Lemma~\ref{lem:shrinkable}, there exists a $v'$ in $S'$ that can be removed from $P_1$ and added to another district. This contradicts our choice of $v$ as being the farthest such vertex from $\cc_1$ via paths in $P_1$, as $v'$ is necessarily farther from $\cc_1$ than $v$.  Because we have found a contradiction this case, $h \in P_1$ is not possible.

The remaining option to consider is (2), when component $S$ has an exposed vertex. By Condition~\ref{item:exp_corner} of Lemma~\ref{lem:shrinkable}, there exists $v' \in S$ that can be removed from $P_1$ and added to another district; because $v' \notin W$ and $W$ contains all of $P_1 \cap \clei$, $v' \in \cgi$. If $v'$ can be added to $P_3$, we do so and are done.
If $v'$ is not in $N(a)$, we add $v'$ to $P_2$ and subsequently add $a$ to $P_3$, both valid because $v'$ and $a$ are not adjacent, resulting in a balanced partition and proving the lemma. 
Because $W$ contains the component $N$ of $N(a) \cap P_1$ containing $v$, $v'$ cannot be in this component $N$. We wish to claim that $v' \notin N(a)$, however it is still possible for $v'$ to be in a different component of $N(a) \cap P_1$; the only way this can happen is if $a$'s unique neighbors adjacent to $c$ and $g$ are not in $P_1$ (one must be $b \in P_3$ and the other must be in $P_2$), and $a$'s unique neighbor that is not adjacent to $c$ or $g$ is in $P_1$, and this neighbor is $v'$. However, in this case $v'$ is adjacent to $P_3$ and has a connected 1-neighborhood.  If $v'$ has a connected 3-neighborhood, we can remove $v'$ from $P_1$ and add it to $P_3$ (without first adding $a$ to $P_3$), resulting in a balanced partition and completing the proof. If $v'$ has a disconnected $3$-neighborhood, we will carefully apply Lemma~\ref{lem:p1_3nbhd_discon}. Because $W \subseteq P_1$ contains both $\cc_0$ and neighbors of $a$, there is a path from $a \in P_2$ to $bd(T)$ that doesn't use $v'$ or any vertices of $P_3$.  That means if $C$ is a cycle formed by any path in $P_3$ connecting different components of $P_3 \cap N(v')$ together with $v'$, there is at least one vertex of $P_2$ outside of $c$, namely $a$. We conclude by Lemma~\ref{lem:p1_3nbhd_discon} that there exists a move resulting in a balanced partition that does not reassign any vertices in $P_1 \cap \clei$.

\underline{\it Case: $N(v) \cap P_3 = \emptyset$, $v \in bd(T)$, $a \in bd(T)$.} 
Let $v$ and $a$ be adjacent along the boundary, and suppose first that neither is a corner of $T$.  Let $a$ and $v$'s common neighbor be $c$, let $v$ and $c$'s other common neighbor be $d$, and let $v$ and $d$'s other common neighbor (in $bd(T)$) be $e$.  Note $c,d,e \in P_1$ because we assume $N(v) \cap P_3 = \emptyset$ and $N(v) \cap P_2 = \{a\}$. See Figure~\ref{fig:23int_avbdry}(a). 
\begin{figure}
	\centering
	\begin{subfigure}[b]{0.22\textwidth}
		\centering
		\includegraphics[scale = 0.9]{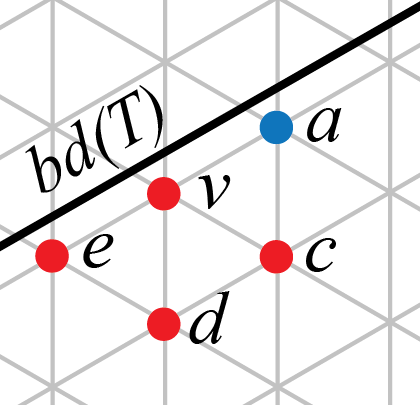}
		\caption{}
		\label{fig:23int_avbdry_labelling}
	\end{subfigure}
	\hfill
	\begin{subfigure}[b]{0.22\textwidth}
		\centering
		\includegraphics[scale = 0.9]{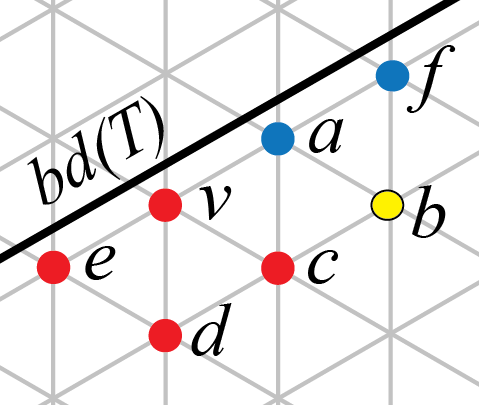}
		\caption{}
		\label{fig:23int_avbdry-bf}
	\end{subfigure}
	\hfill
	\begin{subfigure}[b]{0.22\textwidth}
		\centering
		\includegraphics[scale = 0.9]{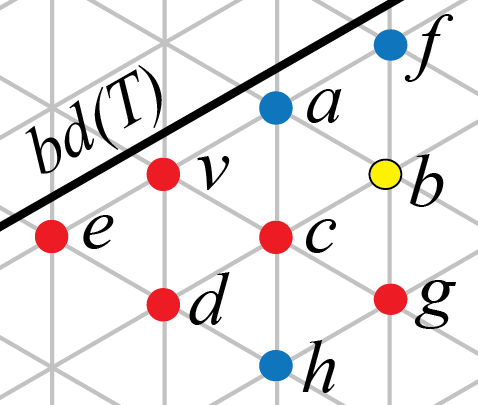}
		\caption{}
		\label{fig:23int_avbdry-gh}
	\end{subfigure}
	\hfill
	\begin{subfigure}[b]{0.22\textwidth}
		\centering
		\includegraphics[scale = 0.9]{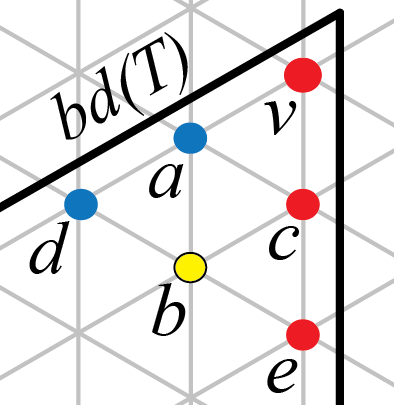}
		\caption{}
		\label{fig:23int_avbdry-corner}
	\end{subfigure}
	\caption{Images from the proof of Lemma~\ref{lem:23int_a23con} when $N(v) \cap P_3 = \emptyset$, $v \in bd(T)$, and $a \in bd(T)$. (a) The labeling and district assignments of $v$'s neighborhood; $a$ is $v$'s only neighbor in $P_2$, so all of $v$'s other neighbors must be in $P_1$.  (b) The labeling and district assignment of $a$'s neighborhood. (c) When $c$ has a disconnected 1-neighborhood, this is the labeling and district assignment of $c$'s neighborhood. (d) The labeling and district assignment near $v$ when $v$ is a corner of $T$.  }
	\label{fig:23int_avbdry}
\end{figure}
Note that $a$ has neighbor $b \in P_3$, and by assumption $a$ and $b$ cannot both be in $bd(T)$, so $b$ must be $a$ and $c$'s other common neighbor (not $v$). Additionally, $a$ must have a neighbor in $P_2$ because $|P_2| > 1$. Thus $a$'s other neighbor in $bd(T)$, which we will call $f$, must be in $P_2$.  See Figure~\ref{fig:23int_avbdry}(b). 
Note this means $c$ has a neighbor in $P_3$.  We assume by Lemma~\ref{lem:p1_3nbhd_discon} that $c$'s 3-neighborhood is connected. 

If $c$'s 1-neighborhood is also connected, then provided $c \notin \clei$ we can add $c$ to $P_3$ and reach a balanced partition.  To verify that $c \notin \clei$, first suppose that $v$ and $a$ are both part of the top or bottom boundary of $T$. In this case, $c$ is in the same column as $v$ or to the right of $v$; in either case, $c \notin \clei$ because $v \notin \clei$. If $a$ and $v$ are both in $T$'s vertical right boundary, then $c \in \cc_{n-1}$. By Lemma~\ref{lem:i_n-2}, $i \leq n-2$ so $c \in \cgi$. Thus if $c$'s 1-neighborhood is connected, we can remove $c$ from $P_1$ and add it to $P_3$, completing the lemma. 

If $c$'s 1-neighborhood is not connected, it must be that $c$ and $b$'s common neighbor $g \in P_1$ and $c$ and $g$'s other common neighbor $h \in P_2$; see Figure~\ref{fig:23int_avbdry}(c). In this case, let $S$ be the component of $P_1 \setminus \{v,c\}$ that doesn't contain $\cc_1$; $S$ necessarily contains an exposed vertex ($d$ or $g$) and $S \subseteq \cgi$ by Lemma~\ref{lem:s1_c1} (applied to cut vertex $c$).  By Condition~\ref{item:exp_corner} of Lemma~\ref{lem:shrinkable}, there exists $v' \in S \subseteq \cgi$ that can be removed from $P_1$ and added to another district. If $v'$ can be added to $P_3$, we do so and are done. Otherwise, the way $S$ was chosen ensures $v' \notin N(a)$.  In this case, we add $v'$ to $P_2$ and add $a$ to $P_3$, resulting in a balanced partition. 

The last remaining case to check is if $a$ or $v$ is a corner of $T$.  Note $a$ cannot be a corner of $T$ because it must have three neighbors: $v \in P_1$, $b \in P_3$, and some neighbor in $P_2$  because $|P_2| > 1$. Suppose $v$ is a corner of $T$.  One of its neighbors must be $a$, and its other neighbor must be $c \in P_1$.  Because we know $b \in P_3$ is adjacent to $a$ but not in $bd(T)$, it must be that $b$ is $a$ and $c$'s common neighbor.  Vertices $b$ and $c$ must have common neighbor $e \in P_1$, and $b$ and $a$'s common neighbor must be $d \in P_2$; see Figure~\ref{fig:23int_avbdry}(d). In this case, clearly $v$ and $c$ are both in $\cc_{\geq n-1}$ and so are are in $\cgi$. We achieve a balanced partition as follows.  First, add $v$ to $P_2$.  Then, apply Condition~\ref{item:exp_2bd} of Lemma~\ref{lem:shrinkable} to the new district $P_2 \cup v$ with $W = \{v\}$ to find $v' \in P_2$ that can be removed from $P_2$ and added to another district. If $v'$ can be added to $P_3$, we have reached a balanced partition and are done.  Otherwise, $v'$ can be added to $P_1$ and we do so. We then subsequently add $c$ to $P_3$, reaching a balanced partition and proving the lemma.

\underline{\it Case: $N(v) \cap P_3 = \emptyset$, $v \in bd(T)$, $a \notin bd(T)$.}  
In this case, note that $v$ cannot be a corner of $T$ because every corner's two neighbors are in $bd(T)$.  Let $c$ be $v$ and $a$'s common neighbor that is not in $bd(T)$, let $d$ be $v$ and $c$'s common neighbor in $bd(T)$, and let $e$ be $v$ and $a$'s common neighbor in $bd(T)$. 
Because we know $a$ is $v$'s only neighbor in $P_2$  and $N(v) \cap P_3 = \emptyset$, it must be that $c,d,e \in P_1$. 
However, note that in this case $v$'s 1-neighborhood is not connected, as it has two components: $\{e\}$ and $\{c,d\}$. This is a contradiction as $v$ was chosen so that it can be removed from $P_1$ and added to another district. Therefore this case cannot occur and we do not need to consider it further.

\underline{\it Case: $N(v) \cap P_3 \neq \emptyset$.} If $v$'s 3-neighborhood is connected, we add $v$ to $P_3$ and are done, so we assume that $v$'s 3-neighborhood is not connected.  Let $d$ and $e$ be two vertices in different connected components of $N(v) \cap P_3$. Let $C$ be the cycle formed by any path from $d$ to $e$ in $P_3$ together with $v$.  If $P_2$ is outside this cycle we are done by Lemma~\ref{lem:p1_3nbhd_discon}, so we assume $P_2$ is entirely inside this cycle. Examine the path from $d$ to $e$ in $N(v)$ that goes inside $C$.  This path cannot contain any vertices of $bd(T)$, and so must contain a vertex in $P_1$ or $P_2$.  First, suppose it contains a vertex $f  \in P_1$.  Note no simple path from $v$ to $\cc_1$ in $P_1$ can go through $f$ as $v$ is the only vertex in $C$ that is also in $P_1$, so any such path would not be able to leave the interior of $C$.  Thus, $v$ must have a neighbor $g \in P_1$ that is outside of $C$, where $g$ is the first vertex on the path from $v$ to $\cc_1$. Note that one path from $f$ to $g$ in $N(v)$ will contain $d$, and the other path from $f$ to $g$ in $N(v)$ will contain $c$.  This means that $f$ and $g$ are in different connected components of $N(v) \cap P_1$.  This means $v$'s 1-neighborhood is disconnected, a contradiction as we know $v$ can be removed from $P_1$ and added to $P_2$. Thus it cannot be the case that the path from $d$ to $e$ in $N(v)$ inside $c$ contains any vertices of $P_1$, and so there are no vertices of $P_1$ inside $C$. This path therefore must contain a vertex in $P_2$. 

Let $h$ be any vertex in $P_2$ on the path from $d$ to $e$ in $N(v)$ that goes inside $C$, and let $H$ be the component of $N(v) \cap P_2$ containing $h$ (possibly $H = \{h\}$). Note that $P_2 \setminus H$ is nonempty, $H$ cannot be adjacent to $P_1$, and $H \cap bd(T) = \emptyset$.  Applying Condition~\ref{item:nobd} of Lemma~\ref{lem:shrinkable} to $P_2$ and $H$, there exists $v' \in P_2$ that can be removed from $P_2$ and added to another district. Because $P_2 \setminus H$ is not adjacent to $P_1$, it must be that $v'$ can be added to $P_3$.  Adding $v$ to $P_2$ and subsequently adding $v'$ to $P_3$ results in a balanced partition, as required. 
\end{proof}

The more difficult case occurs when removing $a$ from $P_2$ and adding it to $P_3$ is not a valid move. To address this case we will need the additional structure we described above, namely that if $P_2 \cap bd(T) = \emptyset$ then $P_3 \cap bd(T) \neq \emptyset$ (Lemma~\ref{lem:p2p3int}) and that there exists at least one tricolor triangle consisting of 
$a \in P_2$, $b \in P_3$, and $c \in P_1$ (Lemma~\ref{lem:2tricolortri}). \footnote{We will not need the existence of a second tricolor triangle in this Case B, but will in the next Case C, when we no longer have $P_2 \cap bd(T) = \emptyset$.}

\begin{lem}\label{lem:23int_a23discon_3bdry}
	Let $P$ be a partition such that $\cli \subseteq P_1$, $P_1 \cap \cgi \neq \emptyset$, $|P_1| = k_1 + 1$, and $|P_3| = k_3 - 1$. Suppose there exists vertices $a \in P_2$, $b \in P_3$, and $c \in P_1$ that are incident on a common triangular face of $T$, where $a \notin bd(T)$ and $P_2 \cap bd(T) = \emptyset$.  If $a$'s 3-neighborhood is connected but $a$'s 2-neighborhood is not connected, there exists a sequence of moves resulting in a balanced partition that does not reassign any vertices in $P_1 \cap \clei$. 
\end{lem}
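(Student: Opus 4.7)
The plan is to reduce to Lemma~\ref{lem:23int_a23con} by first restoring connectedness of $a$'s 2-neighborhood. Since $a \notin bd(T)$, the cycle $N(a)$ has length 6 and lies entirely in $T$; the 3-neighborhood is one connected arc (containing $b$), while the 2-neighborhood has exactly two components. Consequently there must be a nonempty arc $D \subseteq N(a) \cap P_1$ separating the two $P_2$ components of $N(a)$ on the side opposite to the $P_3$ block; the vertex $c$ from the tricolor triangle is adjacent to $b$ and therefore lies in an end $P_1$-arc of $N(a)$ nestled between a $P_2$ component and the $P_3$ arc, so $c$ is in general not in $D$. Note every $d \in D$ already has $a$ and at least one of $a$'s common neighbors with $d$ (which lies in $P_2$) inside its own 2-neighborhood, so $d$'s 2-neighborhood is automatically nonempty and connected through~$a$.

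If some $d \in D$ is not a cut vertex of $P_1$ and lies in $\cgi$, then the Flip Lemma (Lemma~\ref{lem:remove-add}) flips $d$ from $P_1$ to $P_2$; this connects $a$'s 2-neighborhood, preserves the connectedness of $a$'s 3-neighborhood, and places us in the hypotheses of Lemma~\ref{lem:23int_a23con}, which finishes the proof. Lemma~\ref{lem:1-to-change} furnishes a candidate vertex $v \in P_1 \cap \cgi$ removable from $P_1$, and if $v$ can be added to $P_3$ we are done immediately; otherwise $v$ goes to $P_2$ and we try to choose $v \in D$ or at least choose $v$ so that the subsequent flip of an element of $D$ remains valid. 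By Lemma~\ref{lem:p1_3nbhd_discon_bdry} (applicable here because $P_3 \cap bd(T) \neq \emptyset$ by Lemma~\ref{lem:p2p3int}) we may moreover assume every vertex in $P_1$ has connected 3-neighborhood throughout, so only the 1-neighborhood obstruction to flipping $d$ is at issue.

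When no $d \in D$ is directly flippable, I would invoke the Unwinding Lemma (Lemma~\ref{lem:s1s2}): choose a cut vertex $w_1$ of $P_1$ whose removal isolates a component $S_1$ containing part of $D$, and dual data $w_2, S_2$ in $P_2$ with $S_2$ non-adjacent to $S_1$. Lemma~\ref{lem:s1_c1} forces $S_1 \subseteq \cgi$, so no vertex of $\clei$ is disturbed. Each of the Unwinding Lemma's three outcomes is progress: balance is achieved directly, or $S_1$ drains into $P_2$ (absorbing the separator and connecting $a$'s 2-neighborhood), or $S_2$ drains into $P_1$ (shrinking $P_2$ near $a$ in a way that restores connectedness of $a$'s 2-neighborhood or reduces to a previously handled case). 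In the residual case where $D$ sits trapped inside a cycle through $a$ and the shrinkability hypotheses of Unwinding fail, I would apply the Cycle Recombination Lemma (Lemma~\ref{lem:cycle-recom}) to rearrange the interior of that cycle and reopen the flip move on $D$.

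The main obstacle will be the case explosion driven by $|D|$ and by the relative positions of $D$, $c$, and the $P_3$ arc around $a$: when $|D| \geq 2$ a single flip is not enough, and several subcases arise depending on which separator vertex is addressed first. A further subtlety is that intermediate moves must preserve the standing hypothesis $P_2 \cap bd(T) = \emptyset$ (so we stay inside Case B rather than inadvertently migrating to Case A), and that the chosen $v$ and the separator $D$ must be handled in an order that avoids spurious disconnections — bookkeeping directly analogous to that carried out in Lemma~\ref{lem:23bdryadj_abinvalid}.
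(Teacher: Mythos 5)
Your high-level strategy matches the paper's: identify the $P_1$ arc of $N(a)$ separating the two components of $a$'s $2$-neighborhood (your $D$ is the paper's component $E$), move it into $P_2$ via flips, unwinding, and cycle recombination, and then finish with Lemma~\ref{lem:23int_a23con}. However, there are several concrete gaps. First, your blanket reduction ``we may assume every vertex in $P_1$ has connected 3-neighborhood throughout'' via Lemma~\ref{lem:p1_3nbhd_discon_bdry} is invalid: that lemma's hypothesis is $P_2 \cap bd(T) \neq \emptyset$, which is exactly what fails in this case ($P_2 \cap bd(T) = \emptyset$); citing $P_3 \cap bd(T) \neq \emptyset$ does not help. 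One must instead fall back on Lemma~\ref{lem:p1_3nbhd_discon} and verify by hand, for each relevant vertex, that $P_2$ lies outside the cycle through $P_3$ --- which the paper does only in specific sub-cases where that verification is possible. Second, your claim that every vertex of the separating arc ``automatically'' has a connected $2$-neighborhood through $a$ is false: such a vertex $e$ has the consecutive arc $d,a,f$ of $P_2$-neighbors, but its sixth neighbor (opposite $a$) may also lie in $P_2$, disconnecting the $2$-neighborhood; the paper has to route around this via the Alternation Lemma.

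The most serious missing idea is the one that forced the paper to introduce Lemma~\ref{lem:s1s2x}. When you unwind $S_1$ (a component of $P_1 \setminus e$) against a component $S_2$ of $P_2 \setminus a$, outcome (2) alternately drains vertices of $S_2$ into $P_1$ along the way; in particular the vertex $d$ or $f$ adjacent to both $a$ and $e$ may be pulled out of $P_2$. Then even after $e$ is flipped into $P_2$, $a$'s $2$-neighborhood is still disconnected and no progress has been made. The plain Unwinding Lemma~\ref{lem:s1s2} cannot prevent this; you need the variant that designates a protected vertex $x \in S_2$ adjacent to $w_2$ that is reassigned last. Your proposal never flags this, and as written the unwinding step would fail. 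Two further cases you do not address: when the separator vertex $e$ lies in $\clei$ it cannot be touched at all and one must instead operate on $c$ (adding it to $P_3$ or unwinding around $c$), and when the separating arc has size two a single flip plus unwinding does not suffice --- the paper passes through an intermediate state with $|P_2| = k_2 + 1$ and then rebalances, together with a separate argument that $S_1 \cap bd(T) = \emptyset$ so the standing hypothesis $P_2 \cap bd(T) = \emptyset$ is preserved. You acknowledge these difficulties exist but do not supply the mechanisms, so the proposal is a correct outline with genuine unfilled holes rather than a proof.
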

\begin{proof}
We begin by looking at $N(a)$, which is a cycle of size 6. We note $N(a) \subseteq T$ because $a \in P_2$ and $P_2 \cap bd(T) = \emptyset$.  Note $N(a)$ must contain adjacent vertices $b \in P_3$ and $c \in P_1$, and $N(a) \cap P_2$ must be disconnected. In (clockwise or counterclockwise) order around $a$, this means there there must be: (1) a component $B$ of $P_3 \cap N(a)$ containing $b$; (2) a component $C$ of $P_1 \cap N(a)$ containing $c$; (3) a nonempty component $D$ of $P_2 \cap N(a)$; (4) a nonempty component $E$ of $N(a) \setminus P_2$; and (5) a nonempty component $F$ of $P_2 \cap N(a)$. There then may or may not be an additional sixth component $G$ of $P_1 \cap N(a)$ between $F$ and $B$ (that is, $G$ may be empty or may be nonempty). Note $E$ cannot contain any vertices of $P_3$ as then $a$'s 3-neighborhood would be disconnected and we would be done by Lemma~\ref{lem:p2_3nbhd_discon}, so it must be that $E \subseteq P_1$.  We also note there can be no other unique components in $N(a)$: there cannot be a vertex of $P_2$ between $B$ and $C$ as $b$ and $c$ are adjacent, and there cannot be a vertex of $P_3$ between any of $C$, $D$, $E$, $F$, and $G$ (if $G$ is nonempty). If $G$ is empty, because $N(a)$ has five unique components but six vertices, it must be that exactly one of $B$,$C$, $D$, $E$, and $F$ is of size 2 while the others are of size 1. If $G$ is nonempty, then each component of $N(a)$ is of size $1$, including $G$. This means there are six cases for what $N(a)$ can look like; these are shown in Figure~\ref{fig:23int_anbhd}. 

\begin{figure}
	\centering
	\begin{subfigure}[b]{0.16\textwidth}
		\centering
		\includegraphics[scale = 0.7]{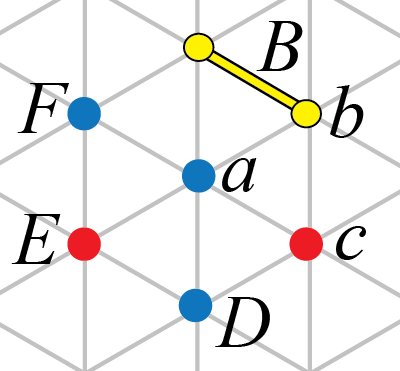}
		\caption{}
		\label{fig:23int_anbhd-B}
	\end{subfigure}
	\hfill
	\begin{subfigure}[b]{0.16\textwidth}
		\centering
		\includegraphics[scale = 0.7]{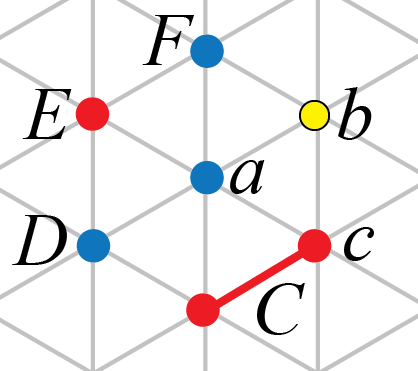}
		\caption{}
		\label{fig:23int_anbhd-C}
	\end{subfigure}
	\hfill
	\begin{subfigure}[b]{0.16\textwidth}
		\centering
		\includegraphics[scale = 0.7]{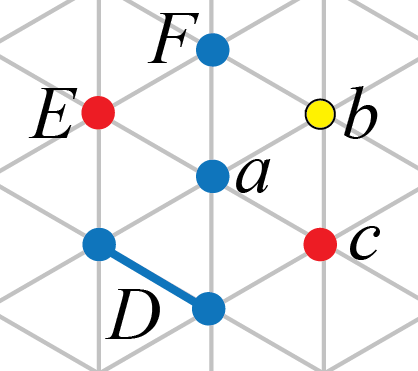}
		\caption{}
		\label{fig:23int_anbhd-D}
	\end{subfigure}
	\hfill
	\begin{subfigure}[b]{0.16\textwidth}
		\centering
		\includegraphics[scale = 0.7]{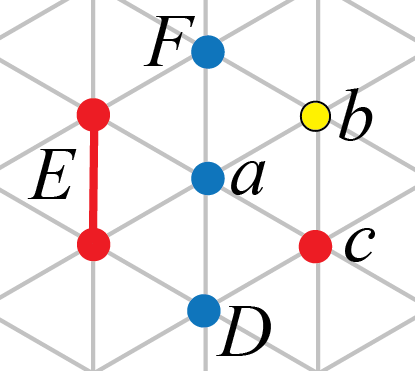}
		\caption{}
		\label{fig:23int_anbhd-E}
	\end{subfigure}
	\hfill
	\begin{subfigure}[b]{0.16\textwidth}
		\centering
		\includegraphics[scale = 0.7]{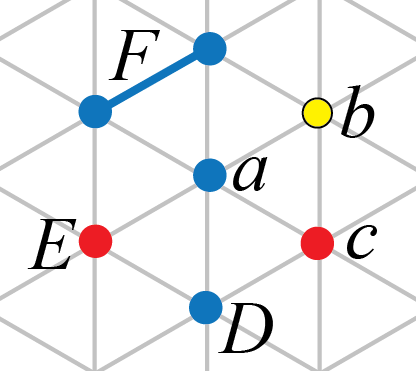}
		\caption{}
		\label{fig:23int_anbhd-F}
	\end{subfigure}
		\hfill
	\begin{subfigure}[b]{0.16\textwidth}
		\centering
		\includegraphics[scale = 0.7]{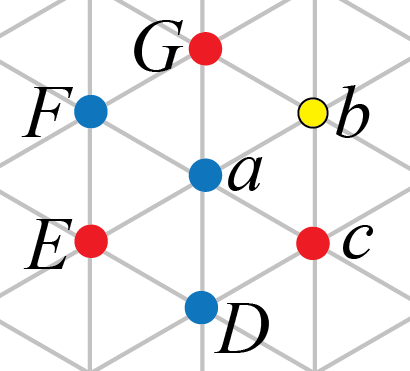}
		\caption{}
		\label{fig:23int_anbhd-G}
	\end{subfigure}
	\caption{In Lemma~\ref{lem:23int_a23discon_3bdry}, the six possible cases for $N(a)$. (a) $|B| = 2$; (b) $|C| = 2$, (c) $|D|= 2$; (d) $|E|=2$; (e) $|F| = 2$; (f) $|G| = 1$. }
	\label{fig:23int_anbhd}
\end{figure}

We already know $b \in B$ and $c \in C$; we will let $d \in D$ be such that $d$ is adjacent to $E$; we let $f \in F$ such that $f$ is adjacent to $E$; and we let $e \in E$ such that $e$ is adjacent to $d$. In all cases for $N(a)$ except when $|E| = 2$, $e$ is also adjacent to $f$.

We now focus on the vertices $e$ and $c$. We will show there exists a sequence moves, not reassigning any vertices of $P_1 \cap \clei$, that results in a balanced partition.  
 We begin with the case where $e \in \clei$. 

\underline{Case: $e \in \clei$}. First, note it is impossible to have both $e \in \clei$ and $c \in \clei$, as $e$ and $c$ have a common neighbor $a$ but are not adjacent, and some of the vertices between them in $N(a)$ are not in $P_1$ and therefore can't be in $\cli$. Therefore in this case it must hold that $c \notin \clei$. If adding $c$ to $P_3$ is valid, we do so and are done. This means $c$'s 3-neighborhood or 1-neighborhood must be disconnected. 

Suppose $c$'s $3$-neighborhood is not connected.  We know $b \in P_3$ is one of $c$'s neighbors, and let $h \in P_3$ be a vertex in a different component of $N(c) \cap P_3$ than $b$. Look at the cycle $C$ formed by any path from $b$ to $h$ in $P_3$ together with $c$. Because this cycle includes no vertices of $P_2$, then $P_2$ must be entirely inside this cycle or entirely outside this cycle. We note $a \in P_2$ is adjacent to $e \in \clei$.  Additionally, $e$ cannot be inside this cycle because it is connected to $\cc_1 \in bd(T)$ by a path not containing $c$, and if $e$ were inside $C$ every path from $e$ to $bd(T)$ would need to include $c$, the only vertex of $P_1$ in the cycle. Because $a \in P_2$ is adjacent to a vertex outside $C$, then all of $P_2$ must be outside of $C$. Therefore, by Lemma~\ref{lem:p1_3nbhd_discon} applied to $P_1$ and $c$, there exists a move resulting in a balanced partition that does not reassign any vertices of $P_1 \cap \clei$.  

Suppose that $c$'s 3-neighborhood is connected. This means $c$'s 1-neighborhood is not connected and thus $c$ is a cut vertex of $P_1$. Note  $c$'s 1-neighborhood can have at most two components, because $c$ already has $a ,b \notin P_1$ as two of its adjacent neighbors. Consider the cycle $C$ formed by any shortest path from $c$ to $e$ in $P_1$ together with $a$. Let $S_1$ be the component of $P_1 \setminus c$ not containing any vertices of $C$. Because $e \in C$ and $e \in \clei$, then all of $\clei \cap P_1$ must be in the same component of $P_1 \setminus c$ as $e$, meaning $\cc_1 \notin S_1$.   We note $P_2 \setminus a$ has exactly two components, one of which is inside $C$ and one of which is outside $C$. If $S_1$ is inside $C$, we let $S_2$ be the component of $P_2 \setminus a$ outside $C$; if $S_1$ is outside $C$, we let $S_2$ be the component of $P_2 \setminus a$ inside $C$. When chosen this way, $S_1$ and $S_2$ do not have any adjacent vertices.  Furthermore, $S_2 \cap bd(T) = \emptyset$ because $P_2 \cap bd(T) = \emptyset$.  By Lemma~\ref{lem:s1s2}, there exists a sequence of moves after which (1) the partition is balanced, (2) all vertices in $S_1$ have been added to $P_2$, or (3) all vertices of $S_2$ have been added to $P_1$. In these moves only vertices in $S_1 \subseteq \cgi$ and $S_2$ have been reassigned, and in outcomes (2) and (3) the resulting partition has $|P_1 | = k_1 + 1$ and $|P_3| = k_3 -1$. In the first outcome, we are done; in the second outcome, as $c$'s 1-neighborhood is now connected we can add $c$ directly to $P_3$; and in the third outcome, as $a$'s 2-neighborhood is now connected and $a$'s  3-neighborhood remains connected, Lemma~\ref{lem:23int_a23con} applies. In any case, we have reached a balanced partition, as desired.

\underline{Case: $e \notin \clei$ and $e \in bd(T)$}: We note $e$ must have two adjacent neighbors in $P_2$: $a$ and $d$. Because $P_2 \cap bd(T) = \emptyset$, these neighbors must not be in $bd(T)$, and $e$ cannot be a corner vertex of $T$ because $a \notin bd(T)$. It follows that $e$'s two neighbors in $bd(T)$ must not be in $P_2$. If one of these neighbors is in $P_1$ and the other is in $P_3$, we can add $e$ to $P_3$ and have reached a balanced partition. As $e$ must have at least one neighbor in $P_1$ it's not possible for both of $e$'s neighbors in $bd(T)$ to be in $P_3$, so the only remaining case to consider is when both of $e$'s neighbors in $bd(T)$ are in $P_1$. 
  We note this means that in $N(a)$ we have $|E| = 2$, so $|B| = 1$, $|C| = 1$, $|D| = 1$, $|F| = 1$, and $|G| = 0$.
We let $e$ and $a$'s common neighbor in $bd(T)$ be $g \in P_1$, and we let $e$ and $d$'s common neighbor in $bd(T)$ be $h \in P_1$; see Figure~\ref{fig:2int_ebdry}(a) for an example, when $a$'s neighbors $b$,$c$,$d$, $e$, are oriented clockwise around $a$. 
Note $e$ must be  a cut vertex of $P_1$; we let $S_1$ be the component of $P_1 \setminus e$ not containing $\cc_1$. If $S_1$ contains $g$, we let $S_2$ be the component of $P_2 \setminus a$ containing $d$; see Figure~\ref{fig:2int_ebdry}(b).  If $S_1$ contains $h$, we let $S_2$ be the component of $P_2 \setminus a$ not containing $d$, that is, the component of $P_2 \setminus a$ containing $f$; see Figure~\ref{fig:2int_ebdry}(c).  This ensures that in $N(a) \cup N(e)$, $S_1$ and $S_2$ are not adjacent, but we additionally need to argue that they are not adjacent elsewhere.

\begin{figure}
	\centering
	\begin{subfigure}[b]{0.3\textwidth}
		\centering
		\includegraphics[scale = 0.85]{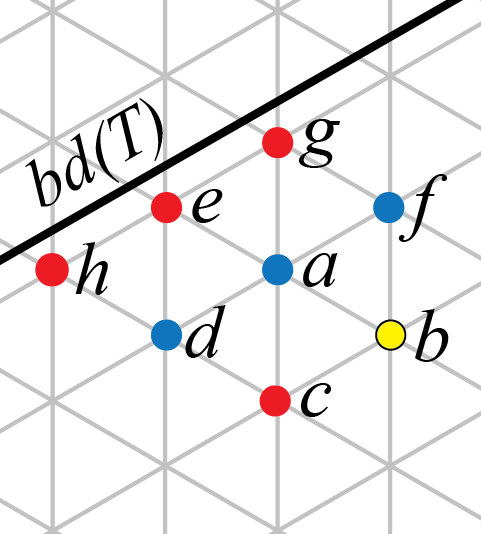}
		\caption{}
		\label{fig:2int_ebdry_labelling}
	\end{subfigure}
	\hfill
	\begin{subfigure}[b]{0.3\textwidth}
		\centering
		\includegraphics[scale = 0.85]{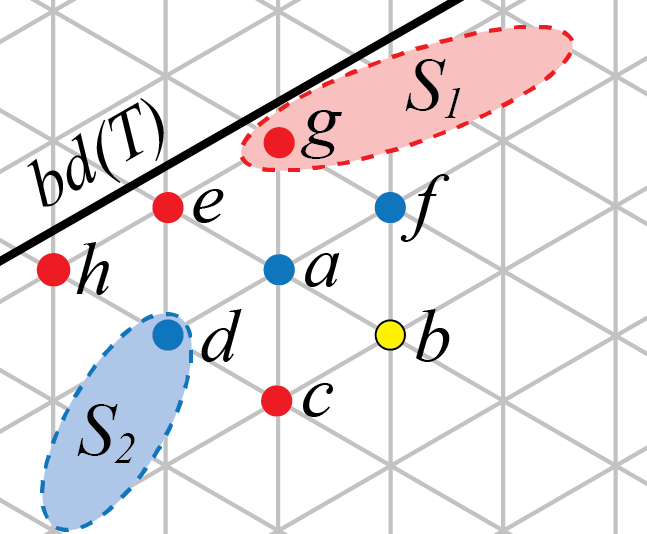}
		\caption{}
		\label{fig:2int_ebdry_g}
	\end{subfigure}
	\hfill
	\begin{subfigure}[b]{0.3\textwidth}
		\centering
		\includegraphics[scale = 0.85]{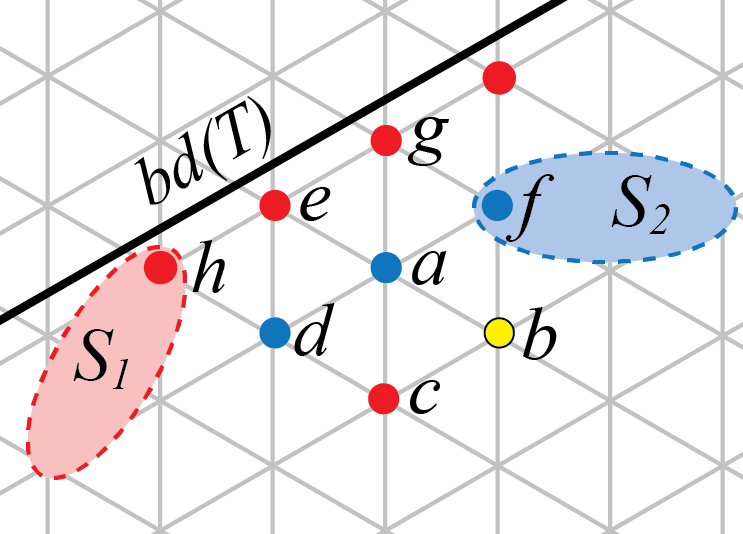}
		\caption{}
		\label{fig:2int_ebdry_h}
	\end{subfigure}
	\caption{In Lemma~\ref{lem:23int_a23discon_3bdry}, in the case where $e \notin \clei$ and $e \in bd(T)$. (a) What the partition must look like near $a$ and $e$ when $a$'s neighbors $b$, $c$, $d$, $e$, and $f$ are oriented clockwise around $a$; e's two neighbors in $bd(T)$ are labeled as $g$ (next to $f$) and $h$ (next to $d$). (b) When the component $S_1$ of $P_1 \setminus e$ not containing $\cc_1$ contains $g$, we let $S_2$ be the component of $P_2 \setminus a$ containing $d$. (c) When the component $S_1$ of $P_1 \setminus e$ not containing $\cc_1$ contains $h$, we let $S_2$ be the component of $P_2 \setminus a$ containing $f$.  \todoo{put a figure like this earlier, in Case A? Seems weird to have one here but not there. } }
	\label{fig:2int_ebdry}
\end{figure}

Suppose, for the sake of contradiction, that $S_1$ and $S_2$ are adjacent outside of $N(a)$, with vertex $x_1 \in S_1$ next to $x_2 \in S_2$. Let $C$ be the cycle formed by any path from $e$ to $x_1$ going through $S_1$, together with any path from $x_2$ to $a$ going through $S_2$.  Not that because of the way $S_2$ was chosen, this cycle $C$ must encircle $b \in P_3$, and therefore encircle all of $P_3$.  This is a contradiction, as we know $P_3 \cap bd(T) \neq \emptyset$.  We conclude that $S_1$ and $S_2$ do not have any adjacent vertices. 
 
We now apply Lemma~\ref{lem:s1s2}; we know $S_2 \cap bd(T) = \emptyset$ because $P_2 \cap bd(T) = \emptyset$.  This lemma implies there exists a sequence of moves after which (1) the partition is balanced, (2) all vertices in $S_1$ have been added to $P_2$, or (3) all vertices of $S_2$ have been added to $P_1$. In these moves only vertices in $S_1 \subseteq \cgi$ and $S_2$ have been reassigned, and in outcomes (2) and (3) the resulting partition has $|P_1 | = k_1 + 1$ and $|P_3| = k_3 -1$. In the first outcome, we are done.  In the second outcome, one of $g$ or $h$ has been added to $S_2$.  This means that it is no longer true that $P_2 \cap bd(T) = \emptyset$.  In fact, because no vertices of $P_3$ have been reassigned, it is now the case that $P_2 \cap bd(T) \neq \emptyset$, $P_3 \cap bd(T) \neq \emptyset$, and $P_2$ and $P_3$ are adjacent, for example because $a \in P_2$ remains adjacent to $b \in P_3$. By Lemma~\ref{lem:23bd}, this means there exists adjacent vertices $a' \in P_2 \cap bd(T)$ and $b' \in P_3 \cap bd(T)$. Applying Corollary~\ref{cor:23bdryadj} gives a sequence of valid steps resulting in a balanced partition. Finally, in outcome (3), now $a$ has a connected 2-neighborhood and 3-neighborhood, and applying Lemma~\ref{lem:23int_a23con} completes the proof in this case.

\underline{Case: $e \notin \clei$, $e \notin bd(T)$, $|E| = 1$}: At this point we split the argument into two further cases: when $|E| = 1$ and $|E| = 2$.  We consider the $|E| = 1$ case first, and then show how the $|E| = 2$ case can be reduced to the $|E| = 1$ case. 

If $|E| = 1$, then $E$ consists of the single vertex $e$, which is adjacent to both $d$ and $f$. If $e$ can be added to $P_2$, we do so; this results in $a$ having a connected 2-neighborhood, so we can subsequently add $a$ to $P_3$, reaching a balanced partition. If $e$ cannot be added to $P_2$, then it must have a disconnected 2-neighborhood or a disconnected 1-neighborhood.

Suppose $e$ has a connected 1-neighborhood. Because $e \notin bd(T)$, by the Alternation Lemma (Lemma~\ref{lem:alternation}) $e$ must have a connected 2-neighborhood or a connected 3-neighborhood.  If $e$ has a connected 2-neighborhood, we add $e$ to $P_2$, after which $a$ has a connected 2-neighborhood and can be added to $P_3$. If $e$ has a connected 3-neighborhood, we add $e$ directly to $P_3$ and are done. Therefore, we assume $e$ has a disconnected 1-neighborhood. This means $P_1 \setminus e$ has two connected components. Because $e$ only has six neighbors and we already know its neighbors $a$, $d$, and $f$ are in $P_2$, it must be that $e$'s neighbor adjacent to $f$, which we will call $g$, is in $P_1$ and $e$'s  neighbor adjacent to $d$, which we will call $h$, is in $P_1$ while $e$'s sixth neighbor, between $g$ and $h$, is not in $P_1$; we will not care whether this neighbor is in $P_2$ or $P_3$. See Figure~\ref{fig:2int_elabel}. 

\begin{figure}
\centering
\hfill
\begin{subfigure}[b]{0.45\textwidth}
	\centering
	\includegraphics[scale = 1]{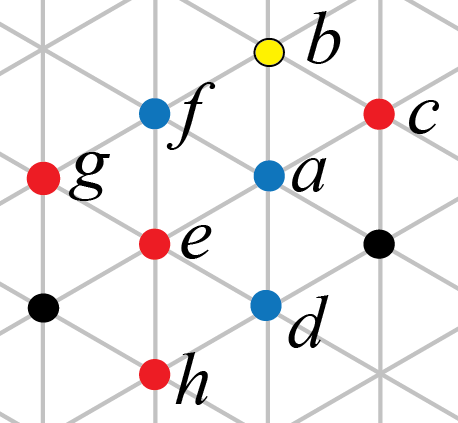}
	\caption{}
	\label{fig:2int_elabel-bf}
\end{subfigure}
\hfill
\begin{subfigure}[b]{0.45\textwidth}
	\centering
	\includegraphics[scale = 1]{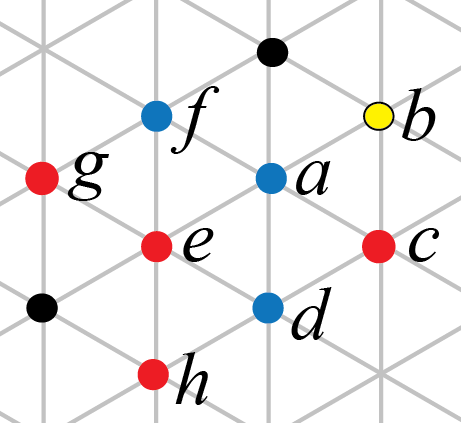}
	\caption{}
	\label{fig:2int_elabel-cd}
\end{subfigure}
\hfill
\caption{In Lemma~\ref{lem:23int_a23discon_3bdry}, in the case where $e \notin \clei$, $e \notin bd(T)$, and $|E| = 1$. These figures assume (without loss of generality) that triangle $a-b-c$ is oriented clockwise and show what the partition must look like near $a$ and $e$. In both cases, $e$'s common neighbor with $f$, which we call $g$, and $e$'s common neighbor with $d$, which we call $h$, must both be in $P_1$. $e$'s unlabeled sixth neighbor, shown in black, can be in $P_2$ or $P_3$.  (a) When $b$ is adjacent to $f$; the black vertex on the right can be in $P_1$ or $P_2$.  (b) When $c$ is adjacent to $d$ the black vertex on the top can be in any district.}
\label{fig:2int_elabel}
\end{figure}

We now show how to find two components $S_1 \subseteq P_1 \setminus e$ and $S_2 \subseteq P_2 \setminus a$ that do not share any adjacent vertices, so that we will then be able to apply Lemma~\ref{lem:s1s2x} to unwind them (Lemma~\ref{lem:s1s2} is insufficient here, so we need Lemma~\ref{lem:s1s2x} instead).

First, suppose $c$ and $\cc_1$ are in the same connected component of $P_1 \setminus e$. 
Let $C$ be the cycle formed by any path from $c$ to $e$ in $P_1$ together with $a$. Note this cycle will pass through exactly one of $g$ or $h$, and will encircle exactly one of $d$ or $f$; all four combinations are possible. Regardless, we let $S_1$ be the component of $P_1 \setminus e$ not containing $c$ and $\cc_1$, and note this means $S_1$ doesn't contain any vertices of $C$. If $S_1$ is outside $C$, we let $S_2$ be the component of $P_2 \setminus a$ inside $C$, and if $S_1$ is inside $C$, we let $S_2$ be the component of $P_2 \setminus a$ outside $C$. In either case, $S_1$ and $S_2$ are not adjacent because they are separated by $C$. This means $S_1$ contains $g$ and $S_2$ contains $d$, or $S_1$ contains $h$ and $S_2$ contains $f$.

Alternately, suppose $c$ and $\cc_1$ are in different components of $P_1 \setminus e$. Note this can only happen if $c \notin \clei$. Let $C$ be the cycle formed by any path from $e$ to $c$ in $P_1$ together with $a$.  Note that this cycle cannot encircle $b \in P_3$ because then it would separate $P_3$ from $bd(T)$ and we know $P_3 \cap bd(T) \neq \emptyset$. Therefore this cycle cannot encircle $f$ and must encircle $d$.  Let $S_1$ consist of the component of $P_1 \setminus e$ containing $c$, which also contains $C$. 
If $C$ were to contain $g$ and encircle $d$, not $f$, that would imply the component of $P_1 \setminus e$ containing $h$ must be inside $C$.  This is a contradiction, as this component must contain $\cc_1 \in bd(T)$. Therefore we conclude $C$ (and therefore, $S_1$) must contain $h$, not $g$. 
We let $S_2$ be the component of $P_2 \setminus a$ containing $f$.  We note $S_1$ and $S_2$ do not have any adjacent vertices within $N(a)$ or $N(e)$. Suppose, for the sake of contradiction, that $S_1$ and $S_2$ are adjacent outside of $N(a)$ and $N(e)$, with vertex $x_1 \in S_1$ next to $x_2 \in S_2$. Let $C'$ be the cycle formed by any path from $e$ to $x_1$ going through $S_1$, together with any path from $x_2$ to $a$ going through $S_2$.  Note that because of the way $S_1$ and $S_2$ were chosen and because $\cc_1$ and $h$ are both in $P_1 \setminus S_1$, this cycle $C'$ must encircle $b \in P_3$, and therefore encircle all of $P_3$.  This is a contradiction, as $P_3 \cap bd(T) \neq \emptyset$.  We conclude $S_1$ and $S_2$ do not have any adjacent vertices.

In either case, we have found component $S_1$ of $P_1 \setminus e$ not containing $\cc_1$ and component $S_2$ of $P_2 \setminus a$ such that $S_1$ and $S_2$ do not have any adjacent vertices.  Further, we know $S_2 \cap bd(T) = \emptyset$ because $P_2 \cap bd(T) = \emptyset$.
We do not apply Lemma~\ref{lem:s1s2}, because we need to additionally ensure that whichever of $d$ and $f$ is in $S_2$ and adjacent to $e$ is the last vertex of $S_2$ added to $P_2$. We use Lemma~\ref{lem:s1s2x} instead. Let $x$ denote whichever of $d$ or $f$ is in $S_2$.  If $S_2$ includes vertices in addition to $x$, this lemma tells us there exists a sequence of moves through balanced or nearly balanced partitions after which (1) the partition is balanced, (2) all vertices in $S_1$ have been added to $P_2$, or (3) all vertices of $S_2$ except $x$ have been added to $P_1$. In these moves only vertices in $S_1$ and $S_2\setminus x$ have been reassigned and in outcomes (2) and (3) the resulting partition has $|P_1| = k_1 + 1$ and $|P_3| = k_3 - 1$.

If outcome (1) is achieved, we are done.  If outcome (2) is achieved, then $e$ now has a connected 1-neighborhood and 2-neighborhood: All of $a$, $d$, and $f$ remain in $P_2$, because one of $d$ or $f$ is not in $S_2$ and the other is $x$; exactly one of $g$ and $h$, whichever was in $S_1$, is now also in $P_2$; and the other of $g$ and $h$ remains in $P_1$. No matter the district of $e$'s sixth neighbor, $e$ can be removed from $P_1$ and added to $P_2$.  This results in $a$ having a connected 2-neighborhood, consisting of $d$, $e$, $f$, and possibly an additional vertex adjacent to $d$ or $f$ (the black vertex in Figure~\ref{fig:2int_elabel}). Ensuring $a$ has a connected 2-neighborhood at this point is exactly why we need Lemma~\ref{lem:s1s2x} rather than Lemma~\ref{lem:s1s2} so that $d$, $e$, and $f$ are all in $P_2$. As no vertices in $P_3$ have been reassigned, $a$ still has a connected 3-neighborhood, so we add $a$ to $P_3$ to reach a balanced partition. If outcome (3) is achieved but outcome (2) is not achieved, it is now the case that the component of $P_2 \setminus a$ containing $x$ only has a single vertex, $x$. This is the last case we must consider. 

Suppose that $S_2 = \{x\}$. This means $x$ has a connected 2-neighborhood, consisting only of $a$. We note $x \notin bd(T)$, because $x \in P_2$ and $P_2 \cap bd(T) = \emptyset$.  Therefore $x$ must have a connected 1-neighborhood or 3-neighborhood by Lemma~\ref{lem:alternation}. We can also find $v_1$ in the component of $P_1 \setminus e$ not containing $\cc_1$ (what remains of $S_1$) that can be removed and added to another district by Condition~\ref{item:cut_corner} of Lemma~\ref{lem:shrinkable}. If $v_1$ can be added to $P_3$ we do so and are done, so we assume that $v_1$ can be added to $P_2$. Note that because of how $S_1$, $S_2$, and $x$ were chosen, $v_1$ is not adjacent to $x$. 
If $x$ has a connected 3-neighborhood, we add $x$ to $P_3$ and add $v_1$ to $P_2$ to reach a balanced partition. If $x$ has a connected 1-neighborhood, we add $v_1$ to $P_2$ and add $x$ to $P_1$.  As $x$ was the only vertex in its component of $P_2 \setminus a$, there is now only a single component in $P_2 \setminus a$, meaning $a$ now has a connected 2-neighborhood.  As no vertices in $P_3$ have been reassigned, $a$ still has a connected 3-neighborhood as well, so we are done by Lemma~\ref{lem:23int_a23con}.   
This concludes the case where $|E| = 1$. 

\todoo{This is the first application of Lemma~\ref{lem:s1s2x} - is there a way to include a picture?  Seems hard because we don't know what $x$ is or what its neighborhood is? Added some explanation so maybe that helps, and also kept referencing Figure~\ref{fig:2int_elabel}. Is this enough?  }



\underline{Case: $e \notin \clei$, $e \notin bd(T)$, $|E| = 2$}: 
Recall $E$ is the component of $N(a) \cap P_1$ containing $e$. We now consider the case where $|E| = 2$; see Figure~\ref{fig:23int_anbhd-E} or Figure~\ref{fig:2int_E2_label}. The approach will be similar to the case where $|E| = 1$, but rather than showing we can reach a balanced partition directly, we show we can reach a balanced partition or reach a configuration where $|E| = 1$, $|P_1| = k_1 $, and $|P_2| = k_2 + 1$; later, we show how this can be transformed into a balanced partition or the previous case where $|E| = 1$, $|P_1| = k_1 + 1$, and $|P_2| = k_2$, which we have already resolved. 

\begin{figure}
\begin{subfigure}[b]{0.3\textwidth}
	\centering
	\includegraphics[scale = 0.9]{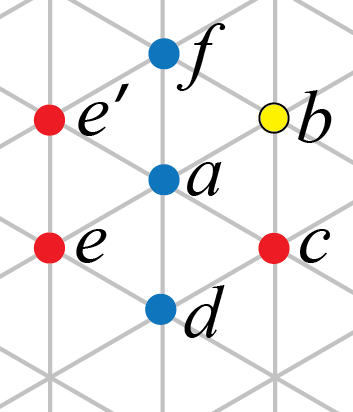}
	\caption{}
	\label{fig:2int_E2_label}
\end{subfigure}
\hfill
\begin{subfigure}[b]{0.3\textwidth}
	\centering
	\includegraphics[scale = 0.9]{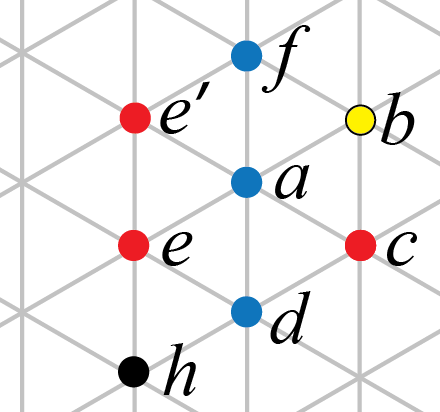}
	\caption{}
	\label{fig:2int_E2_h1}
\end{subfigure}
\hfill
\begin{subfigure}[b]{0.3\textwidth}
	\centering
	\includegraphics[scale = 0.9]{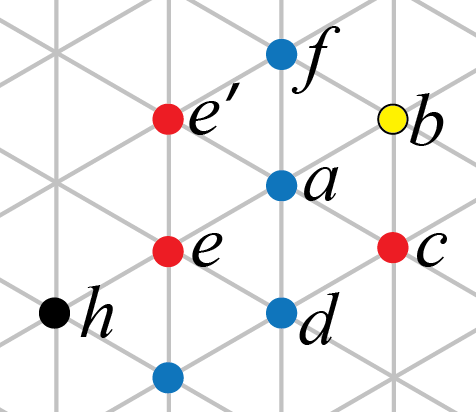}
	\caption{}
	\label{fig:2int_E2_h2}
\end{subfigure}
\caption{Images from the case where $|E| = 2$ in the proof of Lemma~\ref{lem:23int_a23discon_3bdry}. (a) The labeling of $N(a)$ we use when $|E| = 2$. (b,c) We let $h$ be the first vertex of $N(e)$ not in $P_2$, when $N(e)$ is traversed beginning at $a$ followed by $d$, and two examples are shown here.  We show $h \in P_1$, so when $e$ has a disconnected 1-neighborhood these are the only two possibilities for $h$. 	 
}
\label{fig:2int_E2}
\end{figure}

Let $e$ and $e'$ be the two vertices in $E$, with $e$ adjacent to $d$ and $e'$ adjacent to $f$; see Figure~\ref{fig:2int_E2_label}. 
Note $e'$ has three consecutive neighbors that are not in $bd(T)$, $e$, $a$, and $f$; the first because it's an assumption of this case, and the latter two because $P_2 \cap bd(T) = \emptyset$. It follows that $e' \notin bd(T)$. It remains possible that $e' \in \ci \subseteq \clei$. 

If $e$ or $e'$ can be added to $P_2$ without changing any vertices in $\clei$, we do so and have met our first objective, where now $|E| = 1$, $|P_1| = k_1 $, and $|P_2| = k_2 + 1$. Therefore, we assume neither $e$ nor $e'$ can be removed from $P_1$ and added to $P_2$.  For $e$, this means $e$'s 1-neighborhood or 2-neighborhood is disconnected; for  $e'$, this means $e' \in \clei$ or $e'$ has a disconnected 1-neighborhood or 2-neighborhood. Consider $N(e)$; traversing it beginning with $a$ then $d$, let $h$ be the first vertex not in $P_2$.  Two examples of $h$ are shown in Figure~\ref{fig:2int_E2}(b,c). 
Suppose, for the sake of contradiction, that $h \in P_3$. Consider the cycle formed by any path from $b$ to $h$ in $P_3$ together with the path from $h$ to $a$ in $N(e)$ passing through $d$. This cycle contains only vertices in $P_2$ and $P_3$. One of $e$ and $c$ is inside this cycle, and one of $e$ and $c$ is outside this cycle.  As both $e$ and $c$ are in $P_1$, this is a contradiction. Therefore $h \in P_3$ is not possible. Furthermore, $h \in P_2$ is not possible because of how $h$ was chosen and $h \notin T$ is not possible because $e, e' \notin bd(T)$, so we conclude $h \in P_1$. Note because we are assuming $e$ has a disconnected 1-neighborhood, $h$ cannot be adjacent to or equal to $e'$, so the only two possibilities for $h$ are the two shown in Figure~\ref{fig:2int_E2}(b,c).

Suppose first $e$ has a connected 1-neighborhood. Because $e \notin bd(T)$, it must be that all vertices in $N(e)$ on the path from $h \in P_1$ to $e' \in P_1$ that does not pass through $a$ are in $P_1$. This means $e$ has only neighbors in $P_1$ and $P_2$, and $e$'s 1-neighborhood and 2-neighborhood are both connected.  We add $e$ to $P_2$, after which we have met our first objective, where now $|E| = 1$, $|P_1| = k_1 $, and $|P_2| = k_2 + 1$.

Suppose instead $e$ has a disconnected 1-neighborhood. This means $P_1 \setminus e$ has two components; it cannot have three because $e$ already has two adjacent neighbors not in $P_1$, $a$ and $d$. One of the connected components of $P_1 \setminus e$ must contain $e'$, and the other must contain $h$. 
Similar to the previous case above, we show how to find two components $S_1 \subseteq P_1 \setminus e$ and $S_2 \subseteq P_2 \setminus a$ that do not share any adjacent vertices, so that we will then be able to apply Lemma~\ref{lem:s1s2} to unwind them. (We do not need Lemma~\ref{lem:s1s2x} as above because when $|E| = 2$ then $|D| = 1$ and $|F| = 1$, so the problem case addressed by that lemma cannot occur). 

First, suppose $c$ and $\cc_1$ are in the same connected component of $P_1 \setminus e$. We let $S_1$ be the component of $P_1 \setminus e$ not containing $c$ and $\cc_1$, and note this means $\cc_1 \in P_1 \setminus S_1$, one of the hypotheses of Lemma~\ref{lem:s1s2}. 
Let $C$ be the cycle formed by any path from $c$ to $e$ in $P_1$ together with $a$. 
Because $c \notin S_1$, this means $C \cap S_1 = \emptyset$. However, $S_1$ may be outside $C$ or may be inside $C$. 
If $S_1$ is outside $C$, we let $S_2$ be the component of $P_2 \setminus a$ inside $C$, and if $S_1$ is inside $C$, we let $S_2$ be the component of $P_2 \setminus a$ outside $C$.  In either case, $S_1$ and $S_2$ are not adjacent because they are separated by $C$.

Alternately, suppose $c$ and $\cc_1$ are in different components of $P_1 \setminus e$. Note this can only happen if $c \notin \clei$. Let $S_1$ be the component of $P_1 \setminus e$ containing $c$ but not $\cc_1$. 
Let $C$ be the cycle formed by any path from $e$ to $c$ in $P_1$ together with $a$; note that, except for $a$ and $e$, all vertices of $C$ are in $S_1$. Note $C$ cannot encircle $b \in P_3$ because then it would separate $P_3$ from $bd(T)$ and we know $P_3 \cap bd(T) \neq \emptyset$. Therefore this cycle must instead encircle $d \in P_2$.
Suppose for the sake of contradiction that $C$ does not pass through the component of $P_1 \cap N(e)$ containing $h$, which means it must pass through the component of $N(e) \cap P_1$ containing $e'$. This means the component of $N(e) \cap P_1$ containing $e'$ is in $S_1$, and so the component of $N(e) \cap P_1$ containing $h$ must be in the other component of $P_1 \setminus e$, which contains $\cc_1$.  However, as $d$ is encircled by $C$ then $h$ must also be encircled by $C$.  This implies the component of $P_1 \setminus e$ containing $h$ is encircled by $C$, a contradiction as it contains boundary vertex $\cc_1$.  Therefore, we conclude $C$ must pass through the component of $P_1 \cap N(e)$ containing $h$. 
This means $S_1$ must be the component of $P_1 \setminus e$ containing $h$ while the component of $P_1 \setminus e$ containing $\cc_1$ contains $e'$.

We let $S_2$ be the component of $P_2 \setminus a$ containing $f$.  We note $S_1$ and $S_2$ do not have any adjacent vertices within $N(a)$ or $N(e)$. Suppose, for the sake of contradiction, that $S_1$ and $S_2$ are adjacent outside of $N(a)$ and $N(e)$, with vertex $x_1 \in S_1$ next to $x_2 \in S_2$. Let $C'$ be the cycle formed by any path from $e$ to $x_1$ going through $S_1$, together with any path from $x_2$ to $a$ going through $S_2$.  Note because $S_1$ and $S_2$ contain $h$ and $f$, respectively, $C'$ must encircle $e'$ or $b$.  The first is impossible because $e'$ is in the same component of $P_1 \setminus e$ as $\cc_1 \in bd(T)$, and the second is impossible because $b \in P_3$ and $P_3 \cap bd(T) \neq \emptyset$.  We conclude that $S_1$ and $S_2$ do not have any adjacent vertices. 
  

In either case, we have found component $S_1$ of $P_1 \setminus e$ not containing $\cc_1$ and component $S_2$ of $P_2 \setminus a$ such that $S_1$ and $S_2$ do not have any adjacent vertices. We know $S_2 \cap bd(T) = \emptyset$ because $P_2 \cap bd(T) = \emptyset$. Before applying Lemma~\ref{lem:s1s2}, we additionally show $S_1 \cap bd(T) = \emptyset$, which will be necessary to know the application of this lemma maintains the main condition defining our current Case B, that $P_2 \cap bd(T) = \emptyset$. 
Suppose, for the sake of contradiction, that $S_1 \cap bd(T) \neq \emptyset$, that is, that there exists $x \in S_1 \cap bd(T)$. Because $e$ has a disconnected 1-neighborhood, both paths from $e'$ to $h$ in $N(e)$ must contain a vertex not in $S_1$.  One contains $a$ and $d$, and let $y \notin P_1$ be any vertex not in $P_1$ on the other path from $e'$ to $h$ in $N(e)$. Consider the cycle formed by any path from $\cc_1$ via $e$ to $x$ in $P_1$, together with any path from $x$ to $\cc_1$ in $\Gtri$ where all vertices except $x$ and $\cc_1$ are outside $T$. This cycle must separate $y$ from both $a \in P_2$ and its neighbor $b \in P_3$. Whether $y \in P_2$ or $y \in P_3$, this implies a district is disconnected, a contradiction.  We conclude that $S_1 \cap bd(T) = \emptyset$.

We now apply Lemma~\ref{lem:s1s2} to $S_1$ and $S_2$, which tells us there exists a a sequence of moves through balanced or nearly balanced partitions after which (1) the partition is balanced, (2) all vertices in $S_1$ have been added to $P_2$, or (3) all vertices of $S_2$ have been added to $P_1$. In these moves only vertices in $S_1$ and $S_2$ have been reassigned and in outcomes (2) and (3) the resulting partition has $|P_1| = k_1 + 1$ and $|P_3| = k_3 - 1$. If outcome (1) occurs, we are done.  

If outcome (3) is achieved, all vertices in $N(a) \cap S_2$ have been added to $P_1$. However, before concluding that $a$ now has a connected 2-neighborhood, we must also consider whether vertices in $S_1 \cap N(a)$ may have been added to $a$'s 2-neighborhood during this process. 
Because $S_1$ is always a component of $P_1 \setminus e$, it holds that $e \notin S_1$, so $e$ has not been added to $P_2$. The remaining vertices in $N(a) \cap P_1$ are $e'$ and $c$. If $e' \in S_1$, then because $S_1$ and $S_2$ are not adjacent, it must be that $d \in S_2$. This means that $f \notin S_2$, so $f$ remains in $P_2$ even after Lemma~\ref{lem:s1s2} is applied. Because $e'$ is adjacent to $f$ in $N(a)$, even if $e'$ is added to $P_2$, this cannot disconnect $a$'s 2-neighborhood. Alternately, if $c \in S_1$, then from the arguments above we also know $h \in S_1$, so $f \in S_2$. In particular this means $d \notin S_2$, so $d$ remains in $P_2$ even after Lemma~\ref{lem:s1s2} is applied. Because $c$ is adjacent to $d$ in $N(a)$, even if $c$ is added to $P_2$, this cannot disconnect $a$'s 2-neighborhood. We conclude that if outcome (3) of Lemma~\ref{lem:s1s2} occurs, then afterwards $a$ has a connected 2-neighborhood. 
Because $a$ originally had a connected 3-neighborhood and no vertices in $P_3$ have been reassigned, $a$ still has a connected 3-neighborhood. This means $a$ can be added to $P_3$, reaching a balanced partition. 

If outcome (2) is achieved but outcome (3) is not achieved, $e$ now has a connected 1-neighborhood because one connected component of $P_1 \setminus e$ has been added in its entirety to $P_2$. By Lemma~\ref{lem:alternation}, $e$ must have a connected 2-neighborhood or 3-neighborhood. If $e$ has a connected 3-neighborhood, we add $e$ to $P_3$ and have reached a balanced partition. If $e$ has a connected 2-neighborhood, we add $e$ to $P_2$.  If $e'$ was in $S_1$, then $e'$ has already been added to $P_2$, meaning that $a$ now has a connected 2-neighborhood (in this case $c \notin S_1$ so we do not have to consider the case where $c$ has been also added to $P_2$). Because $a$ also has a connected 3-neighborhood then $a$ can be added to $P_3$, producing a balanced partition. If $e'$ was not in $S_1$, then $e'$ remains in $P_1$, and we have  met our first objective, where now $|E| = 1$ ($E$ now only contains $e'$), $|P_1| = k_1$, and $|P_2| = k_2 + 1$. Additionally, as $S_1 \cap bd(T) = \emptyset$, no vertices in $bd(T)$ have been added to $P_2$, so $P_2 \cap bd(T) = \emptyset$ remains true. We now show what to do once our first objective has been met.


All that remains is to show that if we have added $e$ or $e'$ to $P_2$, resulting in a configuration where $|E| = 1$, $|P_2| = k_2 + 1$, and $|P_1 | = k_1$, that there exists a move transforming this into a balanced partition or an earlier case where $|E| = 1$, $|P_1| = k_1 + 1$, and $|P_2| = k_2$, a case which we have already resolved. As $P_2$ has greater than four vertices (because $k_2 \geq n \geq 5$), there must be at least one vertex of $P_2 \setminus a$ that is not in $N(a)$. Let $S_2$ be non-empty component of $P_2 \setminus \{a \cup N(a)\}$; by Condition~\ref{item:nobd} of Lemma~\ref{lem:shrinkable}, $S_2$ contains a vertex $v_2$ that can be removed from $P_2$ and added to another district. If $v_2$ can be added to $P_3$, we do so and have reached a balanced partition.  Otherwise, $v_2$ can be added to $P_1$.  Doing so does not change $N(a)$, so we are now in the case where $|E| = 1$, $|P_1| = k_1 + 1$, and $|P_2| = k_2$.  This case was already resolved earlier in this proof, so we know  there exists a sequence of moves resulting in a balanced partition, completing this proof. \end{proof}

\begin{cor}\label{cor:p2nobd}
		Let $P$ be a partition such that $\cli \subseteq P_1$, $P_1 \cap \cgi \neq \emptyset$, $|P_1| = k_1 + 1$, and $|P_3| = k_3 - 1$. Suppose $P_2 \cap bd(T) = \emptyset$.  There exists a sequence of moves resulting in a balanced partition that does not reassign any vertices in $P_1 \cap \clei$.
\end{cor}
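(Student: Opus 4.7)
The plan is to combine the three preceding lemmas (Lemma~\ref{lem:p2_3nbhd_discon}, Lemma~\ref{lem:23int_a23con}, and Lemma~\ref{lem:23int_a23discon_3bdry}) via a case split on the structure of the neighborhood of a tricolor vertex. All the substantive work has already been done in proving those lemmas, so this corollary amounts to verifying the case hypotheses are exhaustive.

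First I would invoke Lemma~\ref{lem:2tricolortri} with the hypothesis $P_2 \cap bd(T) = \emptyset$ to obtain a triangular face of $T$ whose three vertices lie in three distinct districts. Label these vertices $a \in P_2$, $b \in P_3$, $c \in P_1$. Since $a \in P_2$ and $P_2 \cap bd(T) = \emptyset$, we have $a \notin bd(T)$, and since $b$ is adjacent to $a$, the pair $a,b$ are not both in $bd(T)$. This setup matches the hypotheses needed by each of the three lemmas to be applied.

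Next I would split into three cases on the connectivity of $a$'s $2$- and $3$-neighborhoods. If $a$'s $3$-neighborhood is disconnected, apply Lemma~\ref{lem:p2_3nbhd_discon} directly to $a$: the remaining hypotheses ($\cli \subseteq P_1$, $P_1 \cap \cgi \neq \emptyset$, $|P_1| = k_1+1$, $|P_3| = k_3 -1$) are given by the corollary's assumptions, and this yields a balanced partition without disturbing $P_1 \cap \clei$. If $a$'s $3$-neighborhood is connected but $a$'s $2$-neighborhood is disconnected, the triple $(a,b,c)$ together with $a \notin bd(T)$ and $P_2 \cap bd(T) = \emptyset$ is exactly the setup of Lemma~\ref{lem:23int_a23discon_3bdry}, so we apply it. Finally, if both $a$'s $2$-neighborhood and $3$-neighborhood are connected, we have adjacent $a \in P_2$, $b \in P_3$ not both in $bd(T)$ with both relevant neighborhoods connected, so Lemma~\ref{lem:23int_a23con} applies.

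Since the three cases together exhaust all possibilities for the connectivity of $a$'s $2$- and $3$-neighborhoods, the conclusion follows. The main obstacle here is really just lining up the hypotheses; the difficulty of Case B is entirely absorbed into Lemma~\ref{lem:23int_a23discon_3bdry}, where the multi-case internal analysis (dependent on the precise structure of $N(a)$ and on whether a key vertex $e$ lies in $\clei$ or on $bd(T)$) was carried out. No new technical work should be needed at the corollary level.
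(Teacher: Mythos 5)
Your proof is correct and matches the paper's own argument almost verbatim: both obtain a tricolor triangle from Lemma~\ref{lem:2tricolortri}, note $a \notin bd(T)$, and dispatch the three exhaustive connectivity cases for $a$'s 2- and 3-neighborhoods to Lemmas~\ref{lem:23int_a23con}, \ref{lem:p2_3nbhd_discon}, and \ref{lem:23int_a23discon_3bdry} respectively. The paper additionally cites Lemma~\ref{lem:int_adj} to note $P_2$ is adjacent to both other districts, but this is not a substantive difference.
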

\begin{proof} Because $P_2 \cap bd(T) = \emptyset$, by Lemma~\ref{lem:int_adj} we know that $P_2$ must be adjacent to both $P_1$ and $P_3$. By Lemma~\ref{lem:2tricolortri}, there exists a tricolor triangle whose three vertices are in three different districts. Let $a \in P_2$, $b \in P_3$, and $c \in P_1$ be three such vertices incident on a common triangular face.  Because $P_2 \cap bd(T) = \emptyset$, $a \notin bd(T)$. If $a$'s 2-neighborhood and 3-neighborhood are both connected, then Lemma~\ref{lem:23int_a23con} completes the proof.  If $a$'s 3-neighborhood is not connected, Lemma~\ref{lem:p2_3nbhd_discon} completes the proof.  If $a$'s 3-neighborhood is connected but $a$'s 2-neighborhood is disconnected, Lemma~\ref{lem:23int_a23discon_3bdry} completes the proof. 
\end{proof}

This concludes Case B, where $P_2 \cap bd(T) = \emptyset$.  We proceed to our remaining two cases. 

\subsection{Case C: $P_3 \cap bd(T) = \emptyset$ }

In this case, we must be much more careful.  Because $P_2 \cap bd(T) \neq \emptyset$, it's not always true that for a cut vertex $a$, any connected component of $P_2\setminus a$ is shrinkable, but we will show how we can always find such a component. 

However, we begin with an easier subcase: when there is a tricolor triangle whose vertex in $P_2$ is in $bd(T)$. Recall that if $P_3 \cap bd(T) = \emptyset$, then there must exist a tricolor triangle, as we assume in this lemma. Note this lemma focuses on the case where $a$'s 2-neighborhood is not connected because Lemma~\ref{lem:23int_a23con} applies when $a$'s 2-neighborhood is connected.  

\begin{lem}\label{lem:23int_abd}
Let $P$ be a partition such that $\cli \subseteq P_1$, $P_1 \cap \cgi \neq \emptyset$, $|P_1| = k_1 + 1$, and $|P_3| = k_3 - 1$. Suppose there exists vertices $a \in P_2$, $b \in P_3$, and $c \in P_1$ that are incident on a common triangular face of $T$, where $a \in bd(T)$ and $b \notin bd(T)$. If $a$'s 2-neighborhood is not connected, there exists a sequence of moves resulting in a balanced partition that does not reassign any vertices in $P_1 \cap \clei$. 
\end{lem}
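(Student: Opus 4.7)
The plan proceeds by first analyzing $N(a)$, then trying a direct two-flip strategy, and falling back on the Unwinding Lemma when the direct strategy fails. First, if $a$'s $3$-neighborhood is disconnected, Lemma~\ref{lem:p2_3nbhd_discon} finishes immediately (after the innocuous convention $\ci \cap P_1 \neq \ci$, which can be enforced by shifting $i$ up if needed), so I assume $a$'s $3$-neighborhood is connected. Combining $a \in bd(T)$, the requirement that $a$ has an interior $P_3$-neighbor $b$ plus an adjacent interior $P_1$-neighbor $c$ and still another $P_2$-neighbor (forced by $|P_2| \geq n$), $a$ cannot be a corner, and its four $T$-neighbors in cyclic order must be $x, c, b, y$ (up to reflection) with $x, y \in bd(T)$ and $c, b$ interior. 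Since $a$'s $2$-neighborhood is disconnected and $a$'s only $P_3$-neighbor in $T$ is $b$, both $x$ and $y$ must lie in $P_2$, so the two components of $N(a) \cap P_2$ are exactly $\{x\}$ and $\{y\}$.

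Next, by Lemma~\ref{lem:1-to-change} pick $v \in P_1 \cap \cgi$ removable from $P_1$. If $v$ can be added to $P_3$ I am done in one move, so I assume $v$ only flips to $P_2$. The main strategy is to flip $c$ into $P_2$, which merges $\{x\}$ and $\{y\}$ through $c$ into the connected set $\{x, c, y\}$, and then flip $a$ into $P_3$ for a balanced partition. If $c$'s $3$-neighborhood is disconnected, Lemma~\ref{lem:p1_3nbhd_discon_bdry} (using $x \in P_2 \cap bd(T)$) finishes in one move. Otherwise $c$'s $3$-neighborhood is connected and its $2$-neighborhood contains $x$; when additionally $c$'s $1$-neighborhood is connected and $c \notin \clei$, the two-flip sequence $c \to P_2$ then $a \to P_3$ succeeds. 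A short column argument using Lemma~\ref{lem:i_n-2} handles $c \notin \clei$ in almost all cases, with the residual edge case $a \in \ci$ and $c \in \ci$ treated separately (e.g.\ by exhibiting a different tricolor face via Lemma~\ref{lem:2tricolortri}, or by falling through to the unwinding step below).

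The remaining case is $c$ a cut vertex of $P_1$. Here I apply the Unwinding Lemma (Lemma~\ref{lem:s1s2}) with $w_1 = c$, $w_2 = a$, $S_1$ the component of $P_1 \setminus c$ not containing $\cc_1$ (so $\cc_1 \in P_1 \setminus S_1$ and $S_1 \subseteq \cgi$ by Lemma~\ref{lem:s1_c1}), and $S_2$ one of the two components $T_x \ni x$, $T_y \ni y$ of $P_2 \setminus a$. The choice of $S_2$ is dictated by a separating cycle assembled from a path in $P_1 \setminus S_1$ from $c$ to $\cc_1$ together with the edges $ca$ and either $ax$ or $ay$: whichever component of $P_2 \setminus a$ lies across this cycle from $S_1$ becomes $S_2$, forcing $S_1$ and $S_2$ to share no adjacent vertices. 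Since $a \in (P_2 \setminus S_2) \cap bd(T)$ the hypotheses of Lemma~\ref{lem:s1s2} are met, giving three outcomes: (i) balanced, done; (ii) all of $S_1$ absorbed into $P_2$, after which $c$'s $1$-neighborhood becomes connected and I return to the two-flip strategy of the previous paragraph; or (iii) all of $S_2$ absorbed into $P_1$, after which whichever of $x, y$ lay in $S_2$ is now in $P_1$, making $a$'s $2$-neighborhood a single connected component and reducing to Lemma~\ref{lem:23int_a23con}.

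The main obstacle will be rigorously ruling out $S_1$--$S_2$ adjacencies outside the local patch around $a$ and $c$: the two components $T_x, T_y$ of $P_2 \setminus a$ can extend arbitrarily deep into $T$, so the separating-cycle argument must be executed carefully via a closed-walk contradiction in the style of Lemma~\ref{lem:23bdryadj_abinvalid}, likely leaning on the ambient Case~C assumption $P_3 \cap bd(T) = \emptyset$ to force a far adjacency to produce a cycle in $P_1 \cup P_2$ encircling $P_3$. Secondary complications---the edge case $c \in \clei$, the subcase where $c$'s $2$-neighborhood is disconnected, and the possibility that $v$ is adjacent to the flip targets and interferes---should all yield to local Shrinkability-Lemma analysis using $|P_2 \cap bd(T)| \geq 2$ provided by $x, y$.
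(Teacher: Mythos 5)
There is a genuine gap, and it sits at the center of your ``direct two-flip strategy.'' You claim that flipping $c$ into $P_2$ ``merges $\{x\}$ and $\{y\}$ through $c$ into the connected set $\{x,c,y\}$.'' It does not. Since $b \notin bd(T)$ and $a$'s 2-neighborhood is disconnected, the four neighbors of $a$ in $T$ form the path $x - c - b - y$ in the 6-cycle $N(a)$ (you derive this correctly), so $c$ is \emph{not} adjacent to $y$: they are separated by $b \in P_3$ on one side and by $a$'s two neighbors outside $T$ on the other. After the flip, $N(a)\cap P_2 = \{x,c,y\}$ still has two components, $\{x,c\}$ and $\{y\}$, so the follow-up flip $a \to P_3$ is invalid and the strategy collapses. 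The repair is the move the paper actually makes: when $c$'s 1-neighborhood and 3-neighborhood are both connected, flip $c$ \emph{directly into $P_3$} (it is adjacent to $b \in P_3$), which yields a balanced partition in one step and makes the whole question of $c$'s 2-neighborhood moot. The same substitution is needed in outcome (ii) of your unwinding step, where you ``return to the two-flip strategy.''

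The rest of your skeleton matches the paper: dispatching disconnected 3-neighborhoods of $a$ and $c$ via Lemmas~\ref{lem:p2_3nbhd_discon} and~\ref{lem:p1_3nbhd_discon_bdry}, and, when $c$ is a cut vertex of $P_1$, applying the Unwinding Lemma to a component $S_1$ of $P_1\setminus c$ and a component $S_2$ of $P_2 \setminus a$, with outcome (iii) correctly reducing to Lemma~\ref{lem:23int_a23con}. Two further cautions. First, your separating cycle (a $P_1$-path from $c$ to $\cc_1$ plus edges through $a$) is left as ``the main obstacle''; the paper avoids this by pinning down $N(c)$ exactly ($f\in P_1$, $g\in P_2$, $h\in P_1$) and taking the cycle to be a shortest $P_2$-path from $g$ to $a$ together with $c$, which cleanly separates the inside component of $P_1\setminus c$ from the component of $P_2\setminus a$ disjoint from that path. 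Second, you propose leaning on $P_3\cap bd(T)=\emptyset$, which is not a hypothesis of the lemma as stated (even though it holds where the lemma is invoked); the paper's argument does not need it.
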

\begin{proof}
	Let $a \in bd(T)$, and let $a \in P_2$, $b \in P_3$, and $c \in P_1$ be incident on a common triangular face $F$.  If $a$'s 2-neighborhood is disconnected, it must be that $a$'s two neighbors in $bd(T)$ are in $P_2$, while its two interior neighbors are $b$ and $c$. Let $d$ be $a$'s neighbor in $bd(T)$ that is adjacent to $c$, and let $e$ be $a$'s neighbor in $bd(T)$ that is adjacent to $b$; see Figure~\ref{fig:23int_abd}(a). 
		
	\begin{figure}
		\centering
		\begin{subfigure}[b]{0.45\textwidth}
			\centering
			\includegraphics[scale = 0.9]{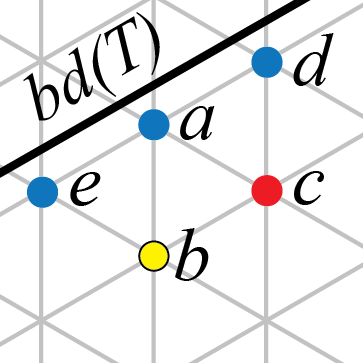}
			\caption{}
			\label{fig:23int_abd-labeling}
		\end{subfigure}
		\hfill
		\begin{subfigure}[b]{0.45\textwidth}
			\centering
			\includegraphics[scale = 0.9]{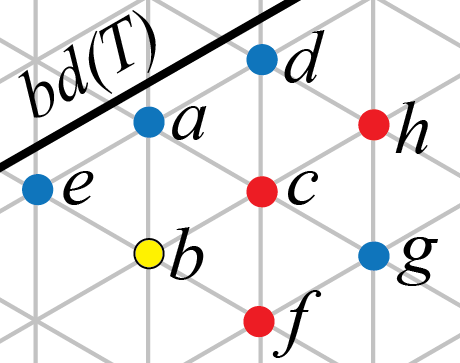}
			\caption{}
			\label{fig:23int_abd-fgh}
		\end{subfigure}

		\caption{Figures from the proof of Lemma~\ref{lem:23int_abd}, where tricolor triangle $abc$ has $a \in bd(T)$ and $a$'s 2-neighborhood is disconnected. (a) The labeling of the vertices in $a$' neighborhood that we use and the district assignment of these vertices resulting from our assumptions. (b) The labeling of $c$'s neighborhood that we use and the district assignment of these vertices resulting from our assumptions. }\label{fig:23int_abd}
	\end{figure}
	
	First, note that $c$'s 3-neighborhood must be connected: if it is not, by Lemma~\ref{lem:p1_3nbhd_discon_bdry} we are done. Note also that $c \in \cgi$: If $a$ is along the top or bottom of $T$, then $a \in P_2$ or $d \in P_2$ is left of $c$, while if $a$ is in $T$'s right boundary, then $c \in \cc_{n-1}$ which, because $i \leq n-2$ by Lemma~\ref{lem:i_n-2}, means $c \in \cgi$. If $c$'s 1-neighborhood is connected, we remove $c$ from $P_1$ and add it to $P_3$, completing the lemma. 
	
	If $c$'s 1-neighborhood is disconnected, it must be that $c$ and $b$'s common neighbor $f \in P_1$; $c$ and $f$'s common neighbor $g \in P_2$ ($g$ cannot be in $P_3$ because $c$ already has neighbor $B \in P_3$ and $c$'s 3-neighborhood is connected); and $c$ and $g$'s common neighbor $h \in P_1$; see Figure~\ref{fig:23int_abd}(b). 
	Consider any shortest path $Q$ from $g$ to $a$ in $P_2$; together with vertex $c$, this forms a cycle $C$.  We let $S_1$ be the component of $P_1 \setminus c$ inside this cycle, and note $S_1$ cannot contain $\cc_1$ because $\cc_1 \in bd(T)$. We let $S_2$ be the component of $P_2 \setminus \{a\}$ not containing any vertices of $Q$, which must exist because $a$ is a cut vertex of $P_2$ and, because $Q$ is a shortest path, will only contain vertices in one component of $P_2 \setminus a$.  We also note that $(P_2 \setminus S_2) \cap bd(T) \neq \emptyset$ because it contains exactly one of $d$ or $e$, and $S_1$ and $S_2$ have no adjacent vertices because they are separated by cycle $C$. This means we can apply Lemma~\ref{lem:s1s2}, meaning there exists a sequence of moves through balanced or nearly balanced partitions after which either (1) the partition is balanced, (2) all vertices in $S_1$ have been added to $S_2$, or (3) all vertices of $S_2$ have been added to $S_3$. In these moves only vertices in $S_1$ and $S_2$ have been reassigned, and in outcomes (2) and (3) the resulting partition has $|P_1| = k_1 + 1$ and $|P_3| = k_3 - 1$. If outcome (1) occurs we are done. If outcome (2) occurs, now $c$'s 1-neighborhood is connected, so $c$ can be added to $P_3$. If  outcome (3) occurs, now $a$'s 2-neighborhood and 3-neighborhood are both connected, and we are done by Lemma~\ref{lem:23int_a23con}. In all three cases, we have reached a balanced partition without reassigning any vertices in $P_1 \cap \clei$.
\end{proof}

\noindent When we do not have a tricolor triangle whose vertex in $P_2$ is in $bd(T)$, it is much harder to find a component of $S_2$ that is shrinkable.  However, it is still possible to do so, using Lemma~\ref{lem:2tricolortri}, which states when $P_3 \cap bd(T) = \emptyset$ there must be two tricolor triangles with opposite chiralities. This is what we do next. 

\begin{lem}\label{lem:p3nobd_S2}
	Let $P$ be a partition such that 
	$\cli \subseteq P_1$ and 
	$P_3 \cap bd(T) = \emptyset$.
	Let $a'\in P_2$ be in one tricolor triangle and let $a'' \in P_2$ be in the other tricolor triangle (note $a' = a''$ is possible). Suppose $a'$ and $a''$ are both cut vertices of $P_2$, neither is in $bd(T)$, and each has a connected 3-neighborhood. 	
	Then there exists a tricolor triangle with vertices $a \in P_2$, $b \in P_3$, and $c \in P_1$ where, if $d$ is the first vertex in $N(a) \cap P_2$ on the longer path from $c$ to $b$ in $N(a)$, the component $S_2$ of $P_2 \setminus a$ containing $d$ has $S_2 \cap bd(T) = \emptyset$.  Furthermore, $a$ must have another neighbor $e \in P_1$ in a different connected component of $N(a) \cap P_1$ than $c$, and the cycle consisting of any path from $c$ to $e$ in $P_1$ together with $a$ encircles $S_2$. 
\end{lem}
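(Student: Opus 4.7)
My approach leverages the structural cycle around $P_3$ implicit in the proof of Lemma~\ref{lem:2tricolortri}. First, I would re-run that construction with $P_3$ in place of $P_2$ (valid since $P_3 \cap bd(T) = \emptyset$), producing a cycle $C$ in $T$ whose vertices lie in $P_1 \cup P_2$ and which strictly encloses all of $P_3$. Because there are exactly two $P_1$--$P_2$ transition edges along $C$, and these must correspond to the two tricolor triangles promised by Lemma~\ref{lem:2tricolortri}, both $a'$ and $a''$ lie on $C$ at these transition points, with their $c', c''$ as the adjacent $P_1$-vertices along $C$ and $b', b''$ enclosed inside $C$.

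Next, for either candidate $a \in \{a',a''\}$, I would analyze $N(a)$ using the hypotheses. Since $a$'s 3-neighborhood is connected and contains $b$, and $c \in P_1$ is adjacent to $b$ in $N(a)$, all $P_3$-vertices of $N(a)$ form a contiguous arc anchored at $b$ that extends along the long path from $c$. Because $a$ is a cut vertex of $P_2$ with $a \notin bd(T)$, $N(a) \cap P_2$ has at least two components, and these cannot be separated by $P_3$-vertices (confined to the far end of the long path), so they must be separated along the long path by at least one vertex $e \in P_1$. This produces the required $e$ in a different component of $N(a) \cap P_1$ than $c$. Any cycle formed by a path from $c$ to $e$ in $P_1$ together with $a$ then encircles the $P_2$-block containing $d$: the component $S_2$ of $P_2 \setminus a$ containing $d$ cannot extend outside this cycle, since doing so would require a path in $P_2 \setminus a$ crossing either a $P_1$-vertex on the cycle or $a$ itself.

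The main step, and the main obstacle, is showing $S_2 \cap bd(T) = \emptyset$ for at least one of the two candidates. I would argue by contradiction: if both $S_2'$ and $S_2''$ meet $bd(T)$, the opposite chiralities of the two tricolor triangles place the inner $P_2$-directions at $a'$ and $a''$ on opposite sides of the $P_2$-arc of $C$ relative to $P_3$. Concatenating a $P_2$-path from $d'$ to $bd(T)$ through $S_2'$, the $P_2$-arc of $C$ from $a'$ to $a''$, and a $P_2$-path from $d''$ to $bd(T)$ through $S_2''$ would produce a closed region in $P_2 \cup bd(T)$ that separates $\cc_1 \in P_1 \cap bd(T)$ from the $P_1$-arc of $C$, contradicting simple connectivity of $P_1$. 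When $a' = a''$, the two triangles share this $P_2$-vertex but still have distinct $c, b$; the same chirality-based argument applies using the two distinct inner components at this common $a$. The key difficulty will be making the geometric chirality argument fully rigorous --- carefully tracking how the inner $P_2$-components at $a'$ and $a''$ sit on opposite sides of $C$ and deriving the separation contradiction when both reach the boundary. The tricolor triangle whose inner $S_2$ avoids $bd(T)$ then satisfies all conclusions of the lemma.
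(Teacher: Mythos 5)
Your overall strategy is the same as the paper's: take the two opposite-chirality tricolor triangles from Lemma~\ref{lem:2tricolortri}, derive the rotational order $b,c,d,e,f$ in $N(a)$ at each candidate (your derivation of $e$ from the connected 3-neighborhood plus the cut-vertex property matches the paper's), and rule out both inner components meeting $bd(T)$ by a planar separation argument. The problem is that the separation argument is the crux, and you explicitly leave it as ``the key difficulty'' while the sketch has concrete gaps. The paper's version first proves that any shortest path $Q_2$ from $a'$ to $a''$ in $P_2$ must leave $a'$ through the component of $N(a')\cap P_2$ containing $f'$ (not $d'$), and symmetrically at $a''$: otherwise $Q_2$, a $P_3$-path from $b'$ to $b''$, and the edges $a'b'$, $a''b''$ form a cycle in $P_2\cup P_3$ separating $c'$ from $e'$. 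Only with that in hand does the boundary-to-boundary cycle (through $S_2'$, $Q_2$, $S_2''$, and the exterior of $T$) visibly separate two vertices of $P_1$ --- namely $c'$ from $e'$, because the cycle enters $a'$ on the $d'$-side and leaves on the $f'$-side, and $c'$ and $e'$ lie on opposite arcs of $N(a')$ between those two directions. Your version substitutes ``the $P_2$-arc of $C$'' for $Q_2$ without establishing which side of $a'$ that arc uses, and the claim that the resulting region separates $\cc_1$ from ``the $P_1$-arc of $C$'' is not evidently true; identifying a concrete separated pair in $P_1$ is exactly what needs proof.

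Two further gaps. For the final conclusion you argue only that $S_2$ cannot cross the cycle formed by a $c$--$e$ path in $P_1$ together with $a$, which shows $S_2$ lies entirely on one side, not that it is enclosed; the paper settles which side by noting that if the cycle enclosed the \emph{other} component of $P_2\setminus a$, that component would avoid $bd(T)$, and combined with $S_2\cap bd(T)=\emptyset$ this would give $P_2\cap bd(T)=\emptyset$, contradicting Lemma~\ref{lem:p2p3int}. And your treatment of $a'=a''$ (``the same chirality-based argument applies'') glosses over the case analysis: of the three sub-configurations, $b'\neq b''$ with $c'=c''$ and $b'\neq b''$ with $c'\neq c''$ are in fact impossible under the hypotheses (the latter forces a disconnected 3-neighborhood at $a$, the former separates a district from $bd(T)$), and only $b'=b''$, $c'\neq c''$ survives and requires the two-inner-components argument; as sketched, your argument would not directly apply to the impossible configurations, so you must first show they cannot occur.
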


\begin{proof}
	By Lemma~\ref{lem:2tricolortri}, there exists two tricolor triangles in this partition with opposite chiralities.  Let $a' \in P_2$, $b' \in P_3$, and $c' \in P_1$ be the vertices of the tricolor triangle for which these vertices are oriented clockwise, and let $a'' \in P_2$, $b'' \in P_3$, and $c'' \in P_1$ be the vertices of the tricolor triangle for which these vertices are oriented counterclockwise.  Because $P_3 \cap bd(T) = \emptyset$, we also know from Lemma~\ref{lem:p2p3int} that $P_2 \cap bd(T) \neq \emptyset$.  
	
	First, we consider the case where $a' \neq a''$, though one or both of $b' = b''$ or $c' = c''$ may be true. 
	Because $a'$ is a cut vertex of $P_2$ and not in $bd(T)$, in $N(a')$ there must exist vertices $d'$ and $f'$ in separate components of $P_2 \cap N(a')$; we label $d'$ and $f'$ such that the longer (clockwise) path from $c'$ to $b'$ in $N(a')$ encounters $d'$ before $f'$. One path from $d'$ to $f'$ in $N(a')$ contains $b'$ and $c'$, and the other path from $d'$ to $f'$ in $N(a')$ must contain some $e' \notin P_2$; $e'$ cannot be in $P_3$ because we assume $a'$ has a connected 3-neighborhood, so we must have $e' \in P_1$. Because $c'$ is clockwise from $b'$ in $N(a')$ (by our original assumption), this means $N(a')$ must include in clockwise order (but not necessarily consecutively): $b',c',d',e',f'$. See Figure~\ref{fig:3int_aa'-Q3} for an example.   Similarly, because $a''$ is also a cut vertex of $P_2$, it must have neighbors $d''$ and $f''$ in separate components of $P_2 \cap N(a)$, labeled so that the longer (counterclockwise) path from $c''$ to $b''$ in $N(a'')$ encounters $d''$ before $f''$.  
	As above for $a'$,  $N(a'')$ must include in counterclockwise order (but not necessarily consecutively): $b'', c'', d'' \in P_2, e'' \in P_1, f'' \in P_2$.  Both $N(a')$ and $N(a'')$ contain one additional vertex we have not assigned a label to.  An example is shown in Figure~\ref{fig:3int_aa'}(a), though note this is not the only possibility for what $N(a')$ and $N(a'')$ can look like, especially as $b' = b''$ or $c' = c''$ is possible. It is also possible $a'$ and $a''$ are adjacent, meaning some vertices have been given two names, one based on its position relative to $a'$ and one based on its position relative to $a''$.  
	Regardless, we will show one of $d'$ or $d''$ satisfies the conclusions of the lemma. 
	
	\begin{figure}
		\centering
		\begin{subfigure}[b]{0.45\textwidth}
			\centering
			\includegraphics[scale = 0.8]{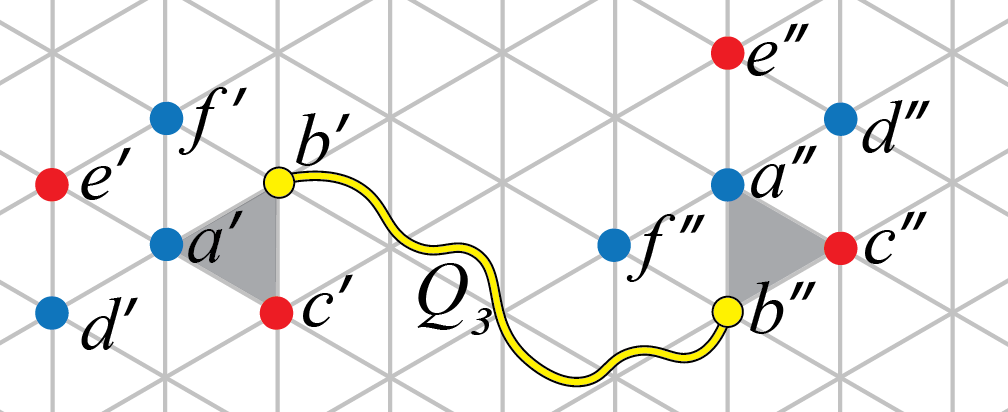}
			\caption{}
			\label{fig:3int_aa'-Q3}
		\end{subfigure}
		\hfill
		\begin{subfigure}[b]{0.45\textwidth}
			\centering
			\includegraphics[scale = 0.8]{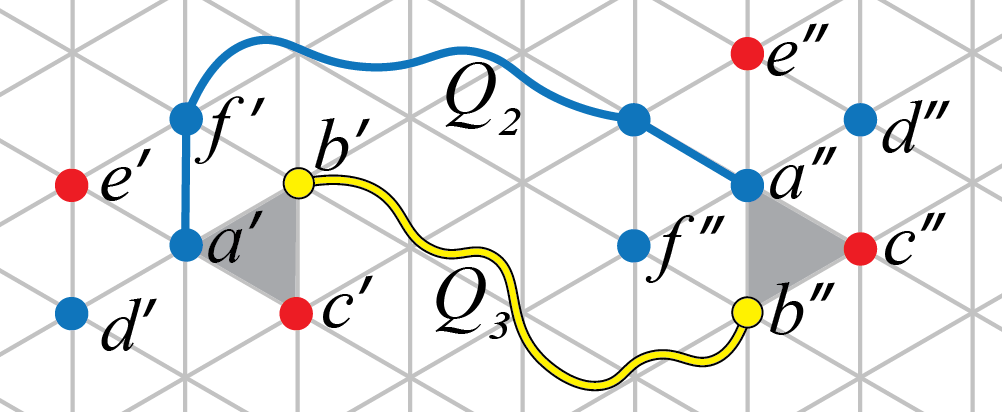}
			\caption{}
			\label{fig:3int_aa'-Q2}
		\end{subfigure}
		\caption{Figures from the proof of Lemma~\ref{lem:p3nobd_S2}, where there exist two tricolor triangles (shaded) of different chiralities. (a) An example of what the partition in the proof of Lemma~\ref{lem:p3nobd_S2} might look like, though this is not the only possibility as $N(a')$ and $N(a'')$, which must have these five labeled vertices in the order given, may not have these five vertices in these exact locations. Yellow path $Q_3$ represents any path in $P_3$ from $b' \in P_3$ to $b'' \in P_3$.  (b) Any path $Q_2$ from $a'$ to $a''$ in $P_2$ must use the component of $N(a') \cap P_2$ containing $f'$ and must use the component of $N(a'') \cap P_2$ containing $f''$, otherwise $P_1$ will be disconnected, a contradiction. }
		\label{fig:3int_aa'}
	\end{figure}


	Let $Q_3$ be any path from $b'$ to $b''$ in $P_3$; if $b' = b''$, then $Q_3$ is the length 0 path $\{b'\}$. Consider the shortest path $Q_2$ from $a'$ to $a''$ in $P_2$. This path can use a vertex in the component of $N(a') \cap P_2$ containing $d'$, or a vertex in the component of $N(a') \cap P_2$ containing $f'$, but not both or it would not be a shortest path. Suppose, for the sake of contradiction, that this path uses a vertex in the component of $P_2 \cap N(a')$ containing $d'$. Consider the cycle $C$ formed by $Q_2$ and $Q_3$ together with the edge $a'b'$ and the edge $a''b''$. Note all vertices in $C$ are in $P_2$ or $P_3$.  Because of the ordering of the vertices in $N(a')$, exactly one of $c'$ or $e'$ will be inside this cycle and the other will be outside it.  This contradicts that $P_1$ is connected. 
	Therefore the shortest path from $a'$ to $a''$ in $P_2$ must begin with a vertex in the component of $P_2 \cap N(a')$ containing $f'$.  Following the same reasoning, its penultimate vertex must be in the component of $P_2 \cap N(a'')$ containing $f''$. 
	See Figure~\ref{fig:3int_aa'-Q2} for an example. 
	

	Let $S_2'$ be the component of $P_2 \setminus a'$ containing $d'$, and let $S_2''$ be the component of $P_2 \setminus a''$ containing $d''$. Both of these components are disjoint from $Q_2$. Suppose for the sake of contradiction that both $S_2'$ and $S_2''$ contain vertices of $bd(T)$. Consider a cycle formed by: Path $Q_2$ from $a'$ to $a''$; any path in $S_2''$ from $a''$ to a vertex of $S_2'' \cap bd(T)$; any path outside of $T$ from that vertex of $S_2'' \cap bd(T)$ to any vertex of $S_2' \cap bd(T)$; and any path from that vertex of $S_2' \cap bd(T)$ to $a'$.  Just as above, this cycle contains one of $c'$ or $e'$ inside it and the other outside it, but itself contains no vertices of $P_1$, contradicting that $P_1$ is simply connected.  Therefore at least one of $S_2'$ or $S_2''$ contains no vertices of $bd(T)$. If $S_2' \cap bd(T) = \emptyset$, we choose $S_2 = S_2'$, $a = a'$, $b = b'$, etc.; Otherwise, we choose $S_2 = S_2''$, $a = a''$, $b = b''$, etc. In either case, the first conclusion of the lemma is satisfied: $S_2 \cap bd(T) = \emptyset$. 
	To see the second conclusion of the lemma follows, consider the cycle $C$ formed by any path from $c$ to $e$ in $P_1$ together with $a$.  Suppose, for the sake of contradiction, this cycle did not encircle $S_2$.  This means it must encircle the other component of $P_2 \setminus a$, containing $f$. However, this means the component of $P_2 \setminus a$ containing $f$ does not have any vertices in $bd(T)$.  As we already know $S_2 \cap bd(T) = \emptyset$ because of how $S_2$ was chosen, this implies $P_2 \cap bd(T) = \emptyset$, a contradiction.  We conclude $C$ must encircle $S_2$, as claimed. 
%

\begin{figure}
	\centering

	\begin{subfigure}[b]{0.22\textwidth}
		\centering
		\includegraphics[scale = 0.8]{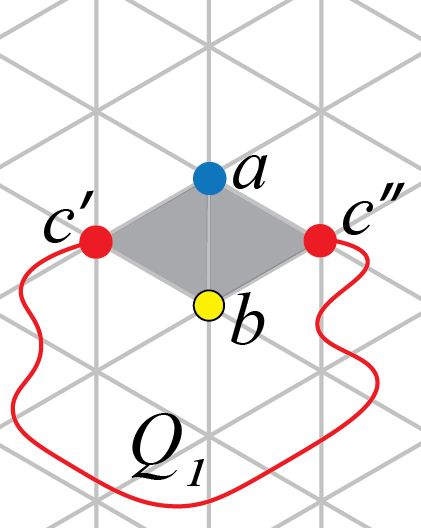}
		\caption{}
		\label{fig:3int_same-label}
	\end{subfigure}
	\hfill
	\begin{subfigure}[b]{0.22\textwidth}
		\centering
		\includegraphics[scale = 0.8]{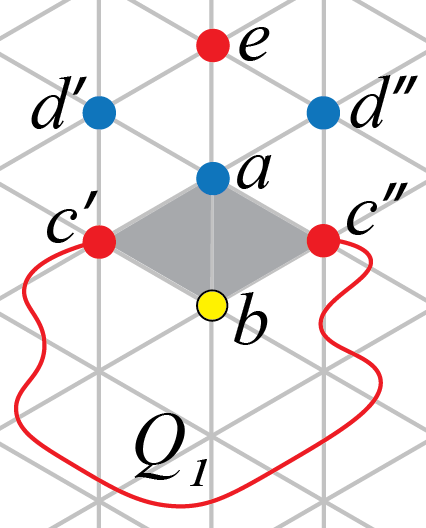}
		\caption{}
		\label{fig:3int_same-s2}
	\end{subfigure}
	\hfill
	\begin{subfigure}[b]{0.22\textwidth}
		\centering
		\includegraphics[scale = 0.8]{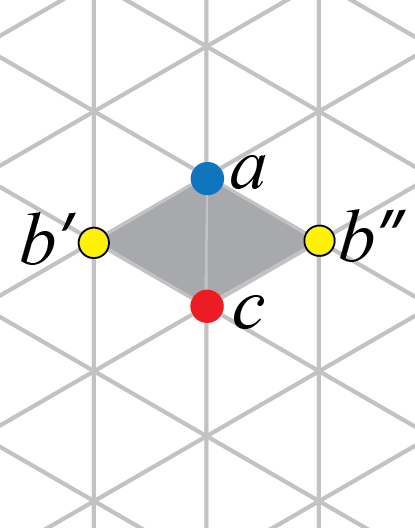}
		\caption{}
		\label{fig:3int_same-c}
	\end{subfigure}
	\hfill
	\begin{subfigure}[b]{0.22\textwidth}
		\centering
		\includegraphics[scale = 0.8]{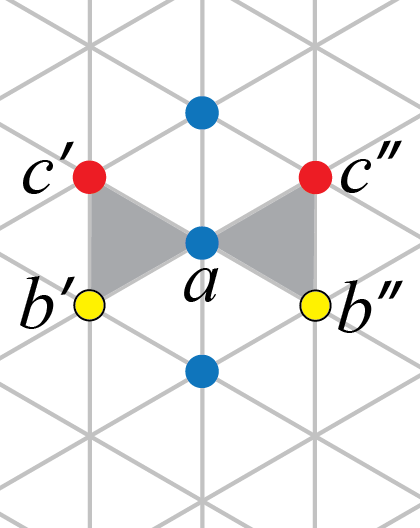}
		\caption{}
		\label{fig:3int_same-distinct}
	\end{subfigure}

	\caption{The three cases from the proof of Lemma~\ref{lem:p3nobd_S2} that occur when $a' = a''$; this single vertex is simply labeled as $a$. (a,b) $b' = b''$ and $c' \neq c''$; (c) $b' \neq b''$ and $c' = c''$; (d) $b' \neq b''$ and $c' \neq c''$.  The first case can be shown to satisfy the conclusions of the lemma by letting $S_2$ be either the component of $P_2 \setminus a$ containing $d'$ or the component of $P_2 \setminus a $ containing $d''$, while the second and third cases can be shown to be impossible. } 
	\label{fig:3int_same}
\end{figure}

	Finally, we consider when $a' = a''$. For simplicity, we refer to this vertex as $a$. There are three cases: $b' = b''$ and $c' \neq c''$; $b' \neq b''$ and $c' = c''$; and  $b' \neq b''$ $c' \neq c''$.  Because the two tricolor triangles are distinct by Lemma~\ref{lem:2tricolortri}, we do not need to consider the case $a' = a''$, $b' = b''$, $c' = c''$. When $b' = b''$ we will refer to this vertex as $b$, and when $c' = c''$ we will refer to this vertex as $c$. 
	
	 First, suppose $b' = b''= b$, and $c' \neq c''$; see Figure~\ref{fig:3int_same}(a). Note that any cycle formed by a path from $c'$ to $c''$ in $P_1$ must encircle $b$, as we know $P_3 \cap bd(T) = \emptyset$ and $P_2 \cap bd(T) \neq \emptyset$. We know by assumption that $a$ has a disconnected 2-neighborhood.  Since three of $a$'s neighbors are, in rotational order, $c'$, $b$, $c''$, it must be that $a$'s common neighbor with $c'$, which we'll call $d'$, and $a$'s common neighbor with $c''$, which we'll call $d''$, are both in $P_2$, while the last remaining neighbor of $a$ is not in $P_2$.  This neighbor cannot be in $P_3$ because $P_3$ is enclosed by the cycle described above, so it must be that this final neighbor $e \in P_1$; see Figure~\ref{fig:3int_same}(b). We know that $a$ is a cut vertex of $P_2$, so label the two components of $P_2 \setminus a$ as $S_2'$ (containing $d'$) and $S_2''$ (containing $d''$). Both $S_2'$ and $S_2''$ cannot contain a vertex of $bd(T)$: if both did, we could easily find a cycle consisting of a path in $P_2$ and vertices outside $T$ encircling exactly one of $e$ or $c'$, contradicting the connectivity of $P_1$.  We therefore pick whichever of $S_2'$ or $S_2''$ doesn't contain a vertex of $bd(T)$, and this component together with either $c'$ and $d'$ or $c''$ and $d''$, as appropriate, satisfies the first conclusion of the lemma. The second conclusion of the lemma follows from the fact that the cycle $C$ formed by the shortest path from $c$ to $e$ in $P_1$ together with $a$ must encircle $S_2$ (note this cycle may or may not also include the other of $c'$ or $c''$ that was not chosen as $c$).  If it didn't encircle $S_2$, then $S_2$ would necessarily include boundary vertices or $P_1$ would not be simply connected, neither of which is possible. This concludes the proof of the lemma when $a' = a''$, $b' = b''$, and $c' \neq c''$.

	Next, suppose  $b' \neq b''$ and $c' = c'' = c$; we show that this case is in fact impossible under the hypotheses of the lemma.  See Figure~\ref{fig:3int_same}(c).  Any path from $b'$ to $b''$ in $P_3$ will necessarily separate either $P_1$ or $P_2$ from $bd(T)$, which we know is impossible because both must intersect $bd(T)$. 
	
	Finally, suppose $b' \neq b''$ and $c \neq c''$. Because $a$ must have at least two non-adjacent neighbors in $P_2$ as it is a cut vertex of $P_2$, and its neighbors $b'$ and $c'$ are adjacent and its neighbors $b''$ and $c''$ are adjacent, $a$ must have one neighbor in $P_2$ between $b'$ and $b''$, and one neighbor in $P_2$ between $c'$ and $c''$; see Figure~\ref{fig:3int_same}(d). Note this means $a$ must have a disconnected 3-neighborhood as $b'$ and $b''$ are in different connected components of $N(a) \cap P_3$. This is a contradiction to the hypotheses of the lemma, so this case cannot occur.


	In all cases that are possible under the hypotheses of our lemma, we have shown how to find a tricolor triangle $a$, $b$, $c$ and a component $S_2$ of $P_2 \setminus a$ satisfying the conclusions of the lemma.

\end{proof}

Now that we know how to find a shrinkable component of $P_2 \setminus a$, we can use this to eventually reach a balanced partition. The following is the main result of this case, and incorporates all of of earlier results.

\begin{lem}\label{lem:3int_disjoint}
Let $P$ be a partition such that $\cli \subseteq P_1$, $P_1 \cap \cgi \neq \emptyset$, $|P_1| = k_1 + 1$, and $|P_3| = k_3 - 1$. Suppose $P_3 \cap bd(T) = \emptyset$. 
Then there exists a sequence of moves resulting in a balanced partition that does not reassign any vertices in $P_1 \cap \clei$. 	
\end{lem}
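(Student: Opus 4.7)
Since $P_3 \cap bd(T) = \emptyset$, Lemma~\ref{lem:2tricolortri} provides two tricolor triangles of opposite chiralities, and Lemma~\ref{lem:p2p3int} tells us $P_2 \cap bd(T) \neq \emptyset$. For any such tricolor triangle with vertices $a \in P_2$, $b \in P_3$, $c \in P_1$, note $b \notin bd(T)$ automatically. The plan is to first case-split on $a$'s local structure and invoke the easier lemmas from Cases~A and~B whenever possible: if $a$'s $3$-neighborhood is disconnected, apply Lemma~\ref{lem:p2_3nbhd_discon}; if $a$'s $2$-neighborhood is connected, apply Lemma~\ref{lem:23int_a23con}; and if $a \in bd(T)$ but $a$'s $2$-neighborhood is disconnected, apply Lemma~\ref{lem:23int_abd}. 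In each of these sub-cases a balanced partition is reached and we are done.

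The remaining case is when, for \emph{both} tricolor triangles, the $P_2$-vertex is a cut vertex of $P_2$, lies in $T \setminus bd(T)$, and has a connected $3$-neighborhood. This is precisely the hypothesis of Lemma~\ref{lem:p3nobd_S2}, which hands us a tricolor triangle $(a,b,c)$, a second neighbor $e \in N(a) \cap P_1$ in a different component of $N(a)\cap P_1$ than $c$, and a component $S_2$ of $P_2 \setminus a$ with $S_2 \cap bd(T) = \emptyset$, such that any cycle $C$ formed by a path from $c$ to $e$ in $P_1$ together with $a$ encircles $S_2$.

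With $S_2$ in hand, I apply the Unwinding Lemma (Lemma~\ref{lem:s1s2}) to $S_2$ paired with a carefully chosen $S_1 \subseteq P_1$. The idea is to fix a shortest path $Q$ from $c$ to $e$ in $P_1$, let $C = Q \cup \{a\}$, and pick a single cut vertex $W$ of $P_1$ lying on $Q$ such that some component $S_1$ of $P_1 \setminus W$ does not contain $\cc_1$ and lies on the side of $C$ opposite $S_2$. Then $S_1$ and $S_2$ share no adjacent vertices (they are separated by $C$), $S_1 \subseteq \cgi$ by Lemma~\ref{lem:s1_c1}, and the boundary hypotheses of Lemma~\ref{lem:s1s2} are met because $\cc_1 \in P_1 \setminus S_1$ and $S_2 \cap bd(T) = \emptyset$. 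The Unwinding Lemma now yields three outcomes: (1)~a balanced partition, and we are done; (3)~all of $S_2$ has been added to $P_1$, which makes $a$'s $2$-neighborhood connected (its $3$-neighborhood is unaffected), so Lemma~\ref{lem:23int_a23con} applies; or (2)~all of $S_1$ has been added to $P_2$, which either makes $a$'s $2$-neighborhood connected (reducing to Lemma~\ref{lem:23int_a23con}) or produces a boundary adjacency between $P_2$ and $P_3$ that triggers Corollary~\ref{cor:23bdryadj}.

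The main obstacle is the selection of $W$ and $S_1$ in borderline configurations where $Q$ is short --- for instance, when $c$ and $e$ are joined in $P_1$ only through vertices of the cycle itself, or when the only candidate $S_1$ might be adjacent to $S_2$ through some vertex \emph{outside} of $N(a)$. To rule out stray adjacencies, one observes that any extra edge between $S_1$ and $S_2$ would produce a closed walk in $P_2 \cup P_1$ separating $P_3$ from $bd(T)$, contradicting $P_3 \cap bd(T) = \emptyset$ combined with $P_2 \cap bd(T) \neq \emptyset$. In a particularly delicate subcase where a specific neighbor of $a$ must remain in $P_2$ throughout the unwinding in order for outcome~(2) to correctly reduce to Lemma~\ref{lem:23int_a23con}, I expect to need the strengthened Lemma~\ref{lem:s1s2x} in place of Lemma~\ref{lem:s1s2} --- just as in the analogous step of the proof of Lemma~\ref{lem:23int_a23discon_3bdry} in Case~B.
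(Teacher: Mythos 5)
Your setup and initial reductions match the paper exactly: the appeal to Lemma~\ref{lem:2tricolortri} and Lemma~\ref{lem:p2p3int}, the dispatch of the easy subcases to Lemmas~\ref{lem:p2_3nbhd_discon}, \ref{lem:23int_a23con}, and~\ref{lem:23int_abd}, and the invocation of Lemma~\ref{lem:p3nobd_S2} are all the right moves. The gap is in what comes next. You propose to always find a cut vertex $W$ on the path $Q$ from $c$ to $e$ whose component $S_1$ avoids $\cc_1$ and lies on the side of $C$ \emph{opposite} $S_2$, so that the Unwinding Lemma applies. But $S_2$ is encircled by $C$, and the problematic components of $P_1$ hanging off the cut vertices $c$ and $e$ can also lie \emph{inside} $C$ --- there is no guarantee that any choice of $W$ on $Q$ produces an $S_1$ outside $C$. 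When $S_1$ is inside $C$ it can be adjacent to $S_2$ and the Unwinding Lemma is unusable; this is exactly the situation the Cycle Recombination Lemma (Lemma~\ref{lem:cycle-recom}, and its variant Lemma~\ref{lem:cycle-recom-vtx}) was built for, and the paper's proof spends most of its effort there: Claims~\ref{claim:c-cut} and~\ref{claim:e-cut} each split on whether $S_1$ is inside or outside $C$, using unwinding in the latter case and a single cycle recombination in the former. Your proposal never invokes a cycle recombination, so the inside-$C$ configurations are not handled.

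Two smaller points. First, your outcome~(2) analysis is off: adding all of $S_1$ to $P_2$ reconnects the $1$-neighborhood of the relevant cut vertex ($c$ or $e$), not $a$'s $2$-neighborhood directly. For $c$ this lets you add $c$ to $P_3$ immediately (its $3$-neighborhood is connected), while for $e$ you only get $e$ added to $P_2$; afterwards the component of $N(a)\cap P_1$ containing $e$ may still contain a second vertex $\overline{e}$, and the whole argument must be iterated --- which is why the paper tracks so carefully that $d'$ stays in $P_2$ throughout, via Lemma~\ref{lem:s1s2x} and Lemma~\ref{lem:cycle-recom-vtx}. Second, outcome~(2) can never ``produce a boundary adjacency between $P_2$ and $P_3$'' triggering Corollary~\ref{cor:23bdryadj}: here $P_3 \cap bd(T) = \emptyset$ and none of these moves reassigns a vertex of $P_3$, so no such adjacency can appear; that escape route belongs to Case~B, not Case~C.
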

\begin{proof}
	By Lemma~\ref{lem:2tricolortri}, there exist exactly two triangular faces $F'$ and $F''$ in $T$ whose three vertices are in three different districts.  Exactly one of these triangles has its vertices in $c' \in P_1$, $a' \in P_2$, and $b' \in P_3$ in clockwise order and the other must have its vertices $c'' \in P_1$, $a'' \in P_2$, and $b'' \in P_3$ in counterclockwise order. Additionally, $P_2 \cap bd(T) \neq \emptyset$.

	
	First, note that $b', b'' \notin bd(T)$ because $P_3 \cap bd(T) = \emptyset$.  If $a'$ or $a''$ has a connected 2-neighborhood and a connected 3-neighborhood, we are done by Lemma~\ref{lem:23int_a23con}. If $a'$ or $a''$ has a disconnected 3-neighborhood, we are done by Lemma~\ref{lem:p2_3nbhd_discon}. Therefore, we conclude that $a'$ and $a''$ must both have a connected 3-neighborhood and a disconnected 2-neighborhood. This means $a'$ and $a''$ are both cut vertices in $P_2$. If $a'\in bd(T)$ or $a'' \in bd(T)$, we are done by Lemma~\ref{lem:23int_abd}, so we assume $a, a' \notin bd(T)$. 
	
	We can now apply Lemma~\ref{lem:p3nobd_S2}, and see there exists a tricolor triangle with vertices $a \in P_1$ (where $a = a'$ or $a = a''$), $b \in P_2$, and $c \in P_3$ where, if $d$ is the first vertex in $N(a) \cap P_2$ on the longer path from $c$ to $b$ in $N(a)$, the component $S_2$ of $P_2 \setminus a$ containing $d$ has $S_2 \cap bd(T) = \emptyset$.  Furthermore, $a$ must have another neighbor $e \in P_1$, and the cycle consisting of any path from $c$ to $e$ in $P_1$ together with $a$ encircles $S_2$. Finally, because $a$ must have a disconnected 2-neighborhood, $a$ must have another neighbor $f$ in a different component of $P_2 \cap N(a)$ than $d$; because $b$ and $c$ are adjacent in $N(a)$, just as in the proof of Lemma~\ref{lem:p3nobd_S2} it must be that $f$ is on the path from $e$ to $b$ that doesn't include $c$ and $d$. This means $N(a)$ must include in rotational order (clockwise or counterclockwise, as appropriate) but not necessarily consecutively: $b, c, d, e, f$.  Also $N(a)$ contains one additional vertex we have not assigned a label to.

	If the component of $N(a) \cap P_2$ containing $d$ is of size 2, we will need to distinguish between the vertex $d$, which as we've defined it is adjacent to $c$, and the other vertex in this component, which we call $d'$ and is adjacent to $e$. If the component of $N(a) \cap P_2$ containing $d$ is of size 1, we simply let $d' = d$.

	Our remaining argument will focus on the vertices $c$ and $e$ and show there exists a sequence of moves, not reassigning any vertices in $P_1 \cap \clei$,  that results in (1) a balanced partition or (2) $e$ being added to $P_2$ such that $|P_2| = k_2 + 1$, $|P_3| = k_3 - 1$, and $d'$ remains in $P_2$. 
	First we will explain why such a sequence of moves must exist, and then we will explain why a sequence of moves resulting in (2) where $e$ is added to $P_2$ is sufficient to prove the lemma. 
	
	We begin with the following two claims, which apply in the cases where $c$ or $e$ is a cut vertex of $P_1$; our later explanations include the (much easier) cases when $c$ or $e$ is not a cut vertex of $P_1$. 
	
	\begin{claim}\label{claim:c-cut}
		If $c$ is a cut-vertex of $P_1$, $c$ has a connected 3-neighborhood, and one component $S_1$ of $P_1 \setminus c$ contains neither $e$ nor $\cc_1$, then there exists a sequence of moves, not reassigning any vertices of $P_1 \cap \clei$,  resulting in a balanced partition.
	\end{claim}

	\noindent {\it Proof of Claim~\ref{claim:c-cut}}: Let $C$ be the cycle formed by any path from $e$ to $c$ in $P_1$ together with $a$; we already know $C$ encircles $S_2$. We will do two cases, based on whether $S_1$, the component of $P_1 \setminus c$ containing neither $e$ nor $\cc_1$, is inside $C$ or outside $C$. 
	
	If $S_1$ is outside of cycle $C$, we know $S_1$ and $S_2$ are not adjacent because they are separated by cycle $C$, so we apply Lemma~\ref{lem:s1s2}, the Unwinding Lemma. This means there exists a sequence of moves through balanced or nearly balanced partitions after which (1) the partition is balanced, (2) all vertices of $S_1$ have been added to $P_2$, or (3) all vertices of $S_2$ have been added to $P_1$. Only vertices of $S_1$ and $S_2$ have been reassigned, and in outcomes (2) and (3) the resulting partition has $|P_1| = k_1 + 1$ and $|P_3| = k_3 - 1$.  If we reach outcome (1) we are done.  If we reach outcome (2), then $c$'s 1-neighborhood is now connected, and because we also know that $c$'s 3-neighborhood is connected, $c$ can be added to $P_3$ producing a balanced partition. If we reach outcome (3), then $a$'s 2-neighborhood is now connected, and we are done by Lemma~\ref{lem:23int_a23con}.

	If $S_1$ is inside cycle $C$, we will instead apply Lemma~\ref{lem:cycle-recom}, the Cycle Recombination Lemma, with $x = a$, and $y = c$. See Figure~\ref{fig:3int-cycle-recom-c} for an example that is not meant to be exhaustive of all possibilities. Because of the ordering of vertices in $N(a)$, in which $d \in S_2$ is between $c$ and $e$ but $b$ is not, vertex $b \in P_3$ is outside of $C$ so it follows that all of $P_3$ is outside of $C$. 
	 By Lemma~\ref{lem:cycle-recom}, there exists a recombination step for $P_1$ and $P_2$ after which $N_C(c) \cap P_1$ is connected. This means that along the path in $N(c)$ and inside $C$ from $a$ to $c$'s other neighbor in $C$, which we will call $g$,  there is a sequence of vertices in $P_2$ followed by a sequence of vertices in $P_1$. We know that $N(c) \cap P_1$ previously had two connected components (it could not have three because $c$ has two adjacent neighbors, $a$ and $b$, that are not in $P_1$), one containing vertices of $S_1$ and the other containing $g$. This rearrangement has made it so that all of $e$'s neighbors in $P_1$ are now in the same connected component of $P_1 \cap N(c)$ as $g$. Therefore $c$ now has a connected 1-neighborhood, and because we already knew it had a connected 3-neighborhood, we add $c$ to $P_3$, producing a balanced partition. 
	This completes the proof of Claim~\ref{claim:c-cut}.\qed

	\begin{figure}
	\begin{subfigure}[b]{0.45\textwidth}
		\centering
		\includegraphics[scale = 0.8]{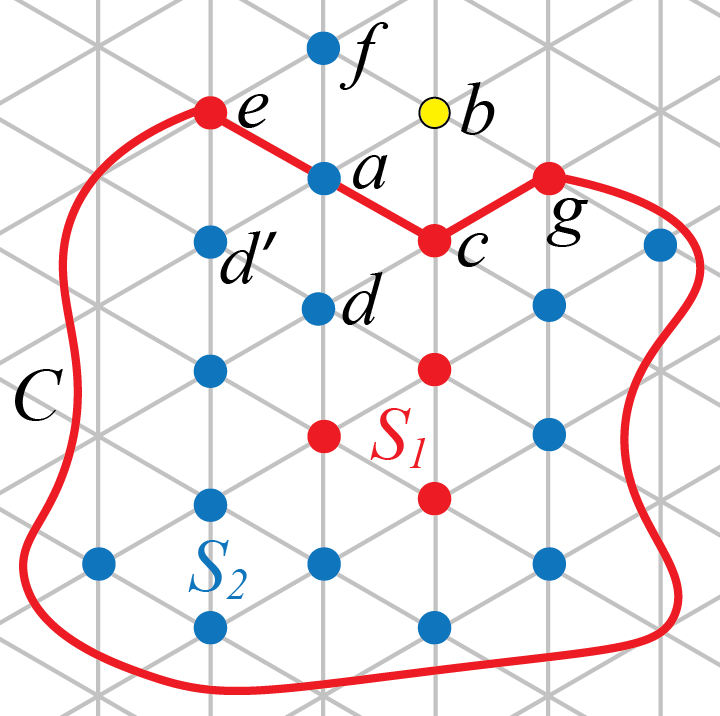}
		\caption{}
		\label{fig:3int-cycle-recom-c}
	\end{subfigure}
	\hfill
	\begin{subfigure}[b]{0.45\textwidth}
		\centering
		\includegraphics[scale = 0.8]{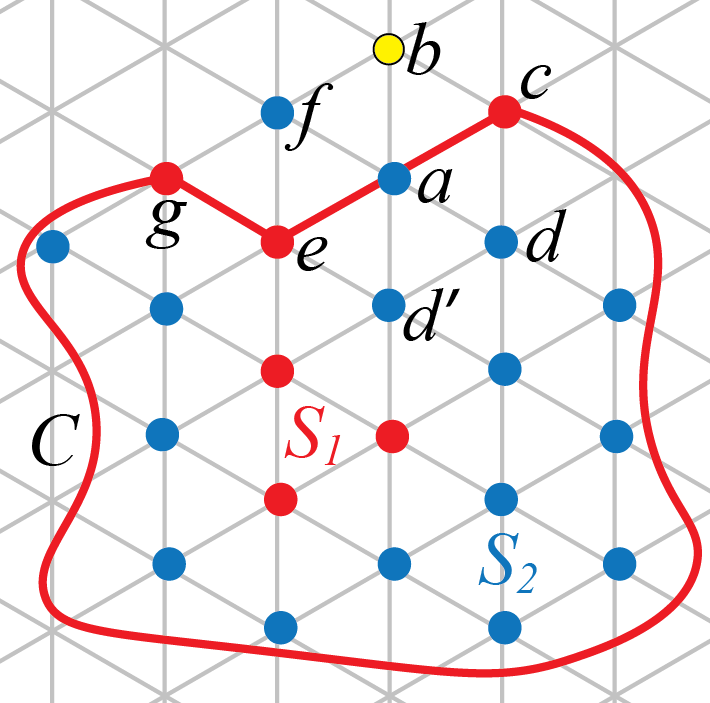}
		\caption{}
		\label{fig:3int-cycle-recom-e}
	\end{subfigure}
	\caption{(a) An example of a situation from the proof of Claim~\ref{claim:c-cut} where we  apply the Cycle Recombination Lemma (Lemma~\ref{lem:cycle-recom}). Cycle $C$ (red) encircles both a component $S_2$ of $P_2 \setminus a$ (blue vertices inside $C$) and a component $S_1$ of $P_1 \setminus c$ (the three red vertices inside $C$). Applying the Cycle Recombination Lemma with $a = x$ and $c = y$ will rearrange all vertices inside $C$ such that afterwards $c$ has a connected 1-neighborhood. 	
(b) An example of a situation from the proof of Claim~\ref{claim:e-cut} where we apply a corollary of the Cycle Recombination Lemma (Lemma~\ref{lem:cycle-recom-vtx}). Cycle $C$ (red) encircles both a component $S_2$ of $P_2 \setminus a$ (blue vertices inside $C$) and a component $S_1$ of $P_1 \setminus e$ (the three red vertices inside $C$). Applying this corollary with $a = x$, $e = y$, and $d' = z$  will rearrange all vertices inside $C$ such that afterwards $e$ has a connected 1-neighborhood and $d'$ remains in $P_2$.}
	\label{fig:3int-cycle-recom}
\end{figure}

\begin{claim}\label{claim:e-cut}
	If $e$ is a cut-vertex of $P_1$ and one component $S_1$ of $P_1 \setminus e$ contains neither $c$ nor $\cc_1$, then there exists a sequence of moves, not reassigning any vertices of $P_1 \cap \clei$, resulting in a balanced partition or in $e$ being added to $P_2$ such that $|P_2| = k_2 + 1$, $|P_3| = k_3 - 1$, and $d'$ remains in $P_2$.
\end{claim}

\noindent{\it Proof of Claim~\ref{claim:e-cut}}:
Let $C$ be the cycle formed by any path from $e$ to $c$ in $P_1$ together with $a$; we already know $C$ encircles $S_2$ and $d' \in S_2$ because it is equal to or adjacent to $d \in S_2$. As above, we do two cases, based on whether $S_1$, the component of $P_1 \setminus e$ containing neither $c$ nor $\cc_1$, is inside $C$ or outside $C$.  


If $S_1$ is outside $C$, we wish to apply Lemma~\ref{lem:s1s2}, the Unwinding Lemma, to $S_1$ and $S_2$ with cut vertices $e$ and $a$, respectively.  However, this lemma is not quite enough, because it cannot guarantee (in the case where $S_1$ is added to $P_2$) that $d' \in S_2$ is not removed from $P_2$, a condition we will critically need later on. If $S_2$ contains at least one vertex in addition to $d'$ (the case where $S_2$ contains only a single vertex is handled below), we instead use the closely related Lemma~\ref{lem:s1s2x}, with $x = d'$.  There exists a sequence of moves through balanced or nearly balanced partitions after which either (1) the partition is balanced, (2) all vertices of $S_1$ have been added to $S_2$, or (3) all vertices of $S_2$ except $d'$ have been added to $S_1$. In these moves only vertices in $S_1$ and $S_2 \setminus d'$ have been reassigned. In the second two outcomes, the resulting partition has $|P_1| = k_1 + 1$ and $|P_3| = k_3 - 1$.  If outcome (1) occurs, we are done. More work is required if outcome (2) or outcome (3) occurs. 

If outcome (2) occurs, then because $P_1 \setminus e$ has only two components and one of them was $S_1$, $e$ now has a connected 1-neighborhood. 
First, suppose $e \notin bd(T)$; in this case, by Lemma~\ref{lem:alternation}, $e$ can be added to $P_2$ or $P_3$. In the latter case we immediately reach a balanced partition, and in the former case we reach a partition where $|P_2| = k_2 + 1$, $|P_3| = k_3 - 1$, and (because of our use of Lemma~\ref{lem:s1s2x}) $d'$ remains in $P_2$. 
Next, suppose $e \in bd(T)$. Before applying Lemma~\ref{lem:s1s2x}, $e$ necessarily had two adjacent neighbors in $P_2$ -- one of $e$ and $a$'s common neighbors must be in $P_2$, and $a \in P_2$ as well -- and a disconnected 1-neighborhood.  It follows that $e$'s two neighbors in $bd(T)$ were in $P_1$ while $e$'s two neighbors not in $bd(T)$ were in $P_2$. Exactly one of $e$'s neighbors in $bd(T)$ was in $S_1$, so after applying Lemma~\ref{lem:s1s2x}, $e$ now has three neighbors in $P_2$ and one neighbor in $P_1$, with the neighbor in $P_1$ necessarily in $bd(T)$.  
This implies that $e$ has a connected 1-neighborhood and a connected 2-neighborhood, meaning $e$ can be added to $P_2$, resulting in a partition with the desired district sizes and (because of our use of Lemma~\ref{lem:s1s2x}) $d'$ remains in $P_2$.

If outcome (3) occurs but outcome (2) does not occur, it is now the case that the component of $P_2 \setminus a$ containing $d'$ has only a single vertex, $d'$. This means $d'$ has a connected 2-neighborhood, consisting only of $a$. We note $d' \notin bd(T)$ because $d' \in S_2$ and $S_2 \cap bd(T) = \emptyset$. Therefore $d'$ must have a connected 1-neighborhood or a connected 3-neighborhood by Lemma~\ref{lem:alternation}. We can also find $v_1$ in the component of $P-1 \setminus e$ not containing $\cc_1$ (what remains of $S_1$) that can be removed and added to another district by Condition~\ref{item:cut_corner} of Lemma~\ref{lem:shrinkable}.  If $v_1$ can be added to $P_3$ we are done so we assume that $v_1$ can be added to $P_2$. If $d'$ has a connected 3-neighborhood, we add $v_1$ to $P_2$ and add $d'$ to $P_3$, reaching a balanced partition. If $d'$ has a connected 1-neighborhood, we add $v_1$ to $P_2$ and add $d'$ to $P_1$. As $d'$ was the only vertex in its component of $P_2 \setminus a$, there is now only a single component of $P_2 \setminus a$, meaning $a$ now has a connected 2-neighborhood. We already know $a$'s 3-neighborhood is connected and no vertices in $P_3$ have been reassigned, so we are done by Lemma~\ref{lem:23int_a23con}.



If instead $S_1$ is inside $C$, we will apply Lemma~\ref{lem:cycle-recom-vtx}, a corollary of the Cycle Recombination Lemma (Lemma~\ref{lem:cycle-recom}) with $x = a$ and $y = e$.  This corollary allows for the additional conclusion that one particular neighbor of $a$ in $S_2$ remains in $P_2$ even after this recombination; we use this to ensure that $d'$ remains in $P_2$ after the recombination. See Figure~\ref{fig:3int-cycle-recom-e} for an example that is not meant to be exhaustive. 
First, we check the remaining hypothesis of Lemma~\ref{lem:cycle-recom-vtx}. Because of the ordering of vertices in $N(a)$, in which $d \in S_2$ is between $c$ and $e$ but $b$ is not, vertex $b \in P_3$ is outside of $C$ so it follows that all of $P_3$ is outside of $C$. 
We now apply this lemma with $x = a$, $y = e$, and $z = d'$. 
We conclude there exists a recombination step for $P_1$ and $P_2$ which only changes the district assignment of vertices enclosed by $C$ and leaves $|P_1|$ and $|P_2|$ unchanged, after which $e$'s 1-neighborhood of vertices enclosed by or in $C$ is connected and $d'$ remains in $P_2$. 
This means within $N(e)$ inside $C$ from $a$ to $e$'s other neighbor $g$ in $C$, there is a sequence of vertices in $P_2$ followed by a sequence of vertices in $P_1$, with no interleaving.  $N(e) \cap P_1$ previously had two connected components: one containing a vertex (or vertices) of $S_1$ and one containing $e$'s other neighbor in $C$ that is not $a$; this rearrangement has made it so that all of $e$'s neighbors in $P_1$ are now in the same connected component of $P_1 \cap N(e)$ as $e$'s other neighbor in $C$, meaning $e$'s 1-neighborhood is connected.

We note in this case it is impossible to have $e \in bd(T)$ as $e$ must have at least five neighbors: two neighbors in $C$; at least two neighbors inside $C$, necessary for satisfying the hypotheses of Lemma~\ref{lem:cycle-recom-vtx}; and $e$'s other common neighbor with $a$, which must be in $T$ because $a \notin bd(T)$. By Lemma~\ref{lem:alternation}, because $e \notin bd(T)$ and $e$ has a connected 1-neighborhood, $e$ must have a connected 2-neighborhood or 3-neighborhood. In the latter case, we add $e$ to $P_3$ and reach a balanced partition.  In the former case, we add $e$ to $P_2$, producing  a partition with the desired district sizes.
%
This completes the proof of Claim~\ref{claim:e-cut}.\qed

\vspace{3mm}
We will now show that in all cases, we either reach a balanced partition; add $e$ to $P_2$ reaching a partition where $|P_2| = k_2 + 1$, $|P_3| = k_3 - 1$, and $d' \in P_2$; or satisfy the hypotheses of Claim~\ref{claim:c-cut} or Claim~\ref{claim:e-cut}, either of which then implies one of the first two conclusions is reached. 
We do cases based on whether $e$ is in $\clei$.

\underline{\textit{Case: $e \in \clei$}}. Note it is impossible to have both $e \in \clei$ and $c \in \clei$, as $e$ and $c$ have a common neighbor $a$ but are not adjacent, and some of the vertices between them in $N(a)$ are not in $P_1$ and so can't be in $\cli$. Therefore in this case it must hold that $c \notin \clei$.  If $c$ has a connected 1-neighborhood and a connected 3-neighborhood, we add $c$ to $P_3$ and are done.  If $c$'s 3-neighborhood is disconnected, as $P_3 \cap bd(T) = \emptyset$ implies $P_2 \cap bd(T) \neq \emptyset$, we are done by Lemma~\ref{lem:p1_3nbhd_discon_bdry}. Therefore we assume $c$'s 3-neighborhood is connected and $c$'s 1-neighborhood is disconnected. Because $c$ is a cut vertex of $P_1$ and $e \in \clei$, $e$ and $\cc_1$ must be in the same component of $P_1 \setminus c$. The other component $S_1$ of $P_1 \setminus c$ satisfies the hypotheses of Claim~\ref{claim:c-cut}, so there exists a sequence of moves resulting in a balanced partition.

\underline{\textit{Case: $e \notin \clei$}}. If $e$ can be immediately added to $P_2$ we do so and have reached one of our goal states (because $d'$ has not been reassigned, it remains in $P_2$).  Therefore we suppose this is not the case, meaning $e$'s 1-neighborhood is disconnected or $e$'s 2-neighborhood is disconnected. 

First, we suppose $e$'s 1-neighborhood is connected, which means its 2-neighborhood must be disconnected. We do cases for $e \in bd(T)$ and $e \notin bd(T)$. Suppose $e \in bd(T)$, meaning $e$ only has four neighbors in $T$. We know $e$ is adjacent to $a$, and at least one of $e$ and $a$'s common neighbors must be $d'$ or $f$, so $e$ and $a$ have a common neighbor in $P_2$. 
Because $e$'s 2-neighborhood is disconnected and $a \notin bd(T)$, there is only one possibility for what $N(e)$ looks like (see Figure~\ref{fig:p3nobd_ebd}): $e$ and $a$'s common neighbor in $P_2$, which we call $x$ because it might be $d'$ or might be $f$, is in $bd(T)$; next to $x$ in $N(e) \cap T$ is $a$; $e$ and $a$'s other common neighbor, which we will call $e'$, is in $P_1$; and $e$'s other neighbor in $bd(T)$, which we call $g$, is in $P_2$. 
%
In this case, consider the cycle formed by any path in $P_2$ between $x$ and $g$, together with $e$'s two neighbors outside of $T$.  This cycle entirely encircles $P_1$. In the interior of this cycle, $P_1$'s only vertex in $bd(T)$ is $e$, a contradiction as $e \notin \clei$ and we know $\cc_1 \in P_1 \cap bd(T)$ as well. As we have found a contradiction, it cannot be the case that both $e$'s 1-neighborhood is connected and $e \in bd(T)$. 


\begin{figure}
	\centering

	\includegraphics[scale = 0.9]{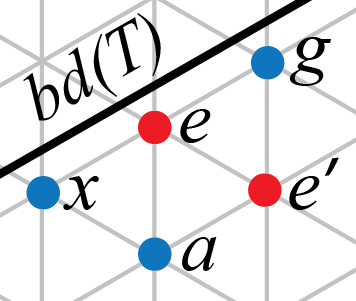}
	\caption{A case from Lemma~\ref{lem:3int_disjoint}, where $e \in bd(T)$, $e$ has a connected 1-neighborhood, and $e$ has a disconnected 2-neighborhood. When these conditions are met, this is what the partition in $N(e)$ must look like; this leads to a contradiction, as any path from $x$ to $g$ in $P_2$ separates $P_1$ from all corner vertices including $\cc_1 \in P_1$.}
	\label{fig:p3nobd_ebd}
\end{figure}

Suppose $e$'s 1-neighborhood is connected and $e \notin bd(T)$. 
By Lemma~\ref{lem:alternation}, $e$ must have a connected 2-neighborhood or 3-neighborhood. It cannot have a connected 2-neighborhood because we have assumed $e$ cannot be added to $P_2$, so  $e$ must have a connected 3-neighborhood. We add $e$ to $P_3$ and have reached a balanced partition. 


We now suppose for the remainder of this proof that $e$'s 1-neighborhood is disconnected, that is, $e$ is a cut vertex of $P_1$. 
If $c \in \clei$, $c$ and $\cc_1$ must be in the same component of $P_1 \setminus e$. The other component $S_1$ of $P_1 \setminus e$ satisfies the hypotheses of Claim~\ref{claim:e-cut}, so there exists a sequence of moves resulting in a balanced partition or in $e$ being added to $P_2$ while $d'$ remains in $P_2$.  If $c \notin \clei$, more work is needed. If adding $c$ to $P_3$ is valid, we do so and are done.  If $c$'s 3-neighborhood is disconnected, as $P_3 \cap bd(T) = \emptyset$ implies $P_2 \cap bd(T) \neq \emptyset$, we are done by Lemma~\ref{lem:p1_3nbhd_discon_bdry}. Therefore we assume $c$'s 3-neighborhood is connected and $c$'s 1-neighborhood is disconnected, that is, $c$ and $e$ are both cut vertices of $P_1$ and  $P_1 \setminus c$ and $P_1 \setminus e$ each has at  least two components.  Additionally, neither can have three components: for this to be true every other neighbor would need to be in $P_1$. However, as already mentioned above, $e$ has two adjacent neighbors in $P_2$, $a$ and one of $d'$ or $f$, so $P_1 \setminus e$ must have exactly two components. Similarly, $c$ has two adjacent neighbors $a$ and $b$ that are not in $P_1$, so $P_1 \setminus c$ must have exactly two components.
Look at any shortest path $Q$ from $c$ to $e$ in $P_1$, which together with $a$ forms a cycle $C$. Let $S^c_1$ be the component of $P_1 \setminus c$ not containing $e$, and let $S^e_1$ be the component of $P_1 \setminus e$ not containing $c$.  These components are disjoint, so at least one of them must not contain $\cc_1$.  Let $S_1$ be whichever of $S^c_1$ and $S^e_1$ doesn't contain $\cc_1$, and let $y$ denote whichever of $c$ and $e$ is the cut vertex separating this component from the remainder of $P_1$.  If $y = c$, we apply Claim~\ref{claim:c-cut} and are done; if $y = e$, we apply Claim~\ref{claim:e-cut} and are done.

We have shown there always exists a sequence of moves that does not reassign any vertices in $P_1 \cap \clei$ and results in either (1) a balanced partition or  (2) $e$ being added to $P_2$ such that $|P_2| = k_2 + 1$, $|P_3| = k_3 - 1$, and $d' \in P_2$. 
If (1) occurs the conclusion of this lemma is satisfied, so we now show why this is the case for (2) as well. 
 In outcome (2), the net result is that the size of the component of $P_1 \cap N(a)$ containing $e$ has decreased by one as follows. If that component was originally size 1, its neighbors in $N(a)$ were $d'$ and some vertex $f'$ in the same component of $P_2 \cap N(a)$ as $f$ (possibly $f' = f$). Vertex $f'$ remains in $P_2$ because it is not in $S_2$ and no moves described here changed $P_2 \setminus \{S_2 \cup a\}$, while $d'$ remains in $P_2$ because we proved it does. Therefore, once $e$ has been added to $P_2$, it connects the component of $N(a)$ containing $d$ with the component of $N(a)$ containing $f$, resulting in $a$ having a connected 2-neighborhood.\footnote{this is the reason we need to be careful about $d'$ remaining in $P_2$: If some moves along the way had added $d'$ to $P_1$, $a$ would not have a connected 2-neighborhood at this point} We know $a$ has a connected 3-neighborhood because no moves we described changed $P_3$, so we add $a$ to $P_3$ and reach a balanced partition. 

If instead the component of $N(a) \cap P_1$ containing $e$ was originally size 2, we have decreased the size of that component by adding $e$ to $P_2$ while ensuring $d'$ and $f' = f$ remain in $P_2$, just as above. If this component's size has decreased by two (perhaps because the other vertex in the component of $N(a) \cap P_1$ containing $e$ was in $S_1$ and so has already been added to $P_2$), we are in the same case as the previous paragraph: $a$ now has a connected 2-neighborhood so can be added to $P_3$, proving the lemma. If the size of  the component of $N(a) \cap P_1$ containing $e$ has only decreased 
by one,  it now contains a single vertex which we call $\overline{e}$. We show how we can find a vertex $v$ that can be removed from $P_2$, resulting in either a balanced partition or other progress toward reaching one.


Let $\overline{d}$ be the vertex in $N(a)$ adjacent to $\overline{e}$ and in the same connected component of $N(a) \cap P_2$ as $d$ (possibly $\overline{d} = d' = d$ or $\overline{d} = e$, as $e$ has just been added to $P_2$).
If we continue to define $S_2$ to be the component of $P_2 \setminus a$ containing $d$, then $\overline{d} \in S_2$. None of the sequences of moves above that result in adding $e$ to $P_2$ include any changes to the component of $P_2 \setminus a$ not containing $d$. This component $S_2'$ of $P_2 \setminus a$ not containing $d$ must have originally contained a vertex of $bd(T)$ -- because $P_2 \cap bd(T) \neq \emptyset$ by Lemma~\ref{lem:p2p3int} and $S_2 \cap bd(T) = \emptyset$ -- and so $S_2'$ must still contain at least one vertex of $bd(T)$ even after this sequence of moves. It cannot be the case that both components of $P_2 \setminus a$ contain a vertex of $bd(T)$: if they did, you could easily construct a cycle consisting only of vertices in $P_2$ or outside $T$ that separates $c \in P_1$ from $e \in P_1$. Therefore it still holds that $S_2 \cap bd(T) = \emptyset$.

First suppose $S_2$ contains vertices other than $\overline{d}$. This means there is at least one nonempty component of $P_2 \setminus \{a, \overline{d} \}$ that is a subset of  $S_2 \setminus \overline{d}$; let $\overline{S_2}$ be any such component. Because it must hold that $\overline{S_2} \cap bd(T) = \emptyset$, by Condition~\ref{item:nobd} of Lemma~\ref{lem:shrinkable}, $\overline{S_2}$ contains a vertex $v$ that can be added to another district. If $v$ can be added to $P_3$, we do so and are done. If $v$ can be added to $P_1$, we do so, resulting in a partition that once again has $|P_1| = k_1 + 1$ and $|P_3| = k_3 - 1$.  It still holds that vertex $a$ and its tricolor triangle satisfy the conclusions of Lemma~\ref{lem:p3nobd_S2}.  Because the component of $P_1 \cap N(a) $ containing $\overline{e}$ is now of size 1, we repeat the same argument as above with $\overline{e}$ replacing $e$.  Ultimately we reach a balanced partition or $\overline{e}$ is added to $P_2$: as described above, the latter results in $a$ having a connected 2-neighborhood so it can be added to $P_3$, and doing so produces a balanced partition.

The last case to consider is when $\overline{d}$ is the only vertex of $S_2$. In this case, $\overline{d}$'s only neighbor in $P_2$ is $a$, so it has a connected 2-neighborhood. We know $\overline{d} \notin bd(T)$ because $\overline{d} \subseteq S_2$, so by Lemma~\ref{lem:alternation} it must be that $\overline{d}$ has a connected 1-neighborhood or 3-neighborhood. In the latter case we add $\overline{d}$ to $P_3$ and are done; in the former case we add it to $P_1$, once again reaching a partition where $|P_1| = k_1 + 1$ and $|P_3| = k_3 - 1$.  This means, because $P_2 \setminus a$ originally had two components, that now $P_2 \setminus a$ only has one component, so $a$ has a connected 2-neighborhood. We already knew $a$ had a connected 3-neighborhood and that has not changed, so we can now apply Lemma~\ref{lem:23int_a23con} to produce a balanced partition. 

We have shown that in any situation, there exists a sequence of moves that does not reassign any vertices in $P_1 \cap \clei$ and results in a balanced partition. This proves the lemma.

\end{proof}

\noindent As in earlier cases, we sum up our results in a single corollary. 

\begin{cor}\label{cor:p3nobd}
Let $P$ be a partition such that $\cli \subseteq P_1$, $P_1 \cap \cgi \neq \emptyset$, $|P_1| = k_1 + 1$, and $|P_3| = k_3 - 1$, and $P_3 \cap bd(T) = \emptyset$. The there exists a sequence of moves resulting in a balanced partition that does not reassign any vertices in $P_1 \cap \clei$. 
\end{cor}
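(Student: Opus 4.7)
The plan is essentially immediate: the hypotheses of Corollary~\ref{cor:p3nobd} coincide exactly with those of Lemma~\ref{lem:3int_disjoint}, so the corollary follows by a single direct invocation of that lemma. The corollary is stated mainly for uniformity of presentation, playing the same role for Case C ($P_3 \cap bd(T) = \emptyset$) that Corollary~\ref{cor:p2nobd} plays for Case B ($P_2 \cap bd(T) = \emptyset$). So my proof would be one line: apply Lemma~\ref{lem:3int_disjoint} to the partition $P$.

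For transparency, I would briefly remind the reader how Lemma~\ref{lem:3int_disjoint} is organized, so that this corollary reads as a clean wrap-up of Case C. Internally, that lemma first invokes Lemma~\ref{lem:2tricolortri} to obtain two tricolor triangles with opposite chiralities. It then branches according to the neighborhood structure of a tricolor vertex $a \in P_2$: if some such $a$ has a connected $2$-neighborhood and $3$-neighborhood, Lemma~\ref{lem:23int_a23con} finishes; if $a$'s $3$-neighborhood is disconnected, Lemma~\ref{lem:p2_3nbhd_discon} finishes; if $a \in bd(T)$, Lemma~\ref{lem:23int_abd} finishes; otherwise it falls through to the combination of Lemma~\ref{lem:p3nobd_S2}, the Unwinding Lemma (Lemma~\ref{lem:s1s2} or Lemma~\ref{lem:s1s2x}), and the Cycle Recombination Lemma (Lemma~\ref{lem:cycle-recom} or Lemma~\ref{lem:cycle-recom-vtx}).

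The genuinely hard part (already resolved inside Lemma~\ref{lem:3int_disjoint}) is that final fall-through subcase, where every candidate $a$ is an interior cut vertex of $P_2$ with a disconnected $2$-neighborhood. Here the assumption $P_3 \cap bd(T) = \emptyset$ forces $P_2 \cap bd(T) \neq \emptyset$ via Lemma~\ref{lem:p2p3int}, so Condition~\ref{item:nobd} of Lemma~\ref{lem:shrinkable} does not immediately produce a shrinkable component of $P_2 \setminus a$. Lemma~\ref{lem:p3nobd_S2} exploits the opposite chiralities of the two guaranteed tricolor triangles to locate an $a$ admitting a component $S_2$ of $P_2 \setminus a$ with $S_2 \cap bd(T) = \emptyset$, which can then be unwound against a suitably chosen shrinkable component $S_1 \subseteq P_1 \cap \cgi$. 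Because all of that machinery is already packaged inside Lemma~\ref{lem:3int_disjoint}, the proof of the corollary itself requires nothing more than citing it.
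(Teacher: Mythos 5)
Your proof is correct and is exactly the paper's proof: the corollary's hypotheses and conclusion coincide with those of Lemma~\ref{lem:3int_disjoint}, and the paper likewise disposes of it with a one-line citation of that lemma. The additional summary of the lemma's internal structure is accurate but not needed for the corollary itself.
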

\begin{proof}
	This is exactly Lemma~\ref{lem:3int_disjoint}; no additional cases need to be considered. 
\end{proof}

\subsection{Case D: $P_2$ and $P_3$ are not adjacent}

In this last case, we assume that $P_2$ and $P_3$ are not adjacent, that is, there is no vertex in $P_2$ that is adjacent to any vertex in $P_3$.  This implies that $P_1$ must be adjacent to both $P_2$ and $P_3$ and, by Lemma~\ref{lem:23notadj_touchbdry} and Lemma~\ref{lem:23notadj_bdryconn}, $P_2 \cap bd(T)$ and $P_3\cap bd(T)$ are both nonempty and connected. We can use this structure to find $a \in P_1 \cap \cgi \cap bd(T)$ and $b \in P_3 \cap bd(T) $ that are adjacent; this will be the key we use to reach a balanced partition.

\begin{lem}\label{lem:23notadj_j>i}
	Let $P$ be a partition such that  $\cli \subseteq P_1$ and $P_1 \cap \cgi \neq \emptyset $. Suppose further that $P_2$ and $P_3$ are not adjacent. Then there exists adjacent vertices $a \in bd(T) \cap P_1$ and $b \in bd(T) \cap P_3$ such that $a \notin \clei$.  
\end{lem}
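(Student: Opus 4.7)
The plan is to exploit the cyclic structure of $bd(T)$ combined with the fact that $\cli \cap bd(T)$ forms a contiguous $P_1$ arc around $\cc_1$. First I would invoke Lemma~\ref{lem:23notadj_touchbdry} and Lemma~\ref{lem:23notadj_bdryconn}, using the hypothesis that $P_2, P_3$ are not adjacent, to conclude that $P_2 \cap bd(T)$ and $P_3 \cap bd(T)$ are each nonempty connected arcs of the boundary cycle. Since no vertex of $P_2$ is adjacent to any vertex of $P_3$, no edge of $bd(T)$ can have one endpoint in $P_2$ and the other in $P_3$. Combined with the connectivity of each arc, this forces the boundary cycle to decompose cyclically into a $P_2$ arc, a $P_1$ arc $I_1$, a $P_3$ arc, and a $P_1$ arc $I_2$, where both $P_1$ arcs are nonempty (each must contain at least one vertex separating the adjacent $P_2$ and $P_3$ arcs).

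Next I would observe that $\cli \cap bd(T)$ consists of $\cc_1$ together with the tops and bottoms of $\cc_2,\ldots,\cc_{i-1}$, and that these vertices form a single connected subarc of $bd(T)$ around $\cc_1$. Since $\cli \subseteq P_1$, this subarc is entirely contained in one of $I_1, I_2$; relabel so that $\cli \cap bd(T) \subseteq I_1$, and set $J := I_2$ to be the other $P_1$ arc.

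The crux of the proof is to show $J \subseteq \cgi$. By construction $J \subseteq P_1$ and $J$ is disjoint from $I_1$, so $J$ is disjoint from $\cli \cap bd(T)$. The only vertices in $\clei \cap bd(T)$ not already in $\cli \cap bd(T)$ are the top and bottom of $\cc_i$, and on $bd(T)$ each of these is adjacent to the corresponding endpoint of $\cli \cap bd(T)$ (the top and bottom of $\cc_{i-1}$, both of which are in $I_1$). Hence if either top or bottom of $\cc_i$ happens to lie in $P_1$, it would belong to the same $P_1$ arc as its $\cli$ neighbor, namely $I_1$, and therefore not to $J$. This gives $J \cap \clei = \emptyset$, i.e., $J \subseteq \cgi$.

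Finally, since $J$ is a nonempty $P_1$ arc on $bd(T)$ with one endpoint adjacent (along $bd(T)$) to the $P_3$ arc, I take $a$ to be that endpoint and $b$ to be its neighbor in the $P_3$ arc. Then $a \in P_1 \cap bd(T) \cap \cgi$ and $b \in P_3 \cap bd(T)$ are adjacent, and $a \notin \clei$, as required. The main obstacle I anticipate is the third paragraph: one must be careful that the possibility of top or bottom of $\cc_i$ being in $P_2$ or $P_3$ (rather than $P_1$) does not cause $J$ to creep into $\ci$, but this is resolved since in that case the vertex is simply not in any $P_1$ arc at all. Everything else follows immediately from the cyclic structure.
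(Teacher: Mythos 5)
Your proof is correct and takes essentially the same approach as the paper: both arguments rest on the four-arc cyclic decomposition of $bd(T)$ forced by the connectivity and non-adjacency of $P_2 \cap bd(T)$ and $P_3 \cap bd(T)$, and both conclude that the $P_1$--$P_3$ boundary adjacency lying on the far side of the $P_2$ arc from $\cli \cap bd(T)$ gives the desired vertex in $\cgi$. Your third paragraph is in fact slightly more explicit than the paper about why the chosen vertex avoids $\ci$, not just $\cli$, which is a welcome clarification.
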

\begin{proof}
	Because $P_3 \cap bd(T)$ is connected and $P_2$ and $P_3$ are not adjacent, there exist two possible locations in $T$ where $a \in P_1 \cap bd(T)$ and $b \in P_3 \cap bd(T)$ are adjacent: we will call these pairs $(a', b')$ and $(a'', b'')$.  Because $\cli \subseteq P_1$, note $a' \neq a''$ and both are in $\cc_{\geq i-1}$ (it is possible to have $b' = b''$).

	Examine the path $P'$ from $a'$ to $\cc_{i-1}\cap bd(T)$ in $bd(T)$ that avoids $P_3$ and the path $P''$ from $a''$ to $\cc_{i-1} \cap bd(T)$ in $bd(T)$ that avoids $P_3$.  At least one of these paths must pass through $P_2$, because these paths together with $P_3 \cap bd(T)$ and $\cli \cap bd(T)$ comprise all of $bd(T)$ and $P_2 \cap bd(T)$ is nonempty by Lemma~\ref{lem:23notadj_touchbdry}. If $P'$ passes through $P_2$, then $a'$ must be in $\cgi$; if $P''$ passes through $P_2$, then $a''$ must be 	in $\cgi$. In either case we have found a vertex in $(P_1 \cap bd(T) ) \setminus \clei$ adjacent to a vertex in $P_3 \cap bd(T)$, as desired.
\end{proof}

The following is the main result of this section; it focuses on the adjacent pair of boundary vertices found in the previous lemma. Note that moves made in an attempt to balance a partition may result in configurations that no longer fall under this case, that is, that have $P_2$ adjacent to $P_3$. In these cases, we simply apply one of the previous cases (Corollary~\ref{cor:23bdryadj}, Corollary~\ref{cor:p2nobd}, or Corollary~\ref{cor:p3nobd}) and are done. 

\begin{lem}\label{lem:23notadj}
	Let $P$ be a partition such that $\cli \subseteq P_1$, $P_1 \cap \cgi \neq \emptyset$, $|P_1| = k_1 + 1$, and $|P_3| = k_3 - 1$, and no vertex in $P_2$ is adjacent to any vertex in $P_3$. Then there exists a sequence of moves resulting in a balanced partition that does not reassign any vertices in $P_1 \cap \clei$. 
\end{lem}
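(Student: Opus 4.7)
The plan is to apply Lemma~\ref{lem:23notadj_j>i} to find adjacent boundary vertices $a \in bd(T) \cap P_1 \cap \cgi$ and $b \in bd(T) \cap P_3$, and then reassign $a$ from $P_1$ to $P_3$. Because $|P_1| = k_1+1$, $|P_2| = k_2$, and $|P_3| = k_3-1$, this single flip would already produce a balanced partition, and since $a \in \cgi$ no vertex of $P_1 \cap \clei$ is disturbed. I would also record, via Lemma~\ref{lem:23notadj_touchbdry} and Lemma~\ref{lem:23notadj_bdryconn}, that $P_2 \cap bd(T)$ and $P_3 \cap bd(T)$ are nonempty connected arcs of $bd(T)$; these facts are needed below.

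Next I would split on the neighborhoods of $a$. If $a$'s 1-neighborhood and 3-neighborhood are both connected, the Flip Lemma (Lemma~\ref{lem:remove-add}) permits the direct reassignment. If $a$'s 3-neighborhood is disconnected, Lemma~\ref{lem:p1_3nbhd_discon_bdry} yields a balanced partition in a single flip. The substantive case is when $a$'s 1-neighborhood is disconnected but its 3-neighborhood is connected: by Corollary~\ref{cor:cutvertex}, $a$ is then a cut vertex of $P_1$, and $a$ cannot be a corner of $T$ (a corner has at most one $P_1$-neighbor in $T$, making its 1-neighborhood trivially connected). Labeling $a$'s four $T$-neighbors cyclically as $b, p, q, c$ with $b, c \in bd(T)$ and $p, q$ interior, the hypothesis that $P_2$ and $P_3$ are non-adjacent forces $p \notin P_2$; a short elimination using the connectedness of $a$'s 3-neighborhood and the disconnectedness of its 1-neighborhood pins down $p, c \in P_1$ and $q \in P_2$, so that $a$'s 1-neighborhood $\{p, c\}$ is split by $q$.

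The revised plan in this configuration is to reassign $p$ rather than $a$: since $p \sim b \in P_3$, this single flip would send $|P_1| \to k_1$ and $|P_3| \to k_3$, balancing the partition without touching $a$. The main obstacle is enabling the move on $p$: its 1-neighborhood or 3-neighborhood may be disconnected, and $p$ itself may lie in $\clei$ if $a$ sits close to column $i$. When direct reassignment of $p$ fails, I would pass to the component $S_1$ of $P_1 \setminus \{a\}$ not containing $\cc_1$, which exists by cut-vertexness and satisfies $S_1 \subseteq \cgi$ by Lemma~\ref{lem:s1_c1}; apply Condition~(V) of the Shrinkability Lemma (using that $\cc_1 \in P_1$ is a corner of $T$) to extract a movable vertex $v \in S_1$; and reassign $v$ to $P_3$ if possible, completing the proof, or otherwise to $P_2$. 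A reassignment to $P_2$ near the arc $P_3 \cap bd(T)$ can create a new $P_2$--$P_3$ boundary adjacency, at which point we have exited Case~D and may finish by Corollary~\ref{cor:23bdryadj}.

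If neither of the above completes the proof and no $P_2$--$P_3$ adjacency is newly created, I would iterate using the Unwinding Lemma (Lemma~\ref{lem:s1s2}) applied to $S_1$ and a component of $P_2 \setminus \{q\}$ not adjacent to $S_1$, or the Cycle Recombination Lemma (Lemma~\ref{lem:cycle-recom}) on the cycle formed by a shortest $p$-to-$c$ path in $P_1 \cup \{a\}$ enclosing $q$ together with a component of $P_2$. Each iteration either balances the partition directly, strictly decreases $|S_1|$, or transitions the configuration into one of the already-handled cases (A), (B), or (C) so that Corollary~\ref{cor:23bdryadj}, Corollary~\ref{cor:p2nobd}, or Corollary~\ref{cor:p3nobd} finishes. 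The delicate bookkeeping is verifying that every preparatory move lies in $\cgi$ (guaranteed by $S_1 \subseteq \cgi$) and that any transition out of Case~D preserves the hypotheses needed by the corresponding earlier corollary, most notably $\cli \subseteq P_1$ and $P_1 \cap \cgi \neq \emptyset$.
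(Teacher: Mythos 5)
Your setup, the two easy subcases (both relevant neighborhoods of $a$ connected; $a$'s 3-neighborhood disconnected, handled by Lemma~\ref{lem:p1_3nbhd_discon_bdry}), and the identification of the hard configuration (the two $P_1$-neighbors of $a$ split by a single $P_2$-neighbor) all match the paper. The divergence, and the gap, is in how you resolve that hard configuration. The paper never tries to flip your $p$ into $P_3$; it orchestrates a three-step sequence instead: find $y$ in the component $S_1$ of $P_1\setminus a$ not containing $\cc_1$ that is adjacent to a $P_2$-vertex $z\in N(d)$ (where $d$ is your $q$), move $y$ to $P_2$, then move $d$ into $P_1$ (possible precisely because $y$ was chosen adjacent to $z$, so $d$'s $2$-neighborhood stays connected), and finally move $a$ to $P_3$. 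Your plan of flipping $p$ directly founders on the two obstacles you name but do not resolve: $p$ is only guaranteed to lie in $\cgei$ (it is adjacent to $a\in\cgi$), so $p\in\cc_i\subseteq\clei$ is possible and then $p$ may not be reassigned at all; and when $p$ is a cut vertex of $P_1$ your fallback never returns to make $p$ movable.

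The fallback also has a concrete size-bookkeeping error. After you move $v\in S_1$ from $P_1$ to $P_2$, the partition has $|P_1| = k_1$, $|P_2| = k_2+1$, $|P_3| = k_3-1$. Corollary~\ref{cor:23bdryadj} (and Corollaries~\ref{cor:p2nobd} and~\ref{cor:p3nobd}) all require $|P_1| = k_1+1$ and $|P_3| = k_3-1$ with $P_1$ the over-full district containing $\cli$; none applies when the surplus sits in $P_2$. Worse, since $P_2$ and $P_3$ are non-adjacent here, there is no single flip from the over-full $P_2$ into the under-full $P_3$, so the surplus must be routed back through $P_1$ --- which is exactly the coordinated $y\to P_2$, $d\to P_1$, $a\to P_3$ sequence the paper constructs and your iteration does not. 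Finally, the claim that each round of your Unwinding/Cycle-Recombination iteration either balances the partition, strictly shrinks $S_1$, or exits Case~D is unsupported: outcome (3) of Lemma~\ref{lem:s1s2} (all of $S_2$ absorbed into $P_1$) does neither of the first two, and the paper must prove a specific structural consequence there (that $d$, or $g$ in its Case~2, regains connected neighborhoods) before it can make progress. As written, the proposal does not close the hard case.
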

\begin{proof}
	Let $a \in (P_1 \cap bd(T)) \setminus \clei$ and $b \in P_3 \cap bd(T)$ be adjacent; such vertices exist by Lemma~\ref{lem:23notadj_j>i}. 
	
	If $a$'s 1-neighborhood is connected and $a$'s 3-neighborhood is connected, then removing $a$ from $P_1$ and adding it to $P_3$ is a valid move. We do so and are done. 
	If $a$'s 3-neighborhood is disconnected, by Lemma~\ref{lem:p1_3nbhd_discon_bdry} (applicable because $P_2 \cap bd(T)$ is nonempty by Lemma~\ref{lem:23notadj_touchbdry}), there exists a move resulting in a balanced partition that does not reassign any vertices in $P_1 \cap \clei$, satisfying the lemma. 
	
	Therefore, the main case to consider is when $a$'s 3-neighborhood is connected and $a$'s 1-neighborhood is disconnected. 
	 Label $a$ and $b$'s common neighbor in $T$ as $c$; label $a$ and $c$'s common neighbor that is not $b$ as $d$; and label $a$ and $d$'s common neighbor that is not $c$ as $e$; see Figure~\ref{fig:23notadj_setup-labeling}. Because $a$ only has three neighbors in $T$ other than $b$, it must be that $c \in P_1$ and $e \in P_1$. Because $a$'s 3-neighborhood is connected, it must be that $d \in P_2$; see Figure~\ref{fig:23notadj_setup-districts}. We will now do additional cases based on $d$'s neighborhood. 
	
	\begin{figure}
		\begin{subfigure}[b]{0.45\textwidth}
			\centering
			\includegraphics[scale = 0.8]{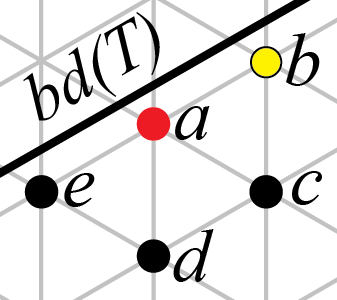}
			\caption{}
			\label{fig:23notadj_setup-labeling}
		\end{subfigure}
		\hfill
		\begin{subfigure}[b]{0.45\textwidth}
			\centering
			\includegraphics[scale = 0.8]{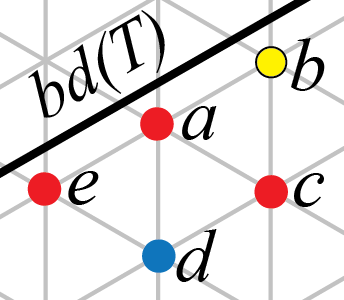}
			\caption{}
			\label{fig:23notadj_setup-districts}
		\end{subfigure}
		
		\caption{The main case of Lemma~\ref{lem:23notadj} where $P_2$ and $P_3$ are not adjacent.  Here $a \in (P_1 \setminus \clei) \cap bd(T)$ and $b \in P_3 \cap bd(T)$. (a) The labeling used for the other vertices in $N(a)$. (b) The districts of the vertices in $N(a)$ if $a$'s 3-neighborhood is connected and $a$'s 1-neighborhood is disconnected: $P_1$ is red, $P_2$ is blue, and $P_3$ is yellow. }\label{fig:23notadj_setup}
	\end{figure}

	\underline{\it Case 1: $d$'s 1-neighborhood and 2-neighborhood are connected}: In this case, we want to remove $d$ from $P_2$ and add it to $P_1$, so that $a$'s 1-neighborhood becomes connected and $a$ can be subsequently added to $P_3$.  However, this would create a partition that is not nearly balanced as we already have $|P_1| = k_1 + 1$. We need to find another vertex to remove from $P_1$ before we can do this. 	
	
	Let $S_1$ be the component of $P_1 \setminus a$ not containing $\cc_1$.  $S_1$ must contain exactly one of $e$ or $c$, and we will refer to whichever of $e$ or $c$ is in $S_1$ as $x$. By Lemma~\ref{lem:s1_c1}, $S_1$ does not contain any vertices of $\clei$. We consider $N(d)$, which is a 6-cycle, and note it must have at least one vertex in $P_2$ and has no vertices in $P_3$ (as $P_2$ and $P_3$ are not adjacent). Traverse $N(d)$ beginning with $a$ followed by $x$, and let $y$ be the last vertex in $P_1$ before encountering a vertex in $P_2$ (possibly $y = x$).  Let $z$ be the next vertex in $N(d)$ after $y$, where $z \in P_2$. See Figure~\ref{fig:23notadj-yz} for examples. 
	We do two further cases: when $y$ has a connected 1-neighborhood and when $y$ has a disconnected 1-neighborhood. 
	
	\begin{figure}
			\begin{subfigure}[b]{0.16\textwidth}
			\centering
			\includegraphics[scale = 0.7]{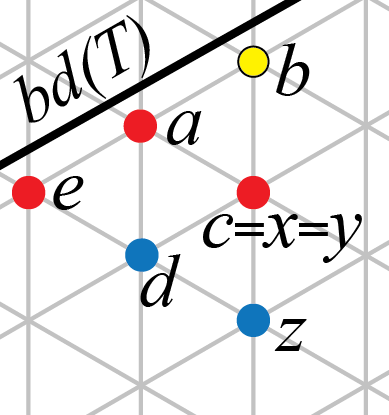}
			\caption{}
			\label{fig:23notadj-yz-c1}
		\end{subfigure}
		\hfill
		\begin{subfigure}[b]{0.16\textwidth}
			\centering
			\includegraphics[scale =  0.7]{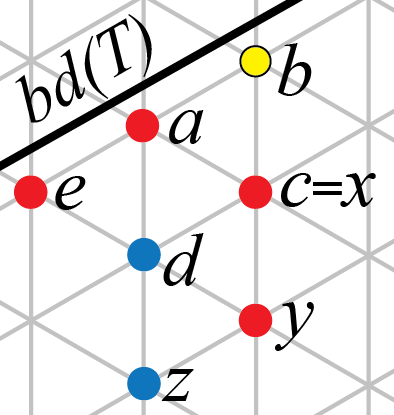}
			\caption{}
			\label{fig:23notadj-yz-c2}
		\end{subfigure}
		\hfill
		\begin{subfigure}[b]{0.16\textwidth}
			\centering
			\includegraphics[scale =  0.7]{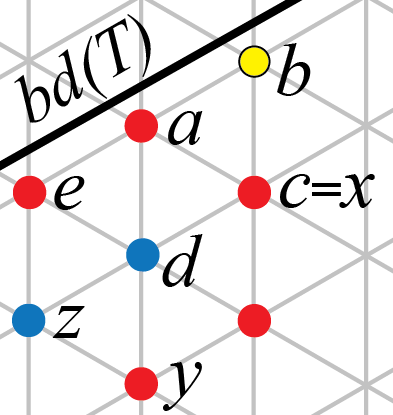}
			\caption{}
			\label{fig:23notadj-yz-c3}
		\end{subfigure}
		\hfill
		\begin{subfigure}[b]{0.16\textwidth}
			\centering
			\includegraphics[scale =  0.7]{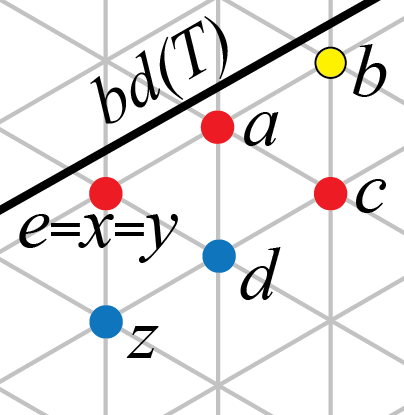}
			\caption{}
			\label{fig:23notadj-yz-e1}
		\end{subfigure}
		\hfill
		\begin{subfigure}[b]{0.16\textwidth}
			\centering
			\includegraphics[scale =  0.7]{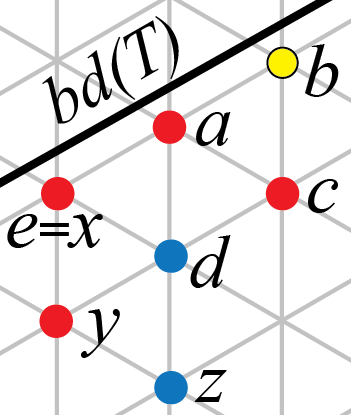}
			\caption{}
			\label{fig:23notadj-yz-e2}
		\end{subfigure}
		\hfill
		\begin{subfigure}[b]{0.16\textwidth}
			\centering
			\includegraphics[scale =  0.7]{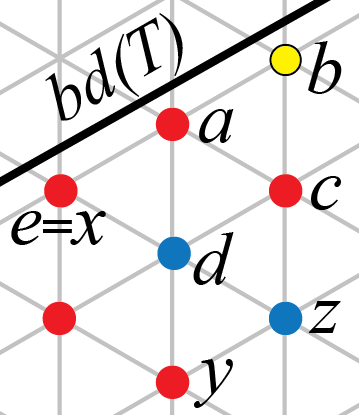}
			\caption{}
			\label{fig:23notadj-yz-e3}
		\end{subfigure}
		\caption{Images from the proof of Lemma~\ref{lem:23notadj} from the case where $d$'s 1-neighborhood and 2-neighborhood are connected. (a,b,c) The three possibilities for $y$ and $z$ when $c \in S_1$, so $x = c$.  (d,e,f) The three possibilities for $y$ and $z$ when $e \in S_1$, so $x = e$. }
		\label{fig:23notadj-yz}
	\end{figure}

	First, suppose $y$ has a connected 1-neighborhood. If $y \notin bd(T)$, then by Lemma~\ref{lem:alternation}, $y$ has a connected 2-neighborhood or 3-neighborhood.
	If $y$ has a connected 3-neighborhood, we add $y$ to $P_3$ and are done, so we assume $y$ has a connected 2-neighborhood.  We now show that when $y \in bd(T)$, $y$ also must have a connected 2-neighborhood. Note $y$ cannot be a corner of $T$ because it is part of the 6-cycle $N(d)$ where $d \notin bd(T)$.  Suppose for the sake of contradiction that $y$'s 2-neighborhood is not connected. Because no vertex of $P_3$ is adjacent to any vertex of $P_2$ and $y$ has a connected 1-neighborhood, the only way for this to happen is for both of $y$'s neighbors in $bd(T)$ to be in $P_2$, while one or both of its neighbors not in $bd(T)$ are in $P_1$. Consider the cycle formed by any path from $y$ to $\cc_1$ in $P_1$ and then any path from $\cc_1$ to $y$ outside of $T$.  This cycle contains no vertices of $P_2$ but must encircle exactly one of $y$'s neighbors in $bd(T) \cap P_2$, a contradiction.  Therefore even when $y \in bd(T)$, $y$ must have a connected 2-neighborhood. Whether $y \in bd(T)$ or $y \notin bd(T)$, we remove $y$ from $P_1$ and add it to $P_2$. Because $y$ is adjacent to $z \in P_2 \cap N(d)$, doing so does not change the fact that $d$ has a connected 1-neighborhood and 2-neighborhood.  We add $d$ to $P_1$ and subsequently add $a$ to $P_3$, producing a balanced partition, satisfying the lemma.

	Next, suppose that $y$ does not have a connected 1-neighborhood.  This means that $y$ is a cut vertex of $P_1$.  Consider any component $\overline{S_1}$ of $P_1 \setminus y$ that does not contain $a$; it follows that $\overline{S_1} \subseteq S_1$ and so $\overline{S_1}$ does not contain any vertices of $\clei$.  $\overline{S_1}$ also cannot contain any vertices of $P_1 \cap N(d)$, as all such vertices are in the same component of $P_1 \setminus y$ as $a$ and $\cc_1$.  By Condition~\ref{item:cut_corner} of Lemma~\ref{lem:shrinkable}, $\overline{S_1}$ contains a vertex $v$ that can be removed from $\overline{S_1}$ and added to another district. If $v$ can be added to $P_3$, we do so and reach a balanced partition.  If $v$ can be added to $P_2$, we do so and note this has not changed $N(d)$. Therefore $d$ still has a connected 1-neighborhood and 2-neighborhood, so we add $d$ to $P_1$ and subsequently add $a$ to $P_3$, producing a valid partition.

	We have thus resolved the case when $d$'s 1-neighborhood is connected and $d$'s 2-neighborhood is connected.

	\underline{\it Case 2: $d$'s 1-neighborhood or 2-neighborhood is not connected}:
	If it is not the case that $d$'s 1-neighborhood and 2-neighborhood are both connected, then because $d$ cannot have any neighbors in $P_3$, there is only one possibility for $N(d)$: $d$ and $c$'s common neighbor $f$ is in $P_2$; $d$ and $f$'s common neighbor $g$ is in $P_1$; and $d$ and $g$'s common neighbor $h$ is in $P_2$.  See Figure~\ref{fig:23notadj-d-disconn-labels}. We do further cases based on $g$'s neighborhood.
	
	\begin{figure}
		\centering
		\begin{subfigure}[b]{0.3\textwidth}
			\centering
			\includegraphics[scale = 0.9]{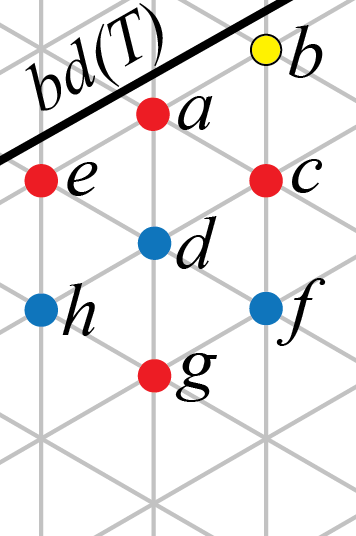}
			\caption{}
			\label{fig:23notadj-d-disconn-labels}
		\end{subfigure}
			\hfill
	\begin{subfigure}[b]{0.3\textwidth}
		\centering
		\includegraphics[scale = 0.9]{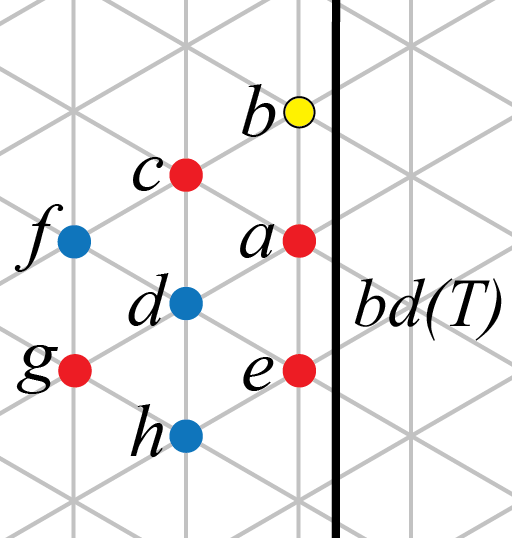}
		\caption{}
		\label{fig:23notadj-d-disconn-rotate}
	\end{subfigure}
		\hfill
		\begin{subfigure}[b]{0.3\textwidth}
			\centering
			\includegraphics[scale = 0.9]{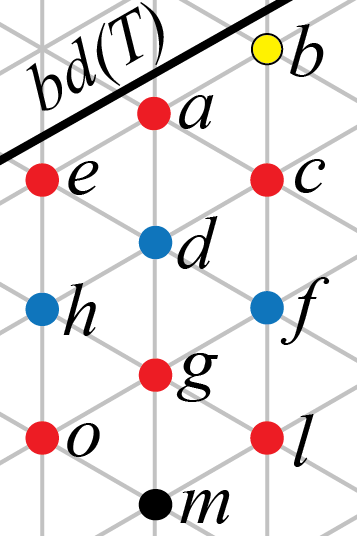}
			\caption{}
			\label{fig:23notadj-d-disconn-gnbhd}
		\end{subfigure}
		
	\caption{Images from Case 2 of the proof of Lemma~\ref{lem:23notadj}. (a) When $d$'s 1-neighborhood and 2-neighborhood are not both connected, this is what the district assignment must be in $N(d)$, where $f \in P_2$, $g \in P_1$, and $h \in P_2$. (b) In Case 2(a), when $g$'s 1-neighborhood and 2-neighborhood are both connected, and $a$ and $b$ are adjacent to $T$'s vertical right boundary, we must be more careful to check $g \not\in \clei$. 		
		(c) From Case 2(b), when $g$'s 1-neighborhood and 2-neighborhood are not both connected, this what the district assignment must be in $N(g)$, where $l \in P_1$, $o \in P_1$ and $m \notin P_1$: $m$ may be in $P_2$ or $P_3$. } \label{fig:23notadj-d-disconn}
	\end{figure}

	\underline{\it Case 2(a): $g$'s 1-neighborhood and 2-neighborhood are connected}:
	First, suppose $g$'s 1-neighborhood and 2-neighborhood are connected. If $g \in \cgi$, we add $g$ to $P_2$, after which $d$'s 1-neighborhood and 2-neighborhood are connected; we add $d$ to $P_1$, after which $a$'s 1-neighborhood and 3-neighborhood are connected; and finally we add $a$ to $P_3$, resulting in a balanced partition. 
	 Note that if $a$ is adjacent to the top or bottom boundary of $T$, then because $h \in P_2$ or $f \in P_2$ is in a column to the left of $g$, $g \in \cgi$. 
	 
	If $a$ is adjacent to $T$'s right boundary, we must be more careful; see Figure~\ref{fig:23notadj-d-disconn-rotate} for an example. Note $g$ cannot be in $\cc_{i-1}$ because $f \in P_2$ is necessarily in the same column as $g$, but $g \in \cc_i$ is possible. In this case, consider any path $Q$ from $a$ to $g$ in $P_1$, which uses $e$ or $c$ but not both.  Together with $d$, this forms a cycle $C$. Let $S_1$ be the component of $P_1 \setminus a$ not containing any vertices of $Q$, which implies $\cc_1 \notin S_1$.  	
	Because $S_1$ contains a vertex in $bd(T)$ (if $e \in S_1$) or a vertex adjacent to $b \in bd(T)$ (if $c \in S_1$), $S_1$ cannot be inside $C$ and so must be entirely outside $C$.  Let $S_2$ be the component of $P_2 \setminus d$ that is inside $C$. Note this means $S_1$ and $S_2$ do not have any adjacent vertices and $S_2 \cap bd(T) = \emptyset$. By Lemma~\ref{lem:s1s2}, there exists a sequence of moves, involving only vertices in $S_1 \subseteq P_1 \cap \cgi$ and $S_2 \subseteq P_2$, after which (1) the partition is balanced, (2) all vertices in $S_1$ have been added to $P_2$, or (3) all vertices in $S_2$ have been added to $P_1$. 
	In the first outcome we are immediately done.  In outcome (2) or (3), it remains true after this sequence that $|P_1| = k_1 + 1$ and $|P_3| = k_3 - 1$. In outcome (2), now $a$'s 1-neighborhood is connected and so we can add it to $P_3$. In outcome (3), now $d$'s 1-neighborhood and 2-neighborhood are connected. If this sequence of moves has resulted in there being a vertex of $P_2$ adjacent to a vertex of $P_3$, we apply Corollary~\ref{cor:23bdryadj} (Case A), Corollary~\ref{cor:p2nobd} (Case B), or Corollary~\ref{cor:p3nobd} (Case C), as appropriate, and are done. Otherwise, if $P_2$ and $P_3$ are still not adjacent, we apply Case 1 of this lemma. In all cases, the result is a balanced partition that has not changed any vertices of $P_1 \cap \clei$.

	\underline{\it Case 2(b): $g$'s 1-neighborhood or 2-neighborhood is not connected}:
	We now suppose that $g$'s 1-neighborhood or 2-neighborhood is disconnected. Because $g$ already has three neighbors in $P_2$, this is impossible if $g \in bd(T)$, so it must be that $g \notin bd(T)$. Note it is impossible for $g$'s 2-neighborhood to be disconnected while its 1-neighborhood remains connected, as no vertex of $P_2$ can be next to a vertex of $P_3$. It follows that $g$'s 1-neighborhood must be disconnected. Let $l\neq d$ be $g$ and $f$'s other common neighbor; let $m \neq f$ be $g$ and $l$'s other common neighbor; and let $o \neq l$ be $g$ and $j$'s other common neighbor.  For $g$'s 1-neighborhood to be disconnected, it must be that $l, o \in P_1$ and $m \notin P_1$. See Figure~\ref{fig:23notadj-d-disconn-gnbhd}, where we allow for both possibilities $m \in P_2$ and $m \in P_3$. 

Consider the shortest path from $a$ to $g$ in $P_1$, which together with $d$ forms a cycle $C$.  We will let $S_1^a$ be the component of $P_1 \setminus a$ not containing any vertices of $C$, and we will let $S_1^g$ be the component of $P_1 \setminus g$ not any vertices of $C$. Because $S_1^a$ and $S_1^g$ are disjoint, at most one of them can contain $\cc_1$.  We will let $S_1$ denote whichever of $S_1^a$ or $S_1^g$ does not contain $\cc_1$. 
We will do cases based on whether $S_1$ is inside or outside of $C$. 

If $S_1$ is outside $C$, let $S_2$ be the component of $P_2 \setminus d$ that is inside $C$; it will necessarily be the case that $S_2 \cap bd(T) = \emptyset$ because it is enclosed by $C$. Note $S_1$ and $S_2$ have no adjacent vertices.  By Lemma~\ref{lem:s1s2}, there exists a sequence of moves, involving only vertices in $S_1$ and $S_2$, after which (1) the partition is balanced, (2) all vertices in $S_1$ have been added to $P_2$, or (3) all vertices in $S_2$ have been added to $P_1$. In outcome (2) or (3), it remains true after this sequence of moves that $|P_1| = k_1 + 1$ and $|P_3| = k_3 - 1$. If the first outcome occurs we are immediately done. If in the second or third outcome the sequence of moves results in a vertex of $P_2$ being adjacent to a vertex of $P_3$, we apply Corollary~\ref{cor:23bdryadj} (Case A), Corollary~\ref{cor:p2nobd} (Case B), or Corollary~\ref{cor:p3nobd} (Case C), as appropriate, to complete the proof. Otherwise we assume $P_2$ and $P_3$ remain not adjacent. Suppose outcome (2) occurred.  If $S_1 = S_1^a$, then now $a$'s 1-neighborhood is connected and we can add $a$ to $P_3$.  If $S_1 = S_1^g$, $g$ now has a connected 1-neighborhood consisting of exactly one vertex (whichever of $l$ or $o$ was not in $S_1^g$). If $g$ has a neighbor in $P_3$, this neighbor must be adjacent to at least one other neighbor of $g$ that is in $P_2$, meaning we have already applied Corollary~\ref{cor:23bdryadj}, Corollary~\ref{cor:p2nobd}, or Corollary~\ref{cor:p3nobd} to complete the proof. Therefore $g$'s other five neighbors must all be in $P_2$, and so $g$ now also has a connected 2-neighborhood.  Because $g$'s 1-neighborhood and 2-neighborhood are both connected, we can apply Case 2(a) to complete the proof. 
If outcome (3) occurs, then either $f$ or $h$ has been added to $P_1$, and $d$ now has a connected 1-neighborhood and 2-neighborhood. This is Case 1 above, which we have already seen how to resolve. In all three outcomes, one can reach a balanced partition via a sequence of moves that has not changed any vertices of $P_1 \cap \clei$. This completes the proof when $S_1$ is outside $C$.


Instead, suppose $S_1$ is inside $C$. Because $a \in bd(T)$, and for both of $a$'s neighbors $c$ and $e$ there exists a path to $bd(T)$ not using any vertices of $C$, a component of $P_1$ containing either of these vertices could not be encircled by $C$.
Therefore in this case it's impossible to have $S_1 = S_1^a$, that is, it must be that $S_1$ is a component of $P_1 \setminus g$ and $S_1$ contains $l$ or $o$. We will apply Lemma~\ref{lem:cycle-recom} (the Cycle Recombination Lemma) with $d = x$ and $g = y$. We note because $b \in P_3 \cap bd(T)$ and $C$ contains no vertices of $P_3$, all vertices enclosed by $C$ must be in $P_1$ or $P_2$. Furthermore, because component $S_1$ of $P_1 \setminus g$ is inside $C$, $N_C(g) \cap P_1$ is disconnected. Since its hypotheses are satisfied, Lemma~\ref{lem:cycle-recom} implies there exists a  recombination step for $P_1$ and $P_2$, only changing the district assignment of vertices enclosed by $C$ and leaving $|P_1|$ unchanged, after which $N_C(g) \cap P_1$ is connected. Because $g$'s neighbors in $C$ are $d$ and whichever of $l$ and $o$ is not in $S_1$, $g$'s only neighbor outside $C$ is one of $f \in P_2$ or $h \in P_2$.  Because $g$ has no neighbors in $P_1$ outside of $C$ and $N_C(g) \cap P_1$ is now connected, it follows that $N(g) \cap P_1$ is also now connected. 
Because Lemma~\ref{lem:cycle-recom} only reassigned vertices inside $C$ and $P_3$ is outside of $C$, it remains true that $P_2$ and $P_3$ are not adjacent. Furthermore, because $g$ had no neighbors in $P_3$ before this recombination step, $g$ still has no neighbors in $P_3$. As $g \notin bd(T)$, $g$ has a connected 1-neighborhood, and $g$ has no neighbors in $P_3$, it follows by Lemma~\ref{lem:alternation} that $g$ has a connected 2-neighborhood. Applying the reasoning from Case 2(a), above, completes the proof.

In all cases, the result is a balanced partition that has not changed any vertices of $P_1 \cap \clei$.
\end{proof}

\subsection{Putting it all together} 

In this section, we pull together all the previous results to show that the sweep line process terminates, and then give the final steps necessary to show a ground state is reached. 


\begin{lem}\label{lem:sweepline}
	From any balanced partition where $\cc_1 \in P_1$, there exists a sequence of moves through balanced and nearly balanced partitions after which, for some $2 \leq i \leq n-2$, $\cli \subseteq P_1$ and $P_1 \subseteq \clei$. 
\end{lem}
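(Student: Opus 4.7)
The plan is to run the sweep-line procedure as an iterative loop and terminate by exhausting $P_1 \cap \mathcal{C}_{>i}$ rather than by filling $\mathcal{C}_i$. Throughout, maintain as loop invariant that the current partition is balanced and $\mathcal{C}_{<i^\star} \subseteq P_1$, where $i^\star$ denotes the smallest column index $j$ with $\mathcal{C}_j \not\subseteq P_1$; the hypothesis $\cc_1 \in P_1$ ensures $i^\star \geq 2$ initially. At each pass through the loop, if $\mathcal{C}_{>i^\star} \cap P_1 = \emptyset$, then $P_1 \subseteq \mathcal{C}_{\leq i^\star}$ and the conclusion holds with $i = i^\star$. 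Otherwise, the hypotheses of Lemma~\ref{lem:increase-Ci} are met at $i = i^\star$, so we apply it to produce a partition with $\mathcal{C}_{<i^\star} \subseteq P_1$ and $|\mathcal{C}_{i^\star} \cap P_1|$ strictly increased, which is either balanced or has $|P_1| = k_1 + 1$.

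In the latter case, the plan is to restore balance without touching $P_1 \cap \mathcal{C}_{\leq i^\star}$. Reducing to $|P_2| = k_2$, $|P_3| = k_3-1$ by symmetry of labels, invoke Lemma~\ref{lem:4cases} to deduce that exactly one of Cases (A)--(D) applies (this uses $\cc_1 \in P_1$), and apply the corresponding Corollary~\ref{cor:23bdryadj}, Corollary~\ref{cor:p2nobd}, Corollary~\ref{cor:p3nobd}, or Lemma~\ref{lem:23notadj}. Each of these returns a balanced partition through a sequence of moves that leaves $P_1 \cap \mathcal{C}_{\leq i^\star}$ untouched, so both $\mathcal{C}_{<i^\star} \subseteq P_1$ and the gain in $|\mathcal{C}_{i^\star} \cap P_1|$ persist. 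Then update $i^\star$ (it can only increase, since $\mathcal{C}_{<i^\star} \subseteq P_1$ continues to hold) and return to the top of the loop.

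For termination, observe that the lexicographic pair $\big(i^\star,\ |\mathcal{C}_{i^\star} \cap P_1|\big)$ strictly increases with each iteration and is bounded above by $\big(n,|\mathcal{C}_n|\big)$, so the loop halts after finitely many iterations. For the range $i \leq n-2$, note that at any balanced partition $|P_1| = k_1 \leq n(n+1)/2 - 2n = n(n-3)/2$ since $k_2,k_3 \geq n$, whereas $|\mathcal{C}_{\leq n-2}| = (n-2)(n-1)/2 = n(n-3)/2 + 1 > k_1$; hence $\mathcal{C}_{\leq n-2} \not\subseteq P_1$ at every balanced partition, forcing $i^\star \leq n-2$ at termination. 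Whenever the loop actually calls Lemma~\ref{lem:increase-Ci}, the bound $i^\star \leq n-2$ is also directly guaranteed by Lemma~\ref{lem:i_n-2}.

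The main obstacle is verifying that the loop truly closes: that the rebalancing step can always be invoked after Lemma~\ref{lem:increase-Ci} produces a nearly balanced partition, and that its output is again ready for another tower move. The case coverage is handled by Lemma~\ref{lem:4cases}, and noninterference with $P_1 \cap \mathcal{C}_{\leq i^\star}$ is built into each rebalancing result. The one delicate bookkeeping point, which the potential function argument above resolves, is that when $\mathcal{C}_{i^\star}$ fills up and $i^\star$ advances to $i^\star + 1$, we must have made net progress and cannot be stuck oscillating; monotonicity of $i^\star$ together with the bound $|\mathcal{C}_{i^\star} \cap P_1| \leq |\mathcal{C}_{i^\star}|$ prevents this.
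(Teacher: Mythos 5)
Your proposal is correct and follows essentially the same route as the paper's proof: iterate Lemma~\ref{lem:increase-Ci} at the first incomplete column, rebalance via Lemma~\ref{lem:4cases} together with Corollaries~\ref{cor:23bdryadj}, \ref{cor:p2nobd}, \ref{cor:p3nobd} and Lemma~\ref{lem:23notadj}, and terminate by a monotonicity argument. Your lexicographic potential $(i^\star, |\mathcal{C}_{i^\star}\cap P_1|)$ and the explicit counting argument for $i^\star \leq n-2$ are slightly more formal versions of the termination and range claims the paper makes more briefly, but the substance is identical.
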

\begin{proof}
	We prove this by induction on the columns $i$ of $T$.  Initially, we know $\cc_1 \in P_1$. Let $i$ be the first column of $T$ such that $\clei \not\subseteq P_1$.  We will give a sequence of moves after which either we have proved the lemma or $\clei \subseteq P_1$.

	 If $P_1 \subseteq \clei$, we are done, so we suppose this is not the case. This means $\cgi \cap P_1 \neq \emptyset$. Because of how $i$ was chosen, we also know $P_1 \cap \cc_i \neq \cc_i$. 
	 We can apply Lemma~\ref{lem:increase-Ci} and see there exists a sequence of moves, though balanced and nearly balanced partitions, that maintains $\cli \subseteq P_1$ and increases $|\cc_i\cap P_1|$. The resulting partition is balanced or has $|P_1| = k_1 + 1$. If it is not balanced, by Lemma~\ref{lem:4cases}, this partition must fall into Case A, Case B, Case C, or Case D. By Corollary~\ref{cor:23bdryadj}, Corollary~\ref{cor:p2nobd}, Corollary~\ref{cor:p3nobd}, or Lemma~\ref{lem:23notadj}, as appropriate, there exists a sequence of moves resulting in a balanced partition that does not reassign any vertex in $P_1 \cap \clei$. This has increased the number of vertices in $\cc_i \cap P_1$ and produced a balanced partition. If now it holds that $P_1 \subseteq \cc_i$, we are done.  Otherwise, we repeat this process, until either $P_1 \subseteq \clei$ or $P_1 \cap \cc_i = \cc_i$. In the former case, we're done; in the later case, we now know that $\clei \subseteq P_1$, so we move on to $i+1$ and repeat. 
	 
	 Because the number of vertices in $P_1$ never exceeds $k_1 + 1$, we cannot increase the $i$ such that $\clei \subseteq P_1$ indefinitely, so eventually we must reach a state where $P_1 \subseteq \clei$, proving the lemma. 
\end{proof}

We now show how to reach a ground state: first, we recombine $P_2$ and $P_3$ to untangle them.  Then, we recombine $P_1$ and $P_2$ so that $P_1$ is as it should be in $\sigma_{123}$. Recall that for an ordering of the vertices from left to right and, within each column, from top to bottom, $\sigma_{123}$ is the ground state whose first $k_1$ vertices in this ordering are in $P_1$, whose next $k_2$ vertices are in $P_2$, and whose final $k_3$ vertices are in $P_3$.

\begin{lem}\label{lem:groundstate}
Let $P$ be a balanced partition where for some $i \leq n-2$, $\cli \subseteq P_1 \subseteq \clei$. There exists a sequence of recombination steps through balanced partitions producing ground state $\sigma_{123}$.
\end{lem}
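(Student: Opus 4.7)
The plan is to reach $\sigma_{123}$ in at most two recombination steps. Write $P_1^{\ast},P_2^{\ast},P_3^{\ast}$ for the three districts of $\sigma_{123}$ (the first $k_1$, next $k_2$, and final $k_3$ vertices in the left-to-right, top-to-bottom ordering). I will first recombine $P_2$ and $P_3$ to place $P_3=P_3^{\ast}$, and then recombine $P_1$ with the new $P_2$ to place $P_1=P_1^{\ast}$, which forces $P_2=P_2^{\ast}$.

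A short counting check is needed first: I must verify $P_3^{\ast}\cap P_1=\emptyset$, which since $P_1\subseteq\clei$ reduces to showing $P_3^{\ast}\subseteq\cgi$. If instead $P_3^{\ast}$ extended leftward into column $\cc_i$, then since $\cc_{i+1}\cup\cdots\cup\cc_n$ contains $n(n+1)/2-i(i+1)/2$ vertices, we would need $k_3\geq n(n+1)/2-i(i+1)/2+1$; combined with $k_1\geq|\cli|=i(i-1)/2$ (from $\cli\subseteq P_1$) and $k_2\geq n$, this yields $k_1+k_2+k_3\geq n(n+1)/2+n+1-i>n(n+1)/2$ using $i\leq n-2$, contradicting $|T|=n(n+1)/2$. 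Hence $P_3^{\ast}\subseteq\cgi$ and $P_3^{\ast}\subseteq P_2\cup P_3$.

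For the first recombination, $P_2\cup P_3=T\setminus P_1$ is simply connected (the complement in the simply connected $T$ of the simply connected $P_1$, which touches $bd(T)$ at $\cc_1$), so $P_2$ and $P_3$ share an edge. The new $P_3=P_3^{\ast}$ is simply connected by inspection (a block of rightmost full columns plus a bottom-contiguous portion of one more column), and the new $P_2=(T\setminus P_3^{\ast})\setminus P_1$ has the correct size $k_2$ and is simply connected: any hypothetical enclosing cycle inside it would trap either a vertex of $P_1$ (separating it from $\cc_1\in bd(T)$) or a vertex of $P_3^{\ast}$ (separating it from $\cc_n\subseteq bd(T)$), contradicting the fact that each of $P_1$ and $P_3^{\ast}$ is connected to its boundary corner. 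The resulting partition is balanced. The second step is analogous: $P_1\cup P_2=T\setminus P_3^{\ast}$ is simply connected, and $P_1$ is adjacent to $P_2$ either via an edge inside $\cc_i$ (when $P_1\cap\cc_i\neq\cc_i$), or via $\cc_{i-1}$-$\cc_i$ (when $P_1\cap\cc_i=\emptyset$), or via $\cc_i$-$\cc_{i+1}$ when $P_1=\clei$, using $i\leq n-2$ and $k_3\geq n$ to guarantee $\cc_{i+1}\subseteq P_2$. Recombining into $(P_1^{\ast},P_2^{\ast})$ yields $\sigma_{123}$, with both parts simply connected (by construction for $P_1^{\ast}$ and by the same boundary-reaching argument for $P_2^{\ast}$).

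The only genuine subtlety is the simple-connectedness of the two residual ``middle slabs'' $(T\setminus P_3^{\ast})\setminus P_1$ and $P_2^{\ast}$; in both cases the point is that $P_1$ (resp.\ $P_1^{\ast}$) and $P_3^{\ast}$ reach $bd(T)$ at disjoint locations, so neither set can be enclosed by a cycle lying in the complement, and therefore removing them from the simply connected ambient region cannot create a hole.
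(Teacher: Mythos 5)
Your two-step plan breaks down at the first recombination: the set $P_2' = (T\setminus P_3^{\ast})\setminus P_1$ need not be \emph{connected}, and the argument you give for its simple connectedness only rules out holes, not disconnection. The failure occurs exactly when $P_3^{\ast}$ reaches into column $\cc_{i+1}$ (equivalently, when $k_2 < |(\cc_i\setminus P_1)\cup \cc_{i+1}|$) and the vertices of $P_1\cap\cc_i$ do not sit at the top of that column. Concretely, take $n=10$, $i=8$, $k_1=30$, $k_2=10$, $k_3=15$, and let $P_1$ consist of $\cli=\cc_{<8}$ together with the fifth and sixth vertices (from the top) of $\cc_8$; this $P_1$ is simply connected, satisfies $\cli\subseteq P_1\subseteq\clei$, and extends to a valid balanced partition. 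Then $P_3^{\ast}$ is $\cc_{10}$ plus the bottom five vertices of $\cc_9$, so $P_2'$ consists of six vertices of $\cc_8$ and the top four vertices of $\cc_9$. The two vertices of $P_2'$ in positions $7$ and $8$ of $\cc_8$ have all of their remaining neighbors in $P_1$ (positions $5$, $6$ of $\cc_8$ and column $\cc_7$) or in $P_3^{\ast}$ (the bottom of $\cc_9$), so they form an isolated component of $P_2'$. The intermediate partition therefore leaves $\Omega$, and the first step is not a valid recombination move.

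This is precisely the difficulty the paper's proof is organized around: your argument is essentially the paper's treatment of the easy case $|(\cc_i\setminus P_1)\cup\cc_{i+1}|\le k_2$, where $P_2'$ contains all of $\cc_{i+1}$ and is therefore automatically connected. In the complementary case the paper instead raises the vertices of $P_1\cap\cc_i$ to the top of the column one position at a time, each iteration performing a tailored recombination of $P_2$ and $P_3$ (with $P_2$ rebuilt from a distance ordering rooted at a chosen vertex $u\in\cc_{i+1}$) followed by a recombination of $P_1$ and $P_2$ swapping a single pair of vertices. Your counting argument showing $P_3^{\ast}\subseteq\cgi$ and your handling of the second recombination are fine, but the proof needs this iterative mechanism (or some substitute) to cover the case above, so as written it has a genuine gap.
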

\begin{proof}
	If we can show the the first $k_1$ vertices of the left-to-right, top-to-bottom ordering of the vertices are in $P_1$, we are done because one additional recombination step for $P_2$ and $P_3$ produces $\sigma_{123}$. This already holds for $P_1$ except for its vertices in $\cc_i$. If $P_1$ has $m$ vertices in $\cc_i$, we wish those vertices to be the top $m$ vertices in the column, but initially they may be anywhere in the column. 
	
	We prove by induction that there exists a sequence of steps after which the top $m$ vertices in $\cc_i$ are in $P_1$.
	If the number of vertices in $(\cc_i \setminus P_1) \cup \cc_{i+1}$ is less than or equal to $k_2$, this is trivial: recombine $P_2$ and $P_3$ so that $P_3$ occupies the last $k_3$ vertices in the left-to-right, top-to-bottom ordering, which means $P_2$ will necessarily occupy all of $(\cc_i \setminus P_1) \cup \cc_{i+1}$. We can then recombine $P_1$ and $P_2$ to reach $\sigma_{123}$, a step which only reassigns the vertices in $\cc_i$. However, if $k_2$ is less than $|(\cc_i \setminus P_1) \cup \cc_{i+1}|$, we need to be more careful. This is the case we now consider.

	Suppose that the vertices of $P_1$ occupy only the top $j$  positions of $\cc_i$ for some $j < m$ (possibly $j = 0$); see Figure~\ref{fig:groundstate_recom-ex} for an example where $j = 1$.  We give a sequence of steps after which at least the top $j+1$ positions of $\cc_i$ are occupied by $P_1$. Let $v$ be the topmost vertex in $\cc_i$ that is not in $P_1$, and let $Q$ be the component of $\cc_i \setminus P_1$ containing $v$.	
	This component must have $v$ as its topmost vertex, and if $P_1$ does not yet occupy the top $m$ position of $\cc_i$, the vertex $w$ in $\cc_i$ immediately below $Q$ must be in $P_1$. Let $u$ be $v$'s neighbor in $\cc_{i+1}$ that is below-left of $v$. See Figure~\ref{fig:groundstate_recom-labels} for what $v$, $w$, $u$, and $Q$ are for the example from Figure~\ref{fig:groundstate_recom-ex}.

	 We recombine $P_2$ and $P_3$ as follows. First, we erase all district assignments for $P_2$ and $P_3$ and begin constructing them from scratch. We add $Q\ \cup\ u$ to $P_2$, possible because $|Q| + 1 \leq i \leq n -2 < k_2$.  Until $P_2$ has $k_2$ vertices, we add additional vertices in $\cc_i \setminus P_1$ or $\cc_{i+1}$ according to their distance from $u$ in $\Gtri$; if a vertex in $\cc_i$ and a vertex in $\cc_{i+1}$ are at the same distance from $u$ in $\Gtri$, the one in $\cc_{i}$ is added first. See Figure~\ref{fig:groundstate_recom-result} for the result of this process on an example. We know from above that $k_2 < |(\cc_i \setminus P_1) \cup \cc_{i+1}|$, so we will not need to add any vertices in $\cc_{>i+1}$ to $P_2$. 
	 All vertices not added to $P_2$ are subsequently added to $P_3$. 
	 Constructing $P_2$ and $P_3$ this way ensures that $P_2$ is simply connected, as $Q \cup u$ is simply connected and all other vertices added to $P_2$ are connected to $u$ via a path in $\cc_{i+1} \cap P_2$.  We also know that $v$ now has a connected 2-neighborhood, consisting of $u$ and one or both of $u$ and $v$'s common neighbors. It also holds that $w$ has a nonempty 2-neighborhood because it is adjacent to a vertex of $Q$. Furthermore, $w$'s 2-neighborhood is connected because $w$'s remaining neighbors not in $P_1$, beginning with its common neighbor with $Q$, are in increasing distance from $u$ and so are added to $P_2$ in this order. Because $i \leq n-2$,  then $i+2 \leq n$ and so all of $\cc_{i+2}$ is added to $P_3$. Because a vertex in $\cc_i \setminus P_1$ is added to $P_2$ before a vertex in $\cc_{i+1}$ at an equal distance from $u$, this ensures all vertices in $\cc_i \setminus P_1$  that are not added to $P_2$ also have a neighbor in $\cc_{i+1}$ that was not added to $P_2$, meaning it has a path in $P_3$ to $\cc_{i+2} \subseteq P_3$.  This ensures $P_3$ is simply connected, as required.

	 After this recombination of $P_2$ and $P_3$, we do a recombination step for $P_1$ and $P_2$ that swaps $v$ and $w$, adding $v$ to $P_1$ and adding $w$ to $P_2$: when viewed as two flips, because $v$ and $w$ both have connected 1-neighborhoods and connected 2-neighborhoods, this produces a valid partition. Now the top $j+1$ vertices of $\cc_i$ are in $P_1$, as desired; see Figure~\ref{fig:groundstate_recom-progress}.

	 \begin{figure}
	 	
	 	\begin{subfigure}[b]{0.23\textwidth}
	 		\centering
	 		\includegraphics[scale = 0.7]{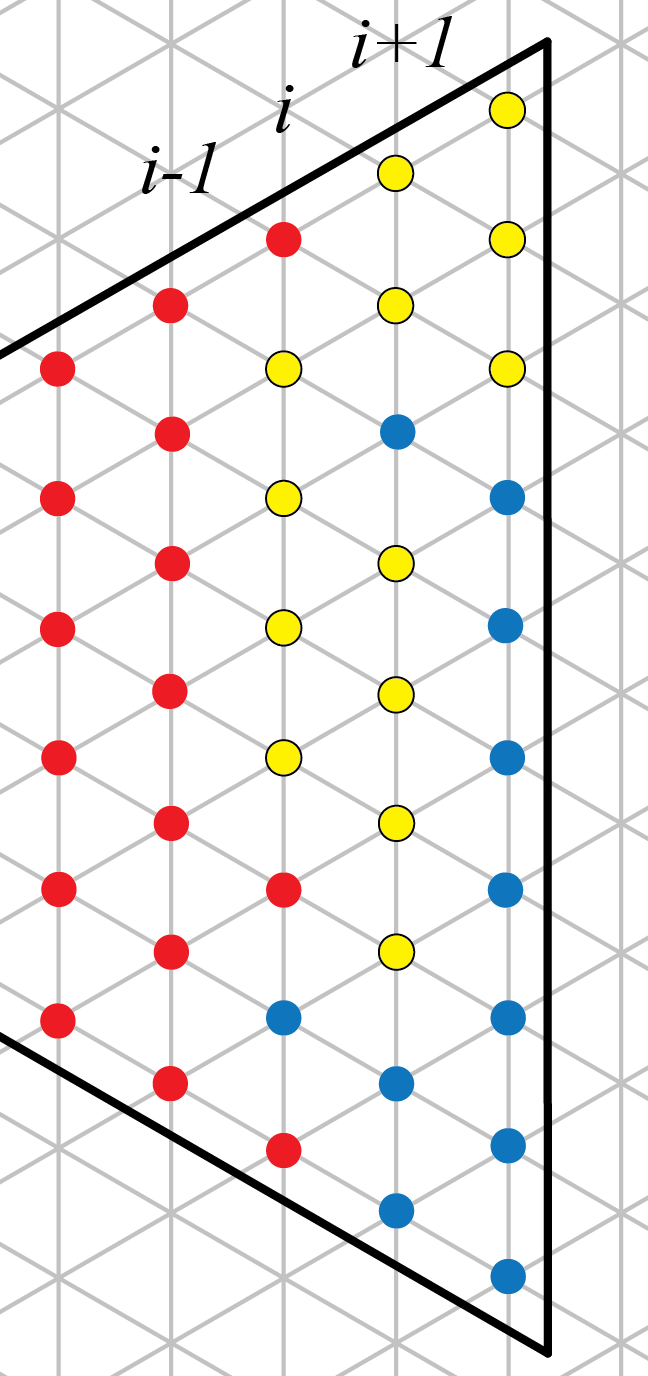}
	 		\caption{}
	 		\label{fig:groundstate_recom-ex}
	 	\end{subfigure}
	 	\hfill
	 	\begin{subfigure}[b]{0.23\textwidth}
	 		\centering
	 		\includegraphics[scale = 0.7]{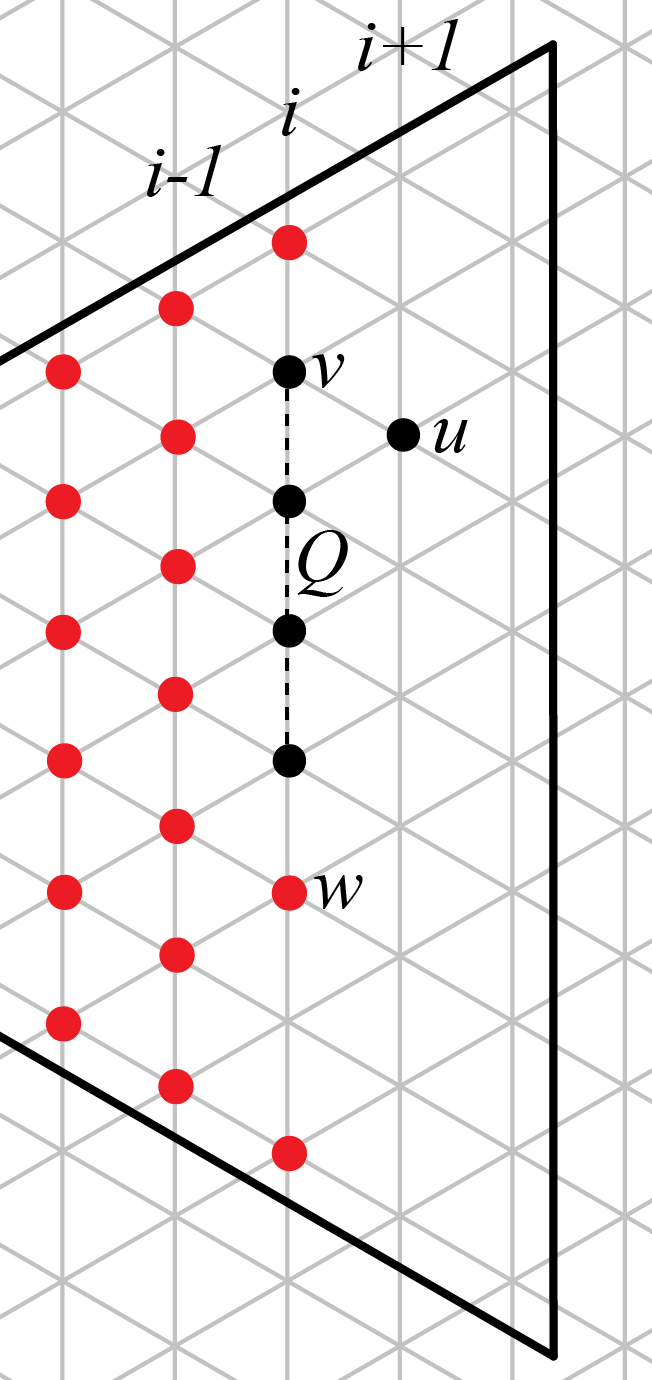}
	 		\caption{}
	 		\label{fig:groundstate_recom-labels}
	 	\end{subfigure}
	 	\hfill
	 	\begin{subfigure}[b]{0.23\textwidth}
	 		\centering
	 		\includegraphics[scale = 0.7]{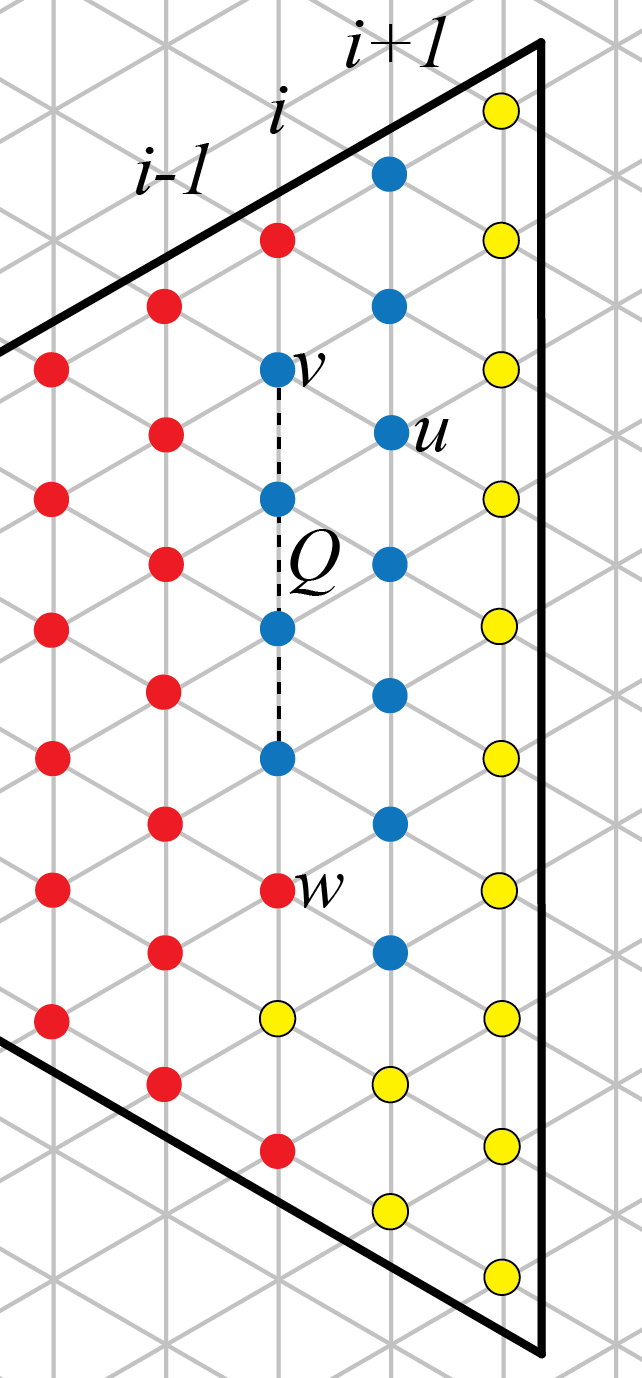}
	 		\caption{}
	 		\label{fig:groundstate_recom-result}
	 	\end{subfigure}
 		 	\hfill
 	\begin{subfigure}[b]{0.23\textwidth}
 		\centering
 		\includegraphics[scale = 0.7]{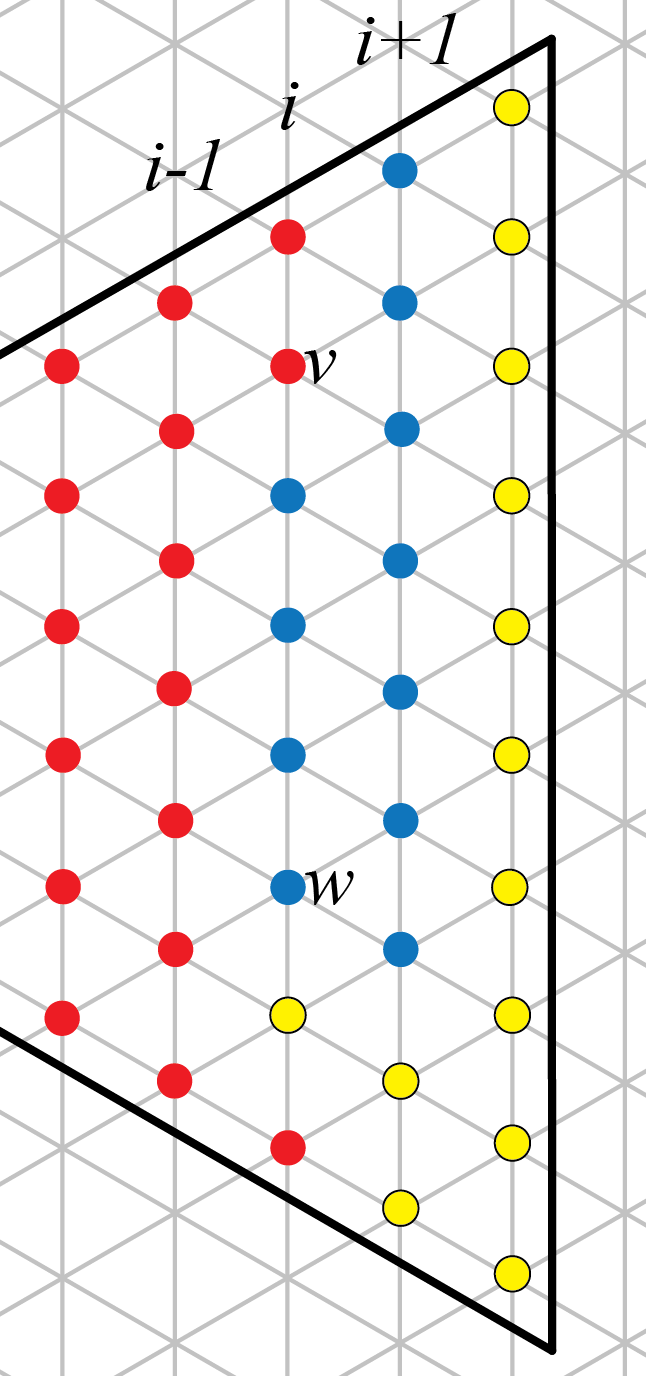}
 		\caption{}
 		\label{fig:groundstate_recom-progress}
 	\end{subfigure}
	 	\caption{An example illustrating the proof of Lemma~\ref{lem:groundstate}. (a) An example where $\cli \subseteq P_1 \subseteq \clei$ and $P_1$ has $m = 3$ vertices in $\cc_i$. The top $j = 1$ vertices of $\cc_i$ are currently in $P_1$. (b) When the district assignments of $P_2$ and $P_3$ are erased, $v$ is the topmost vertex in $\cc_i$ not in $P_1$; $Q$ is the component of $\cc_i \setminus P_1$ containing $v$; $w$ is the vertex immediately below $Q$ in $\cc_i$; and $u$ is $v$'s lower right neighbor in $\cc_{i+1}$. (c) $P_2$ and $P_3$ are recombined so that $Q \cup u$ is in $P_2$; additional vertices of $\cc_i$ and $\cc_{i+1}$ are added to $P_2$ in order of increasing distance from $u$ until $|P_2| = k_2$; and all remaining vertices are added to $P_3$. (d) We recombine $P_1$ and $P_2$ so that $v$ is added to $P_1$ and $w$ is added to $P_2$. 
	 		Now the top $j + 1 = 2$ vertices of $\cc_i$ are in $P_1$. }
	 	\label{fig:groundstate_recom}
	 \end{figure}

	Repeating this process (at most $m$ times) produces a partition where all of $\cli$ and the top $m$ positions of $\cc_i$ are occupied by $P_1$. One additional recombination step of $P_2$ and $P_3$ produces $\sigma_{123}$, completing the proof. 
\end{proof}

\noindent This concludes our proof that from every balanced partition, there exists a sequence of moves through balanced or nearly balanced partitions that reaches a ground state.

\subsection{From a Nearly Balanced Partition to a Balanced Partition}
\label{sec:getbalanced}

We have shown that from any balanced partition there exists a sequence of moves transforming that partition into a ground state. To prove irreducibility, it only remains to show that from any nearly balanced partition there exists a sequence of moves transforming that nearly balanced partition into a balanced partition. This can be done utilizing the many lemmas and approaches we have learned so far.  It is nearly trivial, using prior results, when the district whose size is one larger than ideal contains a corner vertex of $T$, and fairly straightforward when the district whose size is already equal to its ideal size contains a corner of $T$ as well. 
 However, much more work is needed when all corner vertices are in the district whose size is one smaller than ideal.

\begin{lem}\label{lem:nearlybalanced}
	Let $P$ be a nearly balanced partition, with $P_i = k_i + 1$, $P_j = k_j$, and $P_l = k_l - 1$.  There exists a sequence of steps producing a balanced partition. 
\end{lem}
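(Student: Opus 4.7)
The plan is a three-way case analysis on which district contains a corner of the triangle $T$. Write $\{i,j,l\} = \{1,2,3\}$ as in the statement, with $|P_i| = k_i+1$, $|P_j| = k_j$, and $|P_l| = k_l-1$. Because $T$ has three corners and each district is nonempty and simply connected, the partition falls into one of: (1) $P_i$ contains a corner; (2) $P_i$ contains no corner but $P_j$ does; (3) all three corners lie in $P_l$. In Case~1, after relabeling $P_1 = P_i$, $P_2 = P_j$, $P_3 = P_l$ so that the corner is $\mathcal{C}_1$, we immediately satisfy the hypotheses of the rebalancing corollaries of Section~5 with column index $1$: namely $\mathcal{C}_{<1} = \emptyset \subseteq P_1$, $P_1 \cap \mathcal{C}_{>1} \neq \emptyset$ (since $|P_1| = k_1+1 \geq n+1 > 1 = |\mathcal{C}_1|$), $|P_1| = k_1+1$, and $|P_3| = k_3-1$. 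By Lemma~\ref{lem:4cases} the partition lands in exactly one of Cases~(A)--(D), and Corollary~\ref{cor:23bdryadj}, Corollary~\ref{cor:p2nobd}, Corollary~\ref{cor:p3nobd}, or Lemma~\ref{lem:23notadj} (respectively) produces a sequence of moves to a balanced partition.

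For Case~2, the plan is to perform one reassignment that turns the situation into Case~1. Apply the Shrinkability Lemma to $P_i$ to produce a vertex $v \in P_i$ that can be removed from $P_i$ and added to a different district: Condition~(I) of Lemma~\ref{lem:shrinkable} covers the subcase $P_i \cap bd(T) = \emptyset$, and Condition~(II) covers $|P_i \cap bd(T)| \geq 2$; the residual subcase $|P_i \cap bd(T)| = 1$ is handled by direct neighborhood analysis around the unique boundary vertex of $P_i$, mirroring the style of Section~\ref{sec:23bdryadj} and using the Flip and Alternation Lemmas. If $v$ can be added to $P_l$ we reach a balanced partition immediately. If $v$ can only be added to $P_j$, do so: the partition is now nearly balanced with $|P_j| = k_j+1$, $|P_i| = k_i$, and $|P_l| = k_l-1$, and the oversized district $P_j$ still contains its corner. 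This is Case~1 after relabeling, so we finish by the Case~1 argument.

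Case~3 is the main obstacle and requires the most work. Since neither $P_i$ nor $P_j$ contains a corner, Condition~(III) of the Shrinkability Lemma -- the tool that powers most of Section~5 -- is unavailable for both. The plan is to produce a coordinated pair of flips that moves a corner of $T$ out of $P_l$ into $P_i$ or $P_j$, thereby reducing to Case~1 or Case~2, while maintaining near-balance at every intermediate step. Concretely, I intend to select a non-cut corner $\mathcal{C}_r \in P_l$ (guaranteed to exist because $P_l$ is simply connected with $|P_l| = k_l-1 \geq n-1 \geq 4$ and contains all three corners of $T$) together with a vertex $a \in P_i$ on the common boundary of $P_i$ and $P_l$. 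First flip $a$ into $P_l$; if this already yields a balanced partition we stop, otherwise the partition is in the symmetric nearly-balanced state with $|P_i| = k_i$, $|P_l| = k_l$, and some other district oversized. Next flip $\mathcal{C}_r$ into $P_i$ or $P_j$; after these two flips a corner of $T$ sits in a district other than $P_l$, restoring the original nearly-balanced pattern but now in Case~1 or Case~2. When the two candidate vertices interact badly (e.g., $a$ is a cut vertex of $P_i$, or $\mathcal{C}_r$'s neighborhood disconnects on reassignment, or $a$ and $\mathcal{C}_r$ are adjacent), the obstruction is resolved by a preliminary Unwinding step (Lemma~\ref{lem:s1s2} or Lemma~\ref{lem:s1s2x}) or, in the hardest subcase, by a Cycle-Recombination rebuild (Lemma~\ref{lem:cycle-recom}) inside a cycle enclosing both candidates. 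The hardest part will be verifying that a usable pair $(a, \mathcal{C}_r)$ exists in every configuration: the monopoly of corners by $P_l$ gives flexibility in choosing $\mathcal{C}_r$ but also constrains the adjacency structure of $P_i$ and $P_j$ along the three sides of $T$, so a sub-case split analogous to (A)--(D) of Section~5 -- based on whether $P_i$ and $P_l$ are adjacent along $bd(T)$, in the interior, or not at all -- appears unavoidable.
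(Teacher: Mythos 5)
Your Cases 1 and 2 match the paper's proof of Lemma~\ref{lem:nearlybalanced} almost exactly (the paper handles the subcase $|P_i \cap bd(T)| = 1$ more cleanly than you propose: it removes the unique boundary vertex $w$ and applies Condition~(I) of Lemma~\ref{lem:shrinkable} to the remaining component, which has no boundary vertices, rather than analyzing $w$'s neighborhood directly). The genuine gap is in Case 3. Your two-flip plan is internally inconsistent: flipping $a$ from $P_i$ (size $k_i+1$) into $P_l$ (size $k_l-1$) already yields a balanced partition whenever the flip is valid, so your ``otherwise'' branch and the subsequent corner flip are vacuous, and the corner flip is in any event unreliable --- a corner of $T$ has only two neighbors, both in $bd(T)$, and both may lie in $P_l$, in which case $\mathcal{C}_r$ has empty $i$- and $j$-neighborhoods and cannot be reassigned at all. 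The entire difficulty of Case 3 is precisely that the single flip of $a$ into $P_l$ may be \emph{invalid} for every candidate $a$, and your proposal does not engage with why or how to repair this.

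Concretely, you are missing three things the paper's argument supplies. First, $P_i$ need not be adjacent to $P_l$ along $bd(T)$ at all; the paper first locates adjacent $a \in bd(T)\setminus P_l$ and $b \in P_l \cap bd(T)$, and if $a \in P_j$ it performs a preliminary reassignment and relabels so that the oversized district is the one meeting $P_l$ on the boundary. Second, when the flip of $a$ is blocked, the obstruction splits into three cases by which of $a$'s $i$- and $l$-neighborhoods is disconnected, and only some of these reduce to Unwinding (Lemma~\ref{lem:s1s2}); the case where $a$'s $i$-neighborhood is connected but its $l$-neighborhood is disconnected forces $P_j \cap bd(T) = \emptyset$ and $P_i \cap bd(T) = \{a\}$, and the paper must then invoke the two tricolor triangles of Lemma~\ref{lem:2tricolortri} and a further analysis of $a'$, $a''$ (including whether each lies on $bd(T)$ and whether the auxiliary vertices $e'$, $e''$ lie in $P_j$ or $P_l$) before the Unwinding and Cycle Recombination Lemmas can be applied. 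Third, your proposal never establishes that a ``usable pair'' exists --- you acknowledge a case split ``appears unavoidable'' but the split you would need is not the (A)--(D) split of Section~5 (which presupposes $\cc_1 \in P_1$ with $P_1$ oversized); it is the neighborhood analysis of $a$ just described. As written, the plan for Case 3 would not go through.
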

\begin{proof}
	First, we suppose $P_i$ contains a corner vertex of $T$. In this case, we simply apply the same rebalancing lemmas from earlier, with $P_i$ replacing $P_1$ and whichever corner is in $P_i$ replacing $\cc_1$. By Lemma~\ref{lem:4cases}, at least one of the four cases described must apply to $P_j$ and $P_l$, replacing $P_2$ and $P_3$, respectively.  Corollary~\ref{cor:23bdryadj} (Case A), Corollary~\ref{cor:p2nobd} (Case B), Corollary~\ref{cor:p3nobd} (Case C), or Lemma~\ref{lem:23notadj} (Case D), as appropriate, proves the lemma.

Suppose $P_i$ does not contain a corner vertex of $T$.  Now the same lemmas/cases from earlier can't be applied directly, but we use similar ideas and approaches.	
	First, we will show that $P_i$ has a shrinkable component, satisfying one of the Conditions of Lemma~\ref{lem:shrinkable}.  We do cases based on the size of $P_i \cap bd(T)$. 
	If $P_i$ has exactly one vertex in $bd(T)$, let $w$ be that vertex. Let $S$ be the (only) component of $P_i \setminus w$, and note $S$ is shrinkable by Condition~\ref{item:nobd} of Lemma~\ref{lem:shrinkable} because $S \cap bd(T) = \emptyset$.  Otherwise, note that $P_i$ must have at least two exposed vertices, as it's impossible for $P_i$ to only have one vertex adjacent to a vertex of a different district. Let $w$ be any exposed vertex of $P_i$ ($w$ does not necessarily need to be a cut vertex of $P_i$), and let $S$ be any component of $P_i \setminus w$ containing an exposed vertex; at least one component of $P_i\setminus w$ must contain an exposed vertex because $P_i$ has at least two exposed vertices. Then $S$ is shrinkable by Lemma~\ref{lem:shrinkable}, using Condition~\ref{item:nobd} if $P_i \cap bd(T) = \emptyset$ and Condition~\ref{item:exp_2bd} if $|P_i \cap bd(T)| \geq 2$. Regardless, $P_i$ has a shrinkable component $S$, meaning $S$ contains a vertex $v$ that can be removed from $P_i$ and added to another district, producing a valid partition.

If any vertex that can be removed from $P_i$ can be added to $P_l$, we do so, reaching a balanced partition. 
Therefore we assume all vertices that can be removed from $P_i$ and added to another district can only be added to $P_j$. 
If $P_j$ contains any corner vertices of $T$, we remove vertex $v$ from $P_i$ and add it to $P_j$, after which $|P_i| = k_i$ and $|P_j| = k_j + 1$. We then apply the same argument as above, replacing $P_i$ with $P_j$, to prove the lemma. 
All that remains is the case where all three corner vertices of $T$ are in $P_l$, the district whose size is currently one less than its ideal size.  Note we cannot simply apply our tower lemma (Lemma~\ref{lem:increase-Ci}), as doing so may involve decreasing the size of $P_l$ before increasing it, which we cannot do as we already have $|P_l| = k_l - 1$.

 Instead, because it's not possible to have $bd(T) \subseteq P_l$ as $P_l$ must be simply connected, there must be somewhere in $bd(T)$ where $P_l$ is adjacent to $P_i$ or $P_j$, and we focus here. Let  $a \in bd(T) \setminus P_l$ and $b \in P_l \cap bd(T)$ be adjacent. We would like $a$ to be in whichever district has size one larger than its ideal size, but this may not always be the case. However, if $a \in P_j$, we remove $v$ from $P_i$ and add it to $P_j$.  Now $P_j$ has size one larger than its ideal size, and $a$ is in this district.  For the sake of consistency with the lemma statement, we will relabel $P_i$ and $P_j$ so that $a \in P_i$ where $|P_i| = k_i + 1$.  Therefore we can always assume that $P_l$ with $|P_l| = k_l - 1$  and $P_i$ with $|P_i| = k_i + 1$ are adjacent in~$bd(T)$. 
 
 
 Let $a \in P_i\cap bd(T)$ be adjacent to $b \in P_l \cap bd(T)$. We know $a$ can't be removed from $P_i$ and added to $P_l$ (otherwise we would be done), so $a$ must have a disconnected $i$-neighborhood or $l$-neighborhood.

 \textit{\underline{Case: $a$ has a disconnected $i$-neighborhood and a disconnected $l$-neighborhood}}: First, suppose both are true: $a$ has a disconnected $i$-neighborhood and a disconnected $l$-neighborhood. This means, in order around $a \in P_i$, we must have $b \in P_l$, $c \in P_i$, $d \in P_l$, and $e \in P_i$.  See Figure~\ref{fig:nearlybalanced-anbhd} for an example, where we use red to denote $P_i$ because $|P_i| = k_i + 1$ and use yellow to denote $P_l$ because $|P_l| = k_l - 1$; however, recall that now all three corners of $T$ are in $P_l$, not $P_i$. 
  Note the cycle $C$ consisting of any path in $P_l$ from $b$ to $d$ together with $a$ contains no vertices of $P_j$, so $P_j$ must be entirely inside or entirely outside this cycle.  This means at least one of the components of $P_i \setminus a$ - the one containing $c$ or the one containing $e$ - is not adjacent to any vertices of $P_\ell$.  Use $S$ to denote this component of $P_i \setminus a$ not adjacent to any vertices of $P_\ell$. Because $|P_i \cap bd(T)| \geq 2$ as $a,e \in P_i \cap bd(T)$, by Condition~\ref{item:cut_2bd} of Lemma~\ref{lem:shrinkable} $S$ has a vertex that can be removed from $P_i$ and added to another district. Because $S$ is not adjacent to any vertices of $P_j$, this vertex can be added to $P_l$.  Doing so produces a balanced partition, as desired. 

\begin{figure}
	\begin{subfigure}[b]{0.22\textwidth}
		\centering
		\includegraphics[scale = 0.9]{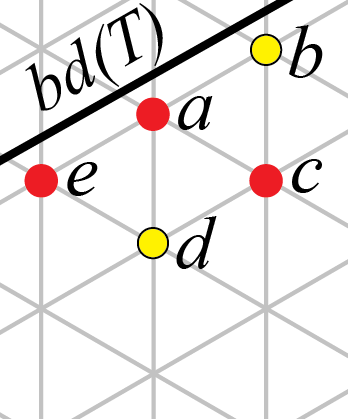}
		\caption{}
		\label{fig:nearlybalanced-anbhd}
	\end{subfigure}
	\hfill
	\begin{subfigure}[b]{0.22\textwidth}
		\centering
		\includegraphics[scale = 0.9]{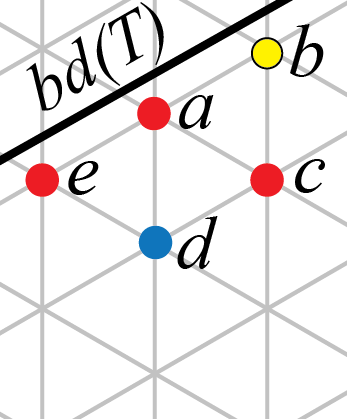}
		\caption{}
		\label{fig:nearlybalanced-anbhd2}
	\end{subfigure}
	\hfill
	\begin{subfigure}[b]{0.22\textwidth}
		\centering
		\includegraphics[scale = 0.9]{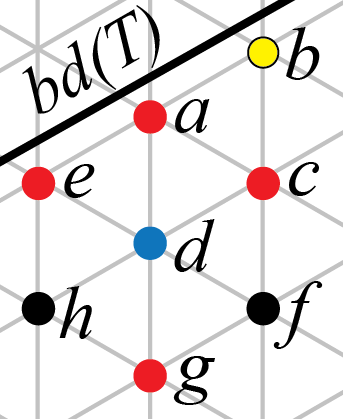}
		\caption{}
		\label{fig:nearlybalanced-dnbhd}
	\end{subfigure}
	\hfill
	\begin{subfigure}[b]{0.22\textwidth}
		\centering
		\includegraphics[scale = 0.9]{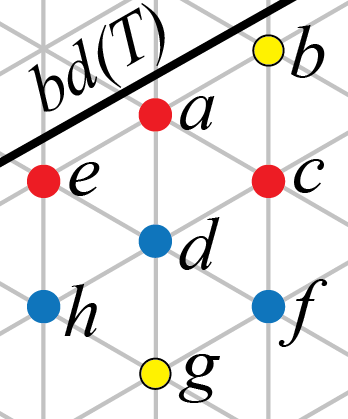}
		\caption{}
		\label{fig:nearlybalanced-dnbhd2}
	\end{subfigure}

	\caption{Images from the proof of Lemma~\ref{lem:nearlybalanced} when $a$ has a disconnected $i$-neighborhood. $P_i$ is shown in red, $P_j$ in blue, $P_l$ in yellow, and vertices whose district is not uniquely determined in black. (a) When $a \in P_i \cap bd(T)$ has a disconnected $i$-neighborhood and a disconnected $j$-neighborhood, its neighbors must be $b \in P_l$,$c \in P_i$, $d \in P_l$, and $e \in P_i$, as shown.  (b)  When $a \in P_i \cap bd(T)$ has a disconnected $i$-neighborhood but a connected $j$-neighborhood, its neighbors must be $b \in P_l$, $c \in P_i$, $d \in P_j$, and $e \in P_i$, as shown. (c) If $d \in P_j$ has a disconnected $i$-neighborhood, it must be that $g \in P_i$ but $f,h \notin P_i$; at least one of $f$ or $h$ must be in $P_j$, but the other may be in $P_l$. (d) If $d \in P_j$ has a connected $i$-neighborhood but a disconnected $j$-neighborhood, it must be that $f \in P_j$, $g \in P_l$, and $h \in P_j$.
}
	\label{fig:nearlybalanced}
\end{figure}

\textit{\underline{Case: $a$ has a disconnected $i$-neighborhood and a connected $l$-neighborhood}}: Next, suppose $a$ has a disconnected $i$-neighborhood but a connected $l$-neighborhood. For this to be true, it must the that $a$'s neighbors, in order, are $b \in P_l$, $c \in P_i$, $d \in P_j$, and $e \in P_i$; see Figure~\ref{fig:nearlybalanced-anbhd2}. First, suppose $d$'s $i$-neighborhood and $j$-neighborhood are connected. In this case we would like to add $d$ to $P_i$, but can't do so immediately because $|P_i| = k_i + 1$.  Instead, look at any component of $P_i \setminus N(d)$, of which there must exist at least one as $|P_i| = k_i + 1 > 5$.  As $P_i$ does not contain any corner vertices, at least one component $S$ of $P_i \setminus N(d)$ must contain an exposed vertex. By Condition~\ref{item:exp_2bd} of Lemma~\ref{lem:shrinkable}, there exists a vertex $v \in S$ that can be removed from $S$ and added to a different district.  If $v$ can be added to $P_l$ we do so and reach a balanced partition. Otherwise, we add $v$ to $P_j$, add $d$ to $P_i$, and, now that $a$ has a connected $i$-neighborhood, add $a$ to $P_l$, reaching a balanced partition.

 If $d$ does not have a connected $i$-neighborhood and $j$-neighborhood, we need to consider additional cases.  Suppose first that $d$ has a disconnected $i$-neighborhood. Let $f \neq a$ be $d$ and $c$'s other common neighbor; let $g$ be $f$ and $d$'s other common neighbor; and let $h$ be $d$'s last neighbor, between $e$ and $g$.  If $d$'s $i$-neighborhood is disconnected, it must be that $g \in P_i$ but $f,h \notin P_i$; at least one of $f$ or $h$ must be in $P_j$, but the other may be in $P_l$. See Figure~\ref{fig:nearlybalanced-dnbhd}.  Look at the cycle $C$ formed by any path from $a$ to $g$ in $P_i$ together with $d$.  This cycle encircles exactly one of $h$ or $f$. It also must hold that $P_l$ is entirely outside $C$, since $b \in P_l$ is outside $C$. That means that whichever of $f$ or $h$ is inside $C$ must be in $P_j$. Let $S_j$ be the component of $P_j \setminus d$ which is inside $C$. Let $S_i$ be the component of $P_i \setminus a$ that does not contain any vertices of $C$. We know $S_j \cap bd(T) = \emptyset$ because it is inside $C$, and we know $(P_i \setminus S_i) \cap bd(T) \neq \emptyset$ because it contains $a$. We  apply the Unwinding Lemma, Lemma~\ref{lem:s1s2}, with $P_1$ playing the role of $P_i$, $P_2$ playing the role of $P_j$, and $P_3$ playing the role of $P_l$.  We conclude there exists a sequence of moves, through balanced and nearly balanced partitions, after which (1) the partition is balanced, (2), all vertices in $S_i$ have been added to $P_j$, or (3) all vertices of $S_j$ have been added to $P_i$. In these moves only vertices in $S_i$ and $S_j$ have been reassigned, and in outcomes (2) and (3) the resulting partition has $|P_i| = k_i + 1$ and $|P_l| = k_l - 1$. If outcome (1) occurs, we are immediately done.  If outcome (2) occurs, then $a$ now has a connected $i$-neighborhood and so can be added to $P_l$, reaching a balanced partition. If outcome (3) occurs, $d$ now has a connected $i$-neighborhood and $j$-neighborhood, a case we already showed how to resolve above. In all cases, we eventually reach a balanced partition, as desired.

 The above paragraph resolves the case where $d$ has a disconnected $i$-neighborhood, so next we consider the case where $d$ has a connected $i$-neighborhood but a disconnected $j$-neighborhood.  It must be the case that $f \in P_j$, $g \in P_l$, and $h \in P_j$; see Figure~\ref{fig:nearlybalanced-dnbhd2}
 	 Let $S_i$ be the component of $P_i \setminus a$ containing $e$, and let $S_j$ be the component of $P_j \setminus d$ containing $f$.  We note  $(P_i \setminus S_i) \cap bd(T) \neq \emptyset$ because it contains $a$. There also must be a path from $b \in P_l$ to $g \in P_l$, which together with $a$ and $d$ forms a cycle $C$.  Because $S_j$ is contained inside this cycle, it must be that $S_j \cap bd(T) = \emptyset$.  Because $S_i$ is outside $C$ and $S_j$ is inside $C$, $S_i$ and $S_j$ do not have any adjacent vertices. Therefore we can apply Lemma~\ref{lem:s1s2} with $P_1 = P_i$, $P_2 = P_j$, and $P_3 = P_l$, and conclude there exists a sequence of moves after which (1) the partition is balanced, (2), all vertices in $S_i$ have been added to $P_j$, or (3) all vertices of $S_j$ have been added to $P_i$. In these moves only vertices in $S_i$ and $S_j$ have been reassigned, and in outcomes (2) and (3) the resulting partition has $|P_i| = k_i + 1$ and $|P_l| = k_l - 1$. If outcome (1) occurs, we are immediately done.  If outcome (2) occurs, then $a$ now has a connected $i$-neighborhood and so can be added to $P_l$, reaching a balanced partition. If outcome (3) occurs, $d$ now has a connected $j$-neighborhood and still has a connected $i$-neighborhood, a case we already showed how to resolve above. In all cases, we eventually reach a balanced partition, as desired. This completes the case when $a$ has a disconnected $i$-neighborhood and a connected $l$-neighborhood.

    \begin{figure}
    		\begin{subfigure}[b]{0.13\textwidth}
    		\centering
    		\includegraphics[scale = 0.7]{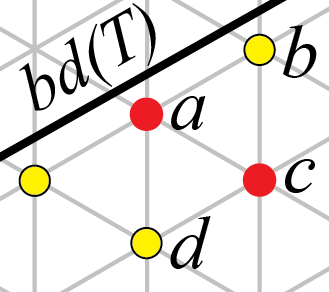}
    		\caption{}
    		\label{fig:nearlybalanced_aiconn}
    	\end{subfigure}
    	\hfill
    	\begin{subfigure}[b]{0.13\textwidth}
    		\centering
    		\includegraphics[scale = 0.7]{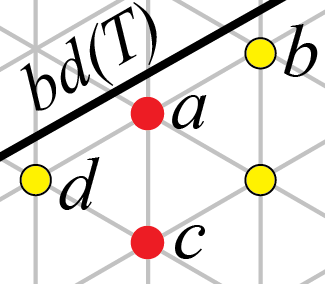}
    		\caption{}
    		\label{fig:nearlybalanced_aiconn}
    	\end{subfigure}
    	\hfill
    	\begin{subfigure}[b]{0.13\textwidth}
    		\centering
    		\includegraphics[scale = 0.7]{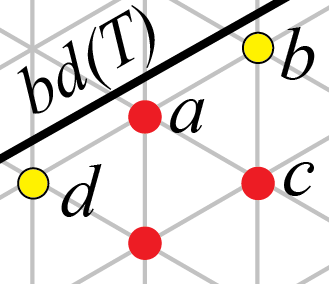}
    		\caption{}
    		\label{fig:nearlybalanced_aiconn}
    	\end{subfigure}
    	\hfill
    	\begin{subfigure}[b]{0.13\textwidth}
    		\centering
    		\includegraphics[scale = 0.7]{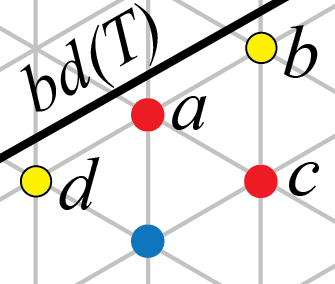}
    		\caption{}
    		\label{fig:nearlybalanced_aiconn}
    	\end{subfigure}
    	\hfill
    	\begin{subfigure}[b]{0.13\textwidth}
    		\centering
    		\includegraphics[scale = 0.7]{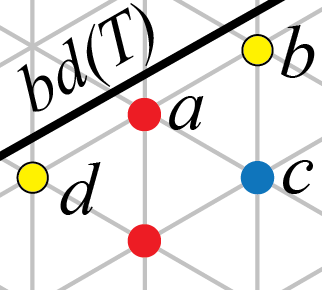}
    		\caption{}
    		\label{fig:nearlybalanced_aiconn}
    	\end{subfigure}
    	\hfill
    	\begin{subfigure}[b]{0.13\textwidth}
    		\centering
    		\includegraphics[scale = 0.7]{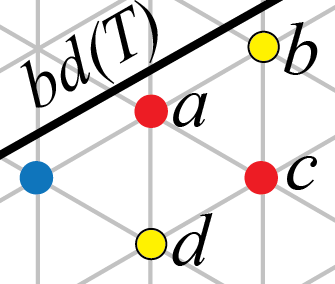}
    		\caption{}
    		\label{fig:nearlybalanced_aiconn}
    	\end{subfigure}
    	\hfill
	    \begin{subfigure}[b]{0.13\textwidth}
	    	\centering
	    	\includegraphics[scale = 0.7]{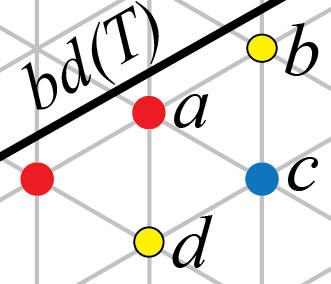}
	    	\caption{}
	    	\label{fig:nearlybalanced_aiconn}
	    \end{subfigure}
    \caption{Images from the proof of Lemma~\ref{lem:nearlybalanced}. $P_i$ is shown in red, $P_j$ in blue, and $P_l$ in yellow.    	
    	When $a \in P_i \cap bd(T)$ and $b \in P_l \cap bd(T)$ are adjacent, $a$ has a connected $i$-neighborhood, and $a$ has a disconnected $l$-neighborhood, these are all seven possibilities for $N(a)$.  In each, $d$ is a vertex in a different component of $N(a) \cap P_l$ than $b$, and $c$ is a vertex not in $P_l$ on the path from $b$ to $d$ in $N(a)$ that does not leave $T$; where there are multiple possibilities for $c$ and $d$, just one is labeled.}
    \label{fig:nearlybalanced_aiconn}
    \end{figure}
 
\textit{\underline{Case: $a$ has a connected $i$-neighborhood and a disconnected $l$-neighborhood}}: The final, and most difficult, case is when $a$ has a connected $i$-neighborhood and a disconnected $l$-neighborhood. Let $d$ be in a different component of $N(a) \cap P_l$ than $b$, and let $c \notin P_l$ be a vertex in $N(a)$ on the path in $N(a)$ from $b$ to $d$ that does not leave $T$. All seven possibilities for $N(a)$ in this case are shown in Figure~\ref{fig:nearlybalanced_aiconn}; when there are multiple options for how to label $c$ or $d$, it does not matter which is chosen. 
Consider the cycle $C$ formed by any path from $b$ to $d$ in $P_l$ together with $a$.  This path contains no vertices of $P_j$, so $P_j$ must be entirely inside or entirely outside this cycle.  Because $a$ has a connected $i$-neighborhood, $P_i \setminus a$ cannot have two components, so $P_i \setminus a$ must be entirely inside or entirely outside $C$.  If $P_2$ and $P_1 \setminus a$ are on opposite sides of $C$ (such as in (f,g) of Figure~\ref{fig:nearlybalanced_aiconn}, or possibly in (a,b,c) if $P_j$ is outside of $C$), we are done: $P_1 \setminus a$ must have a removable vertex (Condition~\ref{item:nobd} or Condition~\ref{item:cut_2bd} of Lemma~\ref{lem:shrinkable}), and this vertex is not adjacent to $P_j$ so we must be able to add it to $P_l$. We also note that $P_i \setminus a$ and $P_j$ both being outside $C$ is impossible, because vertex $c$, which is inside $C$, must be in $P_i$ or $P_j$.

This leaves the last remaining case, when $P_i \setminus a$ and $P_j$ are both inside $C$.  Note this means that $P_j \cap bd(T) = \emptyset$ and $P_i \cap bd(T) = \{a\}$. Because $P_j \cap bd(T) = \emptyset$, by Lemma~\ref{lem:2tricolortri}, there exists exactly two distinct triangular faces $F'$ and $F''$ whose three vertices are in three different districts. Exactly one of these triangular faces has its vertices in $P_1$, $P_2$, and $P_3$ in clockwise order, and the other must have its vertices of $P_1$, $P_2$, and $P_3$ in counterclockwise order. We label the vertices around $F'$ as $a' \in P_i$, $b' \in P_l$, and $c' \in P_j$ and assume they are in clockwise order, and label the vertices around $F''$ as $a'' \in P_i$, $b'' \in P_l$, and $c'' \in P_j$ and assume they are in counterclockwise order. If adding $a'$ or $a''$ to $P_l$ is valid we do so and are done, so we assume $a'$ and $a''$ cannot be added to $P_l$. This means $a'$ and $a''$ must have a disconnected $i$-neighborhood or a disconnected $l$-neighborhood.

Suppose first that $a'$ has a disconnected $l$-neighborhood but a connected $i$-neighborhood, and $a' \notin bd(T)$. This must mean that around $a'$, in order but not necessarily consecutive, are $b' \in P_l$, $c' \in P_j$, $d' \in P_l$, and $e' \in P_i$, with no other connected components of $N(a') \cap P_i$ except for the one containing $e'$. See Figure~\ref{fig:nearlybalanced_a'-ldiscon} for an example, though this is not the only possibility for $N(a')$. $N(a')$ also cannot contain more components of $P_j$ or $P_l$ as this would require more alternation of $P_j$ and $P_l$ in $N(a')$ than is allowed by Lemma~\ref{lem:alternation}.  Consider the cycle $C$ formed by any path from $b'$ to $d'$ in $P_l$ together with $a'$. This cycle cannot encircle $e' \in P_i$ because then it would encircle all of $P_i \setminus a'$ but  $a \in P_i \cap bd(T)$, so instead it must encircle $c' \in P_j$ and thus encircle all of $P_j$. This means that the only place $P_i$ is adjacent to $P_j$ is at $a'$. By Condition~\ref{item:nobd} of Lemma~\ref{lem:shrinkable}, in $P_j \setminus N(a')$ there is $v \in P_j$  that can be removed and added to another district; because $a'$ is the only vertex of $P_i$ adjacent to $P_j$, it must be that $v$ can be added to $P_l$. We therefore add $v$ to $P_l$ and add $a$ to $P_j$, possible because $a'$ has a connected $i$-neighborhood and a connected $j$-neighborhood. This resolves the case when $a' \notin bd(T)$ has a connected $i$-neighborhood and, by the same arguments, the case where $a'' \notin bd(T)$ has a connected $i$-neighborhood.  We therefore assume that $a'$ and $a''$ both have a disconnected $i$-neighborhood or are in $bd(T)$. 

\begin{figure}
	\begin{subfigure}[b]{0.45\textwidth}
		\centering
		\includegraphics[scale = 0.9]{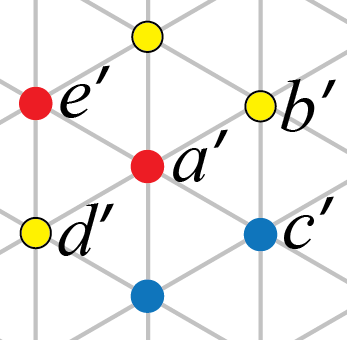}
		\caption{}
		\label{fig:nearlybalanced_a'-ldiscon}
	\end{subfigure}
	\hfill
	\begin{subfigure}[b]{0.45\textwidth}
		\centering
		\includegraphics[scale = 0.9]{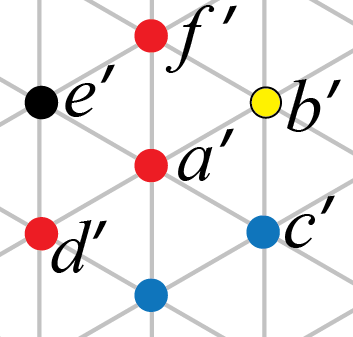}
		\caption{}
		\label{fig:nearlybalanced_a'-e'}
	\end{subfigure}
	
	\caption{Figures from the proof of Lemma~\ref{lem:nearlybalanced}. In these images, $P_i$ is red, $P_j$ is blue, $P_l$ is yellow, and vertices whose district assignment is not yet determined are black. (a) When $a'$, $b'$, $c'$ is a tricolor triangle and $a' \notin bd(T)$, an example of when $a'$ has a connected $i$-neighborhood and a disconnected $l$-neighborhood. (b) When $a'$, $b'$, $c'$ is a tricolor triangle and $a' \notin bd(T)$, an example of when $a'$ has a disconnected $i$-neighborhood; $e'$ may be in $P_j$ or $P_l$.}
	\label{fig:nearlybalanced_a'}
\end{figure}

Suppose $a'$ has a disconnected $i$-neighborhood. Let $d'$ and $f'$ be two vertices in different connected components of $N(a') \cap P_i$ such that, when traversing $N(a')$ beginning with $b'$ followed by $c'$, $d'$ is encountered before $f'$. When $a' \notin bd(T)$, one path from $d'$ to $f'$ in $N(a)$ must contain $b'$ and $c'$, and the other must contain some vertex $e' \in T$ where $e' \notin P_i$. See Figure~\ref{fig:nearlybalanced_a'-e'} for an example, though note this not the only possibility for $N(a')$. 
We first consider the case where $a' \notin bd(T)$ and $e' \in P_j$. Then, we show how the case $a' \in bd(T)$ can be resolved by looking at $a''$ instead; much of this argument resolves reducing to previously solved cases with $a''$ replacing $a'$. Our final case then considers when neither $a'$ nor $a''$ is in $bd(T)$ and $e'$ and the similarly-defined $e''$ are both in $P_l$.

\underline{\it Subcase: $a' \notin bd(T)$, $e' \in P_j$}.  We first consider the case where $a' \notin bd(T)$ and $e' \in P_j$. Look at any path from $c'$ to $e'$ in $P_j$; together with $a'$, this path forms a cycle $C$. Note $P_l$ cannot be inside this cycle because $P_l$ contains boundary vertices, notably all three corners of $T$.
Let $S_1$ be the component of $P_i \setminus a$ inside this cycle, and note $S_1 \cap bd(T) = \emptyset$. By Condition~\ref{item:nobd} of Lemma~\ref{lem:shrinkable}, there exists a vertex $v \in S_1$ that can be removed from $S_1$ and added to another district; because $P_l$ is outside $C$, we must be able to add $v$ to $P_j$.

We next focus on $c' \in P_j$.  First, suppose removing $c'$ from $P_j$ and adding it to $P_l$ is valid; if it is, we do so.  This will not change the fact that $v$ can be removed from $S_1$, but it may change which district $v$ can be added to.  If $v$ can now be added to $P_j$, we do so and reach a balanced partition.  If $v$ can now be added to $P_l$ we do so; in this case, $P_l$ now has one more than its ideal number of vertices.  Now the district with one too many vertices contains a corner vertex (in fact, three corner vertices) of $T$, and we apply a rebalancing step as above. In either case, we can reach a balanced partition if $c'$ can be added to $P_l$.

Next, we assume $c'$ cannot be added to $P_l$, meaning $c'$ must have a disconnected $j$-neighborhood or a disconnected $l$-neighborhood. 
 Note because we have assumed $P_j \cap bd(T) = \emptyset$, $c' \notin bd(T)$.
First, suppose $c'$ has a disconnected $l$-neighborhood, and show it must also have a disconnected $j$-neighborhood   
We know $b'$ is in one component of $N(c') \cap P_l$, and let $g'$ be in a different component of $N(c') \cap P_l$.  Consider the cycle $C'$ formed by any path in $P_l$ from $b'$ to $g'$ together with $c'$.  Note $P_i$ must be entirely outside $C'$, as $P_i \cap bd(T)$ is not empty (it contains vertex $a$). Consider the path in $N(c')$ from $b'$ to $g'$: This path must contain a vertex not in $P_l$ because $b'$ and $g'$ are in different components of $P_l \cap N(c')$; can't contain a vertex outside $T$ because $c' \notin bd(T)$; can't contain a vertex of $P_i$ because $P_i$ is outside $C'$; so must contain a vertex of $P_j$. Therefore $c'$ has a neighbor in $P_j$ inside $C'$.  However $c'$ must also have a neighbor in $P_j$ outside $C$, namely it's neighbor in cycle $C$; recall cycle $C$ consists of any path from $c'$ to $e'$ in $P_j$ together with $a$.  Note $e'$ cannot be inside $C'$ because it's adjacent to $a' \in P_i$, so any path from $c'$ to $e'$ in $P_j$ must begin with a neighbor of $c'$ in $P_j$ that is outside $C'$. Because $c'$ has a neighbor in $P_j$ inside $C'$ and a neighbor in $P_j$ outside $C'$, $c'$ must have a disconnected $j$-neighborhood. Because we have shown whenever $c'$ has a disconnected $l$-neighborhood it also has a disconnected $j$-neighborhood, it suffices to resolve the case when $c'$ has a disconnected $j$-neighborhood, which we now do.



Suppose $c'$ has a disconnected $j$-neighborhood.  We let $S_2$ be the component of $P_j \setminus c'$ that does not contain any vertices of $C$, and do cases for whether $S_2$ is inside or outside of $C$.

If $S_2$ is outside $C$, we use the Unwinding Lemma, Lemma~\ref{lem:s1s2}, with $P_i$ in place of $P_1$ and $P_j$ in place of $P_2$. This lemmas applies to $P_i$ and $P_j$  because $S_1 \cap bd(T) = \emptyset$ because $S_1$ is inside $C$; $S_2 \cap bd(T) = \emptyset$ because $P_j \cap bd(T) = \emptyset$; and $S_1$ and $S_2$ are not adjacent because one is inside $C$ and the other is outside $C$. There exists a sequence of moves through nearly balanced partitions after which (1) the partition is balanced, (2) all vertices in $S_1$ have been added to $P_j$, or (3) all vertices of $S_2$ have been added to $P_i$. In these moves only vertices in $S_1$ and $S_2$ have been reassigned, and in outcomes (2) and (3) the resulting partition has $|P_i| = k_i + 1$ and $|P_j| = k_j$. If outcome (1) occurs, we are done.  If outcome (2) occurs, now $a'$ has a connected $i$-neighborhood, a case we have already considered and resolved above. If outcome (3) occurs and outcome (2) does not occur, now $c'$ has a connected $j$-neighborhood.  As we already saw it's impossible for $c'$ to have a connected $j$-neighborhood but a disconnected $l$-neighborhood, this means $c'$ must now have a connected $l$-neighborhood as well. This means $c'$ can be added to $P_l$. Additionally, because what remains of component $S_1$ is not empty and therefore still contains a removable vertex, we can remove that vertex from $P_i$ and add it to $P_j$, reaching a balanced partition.

If $S_2$ is inside $C$, we instead use Lemma~\ref{lem:cycle-recom} applied to cycle $C$ and district $P_j$; cycle $C$ consists entirely of vertices in district $j$ except for one vertex, $a' \in P_i$. All vertices enclosed by $C$ are in $P_i$ or $P_j$. We focus on the vertex $c'$ which is adjacent to $a'$ in $C$, and note that $N_C(c') \cap P_j$ must be disconnected because component $S_2$ of $P_j \setminus c'$ is inside $C$. By Lemma~\ref{lem:cycle-recom}, there exists a recombination step for $P_i$ and $P_j$, only changing the district assignment of vertices enclosed by $C$ and leaving $|P_j|$ and $|P_i|$ unchanged, after which $N_C(c') \cap P_j$ is connected. This means that $c'$ now has a connected $j$-neighborhood, and just as above there exists a sequence of two additional steps leading to a balanced partition, adding $c'$ to $P_l$ and adding a vertex of the new component of $P_i \setminus a'$ inside $C$ to $P_j$.  

If $a''$ and $e''$, with $e''$ defined similarly to $e'$, satisfy $a'' \notin bd(T)$ and $e'' \in P_j$, there also exists a sequence of moves reaching a balanced partition, replacing $c'$ in the arguments with $c''$, etc.

\underline{\it Subcase: $a' \in bd(T)$}.  Suppose $a' \in bd(T)$, meaning one of the two tricolor triangles in $T$ contains a vertex of $bd(T)$ (in fact, two vertices of the tricolor triangle must be in $bd(T)$, though we will not need this fact). In this case, we look at the other tricolor triangle, consisting of vertices $a'' \in P_i$, $b''\in P_l$, and $c'' \in P_j$. Because of how $N(a')$ must look, shown in Figure~\ref{fig:nearlybalanced_aiconn}(d,e), it's impossible to have $a' = a''$.  Because $P_i$ contains exactly one boundary vertex, $a'$, we therefore know $a'' \notin bd(T)$. If $a''$ has a connected $i$-neighborhood, this case is resolved above (for $a'$, but the same arguments apply), so we assume $a''$ has a disconnected $i$-neighborhood. Let $d''$ and $f''$ be two vertices in different connected components of $N(a'') \cap P_i$. Then one path from $d''$ to $f''$ in $N(a)$ must contain $b''$ and $c''$, and the other must contain some vertex $e'' \in T$ where $e'' \notin P_i$. If $e'' \in P_j$, this subcase is resolved above (for $a'$ and $e'$, but the same arguments apply). Therefore we assume $e'' \in P_l$.  In this case, look at any path from $b''$ to $e''$ in $P_l$, which together with $a''$ forms a cycle $C$. Let $S_1$ be the component of $P_i \setminus a''$ inside this cycle. There exists a vertex $v \in S_1$ that can be removed from $S_1$ and added to another district by Condition~\ref{item:nobd} of Lemma~\ref{lem:shrinkable}. Note that $P_j$ must be outside of $C$, because there exists a path from $c' \in P_j$ to $bd(T)$ (via $a'$) not using any vertices of $P_l$. Therefore it must be that $v$ can be added to $P_l$, so we do so and reach a balanced partition.

\underline{\it Subcase: $a' \notin bd(T)$, $a'' \notin bd(T)$, $e' \in P_l$, $e'' \in P_l$}. This is the final case that must be considered. Let $Q$ be any path from $a'$ to $a''$ in $P_i$. Let $S_1'$ be the component of $P_i \setminus a'$ not containing $Q$, and let $S_1''$ be the component of $P_i \setminus a''$ not containing $Q$. At least one of $S_1'$ and $S_1''$ must contain no vertices of $bd(T)$ because they are disjoint and $P_i$ only contains one vertex of $bd(T)$. Without loss of generality we assume it's $S_1'$ that contains no vertices of $bd(T)$, and we now call this component $S_1$. Consider any path from $b'$ to $e'$ in $P_l$ which together with $a'$ forms a cycle $C$.  Note that $P_j$ must be outside this cycle because there exists a path from $P_j$ to $bd(T)$ not containing any vertices of $C$, namely from $c''$ to $a''$ and then through $P_i$ to $a$. Additionally, $S_1$ must be inside this cycle, as $a \in bd(T)$ must be in the other component of $P_i \setminus a'$ that is not $S_1$.  We know $S_1$ must have a vertex that can be removed from $P_i$ and added to another district by Condition~\ref{item:nobd} of Lemma~\ref{lem:shrinkable}, and this vertex (which is inside $C$) must be able to be added to $P_l$, because $P_j$ is entirely outside $C$.  This produces a balanced partition, as desired. 

This concludes our proof that from any nearly balanced partition, there exists a sequence of moves producing a balanced partition.




\end{proof}

\noindent This is sufficient to prove our main theorem. 

\begin{thm}[Main Theorem, also stated as Theorem~\ref{thm:main}]
	$G_\Omega$ is connected. 
\end{thm}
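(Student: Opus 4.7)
The plan is to assemble the main theorem directly from the lemmas already established, using the ground states as a universal intermediate. The key observation is that every recombination step is reversible (explicitly noted in the Introduction): if $\sigma \to \tau$ is a valid step, so is $\tau \to \sigma$. Hence it suffices to show that from every partition in $\Omega$ there is a path in $G_\Omega$ to some ground state, and that the six ground states are mutually connected.

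First I would handle an arbitrary $P \in \Omega$ by reducing to the balanced case. If $P$ is nearly balanced, apply Lemma~\ref{lem:nearlybalanced} to obtain a sequence of moves through $\Omega$ that produces a balanced partition. So without loss of generality we may assume $P$ is balanced. Since the corner vertex $\cc_1$ lies in exactly one district, say $P_\pi$ for some $\pi \in \{1,2,3\}$, we can relabel so that $\cc_1 \in P_1$, intending to reach the ground state $\sigma_{\pi jk}$ where $(\pi,j,k)$ is the appropriate permutation; for concreteness assume we are aiming for $\sigma_{123}$.

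Next I would run the sweep-line procedure. Apply Lemma~\ref{lem:sweepline} to obtain, via a sequence of moves through balanced and nearly balanced partitions, a balanced partition $P'$ in which $\cli \subseteq P_1 \subseteq \clei$ for some $2 \leq i \leq n-2$. From $P'$, Lemma~\ref{lem:groundstate} produces a further sequence of recombination steps (through balanced partitions) that yields the ground state $\sigma_{123}$. All moves used so far stay inside $\Omega$, so we have a path in $G_\Omega$ from $P$ to $\sigma_{123}$.

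Finally I would connect the six ground states to one another. As noted immediately after their definition in the proof overview, any transposition of two adjacent indices in a ground state is achievable in a single recombination step: merging two consecutive blocks in the left-to-right ordering and splitting their union with the blocks in the reverse order is a valid recombination move, and the resulting partition is balanced, hence in $\Omega$. Since the symmetric group on three letters is generated by adjacent transpositions, any two of the six ground states are connected by a short path in $G_\Omega$. Combining these three ingredients (reduction to balanced via Lemma~\ref{lem:nearlybalanced}; reaching a ground state via Lemmas~\ref{lem:sweepline} and~\ref{lem:groundstate}; and the connectivity of the ground states) and using reversibility of recombination moves, any two states $\sigma, \tau \in \Omega$ are connected in $G_\Omega$: concatenate the path from $\sigma$ to its ground state, a path between ground states, and the reverse of the path from $\tau$ to its ground state. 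This shows $G_\Omega$ is connected. There is no real obstacle left at this stage; the entire burden of the argument has already been discharged by the earlier lemmas, and what remains is only to stitch them together and invoke reversibility.
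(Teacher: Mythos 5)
Your proposal is correct and follows essentially the same route as the paper's own proof: reduce to a balanced partition via Lemma~\ref{lem:nearlybalanced}, reach a ground state via Lemma~\ref{lem:sweepline} and Lemma~\ref{lem:groundstate}, connect the ground states by adjacent-transposition recombination steps, and invoke reversibility of recombination moves to concatenate paths. The only difference is that you spell out the relabeling of districts and the reversibility argument explicitly, which the paper leaves implicit in its proof overview.
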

\begin{proof}
	$G_\Omega$ is the graph whose vertices are the balanced and nearly balanced partitions of $T$ into three simply connected districts with ideal sizes $k_1$, $k_2$, and $k_3$. Two partitions are adjacent in $G_\Omega$ (connected by an undirected edge) if a recombination move can transform one into another. 	
	By Lemma~\ref{lem:nearlybalanced}, from any nearly balanced partition there exists a sequence of moves producing a balanced partition. By Lemma~\ref{lem:sweepline} and Lemma~\ref{lem:groundstate}, from any balanced partition there exists a sequence of moves producing one of the six ground states. Because it is straightforward to move among the six ground states with recombination moves, this proves that $G_\Omega$ is connected. 
\end{proof}

\section{Diameter Bound} 

Our proof that $G_\Omega$ is connected is constructive, meaning for any two partitions we find a path between them in $G_\Omega$. A natural next question is to ask about the lengths of these paths, which we answer here. This result is stated in terms of $n$,  the side length of the triangle $T$, and it is worth recalling that triangle $T$ contains $ n(n+1)/2 = O(n^2)$ vertices.  

\begin{lem}\label{lem:diameter} (Theorem~\ref{thm:dia_intro})
	For any two partitions $\sigma$ and $\tau$ in $\Omega$, the distance between $\sigma$ and $\tau$ in $G_{\Omega}$ is $O(n^3)$. 
\end{lem}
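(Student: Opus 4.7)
The plan is to bound the length of the recombination path constructed in the proofs of Lemmas~\ref{lem:nearlybalanced}, \ref{lem:sweepline}, and~\ref{lem:groundstate}. Since the six ground states are pairwise connected by a single recombination step each, showing that every partition in $\Omega$ reaches a ground state in $O(n^3)$ steps gives the diameter bound by triangle inequality: $\sigma \to \sigma_{123}$ costs $O(n^3)$, ground states are mutually $O(1)$-close, and $\sigma_{?} \to \tau$ costs $O(n^3)$ by running the same construction in reverse (valid because every recombination step is reversible).

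I will count each type of subroutine used in the construction. A tower move (Lemma~\ref{lem:tower}) of height $t$ uses exactly $t$ flips, and since a tower lives on one straight line in $T$ we have $t \leq n+1$, so each tower move costs $O(n)$. An Unwinding invocation (Lemma~\ref{lem:s1s2} or Lemma~\ref{lem:s1s2x}) terminates after at most $|S_1| + |S_2| = O(n^2)$ flips. The Cycle Recombination Lemma is a single recombination step. Each rebalancing routine (Corollary~\ref{cor:23bdryadj}, Corollary~\ref{cor:p2nobd}, Corollary~\ref{cor:p3nobd}, Lemma~\ref{lem:23notadj}) calls only a constant number of Unwindings, Cycle Recombinations, and individual flips, so costs $O(n^2)$ in the worst case.

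Next, I will bound the full sweep line (Lemma~\ref{lem:sweepline}). It calls Lemma~\ref{lem:increase-Ci} at most $|P_1| = O(n^2)$ times, since every call permanently adds one vertex to $\cli \cap P_1$ and vertices in $\cli$ are never reassigned again. Each call performs a tower plus at most one rebalance. The tower contribution across all calls is $O(n) \cdot O(n^2) = O(n^3)$. The rebalancing contribution is naively $O(n^2) \cdot O(n^2) = O(n^4)$, which is too weak; I will tighten it by charging the cost of each Unwinding call of size $|S_1| + |S_2|$ to the $|S_1| + |S_2|$ vertices of $T$ that get flipped during it, and observing that any single vertex $v \in T$ can be flipped during the sweep line at most $O(n)$ times: once it joins $\cli$ it is frozen, and before that it contributes to the at most $O(n)$ tower or Unwinding passes that straighten the $P_1/P_2/P_3$ interface in its column. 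Summing over $O(n^2)$ vertices gives $O(n^3)$ flips total. Finally, Lemma~\ref{lem:groundstate} adds at most $O(n)$ recombinations and Lemma~\ref{lem:nearlybalanced} adds at most $O(n^2)$ via the same rebalancing toolkit, both dominated by $O(n^3)$.

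The main obstacle will be rigorously establishing the per-vertex flip bound of $O(n)$. Every flip during the sweep line is either a tower flip (which moves the tower's own $O(n)$ vertices and then is completed) or an Unwinding flip in the current $\cc_{>i}$; I expect to bound the latter by observing that each Unwinding strictly shrinks $|S_1 \cup S_2|$, that $S_1, S_2 \subseteq \cc_{\geq i}$, and that once the frontier advances from $i$ to $i+1$ the vertices of $\cc_i$ that end up in $P_1$ are frozen. A subtle point — and the place where care is most needed — is ensuring the flips performed by the rebalancing case-analysis never undo flips performed by earlier tower moves in a way that would create an amortized blow-up; this follows from the fact that each rebalancing leaves $P_1 \cap \clei$ untouched by design.
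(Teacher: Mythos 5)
There is a genuine gap in your argument, and it sits exactly where you flagged it: the per-vertex $O(n)$ flip bound for Unwinding calls. You charge each Unwinding invocation to the vertices it flips and assert that any vertex is flipped at most $O(n)$ times before being frozen, because it ``contributes to the at most $O(n)$ tower or Unwinding passes that straighten the interface in its column.'' But nothing confines the unwinding sets $S_1$ and $S_2$ to the column being processed: they are arbitrary components of $P_1\setminus w_1$ and $P_2 \setminus w_2$ lying in $\cc_{>i}$, and they can be long arms spanning many columns. The rebalancing step runs after essentially every one of the $O(n^2)$ sweep-line iterations, and a fixed vertex $v$ in a far-right column can lie in $S_1$ in one rebalance (getting flipped from $P_1$ to $P_2$), lie in $S_2$ in a later rebalance (getting flipped back), and so on, plausibly $\Theta(n^2)$ times before the frontier reaches it. So the charging scheme as described does not close the $O(n^4)$-to-$O(n^3)$ gap, and I do not see how to repair it along these lines.

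The paper avoids this problem entirely with an observation you missed: although an Unwinding invocation comprises $O(|S_1|+|S_2|) = O(n^2)$ \emph{flips}, all of those flips except possibly the final one (which adds a vertex to $P_3$) exchange vertices only between $P_1$ and $P_2$, leaving $P_3$ untouched. A recombination step may rearrange two districts arbitrarily while fixing the third, so the entire intermediate sequence collapses into a single recombination step that jumps directly to the final configuration (or to the configuration just before the one flip into $P_3$). Hence each Unwinding costs $O(1)$ recombination steps, each rebalancing case costs $O(1)$ recombination steps, and the total is $O(n^2)$ iterations times $O(n)$ per iteration (dominated by the tower move) $= O(n^3)$, with no amortization needed. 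Your accounting of the other ingredients (towers at $O(n)$ each, cycle recombinations at one step each, the ground-state phase at $O(n)$, reversibility of steps, and the triangle inequality through $\sigma_{123}$) matches the paper and is fine; the missing idea is specifically the compression of an Unwinding into $O(1)$ recombination moves.
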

\begin{proof}
	We first analyze how many steps are required to transform any balanced partition $\sigma$ into a ground state partition. Without loss of generality, suppose in $\sigma$ the vertex $\cc_1 \in P_1$.  The number of steps in the sweep-line process is $O(k_1) = O(n^2)$.  In each step of the sweep line process, adding a vertex of $P_1$ to $\cc_i$ could require at most on tower move.  A tower move could be comprised of up to $O(n)$ individual recombination steps, one at each level of the tower. 
	
	We now consider how many recombination steps are required for the rebalancing step. 
	We note that, as written, the Unwinding Lemmas (Lemmas~\ref{lem:s1s2} and~\ref{lem:s1s2x}) require $O(k_1 + k_2)$ flip moves.  However, all of these flip moves (except possibly the last, which adds a vertex to $P_3$) only require moving vertices between $P_1$ and $P_2$: therefore, all of the flip steps described carefully in the lemma's proof can in fact be accomplished with one recombination step which jumps directly to either the final partition (if no vertex is added to $P_3$ during this process, which is Outcome (2) or (3)) or to the step before a single flip step adds a vertex tp $P_3$ (as in Outcome (1)).  This means that using an Unwinding Lemma requires only 1 or 2 recombination steps, or $O(1)$ recombination steps overall. The cycle recombination lemmas (Lemmas~\ref{lem:cycle-recom} and~\ref{lem:cycle-recom-vtx}) similarly only require $1$ recombination step.	
	Each of the four cases (Case A, Case B, Case C, Case D) requires at most $O(1)$ flip moves, $O(1)$ applications of an unwinding lemma, and/or $O(1)$ applications of a cycle recombination lemma. As each can be accomplished with $O(1)$ recombination moves, this means at most $O(1)$ recombination moves are required to rebalance a partition during the sweep line process. In total, this means the sweep line process accomplishing $\cli \subseteq P_1 \subseteq \clei$ (Lemma~\ref{lem:sweepline}) takes $O(n^3)$ steps: in each of $O(n^2)$ iterations, at most $O(n) + O(1) = O(n)$ recombination moves are performed.

	


	
	

	Once we have satisfied $\cli \subseteq P_1 \subseteq \clei$, reaching the ground state $\sigma_{123}$ (as described in Lemma~\ref{lem:groundstate}) requires at most $O(n)$ additional recombination steps: two for each vertex of $P_1 \cap \ci$ that must be moved up within column $i$, plus one final recombination of $P_2$ and $P_3$.  Careful analysis of Lemma~\ref{lem:nearlybalanced} shows that, similar to our rebalancing process, it takes at most $O(1)$ flip moves, $O(1)$ unwinding steps, and/or $O(1)$ cycle recombinations to transform an arbitrary nearly balanced partition into a balanced partition.  In all, this means that $O(1) + O(n^3)  + O(n) = O(n^3)$ recombination moves can be used to move from any partition - balanced or nearly balanced - to a ground state. 

	Altogether, for two arbitrary partitions $\sigma$ and $\tau$, it takes $O(n^3)$ recombination moves to transform $\sigma$ into a ground state; $O(1)$ recombination steps to move among ground states; and $O(n^3)$ recombination moves to transform that ground state into $\tau$. As $O(n^3) + O(1) + O(n^3) = O(n^3)$, this proves the theorem. 	
\end{proof}

\begin{thm}\label{thm:diameter}
	The diameter of $G_\Omega$ is $O(n^3)$.
\end{thm}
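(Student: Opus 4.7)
The plan is to obtain Theorem~\ref{thm:diameter} as an immediate corollary of Lemma~\ref{lem:diameter}. By definition, the diameter of the graph $G_\Omega$ is the supremum of the graph distances $d_{G_\Omega}(\sigma,\tau)$ taken over all pairs of partitions $\sigma,\tau \in \Omega$. So it suffices to produce a uniform upper bound on these pairwise distances, which is precisely what the preceding lemma supplies.

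First I would invoke Lemma~\ref{lem:diameter}, which asserts that for every pair $\sigma,\tau \in \Omega$ there is a walk in $G_\Omega$ from $\sigma$ to $\tau$ of length $O(n^3)$; crucially, the hidden constant in the $O$-notation does not depend on $\sigma$ or $\tau$, only on $n$ (and the ideal sizes $k_1,k_2,k_3$, which are bounded in terms of $n$ via $k_1+k_2+k_3 = n(n+1)/2$). Taking a maximum over all pairs then yields the same $O(n^3)$ bound on the diameter.

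To write this out cleanly, I would briefly remind the reader of the strategy used to obtain the bound in Lemma~\ref{lem:diameter}: any partition can be routed through one of the six ground states in $O(n^3)$ recombination steps (via the sweep-line procedure, whose $O(n^2)$ outer iterations each cost $O(n)$ moves from a tower plus $O(1)$ moves from rebalancing), the ground states are pairwise connected by $O(1)$ recombination steps, and converting a nearly balanced partition to a balanced one also costs $O(1)$ moves. Since $O(n^3) + O(1) + O(n^3) = O(n^3)$ again, this justifies using the lemma as a uniform-in-$(\sigma,\tau)$ bound rather than merely a pointwise existence statement.

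There is no real obstacle here — the only subtlety is to make sure that the constants hidden in Lemma~\ref{lem:diameter} are genuinely uniform over $\Omega$, which they are, since the analysis counts worst-case applications of the Flip, Unwinding, Tower, and Cycle Recombination lemmas, none of which depend on the specific starting partition. I would close by noting, as in the introduction, that this gives a diameter bound of $O(N^{3/2})$ when measured in the number of vertices $N = n(n+1)/2$, and that the proof gives no reason to believe $O(n^3)$ is tight — a more careful accounting, or a fundamentally different reachability strategy, could plausibly improve the exponent, which I would flag as a natural open question rather than attempt to resolve here.
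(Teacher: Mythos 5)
Your proposal is correct and matches the paper's proof, which likewise derives the theorem immediately from Lemma~\ref{lem:diameter} and the definition of diameter. The additional remarks about uniformity of constants and the recap of the lemma's argument are fine but not needed.
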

\begin{proof}
	This follows immediately from Lemma~\ref{lem:diameter} and the definition of diameter. 
\end{proof}

\section*{Acknowledgements} 

The author thanks Jamie Tucker-Foltz and Dana Randall for fruitful discussions during the preparation of this manuscript. 
S. Cannon is supported in part by NSF grants CCF-2104795 and DMS-1803325.

\bibliographystyle{plain}
\bibliography{ergod_bib}

\end{document}